	\numberwithin{equation}{section} 
\newcommand{\vc}[1]{\ifthenelse{\boolean{commentsactivated}}{{\color{blue} {\em VC: #1 }}}{}}
\newcommand{\co}[1]{\ifthenelse{\boolean{commentsactivated}}{{\color{magenta} {\em CO: #1 }}}{}}
\newcommand{\vk}[1]{\ifthenelse{\boolean{commentsactivated}}{{\color{red} {\em VK: #1 }}}{}}
\newcommand{\lh}[1]{\ifthenelse{\boolean{commentsactivated}}{{\color{teal} {\em LH: #1 }}}{}}
\newcommand{\ns}[1]{\ifthenelse{\boolean{commentsactivated}}{{\color{olive} {\em NS: #1 }}}{}}
\newcommand{\hideableComments}[1]{\ifthenelse{\boolean{commentsactivated}}{{\edit{#1}}}{}}
\newcommand{\edit}[1]{\ifthenelse{\boolean{highlightEdits}}{{\color{purple}{#1}}}{#1}}
\newcommand{\highlightArxiv}[1]{\ifthenelse{\boolean{highlightEdits}}{{\color{teal}{#1}}}{#1}}
\newcommand{\arxivOnly}[1]{\ifthenelse{\boolean{showArxivStuff}}{{\highlightArxiv{#1}}}{}}
\newcommand{\longOrShort}[2]{\ifthenelse{\boolean{showArxivStuff}}{{\highlightArxiv{#1}}}{#2}}
\newcommand{\anonymise}[2]{\ifthenelse{\boolean{anonymous}}{{{#2}}}{#1}}
\newtheorem{definition}{\protect\definitionname}
    \numberwithin{definition}{section}      
\newtheorem{lemma}[definition]{\protect\lemmaname}
\newtheorem{corollary}[definition]{\protect\corollaryname}
\newtheorem{claim}[definition]{\protect\claimname}
\newtheorem{example}[definition]{\protect\examplename}
\newtheorem{notation}[definition]{\protect\notationname}
\theoremstyle{definition}
    \newtheorem{remark}[definition]{\protect\remarkname}
\newtheorem*{claim*}{\protect\claimname}
\providecommand{\corollaryname}{Corollary}
\providecommand{\claimname}{Claim}
\providecommand{\definitionname}{Definition}
\providecommand{\lemmaname}{Lemma}
\providecommand{\notationname}{Notation}
\providecommand{\remarkname}{Remark}
\providecommand{\problemname}{Problem}
\providecommand{\propositionname}{Proposition}
\providecommand{\examplename}{Example}
\providecommand{\theoremname}{Theorem}
\providecommand{\conjecturename}{Conjecture}
\providecommand{\observationname}{Observation}
\DeclareMathSymbol{\shortminus}{\mathbin}{AMSa}{"39}
\newcommand{\defword}[1]{\textbf{\boldmath{#1}}}
\newcommand{\N}{\mathbb{N}}
\newcommand{\R}{\mathbb{R}}
\newcommand{\E}{\mathbb{E}}
\newcommand{\mc}{\mathcal}
\newcommand{\symbolPlaceholder}{{\, \cdot \, }}
\newcommand{\actions}{\mc A}
\newcommand{\action}{a}
\newcommand{\actionPl}{a_1}
\newcommand{\actionOpp}{a_2}
\newcommand{\policy}{\pi}
\newcommand{\policies}{\Pi}
\newcommand{\playerSet}{\mc N}
\newcommand{\playerFunction}{p}
\newcommand{\pl}{i}
\newcommand{\opp}{\others}          
\newcommand{\others}{{\textnormal{-}\pl}}
\newcommand{\Plone}{\text{P1}}
\newcommand{\Pltwo}{\text{P2}}
\newcommand{\histories}{\mc H}
\newcommand{\history}{h}
\newcommand{\leaf}{z}
\newcommand{\leaves}{\mc Z}
\newcommand{\utility}{u}
\newcommand{\reward}{r}
\newcommand{\br}{\textnormal{br}}                     
\newcommand{\NE}{\textnormal{NE}}                     
\newcommand{\SE}{\textnormal{SE}}                     
\newcommand{\supp}{\textnormal{supp}\,}                
\newcommand{\game}{{\mc{G}}}
\newcommand{\EFG}{E}
\renewcommand{\emptyset}{\varnothing}
\DeclareMathOperator*{\argmax}{arg\,max}
\DeclareMathSymbol{\shortminus}{\mathbin}{AMSa}{"39}
\newcommand{\NP}[0]{$\mathsf{NP}$}
\newcommand{\NPH}[0]{\NP-hard}
\newcommand{\actionSubset}{K}
\newcommand{\nOfActions}{n}
    \newcommand{\simcost}{{\textnormal{c}_\simSubscript}}
\newcommand{\simSubscript}{{\textnormal{sim}}}
\newcommand{\simSubscriptPure}{{\textnormal{p-sim}}}
\newcommand{\simSubscriptMixed}{{\textnormal{m-sim}}}
\newcommand{\psimgame}{\game_\simSubscriptPure}
\newcommand{\msimgame}{\game_\simSubscriptMixed}
    \newcommand{\SimulateInformal}{\textnormal{Simulate}}
    \newcommand{\SimInformal}{\textnormal{sim}}
\newcommand{\mSim}{\textnormal{m-sim}}
\newcommand{\pSim}{\textnormal{p-sim}}
\newcommand{\fbr}{\textnormal{f-br}}        
\newcommand{\simProb}{p_\simSubscript}
\newcommand{\bigO}{O}
\newcommand{\iPiN}{introduces Pareto-improving NE}
\newcommand{\iPiNwithoutS}{introduce Pareto-improving NE}
\newcommand{\Trust}{\textnormal{Trust}}
\newcommand{\PartialTrust}{\textnormal{Partial Trust}}
\newcommand{\FullTrust}{\textnormal{Full Trust}}
\newcommand{\WalkOut}{\textnormal{Walk Out}}
\newcommand{\Cooperate}{\textnormal{Cooperate}}
\newcommand{\Defect}{\textnormal{Defect}}
\newcommand{\T}{\textnormal{T}}
\newcommand{\PT}{\textnormal{PT}}
\newcommand{\FT}{\textnormal{FT}}
\newcommand{\WO}{\textnormal{WO}}
\newcommand{\C}{\textnormal{C}}
\newcommand{\D}{\textnormal{D}}
\newcommand{\incentivise}[1]{i^*[#1]}
\newcommand{\pureStrategy}{s}
    \newcommand{\pureStrategyAlt}{t}
    \newcommand{\PureStrategies}{S}
    \newcommand{\strategySubset}{\PureStrategies'}
\newcommand{\mixedStrategy}{\sigma}
    \newcommand{\mixedStrategyAlt}{\rho}
    \newcommand{\MixedStrategies}{\Sigma}
\newcommand{\pureMetaStrategy}{m}
    \newcommand{\PureMetaStrategies}{M}
\newcommand{\mixedMetaStrategy}{\mu}
    \newcommand{\mixedMetaStrategyAlt}{\nu}
\newcommand{\extremise}[1]{\left[#1\right]'}    
\newcommand{\meta}[1]{\widehat{#1}}    
\newcommand{\mesa}[1]{\check{#1}}
\newcommand{\extremalPoints}{\textnormal{Ext}}
\newcommand{\inverseBR}{{\br}^{-1}}
\newcommand{\inverseFBR}{{\fbr}^{-1}}
\newcommand{\FinitePMS}{\PureMetaStrategies^{\textnormal{fin}}}
\newcommand{\ReducedPMS}{\PureMetaStrategies^{\textnormal{inf}}}
\newcommand{\specialPolytope}{X}
\newcommand{\closure}[1]{\overline{#1}}
\newcommand{\Bad}{\textnormal{B}}
\newcommand{\Neutral}{\textnormal{N}}
\newcommand{\Good}{\textnormal{G}}
\newcommand{\Awesome}{\textnormal{A}}
\newcommand{\Full}{^\textnormal{F}}
\newcommand{\Partial}{^\textnormal{P}}
\newcommand{\defectProbability}[1]{\delta}
\newcommand{\lowerD}[1]{\underline{\delta}_{#1}}
\newcommand{\upperD}[1]{\delta^*_{#1}}
\newcommand{\dFT}{\upperD{\FT}}
\newcommand{\evilProb}{p_{\D}}
\newcommand{\Tless}{\T_\textnormal{less}}
\newcommand{\Tmid}{\T_\textnormal{mid}}
\newcommand{\Tmore}{\T_\textnormal{more}}
\newcommand{\TG}{\texttt{TG}}
\newcommand{\PTG}{\texttt{PTG}}
\newcommand{\maxminV}{\underline v_1}
\newcommand{\maxminStrategy}{\underline{\pureStrategy}_1}
\newcommand{\allowedResponses}{\actions_2}
\newcommand{\optimalResponse}[1]{\pureStrategy_2^*[#1]}
\newcommand{\optimalCommitment}[1]{\overline{\mixedStrategy}_2[#1]}
\newcommand{\responseStrategy}{\varphi_2}
\newcommand{\responseStrategyAlt}{\psi_2}
\newcommand{\stackelberg}[1]{\mixedStrategy^{\textnormal{SE,\,}#1}_2}
\newcommand{\baselineIndex}{{k_1}}
\newcommand{\deviateIndex}{{k_2}}
\newcommand{\SEvalue}[2]{v^{\textnormal{SE}}_{#2}(\game_{#1})}
\newcommand{\metaNE}{\mixedMetaStrategy^*}
\newcommand{\Horrible}{\textnormal{H}}
\newcommand{\OO}{\textnormal{OO}}
\newcommand{\OptOut}{\textnormal{Opt Out}}
\newcommand{\Baseline}{\textnormal{B}}
\newcommand{\password}{p}
\newcommand{\guess}{g}
\newcommand{\passwordGuessing}{\game^\textnormal{PG}}
\newcommand{\PG}[2]{\textnormal{PG}_{#1}(#2)}
\newcommand{\dontGuess}{\neg \textnormal{g}}
\newcommand{\Rock}{\textnormal{R}}
\newcommand{\Paper}{\textnormal{P}}
\newcommand{\Scissors}{\textnormal{S}}
\newcommand{\graph}{\Gamma}
\title{Game Theory with Simulation\texorpdfstring{\\}{ }in the Presence of Unpredictable Randomisation}
    \author{Vojtěch Kovařík}
    \affiliation{
      \institution{Carnegie-Mellon University}
      \city{Pittsburgh}
      \country{United States}}
    \affiliation{
      \institution{Czech Technical University}
      \city{Prague}
      \country{Czech Republic}}
    \email{vojta.kovarik@gmail.com}
    \author{Nathaniel Sauerberg}
    \affiliation{
      \institution{University of Texas}
      \city{Austin}
      \country{United States}}
    \email{njsauerberg@gmail.com}
    \author{Lewis Hammond}
    \affiliation{
      \institution{University of Oxford}
      \city{Oxford}
      \country{United Kingdom}}
    \email{lewis.hammond@cs.ox.ac.uk}
    \author{Vincent Conitzer}
    \affiliation{
      \institution{Carnegie-Mellon University}
      \city{Pittsburgh}
      \country{United States}}
    \email{conitzer@cs.cmu.edu}
\keywords{AI agents, Simulation, Stackelberg games, Cooperative AI}
\begin{abstract}
    AI agents will be predictable in certain ways that traditional agents are not.
Where and how can we leverage this predictability in order to improve social welfare?
We study this question in a game-theoretic setting where one agent can pay a fixed cost to simulate the other in order to learn its mixed strategy.
As a negative result, we prove that, in contrast to prior work on pure-strategy simulation, enabling mixed-strategy simulation may no longer lead to improved outcomes for both players in all so-called “generalised trust games”.
    In fact, mixed-strategy simulation does not help in any game where the simulatee’s action can depend on that of the simulator.
We also show that, in general, deciding whether simulation introduces Pareto-improving Nash equilibria in a given game is \NPH{}.
As positive results, we establish that mixed-strategy simulation can improve social welfare if the simulator has the option to scale their level of trust, if the players face challenges with both trust and coordination, or if maintaining some level of privacy is essential for enabling cooperation.
\end{abstract}
\begin{document}


\pagestyle{fancy}
\fancyhead{}

\maketitle


\section{Introduction}\label{sec:intro}

With the current pace of progress in AI,
    we are likely to increasingly see important interactions take place not only between humans,
    but also with and between AI agents \cite{dafoe2021a,dafoe2020open}.
To ensure that the societal impact of these interactions is positive, it is important to understand the ways in which AI agents differ from humans \citep{conitzer2023}.
This in turn can help us design interventions that promote socially desirable outcomes \citep{Shavit2023a}.

One important distinction between human and AI agents
    is that the behaviour of AI agents is determined by their source code, and can therefore -- in certain cases -- be reliably predicted \cite{Tennenholtz2004}.
This could be achieved, for example,
    by inspecting the AI's source code and reasoning about it,
    or by creating a copy of the AI and running it in a simulated environment.
As these examples suggest,
    we will assume that predicting the AI's actions requires non-trivial effort,
    and is therefore associated with some \textit{cost} \cite{kovarik2023game,halpern2008game}.
(Readers familiar with Stackelberg games \cite{von1934marktform} can think of this setting as one where the follower has to choose to pay some cost before they are allowed to see the leader's mixed commitment.
For a more detailed discussion of related work, see \Cref{sec:sub:related-work}.)
For concreteness, this paper will discuss this general topic in terms of \textit{simulating} the AI agent, though our results also apply to other forms of prediction.

As we will show, the ability to simulate agents before interacting with them can (provably) lead to increased trust and cooperation.
More than a topic of merely theoretical interest, however, the availability of black-box access to the latest AI models and high-fidelity simulators could lead to simulation being a key tool in the safe and beneficial deployment of advanced AI agents \cite{balesni2024evaluationsbasedsafetycasesai,shevlane2023model}.
Importantly, in domains ranging from financial markets to public infrastructure, these agents will face \textit{strategic incentives}, making it critical to understand the implications of simulation in \textit{game-theoretic} settings.


An idealised variant of this setting was studied by \citet{kovarik2023game},
    who assumed that the simulator is able to predict the AI's action perfectly.
However, this assumption might often be unrealistic,
    not least because the AI might have access to a source of randomness that cannot be predicted by the simulator \cite{wang20134,jacak2021quantum}.
As the following extended example shows, this difference has far-reaching consequences, which we explore in the remainder of the paper.

\subsection{Illustrative Example}\label{sec:sub:illustrative_example}

\paragraph{Alice, Bob, and his robots.}
Consider a setting in which Alice (player one) and Bob (player two) are due to interact in some particular situation, corresponding formally to some arbitrary two-player game $\game$.
Instead of interacting directly, however, Bob will deploy a robot\footnote{It is in no way important to the paper's results that the ``robot'' is physically embodied; we just use it here as evocative language.} that will act on his behalf.
Moreover, Alice will have the option to analyse the robot,
    at some cost $\simcost > 0$. We will refer to this analysis as \textit{simulation}.

In general, the robot can use randomness to determine which action to take.
Correspondingly, we will distinguish between two types of simulation,
    depending on whether Alice is able to predict the robot's source of randomness or not.
To capture this distinction formally,
    we can assume that the robot corresponds to
    some probability distribution $\mixedStrategy_2$ over pure strategies in $\game$.
In \textit{mixed-strategy simulation},
    Alice learns the robot's mixed strategy $\mixedStrategy_2$.
In \textit{pure-strategy simulation},
    Alice learns which pure strategy $\pureStrategy_2$ will be sampled by the robot when playing $\game$.

In this work, we assume that if Alice decides to simulate the robot,
    she will then best-respond to the revealed strategy,
    breaking ties in Bob's favour.

\paragraph{The simulation meta-game.}
When Alice has access to mixed-strategy simulation,
    Alice and Bob need to reason not only about the ``base-game'' $\game$,
    but also about the ``simulation meta-game'' $\game_\simSubscript$.
In $\game_\simSubscript$,
    Alice must decide whether to simulate, and which strategy to use if she does not.
Correspondingly, Bob must decide which robot -- i.e., which mixed strategy -- to select as his representative.
When Bob randomises his choice, he is thus \textit{mixing over mixed strategies}.
And because Alice can simulate the robot but not Bob,
    Bob's overall strategy is not necessarily reducible to a single mixed strategy.\footnotemark{}
    \footnotetext{
        If Bob was certain that Alice has access to pure-strategy simulation,
            he might decide to use robots that play deterministically,
            since doing otherwise confers no benefit.
        This would make the additional level of randomisation unnecessary.   
    }

\paragraph{Focusing on low simulation costs and strict Pareto improvements.}
Throughout the paper, we will assume that the simulation cost is not under the control of either of the players.
However, for the purpose of interpreting the results, note that Bob in particular might be able to influence $\simcost$ to some degree.
For example, when Bob has a fully adversarial relationship with Alice,
    he might not share any details about the robot
    and even intentionally obfuscate its design to make the analysis harder.
In contrast, when Bob wants Alice to trust him,
    he might make the robot easier to understand
    or even subsidise the simulation cost.\footnotemark{}
    \footnotetext{
        However, in practice, it seems unlikely that Bob could achieve $\simcost = 0$ or even $\simcost < 0$.
        This is because even if Bob makes simulation as simple as possible and pays Alice to simulate, she will always be tempted to put lower than maximum effort into her analysis
        (to save effort or keep some of the subsidy for herself).
    }
Because of this, in this paper we will primarily investigate the case where
    $\simcost$ is \textit{low but positive}.
We will also focus on settings where
    simulation holds the promise of improving the outcome for \textit{both} players.
The primary candidates for such settings are games that revolve around trust, coordination, or both.

\paragraph{The trust game \emph{TG}}
The central example of a trust game is $\TG$ (\Cref{fig:TG}):
    Bob -- or rather, his robot -- approaches Alice with an investment opportunity.
    If she lends him \$100k, he will make \$40k in profit. 
    Alice can $\WalkOut$ ($\WO$) on Bob,
        terminating the game with payoffs $\utility_A = \utility_B = 0$.
    If Alice instead $\Trust$s ($\T$) Bob,
        he can either $\Defect$ ($\D$) and steal Alice's money ($\utility_A = -100$, $\utility_B = 100$)
        or $\Cooperate$ ($\C$), splitting the profits 50:50 with Alice ($\utility_A = \utility_B = 20$).
Unfortunately, without simulation, $\Defect$ is a dominant strategy for Bob,
    so the only Nash equilibrium (\NE{}) of $\TG$ is for Alice to $\WalkOut$.    

\paragraph{Pure-strategy simulation in \emph{TG}}
In the simulation variant $\TG_\simSubscript$ of $\TG$,
    Bob attempts to earn Alice's trust by sharing the robot's specification with her.
She then has the option to $\SimulateInformal$ ($\SimInformal$) the robot for $\simcost = $ \$2k in order to learn which strategy it will employ.
One might hope that this would reliably allow cooperation between Alice and Bob,
    yielding payoffs $\utility_A = 20 \shortminus 2 = 18$, $\utility_B = 20$.
Unfortunately, such an outcome would not be stable, because
    Alice would be tempted to increase her profits by $\Trust$ing Bob blindly
        (and thus saving the simulation cost).
    This, in turn, creates a temptation for Bob to submit a robot that will $\Defect$ on Alice.

\paragraph{Pure-strategy simulation in generalised trust games.}
\citet{kovarik2023game} show that in the pure-strategy simulation game $\TG_\simSubscriptPure$,
    it is an equilibrium
        for Alice to mix between $\Trust$ and $\SimulateInformal$,
        and for Bob to mix between (robots that) $\Cooperate$ and $\Defect$.
    While this simulation equilibrium is not optimal in terms of social welfare
        (compared to what could be obtained in a world without strategic constraints),
        it constitutes a strict improvement over the original equilibrium for both Alice and Bob.
The authors then show that a similar result holds in any \textit{generalised} trust game
    (which they define as a game where
        giving Bob the ability to make credible pure-strategy commitments
        is guaranteed to strictly improve the utility for both players
        compared to any \NE{} of the original game).

\begin{figure}[tb]
    \centering
    \begin{NiceTabular}{rccc}[cell-space-limits=3pt]
                        & $\Cooperate$     & $\Defect$          & $\ \ \ $\\
        $\Trust$    & \Block[hvlines]{2-2}{}
                          $20, 20$          & $-100, 100$       & $\ \ \ $ \\
        $\WalkOut$      & $0, 0$            & $\ \ \ 0, 0 \ $   & $\ \ \ $ 
    \end{NiceTabular}
    \begin{NiceTabular}{rccc}[cell-space-limits=3pt]
                                & $\Cooperate$     & $\Defect$ & $\ \ \ $\\
        $\FullTrust$ & \Block[hvlines]{3-2}{}
                                $20, 20$    & $-100, 100$           & $\ \ \ \ \ \ \ $ \\
        $\PartialTrust$ &       $10, 10$    & $\ \, -25, 25 \ $     & $\ \ \ \ \ \ \ $ \\
        $\WalkOut$ &            $0, 0$      & $\ \ \ 0, 0 \ $       & $\ \ \ \ \ \ \ $ 
    \end{NiceTabular}
    \caption{Trust game $\TG$ and its partial-trust extension $\PTG$.}
    \Description{Trust game TG and its partial-trust extension PTG.}
    \label{fig:TG}\label{fig:PTG}
\end{figure}

\paragraph{Mixed-strategy simulation is useless in \emph{TG}}
Unfortunately, this result no longer holds in the mixed-strategy simulation variant $\TG_\simSubscriptMixed$.
To see why, imagine that Bob is about to submit a robot which cooperates with Alice 100\% of the time.
He will then be tempted to replace this robot by one that only cooperates 99\% of the time.
    If Alice simulates the robot, surely she will not $\WalkOut$ on him just because of the 1\% defection chance; after all, her expected utility for $\Trust$ing is still positive.
In fact, this reasoning shows that
    Bob can safely set the cooperation rate to $\frac{100+2}{100+20} = 85\%$,
        the lowest possible value where Alice still recoups all of her simulation costs.
However,
    Bob can go even further.
    He can reason that once Alice has already simulated,
        she will treat the simulation cost as a sunk cost.
    Taken all together, this means that no matter what Alice does,
        he might as well replace all of his ``100\% cooperate robots'' by ``83.5\% cooperate robots''
        (since a cooperation rate of $\frac{100}{100+20}$ is the lowest he can go while still giving $\Trust$ a positive expected value).
Unfortunately, at this point,
    Alice no longer makes enough profit to recoup $\simcost$,
    so her only sensible options are to $\Trust$ Bob blindly or $\WalkOut$.
This effectively brings Alice and Bob back to $\TG$ \textit{without simulation},
    and the only Nash equilibrium of that game is for Alice to $\WalkOut$.

\paragraph{Mixed-strategy simulation can help if Alice has good alternatives.}
In light of this negative result,
    one might wonder whether there are \textit{any} games where mixed-strategy simulation is useful.
Fortunately, it turns out that there are.
To see this, consider an extension $\PTG$ (\Cref{fig:PTG}) of the earlier trust game scenario $\TG$
    where Alice has the additional option to trust Bob only \textit{partially} ($\PT$),
        \textit{with the robot being unable to differentiate between $\PartialTrust$ and $\FullTrust$}.
    For example, she could secretly register her business in jurisdictions with higher taxes but more secure banking infrastructure,
        which would decrease the overall profits ($\utility_A(\PT, \C) = \utility_B(\PT, \C) = 10$)
        but reduce the robot's ability to steal from her
            ($\utility_A(\PT, \D) = \shortminus 25$, $\utility_B(\PT, \D) = 25$).
Bob could now repeat the same reasoning as before,
    concluding that any ``100\% cooperate robot'' can be safely replaced by a ``99\% cooperate robot''.
However, he will now have to stop at $\frac{100-25}{20-10} \doteq 88.3\%$.
This is because below this value, Alice's best response will switch from $\FullTrust$ to $\PartialTrust$,
    which will decrease Bob's utility.
    (This shows the importance of $\utility_2(\FT, \C)$ being higher than $\utility_2(\PT, \C)$.)
We can verify that
    the mixed-strategy simulation version of $\PTG$ has an \NE{} where
        Bob mostly submits a ``88.3\% cooperate robot''
            but sometimes replaces it by a ``100\% defect robot'',
        while Alice mixes between Simulating  (after which, depending on the result, she plays either $\FullTrust$ or $\WalkOut$) and blindly playing $\PartialTrust$.
Fortunately,
    the frequency of Bob using the ``100\% defect robot'' is proportional to $\simcost$,
    so for any sufficiently low $\simcost$ (e.g., for $\simcost = 2$),
    both players end up making a profit.

\subsection{Outline and Contributions}\label{sec:sub:outline}

\Cref{sec:background}
    describes the standard notation for normal-form games and covers some classic game-theoretic concepts.

In \Cref{sec:new_defs},
    we formally define mixed-strategy simulation games $\msimgame$,
    contrast them with pure-strategy simulation games $\psimgame$,
    and establish their basic properties.
    In particular, we show that
        while $\msimgame$ is technically an infinite game,
        it can be reduced to a normal-form game
            whose size is at most exponential in the size of the base-game
            (Prop.\,\ref{prop:finite_str_space}).

\Cref{sec:computational} explores the computational aspects of mixed-strategy simulation.
    First, it establishes an upper bound on the complexity of finding an \NE{} of $\msimgame$
        (Prop.\,\ref{prop:solve_upper_bound}).
    Second,
        while determining the exact complexity of finding an \NE{} of $\msimgame$ is left as an open problem,
        we observe that any \NE{} of the original game still exists as an equilibrium of the simulation game,
            so finding \textit{all} \NE{} of a simulation game is \NPH{}.
    Third, from the design point of view, a crucial question is whether enabling mixed-strategy simulation
        in a particular game introduces beneficial Nash equilibria
        -- e.g., ones that result in a Pareto-improvement, an increase in social welfare, or an improvement in the utility of a particular player
        (relative to the equilibria of the original game $\game$).
        We show that answering any variant of this question is, in general, \NPH{} (Thm.~\ref{thm:helps_is_NPH_v3}).
This implies that one should not expect to be able to find a simple description of simulation's effects in \textit{general} games.
    Consequently, we find it more promising to focus on
        identifying specific \textit{classes} of games where simulation has easily describable effects.

In \Cref{sec:welfare}, we investigate the effects of simulation on the players' welfare.
First, we extend the negative result for $\TG$ from \Cref{sec:sub:illustrative_example}, by showing that
    mixed-strategy simulation cannot help in any game where
        the robot observes Alice's base-game strategy before acting
        (Thm.\,\ref{thm:negative}).
We then extend the positive result for $\PTG$ from \Cref{sec:sub:illustrative_example},
    by describing a general class of games where
        Alice can vary her level of trust
        and mixed-strategy simulation allows her to profitably use the second-highest level of trust
        (Thm.\,\ref{thm:generalised_PTG}).
We also prove that mixed-strategy simulation is beneficial in a class of games involving elements of both trust and coordination (Thm.\,\ref{thm:generalised_DTG}).
Finally, we show that there are situations
    -- those where the simulated agent finds it important to maintain their privacy --
    where mixed-strategy simulation is more socially beneficial than pure-strategy simulation
    (Thm.\,\ref{theorem:password_guessing}).

We conclude by
    reviewing the most closely related work (\Cref{sec:sub:related-work})
    and summarising the paper's findings (\Cref{sec:sub:conclusion}).
Detailed proofs are in the appendix.
\section{Background}\label{sec:background}

For a finite set $X$, $\Delta(X)$ denotes the \textbf{set of all probability distributions} over $X$.
For a probability distribution $\rho$, $\supp(\rho)$ denotes the \textbf{support} of $\rho$.
We use \Plone{} and \Pltwo{} as shorthands for ``player one'' and ``player two''.
When there is risk of confusion about which game a given object belongs to, we add superscript notation
    (e.g., $\utility^\game$ for utility in $\game$).

A two-player \defword{normal-form game} (NFG) $\game$
    is a triplet $(\PureStrategies_1, \PureStrategies_2, \utility)$ where:
        $\PureStrategies := \PureStrategies_1 \times \PureStrategies_2 \neq \emptyset$ is a set of \textbf{pure strategy profiles}
            (finite, unless specified otherwise)
        and $\utility = (\utility_1, \utility_2) : \PureStrategies \to \R^2$ is the \textbf{utility function}.
We will typically denote the elements of $\PureStrategies_\pl$ (pure strategies) as $\pureStrategy_\pl$.
A \textbf{mixed strategy} $\mixedStrategy_\pl$ is a probability distribution over pure strategies.
$\MixedStrategies_\pl := \Delta(\PureStrategies_\pl)$ denotes the set of all mixed strategies.
Since
    any pure strategy $\pureStrategy_\pl$ can be identified with the mixed strategy $\sigma_\pl^{\pureStrategy_\pl}$ that selects $\pureStrategy_\pl$ with probability $1$,
    we sometimes view pure strategies as a subset of mixed strategies.
A \textbf{subgame} of $\game$ is any game of the form
    $\game' = (\PureStrategies'_1, \PureStrategies'_2, \utility^\game)$, where
    $\PureStrategies'_\pl \subseteq \PureStrategies^\game_\pl$.

With a light abuse of notation, we will overload the symbol $\utility$ to also denote the \textit{expected} utilities corresponding to mixed strategies.
A strategy $\mixedStrategy_1$ is said to be a \defword{best response} to a strategy $\mixedStrategy_2$ if
    $\mixedStrategy_1 \in \argmax_{\mixedStrategy_1' \in \MixedStrategies_1} \utility_1(\mixedStrategy_1', \mixedStrategy_2)$.
We use $\br(\mixedStrategy_2)$ to denote the (non-empty) set of all \textit{pure} best responses to $\mixedStrategy_2$.
Since the utility of the best-responding player is determined by the other player's strategy, we sometimes denote it as
    $\utility_1(\br(\mixedStrategy_2), \mixedStrategy_2)$.
(The analogous definitions apply when the roles of \Plone{} and \Pltwo{} are reversed.)
A \defword{Nash equilibrium}
    is a strategy profile $\mixedStrategy = (\mixedStrategy_1, \mixedStrategy_2)$ under which each player's strategy is a best response to the strategy of the other player.
    $\NE(\game)$ denotes the set of all Nash equilibria in $\game$.

A strategy $\pureStrategy_1$ is said to be an (opponent-)\textbf{favourable best response} to $\mixedStrategy_2$ if
    $\pureStrategy_1 \in \argmax_{\pureStrategyAlt_1 \in \br(\mixedStrategy_2)} \utility_2(\pureStrategyAlt_1, \mixedStrategy_2)$.
We use $\fbr(\mixedStrategy_2)$ to denote the (non-empty) set of all (pure) favourable best responses to $\mixedStrategy_2$.
When one player uses a favourable best response, the utilities of \textit{both} players are determined by the other player's strategy;
    this allows us to denote these utilities as
        $\utility_1(\fbr(\mixedStrategy_2), \mixedStrategy_2)$
        and
        $\utility_2(\fbr(\mixedStrategy_2), \mixedStrategy_2)$.

A \textbf{Stackelberg game} \cite{von1934marktform} is a setting where
    one player, the leader, commits to a mixed strategy
    to which the other player, the follower, best-responds.
\textit{In this paper, we assume that \Pltwo{} is the Stackelberg leader and \Plone{} is the follower,}
    which better fits the assumption that \Plone{} is the simulator.
Formally, a Stackelberg game $\meta{\game}$ corresponding to a base-game $\game$ works as follows.
First, the leader selects a mixed strategy $\mixedStrategy_2 \in \MixedStrategies^\game_2$.
Afterwards,
    the follower selects a favourable best response $\pureStrategy_1 \in \fbr^\game(\mixedStrategy_1)$,
    (i.e., \textit{breaking ties in the leader's favour}).
The players then receive payoffs
    $
        \utility^{\meta{\game}}(\pureStrategy_1, \mixedStrategy_2)
        :=
        \utility^{\game}(\pureStrategy_1, \mixedStrategy_2)
    $.
By \textbf{Stackelberg equilibrium} (\SE{}) of $\game$, we mean any \NE{} of the Stackelberg game $\meta{\game}$.

We also also consider ``pure Stackelberg games''
    where the leader is limited to committing to a \textit{pure} strategy $\pureStrategy_2 \in \PureStrategies^\game_2$.
However, to avoid the ambiguity of ``pure Stackelberg equilibrium'',
    we will refer to these games as \textbf{pure-commitment games}
    and to their NE as \textbf{pure-commitment equilibria}.

A strategy profile $\mixedStrategy$ is said to be a \textbf{strict Pareto improvement} over $\mixedStrategyAlt$
    if it satisfies
        $\utility_1(\mixedStrategy) > \utility_1(\mixedStrategyAlt)$
        and
        $\utility_2(\mixedStrategy) > \utility_2(\mixedStrategyAlt)$.
A two-player game $\game$ is said to be a \textbf{generalised trust game} \cite{kovarik2023game}
    if any pure-commitment equilibrium of $\game$ (with \Pltwo{} as the leader)
    is a strict Pareto improvement over any Nash equilibrium of $\game$.
(For a prototypical example of such a $\game$, see \Cref{fig:TG}.)
\section{Pure- vs. Mixed-Strategy Simulation}
    \label{sec:new_defs}\label{SEC:NEW_DEFS}

In this section, we formally define mixed-strategy simulation,
    contrast it with pure-strategy simulation,
    and survey the basic properties of the corresponding games.

\subsection{Definitions of Simulation Games}\label{sec:sub:sim_game_defs}

In \Cref{sec:sub:illustrative_example}, we informally described simulation games through a scenario
    in which Bob (\Pltwo{}) selects a robot that acts on his behalf
    and Alice (\Plone{}) has an option to pay a fixed cost
        to analyse the robot prior to interacting with it.
If Alice takes advantage of this option,
    she learns the (pure or mixed) strategy that the robot is going to employ
    and best-responds to it, breaking ties in Bob's favour.\arxivOnly{\footnotemark{}
        \footnotetext{
            \ The assumption that Alice breaks ties in Bob's favour is not necessarily realistic,
            as she might intentionally break ties to \textit{hurt} Bob
            to incentivise him selecting more favourable strategies.
            However, Bob generally could counter this by only adjusting his probabilities by some small $\epsilon$.
            Allowing more realistic tiebreaking strategies would thus
                significantly complicate all formal statements
                without necessarily allowing us to derive more meaningful results.
        }}
Otherwise, the game proceeds as usual.
We now give a formal counterpart to this description, the first part of which is a reformulation of that of \citet{kovarik2023game}.

\begin{definition}[Pure- and mixed-strategy simulation]\label{def:mixed-simgame}
    The \textbf{mixed-} and \textbf{pure-strategy simulation games} $\msimgame^\simcost$ and $\psimgame^\simcost$
        (or simply $\msimgame$ and $\psimgame$)
        corresponding to a two-player NFG $\game$ and \textbf{simulation cost} $\simcost > 0$
        are defined as the (infinite) NFGs given by:
        \begin{align*}
            \PureStrategies^{\msimgame}_1
                & :=
                \PureStrategies^\game_1 \cup \{ \mSim \},
            \ \ 
            \PureStrategies^{\msimgame}_2
                :=
                \MixedStrategies^\game_2
            ,
            \\
            \PureStrategies^{\psimgame}_1
                & :=
                \PureStrategies^\game_1 \cup \{ \pSim \},
            \ \ 
            \PureStrategies^{\psimgame}_2
                :=
                \MixedStrategies^\game_2
            ,
        \end{align*}
    where $\mSim$ and $\pSim$ are new strategies of \Plone{},
        called \textbf{mixed-} and \textbf{pure-strategy simulation}\arxivOnly{\footnotemark{}} in $\game$,
    defined by
        \begin{align*}
            \utility_1(\mSim, \mixedStrategy_2)
                & :=
                \utility^\game_1(\br^\game(\mixedStrategy_2), \mixedStrategy_2) - \simcost
            \\
            \utility_2(\mSim, \mixedStrategy_2)
            & :=
                \utility^\game_2(\fbr^\game(\mixedStrategy_2), \mixedStrategy_2)
                ,
                \\
                \text{resp. \phantom{blaahblaaahbla}} &
            \\
            \utility_1(\pSim, \mixedStrategy_2)
                & :=
                    \E_{\pureStrategy_2 \sim \mixedStrategy_2}
                        \utility_1(\br^\game(\pureStrategy_2), \pureStrategy_2)
                    -
                    \simcost
            \\
            \utility_2(\pSim, \mixedStrategy_2)
            & :=
                \E_{\pureStrategy_2 \sim \mixedStrategy_2}
                    \utility_2(\fbr^\game(\pureStrategy_2), \pureStrategy_2)
            .
        \end{align*}
        A \textbf{simulation equilibrium} is an \NE{} in which \Plone{} simulates with non-zero probability.
\end{definition}
    \footnotetext{
        The phrases pure-and mixed-strategy simulation might suggest that
            the object that the simulation is being applied to is a pure, resp. mixed strategy.
        In fact,
            the two types of simulation are applicable to the same objects,
            but they differ in the information they reveal.
        In other words, pure and mixed-strategy simulation
            differ in the type of \textit{output} they produce,
            not in the type of input they accept.
        }

To illustrate the distinction between $\pSim$ and $\mSim$,
    imagine that Alice and Bob play a game of rock-paper-scissors.
If Bob employs a single robot, \texttt{Uniform-bot},
       which uses an internal random number generator
        to play each of the three actions with probability $\frac{1}{3}$,
    applying mixed-strategy simulation $\mSim$ will only tell Alice that the robot's strategy
        is $(\frac{1}{3}, \frac{1}{3}, \frac{1}{3})$
        (which is entirely unhelpful).
In contrast, applying pure-strategy simulation $\pSim$ to \texttt{Uniform-bot} will predict the robot's exact action,
    allowing Alice to win the game every single time.
If Bob instead randomises between \texttt{Rock-bot}, \texttt{Paper-bot}, and \texttt{Scissors-bot},
    each of which can only use a single action,
    both $\pSim$ and $\mSim$ will reveal the robot's exact action.

\textit{When the distinction between pure- and mixed-strategy simulation does not matter,
    we will use the colloquial term \emph{simulation}.}

This example also shows that in a pure-strategy simulation game $\psimgame$,
    Bob cannot gain anything by randomising over multiple mixed strategies (since all utilities only depend on the \textit{overall} distribution over $\PureStrategies^\game_2$).
For the purposes of formal analysis of pure-strategy simulation games,
    this allows us to assume that Bob's space of pure strategies in $\psimgame$
    is limited to
    $
        \PureStrategies^{\psimgame}_2
        =
        \PureStrategies^\game_2
    $.
    \footnote{
        Note that the same simplification
            -- replacing $\PureStrategies^{\msimgame}_2 \coloneqq \MixedStrategies^\game_2$ by $\PureStrategies^\game_2$ --
        cannot be valid in $\msimgame$.
        This follows from the Trust Game example in \Cref{sec:sub:illustrative_example}
            (which illustrates that
            $\psimgame$ and $\msimgame$ can have different properties).
    }

\subsection{Randomising over Mixed Strategies}\label{sec:sub:meta_strategies}

Throughout the paper, and in particular in some of the proofs,
    it will be crucial to be able to treat ``mixtures over mixed strategies'' differently from a standard mixed strategies
        (since the two respond differently to mixed-strategy simulation, as we saw earlier).
To address this issue,
    we will refer to probability distributions over $\MixedStrategies^\game_2$ as \textbf{meta-strategies}
    and denote them by symbols such as $\mixedMetaStrategy_2$
    (or $\pureMetaStrategy_2$ when the meta-strategy is pure, i.e., when it puts all probability mass on a single $\mixedStrategy_2 \in \MixedStrategies^\game_2$).
We will use the hat symbol
    to indicate that a given mixed strategy is being used as a (pure) meta-strategy.
For example, 
    Bob's above-mentioned mixed meta-strategy of uniformly randomising between a \texttt{Rock-bot}, Paper-bot, and Scissors-bot could be formally written as
    $
        \mixedMetaStrategy_1
        :=
            \frac{1}{3} \meta{\Rock}
            \!+\!
            \frac{1}{3} \meta{\Paper}
            \!+\!
            \frac{1}{3} \meta{\Scissors}
    $,
    while the pure meta-strategy of always using the \texttt{Uniform-bot} would correspond to
    $
        \pureMetaStrategy_2
        :=
        \meta{
            \frac{1}{3} \Rock
            \!+\!
            \frac{1}{3} \Paper
            \!+\!
            \frac{1}{3} \Scissors
        }
    $.

For a mixed meta-strategy $\mixedMetaStrategy_2 \in \PureStrategies^{\msimgame}_2$
    and pure base-game strategy $\pureStrategy_2 \in \PureStrategies^\game_2$,
    we will use
    $
        \mesa{\mixedMetaStrategy}_2(\pureStrategy_2)
    $
    to denote the total probability that $\mixedMetaStrategy_2$ puts on $\pureStrategy_2$.
For example,
    if $\mixedMetaStrategy_2$ represents
        Bob using \texttt{Uniform-bot} with probability 30\%
        and \texttt{Rock-bot} $\meta{\Rock}$ with the remaining 70\% probability,
    we have
        $
            \mesa{\mixedMetaStrategy}_2(\Rock)
            =
                0.3 \cdot \frac{1}{3}
                +
                0.7 \cdot 1
            =
            0.8
        $.

\subsection{Basic Properties of Simulation Games}\label{sec:sub:basic_properties}

While this paper focuses on the implications of mixed-strategy simulation $\mSim$, 
    pure-strategy simulation $\pSim$ will be relevant for two reasons.
First, it serves as an important baseline for comparison.
Second, it can be a useful source of intuitions for the properties of mixed-strategy simulation
    ---
    this is because $\mSim$ in $\game$
    can be also be understood as $\pSim$ in the infinite game
        $
            (\PureStrategies^\game_1, \MixedStrategies^\game_2, \utility)
        $.\footnotemark{}
    \footnotetext{
        In light of the equivalence between $\msimgame$ and $(\PureStrategies^\game_1, \MixedStrategies^\game_2, \utility)_\simSubscriptPure$, we might hope to answer questions about mixed-strategy simulation
            by applying existing pure-strategy simulation theory to $(\PureStrategies^\game_1, \MixedStrategies^\game_2, \utility)$.
        However, this strategy turns out to be inapplicable because the prior work requires the base game $\game$ to be finite.
    }

The following lemma shows that
    -- unlike in pure-strategy simulation games --
    any NE of $\game$ is also an NE of the simulation game $\msimgame$.
This is because if \Pltwo{} puts all probability mass on a single mixed strategy $\mixedStrategy_2$,
    costly simulation results in lower utility for \Plone{} than directly best-responding to $\mixedStrategy_2$.

\begin{restatable}{lemma}{lemmaNEofGAreNEofMsim}\label{lem:NE_of_G_are_NE_of_msim}
    Identifying
        $
            \mixedStrategy \in \MixedStrategies^\game
        $
        with
        $
            (\mixedStrategy_1, \meta{\mixedStrategy}_2) \in \MixedStrategies^{\msimgame}
        $,
    we have $\NE(\game) \subseteq \NE(\msimgame)$ for any $\game$.
\end{restatable}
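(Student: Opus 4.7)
The plan is to take an arbitrary NE $\mixedStrategy = (\mixedStrategy_1, \mixedStrategy_2)$ of $\game$ and verify the two equilibrium conditions in $\msimgame$ for the profile $(\mixedStrategy_1, \meta{\mixedStrategy}_2)$ separately, using the defining properties of $\mSim$ from \Cref{def:mixed-simgame} together with the fact that $\simcost > 0$.

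For \Plone{}'s deviations, I would observe that her strategy set in $\msimgame$ is $\PureStrategies^\game_1 \cup \{\mSim\}$. Deviating to any pure strategy $\pureStrategy_1 \in \PureStrategies^\game_1$ yields $\utility_1^\game(\pureStrategy_1, \mixedStrategy_2)$, which is weakly less than $\utility_1^\game(\mixedStrategy_1, \mixedStrategy_2)$ because every strategy in $\supp(\mixedStrategy_1)$ is a best response to $\mixedStrategy_2$ in $\game$. The only genuinely new deviation is $\mSim$, which by definition gives $\utility_1^\game(\br^\game(\mixedStrategy_2), \mixedStrategy_2) - \simcost$. Since $\mixedStrategy_1$ is itself a best response to $\mixedStrategy_2$, this equals $\utility_1^\game(\mixedStrategy_1, \mixedStrategy_2) - \simcost$, which is strictly worse by $\simcost > 0$. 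Hence no deviation helps \Plone{}.

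For \Pltwo{}'s deviations, the key point is that \Plone{} plays no mass on $\mSim$, so for any meta-strategy $\mixedMetaStrategy_2 \in \MixedStrategies^{\msimgame}_2$ the utility reduces to $\utility_2(\mixedStrategy_1, \mixedMetaStrategy_2) = \E_{\mixedStrategyAlt_2 \sim \mixedMetaStrategy_2} \utility_2^\game(\mixedStrategy_1, \mixedStrategyAlt_2)$, i.e., \Pltwo{}'s payoff depends only on the induced marginal over $\PureStrategies^\game_2$. By the NE property of $\mixedStrategy$ in $\game$, we have $\utility_2^\game(\mixedStrategy_1, \mixedStrategyAlt_2) \leq \utility_2^\game(\mixedStrategy_1, \mixedStrategy_2)$ for all $\mixedStrategyAlt_2 \in \MixedStrategies^\game_2$, which immediately yields $\utility_2(\mixedStrategy_1, \mixedMetaStrategy_2) \leq \utility_2(\mixedStrategy_1, \meta{\mixedStrategy}_2)$.

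There is no real obstacle here; the only subtlety worth being explicit about is the identification step, namely that when \Plone{} does not simulate, \Pltwo{}'s pure strategies $\mixedStrategyAlt_2 \in \MixedStrategies^\game_2 = \PureStrategies^{\msimgame}_2$ act identically to the corresponding mixed base-game strategies (so a meta-strategy cannot outperform any single mixed strategy it averages over). This is precisely the footnote observation about $\msimgame$ \emph{not} reducing to $\game$ running in the other direction: one can freely ``flatten'' a meta-strategy into a mixed strategy against a non-simulating opponent, but not against a simulating one.
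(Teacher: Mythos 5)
Your proposal is correct and follows essentially the same route as the paper's proof: deviations within the base game are ruled out by the NE property of $\game$, and the deviation to $\mSim$ yields exactly the best-response value $\utility^\game_1(\br^\game(\mixedStrategy_2), \mixedStrategy_2) = \utility^\game_1(\mixedStrategy_1, \mixedStrategy_2)$ minus the positive cost $\simcost$, hence is unprofitable. Your treatment of \Pltwo{}'s deviations is slightly more explicit than the paper's (which only checks pure meta-strategy deviations), but the substance is identical.
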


Our primary interest in simulation is to use it as a tool for improving the outcomes for the players.
What, however, ought to be our metric of success,
    particularly in light of \Cref{lem:NE_of_G_are_NE_of_msim}?
When discussing the impacts of simulation on social welfare,
    we will primarily be concerned with whether
    enabling simulation \textbf{introduces Pareto-improving Nash equilibria}.
    Formally, this means that
        \emph{there is some $c_0 > 0$
        s.t. for every $\simcost < c_0$,
        the game
            $\msimgame^\simcost$
        has a Nash equilibrium $\metaNE$
        in which $\utility(\metaNE)$ is \emph{strictly} higher,
            for both players,
        than the utility achievable in any NE of $\game$}
        (and similarly for $\pSim$).

\begin{remark}\label{rem:IPIN_doesnt_mean_there_arent_new_bad_sim_eq}
    Note that the fact that enabling simulation introduces Pareto-improving NE
        does not necessarily preclude simulation from also introducing new NE
        whose utility is \emph{lower} than some, or even all, NE of the original game.
    While it is worth analysing when this occurs, such equilibrium selection problems are largely beyond the scope of the present work.
\end{remark}

Strictly speaking,
    $\msimgame$ is defined as a game where \Pltwo{} has infinitely many pure strategies,
    which could greatly complicate its analysis.
Fortunately, the following result shows that it is always enough to consider a finite number of strategies for \Pltwo{}.

\begin{restatable}[Reduction to a finite strategy space]{proposition}{propFiniteStrategySpace}\label{prop:finite_str_space}
For any finite $\game$,
there exists a finite subgame $\msimgame'$ of $\msimgame$ s.t.:
    \begin{align*}
        & \text{ (i) } 
            \NE(\msimgame') \subseteq \NE(\msimgame)
            ,
            \\
        & \text{(ii) }
            \forall \mixedMetaStrategy \in \NE(\msimgame)
            \,
            \exists \mixedMetaStrategy' \!\! \in \NE(\msimgame') 
            :
            \utility(\mixedMetaStrategy') = \utility(\mixedMetaStrategy)
            .
    \end{align*}
\end{restatable}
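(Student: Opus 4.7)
The plan is to take $\msimgame'$ to be the subgame obtained by restricting Player 2 from $\PureStrategies^{\msimgame}_2 = \MixedStrategies^\game_2$ to a finite set $V$ of vertices derived from a cell decomposition of $\MixedStrategies^\game_2$. Concretely, I would consider the hyperplane arrangement on $\MixedStrategies^\game_2$ generated by all hyperplanes $\{\mixedStrategy_2 : \utility_k(\pureStrategy_1, \mixedStrategy_2) = \utility_k(\pureStrategy_1', \mixedStrategy_2)\}$ for $k \in \{1,2\}$ and $\pureStrategy_1, \pureStrategy_1' \in \PureStrategies^\game_1$. This decomposes $\MixedStrategies^\game_2$ into finitely many open cells, within each of which both $\br^\game$ and $\fbr^\game$ are constant correspondences. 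For each cell $C$, fix representatives $\pureStrategy^*(C) \in \br^\game|_C$ and $\pureStrategy^{**}(C) \in \fbr^\game|_C$; by closedness of the br-correspondence, both remain best responses throughout $\overline{C}$. Define $V$ as the set of $0$-dimensional faces of the arrangement contained in $\MixedStrategies^\game_2$, a finite set whose size is at most exponential in $|\PureStrategies^\game|$, and let $\msimgame'$ be the subgame with $\PureStrategies^{\msimgame'}_1 = \PureStrategies^{\msimgame}_1$ and $\PureStrategies^{\msimgame'}_2 = V$.

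For claim~(i), take a NE $(\mixedStrategy_1, \mixedMetaStrategy_2)$ of $\msimgame'$. Since Player~1's strategy space is unchanged, only Player~2's deviations require argument. Writing $\alpha := \mixedStrategy_1$ and $\phi(\mixedStrategy_2) := \utility_2(\alpha, \meta{\mixedStrategy_2})$, \Cref{def:mixed-simgame} yields $\phi = L + \alpha(\mSim) \cdot \utility_2(\fbr^\game(\cdot), \cdot)$ with $L$ linear in $\mixedStrategy_2$. On $\overline{C}$ the auxiliary linear function $\phi_C(\mixedStrategy_2) := L(\mixedStrategy_2) + \alpha(\mSim)\utility_2(\pureStrategy^{**}(C), \mixedStrategy_2)$ satisfies $\phi_C \le \phi$ (because $\pureStrategy^{**}(C) \in \br^\game(\mixedStrategy_2)$ implies $\utility_2(\fbr^\game(\mixedStrategy_2), \mixedStrategy_2) \ge \utility_2(\pureStrategy^{**}(C), \mixedStrategy_2)$) with equality on $C$. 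Hence, for any maximizer $\mixedStrategy_2^* \in C$ of $\phi$, linearity of $\phi_C$ means its maximum over $\overline{C}$ is attained at some vertex $v \in V \cap \overline{C}$, and $\phi(v) \ge \phi_C(v) \ge \phi_C(\mixedStrategy_2^*) = \phi(\mixedStrategy_2^*)$, so $v$ is a global maximizer of $\phi$. Consequently $\max_V \phi = \max_{\MixedStrategies^\game_2} \phi$, and Player~2's best-response condition transfers from $\msimgame'$ to $\msimgame$.

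For claim~(ii), given a NE $(\mixedStrategy_1, \mixedMetaStrategy_2)$ of $\msimgame$, I would decompose each $\mixedStrategy_2 \in \supp(\mixedMetaStrategy_2)$, which lies in some cell $C$, as a convex combination $\mixedStrategy_2 = \sum_k \lambda_k v_k$ with $v_k \in V \cap \overline{C}$, and form $\mixedMetaStrategy'_2$ by replacing each $\meta{\mixedStrategy_2}$ by $\sum_k \lambda_k \meta{v_k}$ with the original weights. Then $\mesa{\mixedMetaStrategy'_2} = \mesa{\mixedMetaStrategy_2}$, so $\utility_i(\pureStrategy_1, \cdot)$ is preserved for every $\pureStrategy_1 \in \PureStrategies^\game_1$, and $\utility_1(\mSim, \cdot)$ is preserved because $\utility_1(\br^\game(v_k), v_k) = \utility_1(\pureStrategy^*(C), v_k)$ is linear on $\overline{C}$ and averages back to $\utility_1(\br^\game(\mixedStrategy_2), \mixedStrategy_2)$. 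The delicate step is preserving $\utility_2(\mSim, \cdot)$: from $\sum_k \lambda_k \phi_C(v_k) = \phi_C(\mixedStrategy_2) = M$ combined with $\phi_C(v_k) \le \phi(v_k) \le M$, one obtains $\phi_C(v_k) = \phi(v_k) = M$ for every $k$ with $\lambda_k > 0$, so each $v_k$ is a BR of $\msimgame$; moreover, when $\alpha(\mSim) > 0$ the equality $\phi_C(v_k) = \phi(v_k)$ forces $\utility_2(\fbr^\game(v_k), v_k) = \utility_2(\pureStrategy^{**}(C), v_k)$, which is linear in $v_k$ and thus averages to $\utility_2(\fbr^\game(\mixedStrategy_2), \mixedStrategy_2)$ as required (the case $\alpha(\mSim) = 0$ is vacuous). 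The main obstacle is precisely this fbr-preservation, which is why the cell decomposition must be refined by both $\utility_1$- and $\utility_2$-hyperplanes rather than only by br-regions: without the finer decomposition, a vertex on a br-region boundary could secretly upgrade Player~1's favourable best response and silently shift $\utility_2(\mSim, \cdot)$, thereby breaking Player~1's indifference conditions in the original equilibrium.
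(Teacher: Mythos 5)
Your proposal is correct, and it rests on the same core idea as the paper's proof: partition $\MixedStrategies^\game_2$ into finitely many polytopes on which the (favourable-)best-response structure is controlled, keep only their vertices, and replicate any other strategy of \Pltwo{} by a convex combination of vertices of its own polytope, exploiting linearity of $\utility(\pureStrategy_1, \symbolPlaceholder)$ and of $\utility_2(\pureStrategy^{**}, \symbolPlaceholder)$ for a fixed favourable best response $\pureStrategy^{**}$. The execution differs in three ways. First, the paper works directly with the regions $\closure{\inverseFBR(\pureStrategy_1)}$, proving each is a convex polytope by writing it as a union of sets cut out by best-response and tie-breaking inequalities, whereas you refine to the full hyperplane arrangement generated by both players' utility comparisons; your decomposition is finer than necessary, but it buys exact constancy of $\fbr$ on each relatively open face, which streamlines the $\utility_2(\mSim, \symbolPlaceholder)$ bookkeeping. (One point of care: your ``cells'' must include the relatively open faces of \emph{all} dimensions, since an equilibrium strategy $\mixedStrategy_2$ may lie on a hyperplane of the arrangement, and only the face actually containing $\mixedStrategy_2$ guarantees $\phi_C(\mixedStrategy_2) = \phi(\mixedStrategy_2)$.) Second, the paper routes both claims through two general-purpose reduction lemmas (keeping only best responses, and keeping only strategies replicable by convex combinations that fix the opponent's utility and weakly improve one's own), and accordingly only establishes $\utility_2(\mSim, \mixedMetaStrategy'_2) \geq \utility_2(\mSim, \mixedStrategy_2)$, closing the gap via the equilibrium conditions inside the lemma; you verify both containment directions directly and obtain exact preservation of $\utility_2(\mSim, \symbolPlaceholder)$ by the same equilibrium argument ($\phi_C(v_k) \le \phi(v_k) \le M$ combined with $\sum_k \lambda_k \phi_C(v_k) = M$) applied in situ. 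Third, the paper additionally filters \Pltwo{}'s vertices down to best responses and shows that letting \Plone{} mix over mixed strategies changes nothing; neither step is needed for the statement as given, so omitting them costs you only a larger (still finite) vertex set. Both arguments are sound; the paper's is more modular, since its two lemmas are reused elsewhere, while yours is more self-contained.
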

    
\begin{proof}[Proof sketch]
    $\MixedStrategies^\game_2$ can be expressed as a union of (non-closed) polytopes
        $
            \inverseFBR(\pureStrategy_1)
            :=
        $
        $
            \left\{
                \mixedStrategy_2 \in \MixedStrategies_2^\game
                \mid
                \fbr(\mixedStrategy_2) \ni \pureStrategy_1
            \right\}
        $,
        $
            \pureStrategy_1 \in \PureStrategies_1^\game
        $.
    We then have $\utility^{\msimgame}_2(\mSim, \mixedStrategy_2) = \utility^\game_2(\pureStrategy_1, \mixedStrategy_2)$
        for any strategy $\mixedStrategy_2 \in \inverseFBR(\pureStrategy_1)$.
    \Pltwo{} can then recover all relevant strategies
        by mixing over the vertices of the closure $\closure{\inverseFBR(\pureStrategy_1)}$.
\end{proof}

\section{Computational Results}
    \label{sec:computational}\label{SEC:COMPUTATIONAL}

In this section, we investigate the difficulty of analysing mixed-strategy simulation games.
From \Cref{prop:finite_str_space}, it follows that even though $\msimgame$ is defined as an infinite game,
it can be solved in finite time. 
By ``solving'' a game, we mean any of: 
    (a) finding one \NE{};
    (b) finding an \NE{} that maximises social welfare or the utility of one of the players;
    or (c) finding all \NE{} payoff profiles and some \NE{} corresponding to each.

\begin{restatable}[Upper bound on solving $\msimgame$]{proposition}{solvingUpperBound}\label{prop:solve_upper_bound}
    For any $\game$, solving $\msimgame$ is
        at most as difficult as solving a game $\meta{\game}$ with
            $| \PureStrategies^{\meta{\game}} | = 
                \bigO(|\PureStrategies^\game_1|^2
                \cdot
                2^{|\PureStrategies^\game_2|})$.
\end{restatable}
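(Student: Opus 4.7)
My plan is to apply \Cref{prop:finite_str_space} to replace the infinite game $\msimgame$ by a finite subgame $\msimgame'$, and then bound the size of $\msimgame'$. Setting $\meta{\game} := \msimgame'$, any algorithm that solves $\meta{\game}$ (in any of the three senses listed before the proposition) automatically solves $\msimgame$: part~(i) of \Cref{prop:finite_str_space} guarantees that every NE of $\meta{\game}$ is an NE of $\msimgame$, while part~(ii) guarantees that every NE payoff profile attainable in $\msimgame$ is also attained in $\meta{\game}$. It therefore suffices to bound $|\PureStrategies^{\meta{\game}}| = |\PureStrategies^{\meta{\game}}_1| \cdot |\PureStrategies^{\meta{\game}}_2|$.

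The easy half is $|\PureStrategies^{\meta{\game}}_1| = |\PureStrategies^\game_1| + 1$, since \Plone{}'s strategy set is inherited directly from $\msimgame$. The main task is to bound $|\PureStrategies^{\meta{\game}}_2|$, which by the proof of \Cref{prop:finite_str_space} equals the total number of vertices across the polytopes $\closure{\inverseFBR(\pureStrategy_1)}$ as $\pureStrategy_1$ ranges over $\PureStrategies^\game_1$. Fixing $\pureStrategy_1$ and writing $m := |\PureStrategies^\game_1|$, $n := |\PureStrategies^\game_2|$, the polytope $P := \closure{\inverseFBR(\pureStrategy_1)}$ sits in the affine hull of $\Delta(\PureStrategies^\game_2)$ (dimension $n - 1$) and is carved out by the $n$ non-negativity constraints $\mixedStrategy_2(\pureStrategy_2) \geq 0$ together with the $m - 1$ best-response inequalities $\utility_1(\pureStrategy_1, \mixedStrategy_2) \geq \utility_1(\pureStrategy'_1, \mixedStrategy_2)$ for $\pureStrategy'_1 \neq \pureStrategy_1$.

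Each vertex of $P$ is the unique solution of a maximal linearly independent system of tight constraints, and can be indexed by a pair $(T, B)$ with $T \subseteq \PureStrategies^\game_2$ the support of $\mixedStrategy_2$ and $B \subseteq \PureStrategies^\game_1 \setminus \{\pureStrategy_1\}$ the set of tight best-response ties. A dimension count within the face of $\Delta(\PureStrategies^\game_2)$ corresponding to support $T$ (which has dimension $|T| - 1$) forces $|B| = |T| - 1$, and hence $|T| \leq m$. The final step is a combinatorial estimate of the number of admissible $(T, B)$ pairs: using $|T| \leq m$ to control the binomial factor $\binom{m-1}{|T|-1}$ and summing over supports $T$, one obtains $|\PureStrategies^{\meta{\game}}_2| = O(m \cdot 2^n)$ in total, and combined with $|\PureStrategies^{\meta{\game}}_1| = O(m)$ this gives the claimed $|\PureStrategies^{\meta{\game}}| = O(m^2 \cdot 2^n)$.

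The main obstacle is the vertex count itself: a naive enumeration that allows arbitrary $|T|$ loses an extra $2^m$ factor, so the argument must carefully exploit the constraint $|T| \leq m$ (and the resulting smallness of $|B|$) to collapse the sum of binomials into the stated bound.
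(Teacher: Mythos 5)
Your high-level route is the same as the paper's: invoke \Cref{prop:finite_str_space}, set $\meta{\game} := \msimgame'$, observe $|\PureStrategies^{\meta{\game}}_1| = |\PureStrategies^\game_1| + 1$, and bound $|\PureStrategies^{\meta{\game}}_2|$ by counting vertices of the best-response regions. The gap is in your final combinatorial step, which you yourself flag as ``the main obstacle'' but never actually carry out --- and which, as parametrised, cannot be carried out. Writing $m := |\PureStrategies^\game_1|$ and $n := |\PureStrategies^\game_2|$ as you do, your region is cut out of an $(n-1)$-dimensional simplex by $n$ non-negativity constraints plus $m-1$ best-response inequalities, and your $(T,B)$ enumeration with $|B| = |T|-1$ gives, per region, $\sum_{t}\binom{n}{t}\binom{m-1}{t-1} = \binom{n+m-1}{m}$ by Vandermonde. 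For fixed $n$ this grows like $m^{\,n-1}$, not like a constant times $2^n$, so the claimed collapse to $O(m \cdot 2^n)$ total is false. This is not an artifact of loose counting: a polytope of dimension $n-1$ with $n+m-1$ facets can genuinely have a number of vertices that is polynomial in $m$ of degree $\Theta(n)$, so no argument based solely on your facet count can deliver a bound whose dependence on $m$ is quadratic.

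The paper's proof avoids the $m$-dependence at precisely this point: it further relaxes $\extremalPoints(\closure{\inverseFBR(\pureStrategy_1)})$ to $\extremalPoints(\closure{\inverseBR(\pureStrategy_1)})$ and then describes each region $\inverseBR(\pureStrategy_1)$ by only $2|\PureStrategies^\game_2|$ inequality constraints, so that each region has at most $\binom{2|\PureStrategies^\game_2|}{|\PureStrategies^\game_2|-1}$ vertices --- a quantity independent of $|\PureStrategies^\game_1|$ --- and only the outer sum over $\pureStrategy_1$ contributes a factor of $m$. (A caveat: the inequalities the paper writes down, $\utility_2(\pureStrategy_1, \mixedStrategy_2) \geq \utility_2(\pureStrategy_1, \pureStrategy_2)$, characterise \Pltwo{}'s best responses \emph{to} $\pureStrategy_1$, whereas the set used in \Cref{prop:finite_str_space} is the set of $\mixedStrategy_2$ to which $\pureStrategy_1$ is \Plone{}'s best response --- the reading you adopt. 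So your description of the polytope is the one consistent with the reduction, but it is exactly the description under which the facet count, and hence the vertex count, picks up the dependence on $m$ that breaks your bound.) To repair your proof you need some substitute for this constraint-reduction step; the dimension count $|T| \leq m$ alone does not suffice.
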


\begin{proof}[Proof sketch]
    The non-trivial part is the size of $\PureStrategies^{\meta{\game}}_2$.
    \Cref{prop:finite_str_space} shows that a suitable $\PureStrategies^{\meta{\game}}_2$ can be obtained by
        splitting the $(\vert \PureStrategies^\game_2 \vert \shortminus 1)$-dimensional simplex $\MixedStrategies^\game_2$
        into $|\PureStrategies^\game_1|$ convex polytopes
    and only considering the vertices of these polytopes.
    Estimating the number of these vertices yields the result.
\end{proof}

The fact that $\NE(\game) \subseteq \NE(\msimgame)$
    trivially implies that finding \textit{all} \NE{} of $\msimgame$
    is at least as difficult as finding all NE of $\game$.
The difficulty of finding simulation equilibria of $\msimgame$ depends on $\simcost$.
    When $\simcost$ is prohibitively high,
        solving $\msimgame$ is equivalent to solving $\game$
        (since Alice never simulates).
    For general $\simcost$, we leave determining the exact complexity of finding simulation equilibria of $\msimgame$ as an open problem.

From the perspective of a designer, arguably the most important question is
    whether enabling simulation is likely to lead to more socially beneficial outcomes in a given game.
    The following result shows that this is, in general, hard to determine.

\begin{restatable}[Determining whether simulation helps is hard]{theorem}{helpsIsNPHtheorem}\label{thm:helps_is_NPH_v3}
    Denote by $P_\textnormal{a}$, \dots, $P_\textnormal{e}$ the problems of determining whether enabling $\mSim$ introduces an \NE{} which is strictly better than all \NE{} of $\game$ in terms of
        (a)~both players' utilities, 
        (b)~\Plone{}'s utility,
            (c)~\Pltwo{}'s utility,
        (d)~any strictly monotonic social welfare function (such as $\utility_1 + \utility_2$ or $\utility_1 \cdot \utility_2$), or
        (e)~the egalitarian social welfare function $\min\{ u_1, u_2\}$.
        
    For general games $\game$, each of the problems $P_\textnormal{a}$, \dots, $P_\textnormal{e}$ is \NPH{}.
\end{restatable}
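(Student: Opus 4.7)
The plan is to establish the theorem via a polynomial-time reduction from 3-SAT (or, alternatively, from the Gilboa--Zemel / Conitzer--Sandholm problem of finding a Nash equilibrium with high utility, which is itself \NPH{}). I aim first to prove NP-hardness of $P_\textnormal{a}$ (Pareto-improvement in both coordinates), and then argue that the same gadget immediately implies NP-hardness of the remaining variants $P_\textnormal{b}$--$P_\textnormal{e}$, since in my construction every yes-instance admits a simulation equilibrium that strictly improves \emph{both} coordinates, while every no-instance leaves the achievable NE payoff set at the baseline $(0,0)$.

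Given a 3-SAT formula $\phi$ with variables $x_1,\dots,x_n$ and clauses $C_1,\dots,C_m$, I would construct a two-player NFG $\game_\phi$ as follows. \Pltwo{} has a pure ``literal'' strategy $s_\ell$ for each literal $\ell$ of $\phi$, plus a ``safe'' strategy $s_0$. \Plone{} has a pure ``clause'' strategy $t_j$ for each clause $C_j$, plus a ``baseline'' $t_0$. Payoffs are chosen so that: (i) $\game_\phi$ has a unique NE, $(t_0,s_0)$, with payoff $(0,0)$, enforced by making $s_0$ strictly dominant against $t_0$ and every $t_j$ strictly worse than $t_0$ against $s_0$, so \Plone{} never prefers a clause-strategy without first learning something about \Pltwo{}; (ii) committing to the uniform mixture over the literals of a satisfying assignment of $\phi$ makes every $t_j$ a best response for \Plone{} and yields both players utility at least $1$ under the favourable tie-break rule; (iii) every mixed strategy of \Pltwo{} that does \emph{not} correspond to a satisfying assignment gives some clause-strategy $t_j$ a strictly higher follower-utility than whatever response \Pltwo{} would otherwise prefer, while simultaneously driving \Pltwo{}'s own utility strictly below $0$.

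With $\game_\phi$ in hand, the argument splits into two directions. In the ``yes'' direction, when $\phi$ is satisfiable, I construct a simulation equilibrium by analogy with the partial-trust game of \Cref{sec:sub:illustrative_example}: \Pltwo{} uses a mixed meta-strategy supported on $\meta{s_0}$ and on the mixed-strategy witness of a satisfying assignment, and \Plone{} mixes between $\mSim$ and $t_0$ so as to make \Pltwo{} indifferent over its support; the weights are chosen to ensure strict Pareto improvement over $(0,0)$ for every $\simcost$ below an explicit threshold. In the ``no'' direction, when $\phi$ is unsatisfiable, I combine (iii) with \Cref{lem:NE_of_G_are_NE_of_msim} and \Cref{prop:finite_str_space}: any putative simulation equilibrium of $\msimgame_\phi$ in which \Plone{} simulates with positive probability would force \Pltwo{} to place positive weight on some non-satisfying mixed strategy, but then \Plone{}'s simulated best response is a clause-strategy that gives \Pltwo{} negative utility -- contradicting the equilibrium condition -- so every NE of $\msimgame_\phi$ remains one of the original NE of $\game_\phi$ with payoff $(0,0)$.

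The main obstacle is the ``no'' direction: by \Cref{prop:finite_str_space}, \Pltwo{}'s strategically relevant mixed strategies are the vertices of the polytopes $\closure{\inverseFBR(\pureStrategy_1)}$, and these vertices can in principle combine literal- and safe-strategies in many exotic ways. The gadget must be engineered so that such vertices correspond \emph{exactly} to the combinatorial objects we want (satisfying truth assignments, or else trivially-bad mixtures), precluding any ``partially lying'' deviation that might otherwise stabilise \Pltwo{} at a positive payoff. Once the reduction is verified for $P_\textnormal{a}$, the contrast between the baseline payoff $(0,0)$ and a strictly Pareto-dominating simulation payoff in the yes-instances immediately yields NP-hardness of $P_\textnormal{b}$, $P_\textnormal{c}$, $P_\textnormal{d}$ (for any strictly monotonic social welfare function), and $P_\textnormal{e}$ by a single construction.
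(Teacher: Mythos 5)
Your construction cannot work as described, for a reason independent of the details of the gadget: conditions (ii) and (iii) together would make satisfiability of $\phi$ equivalent to the statement ``there exists a mixed strategy $\mixedStrategy_2$ of \Pltwo{} whose favourable best response yields both players utility at least $1$'' (indeed, (ii) gives this for satisfiable $\phi$, while (iii) forces $\utility_2(\fbr(\mixedStrategy_2), \mixedStrategy_2) < 0$ for every $\mixedStrategy_2$ when $\phi$ is unsatisfiable). But that existence question is decidable in polynomial time: for each pure response $t_j$ of \Plone{} one solves a linear program over the polytope of commitments to which $t_j$ is a best response, adding the constraints $\utility_1(t_j, \mixedStrategy_2) \geq 1$ and $\utility_2(t_j, \mixedStrategy_2) \geq 1$ (this is the standard Conitzer--Sandholm computation of optimal commitments). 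A gadget satisfying (ii) and (iii) would therefore yield a polynomial-time algorithm for \textsc{3-SAT}. The underlying point is that you have placed the computational hardness in the \emph{commitment structure} of the game --- which part of \Pltwo{}'s simplex supports a Pareto-improving Stackelberg commitment --- and that structure is tractable. (A secondary problem: in your ``yes''-direction equilibrium \Pltwo{} mixes $\meta{s_0}$ into their meta-strategy, but every response to $s_0$ gives \Pltwo{} utility $0$, so \Pltwo{}'s indifference condition pins their equilibrium payoff at $0$ and the profile is not a strict Pareto improvement in \Pltwo{}'s coordinate.)

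The paper's proof inverts the placement of the hardness. It builds a game in which enabling $\mSim$ \emph{always} introduces an \NE{} with payoffs $(1-\simcost, 1)$ --- via a coordination game over two identical trust games, where \Pltwo{} randomises over which trust game to cooperate in and \Plone{} must simulate to coordinate --- and attaches a separate subgame, constructed from \textsc{Complete Bipartite Subgraph} following \citet{sauerberg2024computing}, whose only possible \NE{} payoffs are $(0,0)$ and $(1,1)$, where the presence of the $(1,1)$ equilibrium is \NPH{} to decide and is unaffected by simulation. Whether simulation ``helps'' then reduces to whether the good outcome was already achievable as an \NE{} of $\game$, i.e.\ to an \NPH{} question about \emph{Nash equilibria of the base game} rather than about commitments. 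If you want to salvage a reduction in your style, this is the move you need: the hardness must live in $\NE(\game)$ (the Gilboa--Zemel-type problem you mention only in passing), not in the set of profitable commitments.
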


\begin{proof}[Proof sketch]
    The proof has two main ingredients.
    First, we create a game where the only \NE{} payoffs are $(0, 0)$,
        but enabling simulation introduces a simulation equilibrium with payoffs $(1-\simcost, 1)$.
        (This is easily achieved in a scenario where the players need to coordinate between two identical trust games; cf. \Cref{fig:hardness_game}.)
    Second, we use a pre-existing method \cite{sauerberg2024computing} for constructing a class $\mc C$ of games such that
        (a) any $\game \in \mc C$ has a \NE{} with payoffs $(0, 0)$;
        (b) a game $\game \in \mc C$ may or may not have a \NE{} with payoffs $(1, 1)$, but definitely has no other \NE{};
        (c) determining whether $\game \in \mc C$ does or does not have the \NE{} with payoffs $(1, 1)$ is \NPH{}.
    We then show that enabling $\mSim$ does not affect the \NE{} of $\game \in \mc C$.
    
    By putting these two ingredients together, we obtain a class of games where
        enabling $\mSim$ is guaranteed to yield a good outcome,
        but such outcome may or may not have been possible even without simulation
        -- and determining whether this is the case or not is equivalent to solving a problem that is known to be \NPH{}.
\end{proof}

For the purpose of this paper,
    \Cref{thm:helps_is_NPH_v3} suggests that we should not expect to be able to find a concise description of the effects of enabling $\mSim$ in general games.
    We will, therefore, instead focus on identifying particular classes of games where simulation has predictable effects.

\section{Effects of Simulation on Players' Welfare}\label{sec:welfare}

In this section, we describe specific classes of games where enabling mixed-strategy simulation does, and does not, lead to socially beneficial outcomes.

\subsection{Drawbacks of an Overly Informed Co-Player}\label{sec:sub:negative}

In \Cref{sec:sub:illustrative_example}, we saw that in the simple case of a $2 \times 2$ trust game\footnotemark{}, enabling $\pSim$ \iPiN{}{} but enabling $\mSim$ does not.
The following theorem is a generalisation of this negative result.
    \footnotetext{
        Strictly speaking, \Cref{sec:sub:illustrative_example} describes the trust game as a simultaneous-move game.
        However, note that this game is strategically equivalent to the game where
            \Plone{} acts first
                (deciding between \Trust{} and \WalkOut{}),
            after which \Pltwo{} observes \Plone{}'s action
            and chooses whether to \Cooperate{} or \Defect{}.
        In other words, this $2 \times 2$ trust game is a normal-form representation of a game which satisfies the assumptions of \Cref{thm:negative}.
    }

\begin{restatable}[Simulating a perfectly informed player]{theorem}{thmNegativeResult}
    \label{thm:negative}\label{THM:NEGATIVE}
    Let $\game_0$ be a finite two-player game.
    Denote by $\game$ the game where:
    \begin{enumerate}[label=(\roman*)]
        \item First, \Plone{} selects $\pureStrategy_1 \in \PureStrategies^{\game_0}_1$ and \Pltwo{} observes \Plone{}'s choice.
        \item Next, \Pltwo{} selects a pure strategy $\pureStrategy_2 \in \PureStrategies^{\game_0}_2$.
            We assume that \Pltwo{} must select a Pareto-optimal response
                (but they are not required to best-respond).\footnotemark{}
                \footnotetext{
                    Formally, we require that
                    if \Pltwo{} selects $\pureStrategy_2$ in response to $\pureStrategy_1$, then any $\pureStrategy'_2 \in \PureStrategies^{\game_0}_2$ must have
                    either
                    $
                        \utility_1(\pureStrategy_1, \pureStrategy'_2)
                        \leq
                        \utility_1(\pureStrategy_1, \pureStrategy_2)
                    $
                    or
                    $
                        \utility_2(\pureStrategy_1, \pureStrategy'_2)
                        \leq
                        \utility_2(\pureStrategy_1, \pureStrategy_2)
                    $.
                }
        \item The players receive utilities $\utility_\pl^{\game_0} (\pureStrategy_1, \pureStrategy_2)$.
    \end{enumerate}

    \noindent
    Then enabling $\mSim$ does not \iPiNwithoutS{} in $\game$.
\end{restatable}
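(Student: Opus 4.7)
The plan is to proceed by contradiction: assume $\metaNE = (\mixedStrategy_1^*, \mixedMetaStrategy_2^*)$ is an NE of $\msimgame^\simcost$ whose payoffs $(v_1, v_2)$ strictly exceed every NE payoff of $\game$ component-wise. By \Cref{lem:NE_of_G_are_NE_of_msim} and strict dominance, P1 must simulate with probability $\alpha > 0$; let $\mixedStrategy_1^0$ denote her conditional strategy on the non-simulation branch. For $\mixedStrategy_2 \in \MixedStrategies_2^\game$ and $\pureStrategy_1 \in \PureStrategies_1^{\game_0}$, write $\mixedStrategy_2|_{\pureStrategy_1} \in \Delta(\PureStrategies_2^{\game_0})$ for the marginal distribution of $f(\pureStrategy_1)$ when $f \sim \mixedStrategy_2$; by the theorem's Pareto-optimality assumption, $\mixedStrategy_2|_{\pureStrategy_1}$ is supported on $\mathrm{PO}(\pureStrategy_1)$, the Pareto-optimal responses at $\pureStrategy_1$. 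Set $v_1^* := \max_{\pureStrategy_1} \utility_1^{\game_0}(\pureStrategy_1, \br^{\game_0}(\pureStrategy_1))$, where $\br^{\game_0}(\pureStrategy_1)$ is \Pltwo{}'s best response to $\pureStrategy_1$ in $\game_0$. A short backward-induction argument in $\game$ gives $v_1^* = \max \{\utility_1(n) : n \in \NE(\game)\}$, so the contradiction will follow once I derive $v_1 < v_1^*$.

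The main geometric fact I will use is that, at each $\pureStrategy_1$, $\mathrm{PO}(\pureStrategy_1)$ forms an antichain in $(\utility_1^{\game_0}, \utility_2^{\game_0})$-space along which $\utility_2$ is strictly decreasing in $\utility_1$; consequently $\br^{\game_0}(\pureStrategy_1)$ (the $\utility_2$-maximiser) also minimises $\utility_1^{\game_0}(\pureStrategy_1, \cdot)$ over $\mathrm{PO}(\pureStrategy_1)$, and any $\mu \in \Delta(\mathrm{PO}(\pureStrategy_1))$ with $\utility_1^{\game_0}(\pureStrategy_1, \mu) > \utility_1^{\game_0}(\pureStrategy_1, \br^{\game_0}(\pureStrategy_1))$ automatically satisfies $\utility_2^{\game_0}(\pureStrategy_1, \mu) < \utility_2^{\game_0}(\pureStrategy_1, \br^{\game_0}(\pureStrategy_1))$ strictly. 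The key claim is then $\utility_1^\game(\fbr^\game(\mixedStrategy_2), \mixedStrategy_2) \leq v_1^*$ for every $\mixedStrategy_2 \in \supp(\mixedMetaStrategy_2^*)$. Suppose otherwise, with $c := \utility_1^\game(\fbr^\game(\mixedStrategy_2), \mixedStrategy_2) > v_1^*$, and let $T = \{\pureStrategy_1 : \utility_1^{\game_0}(\pureStrategy_1, \mixedStrategy_2|_{\pureStrategy_1}) = c\}$. I would construct the deviation $\mixedStrategy_2'$ that, at each $\pureStrategy_1 \in T$, shifts $\mixedStrategy_2|_{\pureStrategy_1}$ by an $\epsilon_{\pureStrategy_1}$-fraction toward $\br^{\game_0}(\pureStrategy_1)$, with $\epsilon_{\pureStrategy_1} := \delta / (c - \utility_1^{\game_0}(\pureStrategy_1, \br^{\game_0}(\pureStrategy_1)))$. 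The calibration makes $\utility_1^{\game_0}(\pureStrategy_1, \mixedStrategy_2'|_{\pureStrategy_1}) = c - \delta$ uniformly over $T$, while untouched values at $\pureStrategy_1 \notin T$ remain strictly below $c - \delta$ for small $\delta > 0$. The geometric fact forces $\utility_2^{\game_0}(\pureStrategy_1, \mixedStrategy_2'|_{\pureStrategy_1})$ to strictly increase at every $\pureStrategy_1 \in T$, so the post-deviation $\fbr^\game$-image still lies in $T$ but with strictly higher $\utility_2$; Bob's payoff then strictly rises on the simulation branch and weakly rises on the non-simulation branch, contradicting optimality of $\mixedMetaStrategy_2^*$.

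With the claim established, P1's indifference between $\mSim$ and every $\pureStrategy_1 \in \supp(\mixedStrategy_1^0)$ gives $\E_{\mixedStrategy_2 \sim \mixedMetaStrategy_2^*}[\utility_1^\game(\fbr^\game(\mixedStrategy_2), \mixedStrategy_2)] = v_1 + \simcost$, and averaging the claim yields $v_1 + \simcost \leq v_1^*$, whence $v_1 < v_1^*$ — the desired contradiction. The main obstacle is the coordinated construction of the shift: the rates $\epsilon_{\pureStrategy_1}$ must preserve $T$ exactly, the untied $\pureStrategy_1$'s must remain strictly below the new tie, and the antichain property must be invoked to guarantee \emph{strict} $\utility_2$-increase at every $\pureStrategy_1 \in T$, including the non-generic cases where several pure responses are tied for $\br^{\game_0}(\pureStrategy_1)$.
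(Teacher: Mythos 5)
Your proposal is correct, and it rests on the same mechanism as the paper's proof: because \Pltwo{} observes $\pureStrategy_1$, they can profitably shift toward more aggressive (higher-$\utility_2$, hence lower-$\utility_1$) responses at the strategies \Plone{} is currently best-responding with, which caps \Plone{}'s simulation payoff at the no-simulation equilibrium value and kills any Pareto-improving simulation equilibrium. The execution differs, though. The paper's deviation is wholesale: replace the response by the most aggressive allowed one $\optimalResponse{\pureStrategy_1}$ at every $\pureStrategy_1$ other than \Plone{}'s favourable best response $\pureStrategy_1^0$, and at $\pureStrategy_1^0$ play the optimal mixed commitment that pins \Plone{} exactly to her maxmin value; this immediately yields the quantitative bound $\utility_1(\mixedMetaStrategy_1, \mixedMetaStrategy_2) \leq \maxminV - \mixedMetaStrategy_1(\mSim)\cdot\simcost$ for \emph{every} profile in which \Pltwo{} best-responds, from which both the nonexistence of simulation equilibria and the theorem follow. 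Your deviation is perturbative: an $\epsilon_{\pureStrategy_1}$-shift toward $\br^{\game_0}(\pureStrategy_1)$ calibrated so the tie set $T$ is preserved at level $c-\delta$. This works — the calibration is sound, $\epsilon_{\pureStrategy_1}\in(0,1)$ for small $\delta$ since the denominators are bounded below by $c - v_1^* > 0$, and the strict $\utility_2$-increase on $T$ combined with $\alpha>0$ gives the contradiction — but it only refutes the specific hypothetical NE rather than delivering the paper's uniform payoff bound, and it requires the extra bookkeeping you flag (preserving $T$, keeping untied strategies strictly below the new tie). One shared caveat: both arguments read the Pareto-optimality footnote as \emph{strict} anti-monotonicity of $\utility_2$ in $\utility_1$ on the allowed response set (your ``antichain'' fact, the paper's displayed ``iff''), which silently excludes allowed responses tied in one coordinate but not the other; since the paper makes the identical idealisation, this is not a gap relative to the intended statement.
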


\begin{proof}[Proof sketch]
    When \Pltwo{} can select $\pureStrategy_2$ as a function of \Plone{}'s choice of $\pureStrategy_1$,
        they can increase the relative attractiveness of any fixed $\pureStrategy_1^* \in \PureStrategies_1$
            by being maximally aggressive against any $\pureStrategy_1 \neq \pureStrategy_1^*$.
        Crucially, \Pltwo{} can do this without lowering \Plone{}'s utility of $\pureStrategy^*_1$.
    Moreover, \Pltwo{} can then bring \Plone{}'s utility for $\pureStrategy^*_1$ all the way to their
        maxmin value
        -- and any NE of $(\game_0)_\simSubscriptMixed$ will require \Pltwo{} to do so.
    However, once \Plone{} only gets their maxmin value, they have no reason to simulate,
        destroying any potential for simulation-based cooperation.
\end{proof}

\subsection{Partial Trust}
    \label{sec:sub:positive_partial_trust}\label{SEC:SUB:POSITIVE_PARTIAL_TRUST}

The following definition captures settings where
    Alice can vary the degree to which she trusts Bob,
    with more trust enabling better outcomes for both,
    but also making Alice more vulnerable to exploitation
    (for illustration, see \Cref{fig:gPTG}).
The purpose of this section is to show that settings where such modulation of trust is possible
    can benefit from mixed-strategy simulation.

\begin{figure}[tb]
    \centering
    \begin{NiceTabular}{rccc}[cell-space-limits=3pt]
                                & $\Cooperate$     & $\Defect$ & $\ \ \ $\\
        $\T_1$ & \Block[hvlines]{4-2}{}
                                $20, 20$    & $-100, 100$           & $\ \ \ \ \ \ \ $ \\
        $\T_2$ &       $10, 10$    & $\ \, -20, 20 \ $     & $\ \ \ \ \ \ \ $ \\
        $\T_3$ &            $5, 3$      & $\ \ \ -1, 6 \ $       & $\ \ \ \ \ \ \ $ \\
        $\WO$ &            $0, 0$      & $\ \ \ 0, 0 \ $       & $\ \ \ \ \ \ \ $ \\
        \\
        $\T'_1$ & \Block[hvlines]{3-2}{}
                                $20, 20$    & $-100, 100$           & $\ \ \ \ \ \ \ $ \\
        $\T'_2$ &       $10+1, 10 \shortminus 1$    & $\ \, -30, 30 \ $     & $\ \ \ \ \ \ \ $ \\
        $\T'_{1.5}$ &       $\frac{10+20}{2}, \frac{10+20}{2}+1$    & $\ \, \frac{(\shortminus 20) + (\shortminus 100)}{2}, \frac{20+100}{2} \ $     & $\ \ \ \ \ \ \ $ \\
        \\
        $\T_{1.9}$ & \Block[hvlines]{1-2}{}
                                $11, 11$    & $-99, 99$           & $\ \ \ \ \ \ \ $
    \end{NiceTabular}
    \caption{\textit{Top:} An illustration of a generalised partial-trust game $\game$ from \Cref{def:gPTG}.
        \textit{Middle:} Examples of strategies that would invalidate the technical conditions in the definition if we added them to $\game$.
        $\T'_1$ fails (4a), since it does not have a unique value $\utility_1(\T, \C)$.
        $\T'_2$ fails the requirement (4b), that any increase in $\utility_1(\T, \C)$
            -- here caused by going from $\T_2$ to $\T'_2$ --
            must also increase $\utility_2(\T, \C)$
            (and $\utility_2(\T, \D)$, and decrease $\utility_1 (\T, \D)$).
        $\T'_{1.5}$ fails (5),
            since $\T'_{1.5}$ yields the same $\utility_1$ as the convex combination $\frac{1}{2} \cdot \T_1 + \frac{1}{2} \cdot \T_2$
            without also having the same $\utility_2$.
        \textit{Bottom:} $\T_{1.9}$ is an example of a strategy 
        that
            would not invalidate any of the technical conditions from the definition,
            but adding it to $\game$ would break the assumption of ``sufficiently high $\utility_2(\FT, \C)$'' that is required for \Cref{thm:generalised_PTG}.
    }
    \Description{Top: An illustration of a generalised partial-trust game.
    Middle: An examples of actions that would not be allowed by a definition of a generalised PTG.
    Bottom: An example of an action that would be allowed, but would cause the next theorem to fail.}
    \label{fig:gPTG}
\end{figure}

\begin{restatable}[Generalised Partial-Trust Game]{definition}{defGPTG}\label{def:gPTG}
    By a \textbf{generalised partial-trust game} (PTG), we mean any
        $
            \game
            =
            (\PureStrategies_1, \PureStrategies_2, \utility)
        $
    that satisfies the conditions
    \begin{enumerate}[label={(\arabic*)}]
        \item \textbf{\Pltwo{} has two strategies:}
            \Pltwo{} only has only two pure strategies,
            which we label $\Cooperate$ ($\C$) and $\Defect$ ($\D$);
        \item \textbf{\Plone{} has a dedicated strategy for opting out of the game:}
            \Plone{} has a strategy, which we label $\WalkOut$ ($\WO$), for which
            $
                \utility(\WO, \C)
                =
                \utility(\WO, \D)
                =
                (0, 0)
            $;
        \item \textbf{Trust enables profits but is exploitable:}\\
            Any \Plone{}'s strategy $\T \neq \WO$ (``trust'') satisfies
            \begin{align*}
                \utility_1(\T, \C)
                & \ > \ 
                \utility_1(\WO, \symbolPlaceholder) = 0
                \ > \ 
                \utility_1(\T, \D)
                \\
                \utility_2(\T, \D)
                & \ > \ 
                \utility_2(\T, \C)
                \ > \ 
                \utility_2(\WO, \symbolPlaceholder) = 0
                ;
            \end{align*}
    \end{enumerate}
    and the technical assumptions
    \begin{enumerate}[label={(\arabic*)}]
        \setcounter{enumi}{3}
            \item \textbf{There is a straightforward hierarchy of trust:}\\
            (a) For any two strategies $\T \neq \T'$, we have $\utility_1(\T, \C) \neq \utility_1(\T', \C)$.\\
            (b) When
                $
                    \utility_1(\T, \C) > \utility_1(\T', \C)
                $,
            we also have
                $
                    \utility_2(\T, \C) > \utility_2(\T', \C)
                $,
                $
                    \utility_1(\T, \D) < \utility_1(\T', \D)
                $,
                $
                    \utility_2(\T, \D) > \utility_2(\T', \D)
                $;
        \item \textbf{\Plone{} cannot use convex combinations for tie-breaking:}\\
            For any $\T$,
            if a convex combination $\mixedStrategy_1 = \lambda \pureStrategy_1 + (1 \shortminus \lambda) \pureStrategyAlt_1$
            satisfies
                $\utility_1(\T, \mixedStrategy_2) = \utility_1(\mixedStrategy_1, \mixedStrategy_2)$
                for all $\mixedStrategy_2$,
            it must also satisfy
                $\utility_2(\T, \mixedStrategy_2) = \utility_2(\mixedStrategy_1, \mixedStrategy_2)$
                for all $\mixedStrategy_2$.
            \label{ass:gPTG_definition_no_tiebreaking}
    \end{enumerate}
\end{restatable}

\noindent
To give an intuition for the conditions used in \Cref{def:gPTG}, note that 
    (3) ensures that non-zero payoffs can only be achieved when \Plone{} $\Trust$s \Pltwo{},
        but \Pltwo{} is always tempted to $\Defect$,
        which makes \Plone{} strictly worse off than if they $\WalkOut$.
    The technical conditions (4a) and (5) ensure that
        once \Plone{} decides on the tradeoff between potential gains from cooperation and exploitability,
        they have no room left for varying \Pltwo{}'s payoffs.
    The technical condition (4b) ensures that
        higher cooperative gains for \Plone{} go hand in hand with higher cooperative gains for \Pltwo{}
        (but also increase \Plone{}'s exploitability and \Pltwo{}'s gains from defection).

The concept of a game with a gradation of trust can be extended in many ways,
    such as not having the default outcome be zero,
    not requiring that a higher degree of trust means that $\utility_2(\T, \D)$ is higher,
    giving \Pltwo{} a hierarchy of cooperative and defective strategies, etc.
However, to simplify the exposition,
    this paper will only consider the basic setup described in \Cref{def:gPTG}.
The following lemma
    summarises the basic properties of generalised partial-trust games.   

\begin{restatable}{lemma}{lemmaBasicPropertiesOfGPTG}\label{lem:gPTG_properties}
    Let $\game$ be
        a generalised partial-trust game. Then:
    \begin{enumerate}[label=(\roman*)]
        \item For any $\mixedStrategy \in \NE(\game)$, $\mixedStrategy_1(\WO) = 1$;
        \item The unique pure-commitment equilibrium of $\game$ is $(\FT, \C)$, where
            $\{ \FT \}  = \argmax \left\{ \utility_1(\T, \C) \mid \T \in \PureStrategies^\game_1 \right\}$.\\
            In particular, $\game$ is a generalised trust game.
    \end{enumerate}
    If $\utility_2(\FT, \C)$ is sufficiently high relative to other payoffs, then:
    \begin{enumerate}[label=(\roman*)]
        \setcounter{enumi}{2}
        \item The unique SE of $\game$ has the form $(\FT, \incentivise{\FT})$,
            where
                \begin{align*}
                \incentivise{\FT} & = \dFT \cdot \D + (1-\dFT) \cdot \C
                ,
                \\
                \dFT
                    & =
                    \max \, \{
                        \delta \in [0, 1]
                    \mid
                        \FT \in \br(\delta \D + (1-\delta) \C)
                    \}
                ,
                \end{align*}
                is the ``optimal commitment that still incentivises $\FT$'';
        \item The SE of $\game$ is a strict Pareto-improvement over NE of $\game$
            if and only if
            there is $\T \in \PureStrategies^\game_1$ s.t.
                $
                    \frac{
                        \utility_1(\T, \C)
                    }{
                        \shortminus \utility_1(\T, \D)
                    }
                    >
                    \frac{
                        \utility_1(\FT, \C)
                    }{
                        \shortminus \utility_1(\FT, \D)
                    }
                $.
    \end{enumerate}
\end{restatable}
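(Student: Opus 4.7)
The plan is to handle the four claims in order, with (i) and (ii) following relatively directly from the axioms and (iii), (iv) requiring a careful geometric analysis of Plone's best-response regions.

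For claim (i), I would start by observing that condition (3) makes $\D$ a weakly dominant strategy for Pltwo: against any $\T \neq \WO$ we have $\utility_2(\T, \D) > \utility_2(\T, \C)$, and against $\WO$ both give $0$. So if any NE strategy $\mixedStrategy_1$ puts positive mass on some $\T \neq \WO$, then $\D$ is strictly better than $\C$ for Pltwo, forcing $\mixedStrategy_2 = \D$. But then $\utility_1(\T, \D) < 0 = \utility_1(\WO, \D)$ by (3), so Plone's best response is the pure $\WO$, contradicting the assumption. Hence every NE has $\mixedStrategy_1(\WO) = 1$.

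For claim (ii), I would simply enumerate pure commitments. Committing to $\D$ triggers $\WO$ by (3) and yields $(0,0)$. Committing to $\C$ leads Plone to choose the unique maximiser of $\utility_1(\T, \C)$, which is well-defined by condition (4a) and which is $\FT$ (since $\utility_1(\WO, \C) = 0 < \utility_1(\T, \C)$ for any trust strategy). This gives payoff $\utility_2(\FT, \C) > 0$, so $(\FT, \C)$ is the unique pure-commitment equilibrium. Since any NE yields $(0,0)$ by claim (i) and $\utility(\FT, \C) \gg (0,0)$, $\game$ is a generalised trust game.

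For claim (iii), I would parameterise Pltwo's mixed commitments as $\mixedStrategy_2^\delta = \delta \D + (1-\delta) \C$ for $\delta \in [0,1]$. By condition (4b), as $\delta$ increases from $0$ to $1$, Plone's favourable best response transitions through a nested sequence of trust strategies (ordered by decreasing $\utility_1(\T, \C)$) before terminating in $\WO$. For each target $\T$, the best commitment still eliciting $\T$ is $\incentivise{\T} = \upperD{\T} \D + (1 - \upperD{\T}) \C$ with $\upperD{\T}$ defined analogously to $\dFT$; condition (5) rules out the need for Plone to mix for tiebreaking, and (4b) ensures that at the boundary $\upperD{\T}$ the favourable BR is still $\T$ (since the tying strategy $\T'$ has strictly lower $\utility_2$). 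Pltwo's resulting payoff is $\upperD{\T} \utility_2(\T, \D) + (1 - \upperD{\T}) \utility_2(\T, \C)$. Taking $\utility_2(\FT, \C)$ sufficiently large makes the $\T = \FT$ entry dominate all others (since the other payoffs are bounded independently of $\utility_2(\FT, \C)$), yielding the claimed unique SE.

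For claim (iv), I would first note that Pltwo's SE payoff is always positive (both $\utility_2(\FT, \C)$ and $\utility_2(\FT, \D)$ are positive), so Pareto-improvement for Pltwo is automatic and the question reduces to whether $\utility_1(\FT, \incentivise{\FT}) > 0$, i.e., whether $\dFT < \delta^*_\FT := \utility_1(\FT, \C)/(\utility_1(\FT, \C) - \utility_1(\FT, \D))$. The threshold $\dFT$ is the minimum over two types of constraints: (a) tying with $\WO$, which occurs at $\delta^*_\FT$ itself, and (b) tying with some other $\T' \neq \FT, \WO$, at the crossover point where $\utility_1(\FT, \mixedStrategy_2^\delta) = \utility_1(\T', \mixedStrategy_2^\delta)$. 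The geometric core of the argument is then: the $\FT$-vs-$\T'$ crossover happens strictly before $\FT$ hits $\utility_1 = 0$ (equivalently, the tying value is positive) if and only if $\delta^*_{\T'} > \delta^*_\FT$. Since $\frac{\utility_1(\T, \C)}{-\utility_1(\T, \D)} = \frac{\delta^*_\T}{1 - \delta^*_\T}$ is strictly increasing in $\delta^*_\T$, this inequality is exactly the ratio condition in the lemma. The main obstacle is keeping this geometric case analysis clean—in particular verifying that crossovers lying "below zero" (i.e., at $\delta > \delta^*_\FT$) never bind $\dFT$ and correctly handling the boundary cases where $\dFT$ is determined simultaneously by multiple constraints.
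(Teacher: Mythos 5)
Your proposal is correct and follows essentially the same route as the paper: (i)--(ii) are read off directly from conditions (3) and (4a), and (iii)--(iv) come from partitioning the commitment segment $\delta \mapsto \delta\,\D + (1-\delta)\,\C$ into best-response intervals ordered by decreasing trust, with the optimal commitment inside each interval sitting at its right endpoint and the assumption on $\utility_2(\FT,\C)$ forcing the $\FT$-endpoint to win. The paper packages this interval structure into a standalone reduction lemma and argues (iv) via weak dominance of low-ratio strategies by mixtures of $\FT$ and $\WO$, but this is equivalent to your comparison of zero-crossing defection probabilities, via the identity $\utility_1(\T,\C)/(-\utility_1(\T,\D)) = \delta/(1-\delta)$ at the crossing.
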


\begin{proof}[Proof sketch]
The difficult part of \Cref{lem:gPTG_properties} is (iv),
which relies on the fact that
\Plone{} can trivially scale their level of trust
    by interpolating between any two actions,
    including $\FT$ and $\WalkOut$.
This lets \Plone{} disregard any $\T$
    that has a worse risk-benefit ratio than $\FT$,
making (iv) equivalent to the claim that:
    ``giving \Pltwo{} the ability to make mixed commitments results in a Pareto-improvement
    if and only if
    disregarding these redundant actions leaves \Plone{} with more options than just $\FT$ and $\WO$.''
    This follows from (iii).
\end{proof}

\noindent
In light of \Cref{lem:gPTG_properties}, a generalised PTG is said to be \textbf{non-trivial} when it satisfies the condition (iv).
We can now prove a generalisation of the positive result from \Cref{sec:sub:illustrative_example}.

\begin{restatable}[Simulation helps with partial trust]{theorem}{thmGenPTG}\label{thm:generalised_PTG}
    Let $\game$ be a non-trivial generalised partial-trust game.
    If $\utility_2(\FT, \C)$ is sufficiently high relative to other payoffs in $\game$,
        enabling $\mSim$ \iPiN{} in $\game$.
\end{restatable}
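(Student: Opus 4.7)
The plan is to construct an explicit simulation equilibrium of $\msimgame^\simcost$ whose payoff profile is strictly positive for both players whenever $\simcost > 0$ is sufficiently small. Since Lemma~\ref{lem:gPTG_properties}(i) says every NE of $\game$ yields payoff $(0,0)$, any such equilibrium is automatically a Pareto improvement. The key idea is to let $\PT$ be the trust action that appears on the upper envelope of $\MixedStrategies_2^\game$ (parametrised by \Pltwo{}'s defect probability $\delta$) immediately to the right of $\FT$, i.e., the action that \Plone{}'s best response first transitions to as $\delta$ rises past $\dFT$. A short envelope argument using the non-triviality condition together with the hierarchy condition (4b) shows that this $\PT$ exists, is not $\WO$, and satisfies $\utility_1(\PT,\C)/(-\utility_1(\PT,\D)) > \utility_1(\FT,\C)/(-\utility_1(\FT,\D))$; in particular $\FT$ and $\PT$ are tied as \Plone{}'s best responses exactly at $\delta = \dFT$. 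The candidate equilibrium is then
\[
\mixedMetaStrategy_2^* = (1-p_\D)\meta{\incentivise{\FT}} + p_\D \meta{\D},
\qquad
\mixedStrategy_1^* = q \cdot \mSim + (1-q) \cdot \PT,
\]
with \Plone{} committing to play $\FT$ upon simulating $\incentivise{\FT}$ (Bob-favourable by (4b)) and $\WO$ upon simulating $\D$.

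\textbf{Indifference and positive payoffs.} \Plone{}'s indifference between $\mSim$ and $\PT$ simplifies, thanks to the identity $\utility_1(\FT,\incentivise{\FT}) = \utility_1(\PT,\incentivise{\FT})$ holding at $\delta = \dFT$, to $\simcost = -p_\D \utility_1(\PT,\D)$, so $p_\D$ is linear in $\simcost$ and lies in $(0,1)$ for small $\simcost$. \Pltwo{}'s indifference between $\meta{\incentivise{\FT}}$ and $\meta{\D}$ determines some $q \in (0,1)$. Substituting back, \Plone{}'s equilibrium payoff converges to $\utility_1(\FT,\incentivise{\FT})$ as $\simcost \to 0$; this limit is strictly positive because $\dFT < \utility_1(\FT,\C)/[\utility_1(\FT,\C) + |\utility_1(\FT,\D)|]$, which itself follows from the risk-benefit ratio inequality above. \Pltwo{}'s equilibrium payoff equals $(1-q)\utility_2(\PT,\D) > 0$.

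\textbf{Ruling out deviations -- the main obstacle.} \Plone{}'s deviations are routine: her payoff against $\mixedMetaStrategy_2^*$ depends only on the induced marginal defect probability $\delta^* := (1-p_\D)\dFT + p_\D$, which sits just above $\dFT$ for small $\simcost$, so $\PT$ is her best non-$\mSim$ response by the envelope structure and $\mSim$ ties by construction. \Pltwo{}'s deviations are the subtle part, ranging over the continuum $\{\meta{(1-\delta)\C + \delta\D} : \delta \in [0,1]\}$. Writing $\utility_2(\mixedStrategy_1^*, \meta{(1-\delta)\C + \delta\D}) = q \, f(\delta) + (1-q)\, g(\delta)$ with $g(\delta) = \utility_2(\PT, (1-\delta)\C + \delta\D)$ linear increasing and $f(\delta) = \utility_2(\fbr((1-\delta)\C + \delta\D),\, (1-\delta)\C + \delta\D)$ piecewise linear with a downward jump at each transition $\dFT = \delta_0 < \delta_1 < \dots < \delta_k$ of the favourable best response down the trust hierarchy (and $f \equiv 0$ once $\WO$ takes over), one sees that the local peaks of the expression occur only at the left-limits of those transitions and at $\delta = 1$. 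The peaks at $\dFT$ and $1$ coincide by \Pltwo{}'s indifference; the heart of the proof is therefore showing that the intermediate peaks at $\delta_1^-, \dots, \delta_k^-$ are strictly dominated. This is where the hypothesis ``$\utility_2(\FT,\C)$ sufficiently high'' enters: as $\utility_2(\FT,\C) \to \infty$, $q \to 0$, so the equilibrium value at $\dFT$ and at $1$ both converge to $\utility_2(\PT,\D)$ while each intermediate peak converges to $\utility_2(\PT,(1-\delta_i)\C + \delta_i\D) < \utility_2(\PT,\D)$; continuity then gives the desired strict inequality for all sufficiently large $\utility_2(\FT,\C)$, completing the proof.
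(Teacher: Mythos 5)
Your proposal is correct and follows essentially the same route as the paper: the same equilibrium (\Pltwo{} mixing between $\incentivise{\FT}$ and pure $\D$, \Plone{} mixing between $\mSim$ and the second-highest trust level $\T_1 = \PT$), the same indifference computations giving $p_\D$ proportional to $\simcost$, and the same limiting use of ``$\utility_2(\FT,\C)$ sufficiently high'' to kill \Pltwo{}'s intermediate deviations. The only (cosmetic) difference is that you analyse \Pltwo{}'s deviations directly over the continuum of defect probabilities and locate the piecewise-linear peaks, whereas the paper first reduces to the finite set $\{\incentivise{\FT},\incentivise{\T_1},\dots,\incentivise{\T_n},\D\}$ -- these are the same points.
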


\begin{proof}[Proof sketch for \Cref{thm:generalised_PTG}]
    The key insight is that when \Pltwo{} plays the Stackelberg equilibrium $\mixedStrategy^\SE_2$ of $\game$,
        \Plone{} will be indifferent between $\FT$ and some other strategy $\PT$,
    and the non-triviality condition ensures that $\PT \neq \WO$.
    The proof then consists of showing that there is an equilibrium where
        \Pltwo{} mixes between $\mixedStrategy^\SE_2$ of $\game$ and defecting with probability 100\%
        and
        \Plone{} mixes between $\PT$ and $\mSim$.
    The assumption on $\utility_2(\FT, \C)$ ensures that \Pltwo{} cannot improve their utility by switching to some intermediate level of defection.
\end{proof}


\subsection{Trust and Coordination}
    \label{sec:sub:positive_coordination}\label{SEC:SUB:POSITIVE_COORDINATION}

We now investigate simulation in coordination games.

\begin{definition}[Generalised coordination game]\label{def:generalised_coordination_game}
    By a \textbf{generalised coordination game},
    we will mean a finite two-player $\game$ game where:
    \begin{itemize}
        \item $\PureStrategies_\pl = \left\{ \action_\pl^1, \dots, \action_\pl^\nOfActions \right\}$,
            for some $\nOfActions \geq 2$;
        \item
            $
                \utility_1(\actionPl^k, \actionOpp^l)
                =
                \utility_2(\actionPl^k, \actionOpp^l)
                =
                0
            $
            for $k \neq l$;
            and
        \item
            $
                \utility_1(\actionPl^k, \actionOpp^k), \ 
                \utility_2(\actionPl^k, \actionOpp^k)
                > 0
            $
            for any $k$.
    \end{itemize}
\end{definition}


\noindent
As a standard property of coordination games, we get that:

\begin{restatable}{lemma}{lemmaCoordinationGameProperties}\label{lem:coordination_game_properties}
    For any generalised coordination game,
        $
            \NE(\game)
            =
            \left\{
                \mixedStrategy^{\actionSubset}
                \mid
                \actionSubset \subseteq \left\{1, \dots, \nOfActions \right\}
            \right\}
        $
    for some $\mixedStrategy^{\actionSubset}$ which satisfy:
    (i)~$
            \supp (\mixedStrategy^{\actionSubset}_\pl)
            =
            \{ \action_\pl^k \mid k \in \actionSubset \}
        $.
    (ii)~NE that mix over fewer actions yield higher payoffs.
    (That is,
        $\mixedStrategy^{\actionSubset'}$ is a strict Pareto improvement over $\mixedStrategy^{\actionSubset}$
        whenever $\actionSubset' \subsetneq \actionSubset$.)
\end{restatable}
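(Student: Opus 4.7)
The plan is to characterise all Nash equilibria by showing that both players' supports, indexed as subsets of $\{1,\dots,\nOfActions\}$, must coincide, and that for each non-empty index set $\actionSubset$ the corresponding NE is uniquely determined by the indifference conditions.

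First, I would establish that in any $\mixedStrategy \in \NE(\game)$ the ``index-supports'' $\actionSubset_1 := \{k \mid \mixedStrategy_1(\actionPl^k) > 0\}$ and $\actionSubset_2 := \{k \mid \mixedStrategy_2(\actionOpp^k) > 0\}$ are equal. By \Cref{def:generalised_coordination_game}, playing $\actionPl^k$ against $\mixedStrategy_2$ yields $\utility_1(\actionPl^k,\actionOpp^k)\cdot\mixedStrategy_2(\actionOpp^k)$, which is strictly positive when $k \in \actionSubset_2$ and zero otherwise. Hence any action in $\supp(\mixedStrategy_1)$ must have index in $\actionSubset_2$, giving $\actionSubset_1 \subseteq \actionSubset_2$; the symmetric argument gives the reverse inclusion. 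Call the common set $\actionSubset$; it is non-empty because strategies are probability distributions.

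Second, for any non-empty $\actionSubset \subseteq \{1, \dots, \nOfActions\}$, I would apply the indifference principle to determine $\mixedStrategy^\actionSubset$ uniquely. Player one is indifferent across $\{\actionPl^k \mid k \in \actionSubset\}$, so $\utility_1(\actionPl^k,\actionOpp^k)\cdot\mixedStrategy_2(\actionOpp^k)$ is constant in $k \in \actionSubset$; combined with the normalisation condition this gives
\begin{equation*}
    \mixedStrategy^\actionSubset_2(\actionOpp^k)
    = \frac{c_1(\actionSubset)}{\utility_1(\actionPl^k,\actionOpp^k)},
    \qquad
    c_1(\actionSubset) := \left(\sum_{j \in \actionSubset} \frac{1}{\utility_1(\actionPl^j,\actionOpp^j)}\right)^{\!-1}.
\end{equation*}
The analogous computation with roles reversed yields $\mixedStrategy^\actionSubset_1(\actionPl^k) = c_2(\actionSubset)/\utility_2(\actionPl^k,\actionOpp^k)$. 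A direct check confirms that this profile is indeed a Nash equilibrium (every action in $\actionSubset$ achieves the same payoff $c_\pl(\actionSubset)$, and every action outside $\actionSubset$ achieves payoff $0 < c_\pl(\actionSubset)$, so both players are best-responding). This establishes existence, uniqueness, and claim (i).

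Third, I would observe that at $\mixedStrategy^\actionSubset$ player $\pl$'s expected payoff equals $c_\pl(\actionSubset)$, since this is the common value of their utility across all actions in $\actionSubset$. For (ii), whenever $\actionSubset' \subsetneq \actionSubset$ the sum defining $c_\pl(\actionSubset)^{-1}$ loses at least one strictly positive term, so $c_\pl(\actionSubset') > c_\pl(\actionSubset)$ for both $\pl \in \{1,2\}$, giving the claimed strict Pareto improvement. The only step requiring real care is the support-coincidence argument in the first paragraph: the rest is a routine indifference calculation. Everything else follows mechanically from the structure of the payoff matrix on and off the diagonal.
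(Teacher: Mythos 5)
Your proof is correct and complete: the support-coincidence argument, the indifference computation yielding $\mixedStrategy^{\actionSubset}_2(\actionOpp^k) = c_1(\actionSubset)/\utility_1(\actionPl^k,\actionOpp^k)$ (and symmetrically), and the resulting payoff formula $\utility_\pl(\mixedStrategy^{\actionSubset}) = \bigl(\sum_{k\in\actionSubset} 1/\utility_\pl(\actionPl^k,\actionOpp^k)\bigr)^{-1}$ are exactly the ingredients the paper relies on, except that the paper does not write them out and instead defers to the proof of Proposition~16 of the cited prior work. So you have supplied a self-contained version of essentially the same standard argument; the only cosmetic point is that $\actionSubset$ should be restricted to non-empty subsets, which you handle implicitly.
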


Recall that by \Cref{lem:NE_of_G_are_NE_of_msim}, any \NE{} of the original game also exists as an \NE{} of the simulation game
 -- in particular, enabling $\mSim$ cannot prevent the existence of the (undesirable) \NE{} where Bob only uses a single \textit{mixed} ``robot''.
However, enabling $\mSim$ does have the potential to prevent miscoordination when Bob randomises over multiple robots that use incompatible strategies
    (e.g., when $\mixedMetaStrategy_2 = \frac{1}{2} \meta{\action}_2^1 + \frac{1}{2} \meta{\action}_2^2$).
In addition to this fact, the following result shows that
    mixed-strategy simulation also introduces simulation equilibria
    that are better than miscoordination,
    but not as good as successful coordination at the players' favourite outcome.

\begin{restatable}[Simulation in coordination games]{proposition}{propCoordinationGames}\label{prop:coordination_games}
    Let $\game$ be a generalised coordination game
        and denote by $\mixedStrategy^{\{1, \dots, \nOfActions\}}$ its fully mixed NE.
    Then, for sufficiently low $\simcost$, we have:
    \begin{enumerate}[label=(\roman*)]
        \item $\msimgame$ has some simulation equilibrium $\metaNE$;
        \item Any simulation equilibrium $\metaNE \in \NE(\msimgame)$ satisfies
            \begin{align*}
                \utility_1(\mixedStrategy^{\{1, \dots, \nOfActions\}})
                    & <
                    \utility_1(\metaNE)
                    <
                    \max\nolimits_{k} \, \utility_1(\actionPl^k, \actionOpp^k)
                \\
                \utility_2(\mixedStrategy^{\{1, \dots, \nOfActions\}})
                    & <
                    \utility_2(\metaNE)
                    \leq
                    \max\nolimits_{k} \, \utility_2(\actionPl^k, \actionOpp^k)
                ;
            \end{align*}
        \item Unless $\game$ has multiple optimal pure commitments for \Pltwo{},
            any such $\metaNE$ satisfies
            $\utility_2(\metaNE) < \max_{k} \utility_2(\actionPl^k, \actionOpp^k)$.
    \end{enumerate}
\end{restatable}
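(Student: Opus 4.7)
The proof exploits the specific structure of coordination games, in which only coordinated profiles yield nonzero payoffs. Writing $w_k := \utility_1(\actionPl^k, \actionOpp^k)$, $v_k := \utility_2(\actionPl^k, \actionOpp^k)$, and $\utility^*_\pl := \utility_\pl(\mixedStrategy^{\{1, \dots, \nOfActions\}})$, two payoff computations underlie everything. For any meta-strategy $\mixedMetaStrategy_2 = \sum_i \gamma_i \meta{\mixedStrategy_2^i}$, we have $\utility_1(\actionPl^k, \mixedMetaStrategy_2) = \mesa{\mixedMetaStrategy}_2(\actionOpp^k) w_k$ and $\utility_1(\mSim, \mixedMetaStrategy_2) = \sum_i \gamma_i \max_l \mixedStrategy_2^i(\actionOpp^l) w_l - \simcost$; and if P1 plays $\mSim$ with probability $p_\mSim$ and $\actionPl^k$ with probability $p_k$, P2's utility from the pure-action robot $\meta{\actionOpp^k}$ is $(p_\mSim + p_k) v_k$, since $\actionPl^k$ is the unique (hence favourable) best response to $\meta{\actionOpp^k}$. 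Throughout we use the inequality that $\max_k x_k w_k \geq \utility^*_1$ for any probability vector $x$, with equality iff $x = \mixedStrategy^{\{1, \dots, \nOfActions\}}_2$; this follows because $\sum_k (\utility^*_1 / w_k) = 1$ exhibits $\utility^*_1$ as a weighted average of the values $x_k w_k$. An analogous inequality holds for P2.

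For (i), we construct a simulation equilibrium explicitly. Fix $k^* \in \argmax_k v_k$ and any $l \neq k^*$, and let P2's meta-strategy be $\gamma \meta{\actionOpp^{k^*}} + (1 - \gamma) \meta{\actionOpp^l}$ with $\gamma := \simcost/w_{k^*}$, while P1 plays $(v_l/v_{k^*}) \cdot \mSim + (1 - v_l/v_{k^*}) \cdot \actionPl^l$. A direct calculation using the formulas above shows that P1 is indifferent between $\mSim$ and $\actionPl^l$ at common payoff $(1-\gamma) w_l$ and (for sufficiently small $\simcost$) strictly prefers these to any other pure action; symmetrically, P2 is indifferent between the two pure-action robots at common payoff $v_l$. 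The only substantive check is that P2 cannot profit from deviating to an arbitrary mixed robot $\mixedStrategy_2$; this reduces to verifying $p_\mSim \mixedStrategy_2(\actionOpp^{k^*_\mixedStrategy}) v_{k^*_\mixedStrategy} + p_l \mixedStrategy_2(\actionOpp^l) v_l \leq v_l$ (where $k^*_\mixedStrategy$ is the favourable best response to $\mixedStrategy_2$), which follows from $v_{k^*_\mixedStrategy} \leq v_{k^*}$ and $\mixedStrategy_2(\actionOpp^k) \leq 1$.

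For (ii), the upper bounds are immediate: $\utility_1(\mSim, \cdot) \leq \max_k w_k - \simcost$, so $\mSim \in \supp(\metaNE)$ forces $\utility_1(\metaNE) \leq \max_k w_k - \simcost < \max_k w_k$; the bound $\utility_2(\metaNE) \leq \max_k v_k$ is trivial. For the strict lower bound on P2, the deviation to $\meta{\actionOpp^{k'}}$ with $k' = \argmax_k (p_\mSim + p_k) v_k$ guarantees at least $\max_k (p_\mSim + p_k) v_k$, and applying the key inequality to the vector $(p_\mSim + p_k)_k$ (which sums to $1 + (\nOfActions - 1) p_\mSim$) yields $\max_k (p_\mSim + p_k) v_k \geq \utility^*_2 (1 + (\nOfActions - 1) p_\mSim) > \utility^*_2$ strictly since $p_\mSim > 0$. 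The P1 bound $\utility_1(\metaNE) \geq \utility^*_1$ is analogous (via deviation to a best pure response); strictness requires ruling out the degenerate case $\mesa{\mixedMetaStrategy}_2 = \mixedStrategy^{\{1, \dots, \nOfActions\}}_2$, which is the main obstacle of the whole proof. We handle this by a case split on $\supp(\metaNE_1)$: if it contains a pure action, P2's indifference among pure-action robots combined with $\mesa{\mixedMetaStrategy}_2 = \mixedStrategy^*_2$ forces $\simcost$ to take specific isolated values, contradicting ``sufficiently small''; if it equals $\{\mSim\}$, then $\mSim$ must strictly dominate every pure action and thus give payoff $> \utility^*_1$ directly.

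Finally, (iii) follows by contradiction. Assume $\argmax_k v_k = \{k^*\}$ and, for contradiction, $\utility_2(\metaNE) = v_{k^*}$. For any robot $\mixedStrategy_2$, P2's utility is $p_\mSim \mixedStrategy_2(\actionOpp^{k^*_\mixedStrategy}) v_{k^*_\mixedStrategy} + \sum_k p_k \mixedStrategy_2(\actionOpp^k) v_k \leq p_\mSim v_{k^*} + (1 - p_\mSim) v_{k^*} = v_{k^*}$, with equality only if $\mixedStrategy_2 = \meta{\actionOpp^{k^*}}$ and $p_k = 0$ for all $k \neq k^*$ (here we use uniqueness of $k^*$ to force $v_{k^*_\mixedStrategy} = v_{k^*} \Rightarrow k^*_\mixedStrategy = k^*$). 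Hence every robot in P2's support equals $\meta{\actionOpp^{k^*}}$, i.e., $\mixedMetaStrategy_2 = \meta{\actionOpp^{k^*}}$; but then P1 strictly prefers $\actionPl^{k^*}$ (payoff $w_{k^*}$) over $\mSim$ (payoff $w_{k^*} - \simcost$), forcing $p_\mSim = 0$ and contradicting that $\metaNE$ is a simulation equilibrium.
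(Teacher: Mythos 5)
Your parts (i) and (iii) are correct, and your construction in (i) is essentially the equilibrium the paper exhibits (the paper splits into the cases of a unique vs.\ multiple optimal pure commitments for \Pltwo{}, whereas your single family with $\simProb = v_l/v_{k^*}$ and $\gamma = \simcost/w_{k^*}$ covers both). The upper bounds in (ii) and the strict lower bound on \Pltwo{}'s utility -- via the deviation to the best pure robot and normalising the vector $(p_{\mSim} + p_k)_k$, which sums to $1 + (\nOfActions-1)p_{\mSim} > 1$ -- are also fine, and arguably cleaner than the paper's treatment of the same inequality.

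The gap is in the strict lower bound $\utility_1(\metaNE) > \utility_1(\mixedStrategy^{\{1,\dots,\nOfActions\}})$, exactly where you flag the main obstacle: neither sub-case argument for the degenerate marginal $\mesa{\mixedMetaStrategy}_2 = \mixedStrategy^{\{1,\dots,\nOfActions\}}_2$ holds up as written. If $\supp(\metaNE_1) = \{\mSim\}$, being the sole support element of an equilibrium strategy only makes $\mSim$ a \emph{weak} best response, so ``strictly dominates, hence payoff $> \utility_1^*$'' does not follow. In the other sub-case, the claim that indifference forces $\simcost$ onto isolated values tacitly presupposes that \Pltwo{} mixes over \emph{pure-action} robots; if \Pltwo{} may use genuinely mixed robots $\mixedStrategy_2^i$, the condition $\sum_i \gamma_i \max_l \mixedStrategy_2^i(\actionOpp^l)\, w_l = \utility_1^* + \simcost$ is satisfiable for a continuum of values of $\simcost$, since the meta-strategy has many free parameters. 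The missing ingredient -- and the structural lemma on which the paper's proof of (ii) rests -- is that in any simulation equilibrium \Pltwo{} mixes only over pure robots: replacing a non-pure robot $\meta{\mixedStrategy_2}$ by $\sum_k \mixedStrategy_2(\actionOpp^k)\,\meta{\actionOpp^k}$ leaves \Pltwo{}'s payoff unchanged against every pure action of \Plone{} and strictly increases it against $\mSim$ (which carries positive weight), because $\sum_k \mixedStrategy_2(\actionOpp^k)\, v_k > \mixedStrategy_2(\actionOpp^{k^*_{\mixedStrategy}})\, v_{k^*_{\mixedStrategy}}$ whenever $\mixedStrategy_2$ is non-pure. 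With that in hand, $\utility_1(\metaNE) = \utility_1(\mSim, \metaNE_2) = \sum_k \mesa{\metaNE}_2(\actionOpp^k)\, w_k - \simcost \geq \min_k w_k - \simcost$, which exceeds $\utility_1^* = \left(\sum_k 1/w_k\right)^{-1} < \min_k w_k$ for all $\simcost$ below a fixed threshold, and no case split on the degenerate marginal is needed at all.
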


\Cref{prop:coordination_games} shows that mixed-strategy simulation is able to prevent the worst equilibria, but does not \iPiNwithoutS{} in the stronger sense of allowing for an outcome that wouldn't be achievable through other means (i.e., by successful selection of a pure equilibrium).
The following example and theorem show that the usefulness of mixed-strategy simulation increases
    when the players need to deal not only with coordination but also with issues of trust.

\begin{figure}
    \centering
    \begin{NiceTabular}{rccc}[cell-space-limits=3pt]
        & $\actionOpp^1$    & $\actionOpp^2$    & $\OO$ \\
        $\actionPl^1$ & \Block[hvlines]{3-3}{}
            \begin{tabular}{@{}c@{}} $20, 20\phantom{\shortminus}$ \ \ $\shortminus99, 40$ \\ $\phantom{\shortminus9}9, \shortminus99$ \ \ $\phantom{\shortminus9}9, \shortminus99$ \end{tabular}
            & $0, 0$
            & $0, 1$ \\
        $\actionPl^2$
            & $0, 0$
            & \begin{tabular}{@{}c@{}} $20, 20\phantom{\shortminus}$ \ \ $\shortminus99, 40$ \\ $\phantom{\shortminus}10, \shortminus99$ \ \ $\phantom{\shortminus}10, \shortminus99$ \end{tabular}
            & $0, 1$ \\
        $\OO$
            & $1, 0$
            & $1, 0$
            & $1, 1$
    \end{NiceTabular}
    \caption{Trust-and-coordination game,
        where coordinating on a joint action $(\actionPl^k, \actionOpp^k)$
        leads the players to a trust subgame.
    }
    \Description{A trust-and-coordination game.}
    \label{fig:DTG}
\end{figure}

\begin{definition}[Trust-and-coordination game]\label{ex:generalised_trust_and_coordination}
    By a \textbf{trust-and-coordination game}, we mean a game $\game$ which works as follows
        (for examples, see \Cref{fig:DTG} and \ref{fig:gDTG}).
    \begin{itemize}
        \item In the first stage, the players simultaneously select an action from the set
            $\{ \action_\pl^1, \dots, \action_\pl^n, \OO \}$.
        \item If the players select $(\actionPl^k, \actionPl^l)$ for $k \neq l$,
            they receive ``\textbf{b}ad'' miscoordination payoffs $(\Bad_1, \Bad_2)$.
        \item \textbf{O}pting \textbf{O}ut of the game via $\OO$ yields $(\Bad_1, \Bad_2)$,
            with an additional reward $\epsilon$ for the player(s) who used $\OO$.
        \item If the players coordinate on some $(\actionPl^k, \actionOpp^k)$,
            they enter the second stage of the game,
            where they play a subgame $\game_k$.
        \item Each $\game_k$ is a $2 \! \times \! 2$ trust game with actions
                $\{ \Trust, \WalkOut \}$, resp. $\{ \Cooperate, \Defect \}$.
        \item We denote the payoffs in $\game_k$ as
            \begin{align*}
                \utility^{\game_k}(\T, \C) & := (\Good^k_1, \Good^k_2),
                \ \ \ 
                \utility^{\game_k}(\T, \D) := (\Horrible^k_1, \Awesome^k_2), \\
                \utility^{\game_k}(\WO, \C) & = \utility(\WO, \D) := (\Neutral^k_1, \Horrible^k_2).
            \end{align*}
            (The naming is meant to be suggestive of
                \textbf{a}wesome,
                \textbf{g}ood,
                \textbf{n}eutral,
                and \textbf{h}orrible
                .)
            We assume that:
                \begin{align*}
                    \Horrible^k_1 < \Bad_1 < \Bad_1 + \epsilon < \Neutral^k_1 < \Good^k_1
                    \\
                    \Horrible^k_2 < \Bad_2 < \Bad_2 + \epsilon < \Good^k_2 < \Awesome^k_2
                    .
                \end{align*}
    \end{itemize}
\end{definition}

The only NE of $\game$ is
    for both players to $\OptOut$, yielding utilities $\Bad_\pl + \epsilon$.
In contrast, the SE of $\game$ (with \Pltwo{} as the leader) would consist of
    coordinating on one of the subgames $\game_k$
    and then playing its SE,
    yielding utilities $\utility_1 = \Neutral^k_1$, $\utility_2 > \Good^k_2$.
We will use $\SEvalue{k}{\pl}$ to denote
    the expected utility corresponding to the SE (with \Pltwo{} as the leader)
    of the subgame $\game_k$.

\begin{restatable}[Simulation helps in trust-and-coordination games]{theorem}{thmGenDTG}\label{thm:generalised_DTG}
    Let $\game$ be a trust-and-coordination game.
    If
        \begin{enumerate}[label=(\alph*)]
            \item
                $
                    \argmax \SEvalue{k}{2}
                    \cap
                    \argmax \SEvalue{k}{1}
                    =
                    \emptyset
                $; and
            \item The NE payoffs $\Horrible^k_2$ of \Pltwo{} in the subgames $\game_k$
        are sufficiently low relative to the other payoffs in $\game$; 
        \end{enumerate}
    then enabling $\mSim$ \iPiN{}.
\end{restatable}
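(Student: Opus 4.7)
The plan is to exhibit, for all sufficiently small $\simcost > 0$, a simulation equilibrium whose structure mirrors that of \Cref{thm:generalised_PTG}. First I would fix $i^* \in \argmax_k \SEvalue{k}{1}$ and $j^* \in \argmax_k \SEvalue{k}{2}$; by assumption (a) these are distinct, which in turn gives $\SEvalue{j^*}{2} > \SEvalue{i^*}{2}$. For each $k$, let $\stackelberg{k} \in \MixedStrategies^\game_2$ denote the Stackelberg commitment in the ``enter-$\game_k$'' branch of $\game$ -- that is, play $\actionOpp^k$ in the coordination stage and then cooperate in $\game_k$ with the unique probability $p^*_k$ at which P1's payoffs from $\T$ and $\WO$ tie. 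The candidate equilibrium is $\mixedStrategy_1 = \alpha \, \mSim + (1-\alpha)(\actionPl^{i^*}, \T)$ and $\mixedMetaStrategy_2 = \lambda \, \meta{\stackelberg{i^*}} + (1-\lambda) \, \meta{\stackelberg{j^*}}$. Intuitively, upon simulation P1 can coordinate on whichever subgame P2 has selected, while absent simulation P1 ``bets'' on their own favourite subgame $\game_{i^*}$. Assumption (a) is exactly what forces P2 to mix: committing only to $i^*$ would forgo P2's higher SE value $\SEvalue{j^*}{2}$, while committing only to $j^*$ would make P1's blind guess miscoordinate and collapse P2's payoff to $\Bad_2$.

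Solving the two indifference equations should yield $\alpha = (\SEvalue{i^*}{2} - \Bad_2)/(\SEvalue{j^*}{2} - \Bad_2)$, a fixed constant in $(0,1)$ independent of $\simcost$, and $\lambda = 1 - \simcost/(\Neutral^{j^*}_1 - \Bad_1)$, which lies in $(0,1)$ for $\simcost$ small. The resulting equilibrium payoffs are $\lambda \Neutral^{i^*}_1 + (1-\lambda) \Bad_1$ for P1 (which tends to $\Neutral^{i^*}_1 > \Bad_1 + \epsilon$ as $\simcost \downarrow 0$) and $\SEvalue{i^*}{2} \geq \Good^{i^*}_2 > \Bad_2 + \epsilon$ for P2; since the only NE of $\game$ pays $(\Bad_1 + \epsilon, \Bad_2 + \epsilon)$, this will give a strict Pareto improvement.

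What remains is to verify that neither player has a profitable deviation. P1's alternative pure strategies are easy to rule out: any $(\actionPl^k, \T)$ or $(\actionPl^k, \WO)$ with $k \neq i^*$ either miscoordinates with Robot $i^*$ or only matches one subgame, and $\OO$ pays just $\Bad_1 + \epsilon$, so all are dominated once $\lambda$ is close to $1$. For P2, \Cref{prop:finite_str_space} reduces the infinite strategy space $\MixedStrategies^\game_2$ to finitely many vertex commitments, which I would split into three classes: (i) other Stackelberg commitments $\stackelberg{k}$ for $k \notin \{i^*, j^*\}$, dismissed because $\SEvalue{k}{2} \leq \SEvalue{j^*}{2}$; (ii) commitments cooperating strictly more than $p^*_k$ in some $\game_k$, which leave P1's response unchanged but strictly hurt P2; and (iii) ``defective'' commitments in some $\game_k$ (cooperation strictly below $p^*_k$), against which a simulating P1 strictly prefers $(\actionPl^k, \WO)$ over $\OO$ since $\Neutral^k_1 > \Bad_1 + \epsilon$, so that P2 earns $\alpha \Horrible^k_2 + (1-\alpha) Y$ with $Y \leq \Awesome^{i^*}_2$. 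The hard part of the proof is bounding this class-(iii) payoff below $\SEvalue{i^*}{2}$, and this is precisely what assumption (b) provides: since $\alpha > 0$ is a fixed constant independent of $\simcost$, taking each $\Horrible^k_2$ sufficiently negative makes the expression arbitrarily small. A final subtlety is that blind P1 is exactly indifferent between $(\actionPl^{i^*}, \T)$ and $(\actionPl^{i^*}, \WO)$ against Robot $i^*$, which I would resolve either by extending the favourable-best-response tiebreak to blind play, or by taking a limit of equilibria where $\stackelberg{i^*}$ is perturbed to cooperate with probability $p^*_{i^*} + \delta$ and sending $\delta \downarrow 0$.
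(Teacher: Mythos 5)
Your proposal is correct and follows essentially the same route as the paper's proof: the same candidate equilibrium (P1 mixing $\mSim$ with blind play in their favourite subgame, P2 mixing the Stackelberg commitments of the two players' respective favourite subgames), the same indifference computations, and the same deviation analysis in which assumption (a) forces genuine mixing and assumption (b) rules out P2 defecting outright in P1's favourite subgame. The tie between $(\actionPl^{i^*},\T)$ and $(\actionPl^{i^*},\WO)$ that worries you at the end is harmless for Nash equilibrium (indifference among best responses is fine), so no perturbation argument is needed.
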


\begin{proof}[Proof sketch]
    The proof consists of showing that $\msimgame$ admits an \NE{} where
        \Pltwo{} mostly plays the Stackelberg equilibrium of ``\Plone{}'s favourite subgame'' $\game_{k_1}$,
            but sometimes deviates to playing the SE ``\Pltwo{}'s favourite subgame'' $\game_{k_2}$,
        while
        \Plone{} mixes between their part of the SE of $\game_{k_1}$ and simulating.
    Because the only NE of $\game$ is the undesirable outcome $(\OptOut, \OptOut)$,
        this simulation equilibrium will constitute a Pareto improvement over any NE of $\game$.
\end{proof}

\begin{restatable}{corollary}{corrollaryDTG}\label{ex:DTG}
    Enabling $\mSim$ \iPiN{} in the trust-and-coordination game from \Cref{fig:DTG}.
\end{restatable}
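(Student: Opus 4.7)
The plan is to verify that the game in \Cref{fig:DTG} satisfies conditions (a) and (b) of \Cref{thm:generalised_DTG} and then invoke that theorem. First, I would read off the parameters: the miscoordination payoffs are $(\Bad_1, \Bad_2) = (0, 0)$ with $\epsilon = 1$, and the two subgames $\game_1, \game_2$ share the payoffs $\Good^k_\pl = 20$, $\Horrible^k_1 = -99$, $\Awesome^k_2 = 40$, $\Horrible^k_2 = -99$, differing only in \Plone{}'s \WO{} payoff: $\Neutral^1_1 = 9$ and $\Neutral^2_1 = 10$.

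Next I would compute the Stackelberg equilibria of the two subgames $\game_k$ (with \Pltwo{} as leader). In each, \Pltwo{} makes \Plone{} indifferent between $\T$ and $\WO$ by committing to play $\D$ with probability $\delta_k = (\Good^k_1 - \Neutral^k_1) / (\Good^k_1 - \Horrible^k_1)$, so $\delta_1 = 11/119$ and $\delta_2 = 10/119$. With tie-breaking in \Pltwo{}'s favour, \Plone{} then plays $\T$, which gives $\SEvalue{k}{1} = \Neutral^k_1$ and $\SEvalue{k}{2} = (1-\delta_k) \Good^k_2 + \delta_k \Awesome^k_2 = 20 + 20 \delta_k$. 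So $\SEvalue{1}{1} = 9 < 10 = \SEvalue{2}{1}$, while $\SEvalue{1}{2} = 20 + 220/119 > 20 + 200/119 = \SEvalue{2}{2}$. Hence $\argmax_k \SEvalue{k}{1} = \{2\}$ and $\argmax_k \SEvalue{k}{2} = \{1\}$ are disjoint, establishing condition (a).

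For condition (b), one needs $\Horrible^k_2$ to be sufficiently low relative to the other payoffs (this controls \Pltwo{}'s incentive to deviate to an ``intermediate'' defection rate in the simulation equilibrium constructed in the proof of \Cref{thm:generalised_DTG}). With $\Horrible^k_2 = -99$, which is far below $\Good^k_2 = 20$ and $\Awesome^k_2 = 40$, this is clearly satisfied; a direct numerical check against whatever explicit threshold arises in the proof of \Cref{thm:generalised_DTG} would confirm this. The corollary then follows immediately from \Cref{thm:generalised_DTG}.

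The only potential obstacle is making the informal ``sufficiently low $\Horrible^k_2$'' condition precise enough to check by hand. However, since the proof sketch of \Cref{thm:generalised_DTG} makes clear that the threshold comes from ensuring \Pltwo{} prefers a full defection deviation over any partial one, and since $-99$ is dramatically below every other payoff appearing in the game, no further computation beyond the Stackelberg values above should be required.
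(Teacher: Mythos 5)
Your proposal is correct and follows essentially the same route as the paper: compute the Stackelberg values of the two subgames, observe that $\argmax_k \SEvalue{k}{1}$ and $\argmax_k \SEvalue{k}{2}$ are disjoint singletons, and then invoke \Cref{thm:generalised_DTG}. The only difference is that the paper actually carries out the numerical check for condition (b) — it computes $\simProb$ from \eqref{eq:C_and_T_sim_probability} and verifies the explicit inequality \eqref{eq:C_and_T_horrible_condition}, finding a threshold of roughly $1$--$3$ so that $\Horrible^k_2 = -99$ passes comfortably — whereas you defer this to ``a direct numerical check,'' which is a harmless but real omission.
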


\subsection{Mixed-Strategy Simulation and Privacy}
    \label{sec:sub:privacy}\label{SEC:SUB:PRIVACY}

\citet{kovarik2023game} show that
    pure-strategy simulation can sometimes be harmful to \textit{both} players.
An example that illustrates this dynamic is a scenario where Bob,
    after successfully cooperating with Alice, has to put all his profits into a password-protected account.
While Alice could always attempt to guess Bob's password, doing so would typically be futile.
However, if she had access to pure-strategy simulation,
    she would be able to predict Bob's password and steal his profits,
    so Bob would chose to not cooperate with Alice in the first place.
In contrast, if Alice only had access to mixed-strategy simulation,
    Bob could protect his profits by using a randomly-generated password,
    thus preserving the possibility of cooperation with Alice.

In \Cref{ex:password_guessing}, we give a general construction which adds this ``password-guessing'' dynamic into any base-game, 
allowing us to derive the following result.

\begin{restatable}{theorem}{thmMsimCanBeBetterThanPsim}\label{theorem:password_guessing}
    There are games where enabling $\mSim$ \iPiN{}, but pure-strategy simulation does not.
\end{restatable}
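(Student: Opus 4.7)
The plan is to combine a base-game where mixed-strategy simulation is already known to be beneficial (for instance, a non-trivial generalised partial-trust game covered by \Cref{thm:generalised_PTG}) with the ``password-guessing'' construction informally described just before the statement and presumably formalised in \Cref{ex:password_guessing}. Concretely, whenever a branch of the base-game would hand \Pltwo{} a strictly positive payoff, an additional stage is appended in which \Pltwo{} must lock in a password $\password \in \{1, \dots, K\}$ and \Plone{} may pay a tiny cost to submit a guess $\guess \in \{1, \dots, K\}$; a correct guess transfers \Pltwo{}'s profits to \Plone{}, an incorrect one does nothing. Taking $K$ sufficiently large makes blind guessing unprofitable.

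For the $\mSim$ direction, I would lift the Pareto-improving simulation equilibrium guaranteed by \Cref{thm:generalised_PTG} to the extended game by having \Pltwo{} append a uniformly random password to their equilibrium meta-strategy. Since mixed-strategy simulation reveals only the overall distribution, \Plone{} learns nothing beyond the uniform marginal and, for large enough $K$, her best response after simulating is to \emph{not} guess -- the expected reward $\tfrac{1}{K}\utility_2(\T,\C)$ is dominated by the cost of guessing (or can be made so by a cost-free $\epsilon$-perturbation of payoffs). All incentive conditions for the equilibrium of the base PTG therefore persist, up to $O(1/K)$ corrections that can be absorbed by choosing $K$ large, yielding a strictly Pareto-improving \NE{} of $\msimgame$.

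For the $\pSim$ direction, I would argue that pure-strategy simulation collapses the extended game back to its trust-less equilibrium. Under $\pSim$, \Plone{} learns the realised password exactly, no matter what distribution \Pltwo{} randomises over; hence after any simulation on a branch where \Pltwo{} would otherwise earn a positive continuation payoff, \Plone{}'s favourable best response includes guessing correctly and taking the profits. This means every cooperative branch yields \Pltwo{} zero expected payoff against a simulating \Plone{}, while against a non-simulating \Plone{} the base-game analysis (\Cref{lem:gPTG_properties}(i)) already rules out cooperation. Thus no meta-strategy of \Pltwo{} can make \Plone{} strictly prefer to trust, and every \NE{} of the $\pSim$ extension delivers payoffs no better than the \WO{} equilibrium of the underlying PTG; in particular, no Pareto-improving \NE{} exists.

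The main obstacle is ruling out \emph{all} candidate \NE{} of the $\pSim$ variant, rather than just the obvious ones. The subtlety is that \Pltwo{} might try to defeat \pSim{} by randomising at the meta-strategy level over clever password distributions, but as observed after \Cref{def:mixed-simgame}, under $\pSim$ only the induced distribution over pure strategies of the base-plus-password game matters, so this extra randomisation offers no protection. The remaining work is routine: pick concrete parameters for the base PTG and a large enough $K$ so that the $\mSim$-equilibrium payoffs strictly dominate every \NE{} payoff of the underlying game, and verify the inequalities that make blind guessing suboptimal for \Plone{} and deviation to a pure cooperative password suboptimal for \Pltwo{}.
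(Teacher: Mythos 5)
Your proposal is correct and follows essentially the same construction as the paper: take the partial-trust game of \Cref{fig:PTG} (where \Cref{thm:generalised_PTG} already gives a Pareto-improving $\mSim$ equilibrium), append a password-guessing stage to every branch on which \Pltwo{} earns positive profits, and observe that a uniformly random password is invisible to $\mSim$ (so the partial-trust simulation equilibrium lifts unchanged) but fully exposed by $\pSim$ (so cooperation earns \Pltwo{} nothing against a simulating \Plone{}). The two differences from the paper's proof are minor. First, the paper's password game $\PG{N}{x}$ deters blind guessing via an explicit punishment for wrong guesses with only $N=3$ passwords, whereas you use a large alphabet plus a small guessing cost; these are interchangeable. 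Second, and more substantively, the paper additionally gives \Pltwo{} an $\OptOut$ action with very negative payoffs and invokes \Cref{prop:password_guessing}\,(i) to show that every \NE{} of the $\pSim$ game has \Pltwo{} opting out almost surely, so that $\pSim$ is actively \emph{harmful}; you omit the opt-out and instead argue directly that \Pltwo{}'s payoff is capped at the \WalkOut{} level, which is all the theorem statement requires. Your version does go through, but the $\pSim$ direction should be tightened: the clean argument is that in the $\pSim$ extension, $\Defect$ weakly dominates $\Cooperate$ for \Pltwo{} (strictly whenever \Plone{} blindly trusts with positive probability, since cooperative profits are stolen after simulation but defection gains survive against blind trust), so in any \NE{} either \Pltwo{} always defects and \Plone{} walks out, or \Plone{} never blindly trusts and \Pltwo{}'s payoff is at most zero -- either way no strict Pareto improvement over the $(0,0)$ equilibrium. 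Citing \Cref{lem:gPTG_properties}\,(i) alone does not deliver this, as that lemma concerns \NE{} of the base game rather than best responses in the simulation meta-game.
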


\begin{proof}[Proof sketch]
    A suitable $\game$ can be constructed by
        starting with the partial-trust game $\PTG$ (Fig.\,\ref{fig:PTG}),
        giving Bob the ability to $\OptOut$ of playing,
        and introducing the password-guessing dynamic.
    The only equilibrium of $\psimgame$ will be for Bob to $\OptOut$,
        which is strictly worse (for both) than Alice using $\WalkOut$
            (the only NE of $\game$).
    However, $\msimgame$ will have a simulation equilibrium which Pareto-improves on the $\WalkOut$ outcome.
\end{proof}


\section{Related Work}\label{sec:sub:related-work}

\citet{kovarik2023game} study a setting which is the closest to ours,
    but focus exclusively on the much stronger assumption of using pure-strategy simulation.
\citet{Vardy2004} study the same setting (i.e., what we refer to as pure-strategy simulation),
    but approach it from a more traditional economics angle,
    focusing on the simulated agent's value of commitment rather than on Pareto-improvements.
Simulation in the context of AI agents is also studied by \citet{chen2024screening},
    who distinguish between the ``screening'' and ``disciplining'' effects on the AI's behaviour.
    Unlike the present paper, this work assumes that the simulated agent is drawn from a fixed population (rather than being strategic).

Alternative formulations of games with simulation incorporate unpredictable randomisation in different ways.
In \textit{games with espionage}  \citep{Solan2004}, for example, \Plone{} can pay to gain a \textit{probabilistic} signal based on \Pltwo{}'s realised pure strategy.
In \emph{oracle games} \citep{Young2020}, \Plone{} can attempt to learn \Pltwo{}'s pure strategy,
    but the success chance depends on the payment made by \Plone{}.

\citet{Harris2023}, study the problem of Bayesian persuasion under the assumption that the \textit{leader} can simulate the follower a number of times in order to learn what their response will be.

Other related work exists in \textit{incomplete} information games, where a player pays to learn what others \textit{know} (rather than \textit{do}).
In mechanism design with \textit{(partially) verifiable information}, an agent's strategy might be restricted by their private information \citep{Green1986}, or this information might be revealed by the designer paying a cost \citep{Townsend1979,Ball2019}.
In some models of costly \textit{information acquisition},
    a player can pay to learn the others' observations of the hidden state \citep{Hellwig2009,Myatt2011,Denti2023}.

Finally, some other works consider the possibility of \textit{mutual} simulation,
    though they typically assume that simulation is not costly.
\citet{kovarik2024recursive} study a setting where
    the players observe the result of simulation jointly,
    but are uncertain as to whether they themselves might be in a simulation.
\citet{RobustProgramEquilibrium} shows that in games played between AI agents with mutual access to each other's code \citep{Tennenholtz2004}, simulation can lead to cooperation.
Another related approach is game theory with translucent players \cite{halpern2018translucent},
    which assumes that the players tentatively settle on some strategy from which they can deviate, but doing so has some chance of being visible to the other player.
    In our terminology, this corresponds to a setting where each player always performs free but unreliable simulation of the other player.
\citet{Capraro2019} show how translucency can lead to increased cooperation.

\section{Conclusion}\label{sec:sub:conclusion}

Strategic interactions involving AI agents are likely to become increasingly frequent and important. They may also be fundamentally different from more familiar interactions, as AI agents may -- in theory -- be more transparent than humans or human institutions. For example, it may be possible to simulate an AI agent, likely at a small cost.

In this paper, we studied the implications of costly simulation in the presence of unpredictable randomisation, showing that it is neither strictly weaker nor stronger than pure-strategy simulation (in terms of improving social welfare).
While determining whether enabling mixed-strategy simulation is beneficial turns out to be \NPH{} in general,
    we identified several classes of games where the effects of simulation can be predicted.
Concretely, we showed that mixed-strategy simulation can lead to increased cooperation
    in games where the simulator needs to decide whether to trust the other player, when either the simulator has a more nuanced set of options than just full trust and no trust,
        or the players face challenges with both trust and coordination.
However, we also saw that mixed-strategy simulation fails to foster cooperation if \Pltwo{} observes \Plone{}'s action before moving.

While mixed-strategy simulation is arguably a more realistic model than pure-strategy simulation, it is still limited in several important ways.
Future work should explore generalisations such as dynamic games, games with incomplete information, and other forms of simulation (such as by generating multiple, stochastic samples).
Other important directions include identifying further classes of game in which different forms of simulation can help, and in matching those to potential real-world domains of application.
In so doing, we will be better equipped to reap the benefits and avoid the risks once AI agents become widespread.

\anonymise{
    \section*{Acknowledgments}

We are grateful to Caspar Oesterheld, Ratip Emin Berker, and Jiayuan Liu for discussions regarding the paper,
and to Ivan Geffner and Drake Thomas for insights related to \Cref{prop:solve_upper_bound}.
We thank the Cooperative AI Foundation and Polaris Ventures (formerly the Center for Emerging Risk Research) for financial support.
Vojtech Kovarik received partial financial support from Czech Science Foundation grant no. GA22-26655S.
Lewis Hammond acknowledges the support of an EPSRC Doctoral Training Partnership studentship (reference: 2218880).
}{}

\balance
    
\bibliographystyle{ACM-Reference-Format}
\bibliography{main}

\clearpage
\appendix
\section{Extensive-Form Games}\label{sec:app:EFGs}

\begin{definition}[EFG]\label{def:EFG}
An \defword{extensive form game} is a tuple of the form $\EFG = \left< \playerSet, \actions, \histories, p, \pi_c, \mc I, u \right>$ for which
\begin{itemize}
 \item $\playerSet = \{1,\dots,N\}$ for some $N \in \mathbb{N}$,
 \item $\histories$ is a tree\footnote{Recall that a ``tree on $X$'' refers to a subset of $X^*$ that is closed under initial segments.} on $\actions$,
 \item $\actions$ and all sets $\actions(h) := \{ a\in \actions \mid ha \in \histories \}$, for $h\in \histories$, are compact,
 \item $p : \histories \setminus \leaves \to \playerSet \cup \{c\}$ (where $\leaves$ denotes the leaves of $\histories$),
 \item $\pi_c(h) \in \Delta (\actions(h))$ for $p(h)=c$,
 \item $u : \leaves \to \R^{N}$, and
 \item $\mc I = (\mc I_1,\dots, \mc I_N)$ is a collection of partitions of $\histories$, where each $\mc I_i$
    provides enough information to identify $i$'s legal actions.\footnote{
        That is, for every $I\in \mc I_i$, $p(h)$ is either equal to $i$ for all $h\in I$ or for no $h\in I$. If for all, then $\actions(h)$ doesn't depend on the choice of $h\in I$.
    }
\end{itemize}
\end{definition}

\noindent
Recall that every normal-form game can be represented as an EFG
    (by assuming that players select actions one by one, but only observe the actions of others after taking their own action).

A \defword{behavioural strategy} for player $\pl$ is a mapping
    \begin{align*}
        \policy_\pl
        :
        \left\{ \history \in \histories \mid \playerFunction(\history) = \pl \right\}
        \mapsto
        \policy_\pl(\history) \in \Delta(\actions(\history))
        .
    \end{align*}
By $\policies = \bigtimes_{\pl \in \playerSet} \policies_\pl$, we denote the set of all behavioural strategy profiles $\policy = (\policy_\pl)_{\pl \in \playerSet}$.
A behavioural strategy $\policy_\pl$ is said to be \textbf{pure} when
    for every $\history$ with $\playerFunction(\history) = \pl$ and every $\action \in \actions_\pl(\history)$,
    we have $\policy_\pl(\action) \in \{0, 1\}$.
Each $\policy \in \policies$ induces a probability distribution over the set $\leaves$ of leaves the EFG.
This allows us to define the \defword{expected utility} in $\EFG$ as
    $\utility_\pl(\policy)
        := \E_{\leaf \sim \policy} \utility_\pl(\leaf)
    $.
A behavioural strategy $\policy_\pl$ is said to be a \defword{best response} to $\policy_\opp$ if
    $\policy_\pl \in \arg \max_{\policy'_\pl \in \policies_\pl} \utility_\pl(\policy'_\pl, \policy_\opp)$.

All of the definitions straightforwardly generalise to the case where rewards are received during the game
    (i.e., we have some reward function $\reward_\pl$ that assigns $\reward_\pl(\history, \action) \in \R$ to every $\action \in \actions(\history)$ and $\utility_\pl(\leaf) := \sum_{\history \action \sqsubset \leaf} \reward_\pl(\history, \action)$.
Moreover, they also generalise to the case of infinite games.
    While this might sometimes cause the expectations to be undefined or infinite, we will only work with games where this is not an issue.

A \textbf{normal-form representation} $\game$ of an extensive-form game $\EFG$
    is defined as the normal-form game
        $(\PureStrategies^\game_1, \dots, \PureStrategies^\game_N, \utility^\game)$
    given by
        \begin{align*}
            \PureStrategies_\pl^{\game}
            & :=
            \left\{
                \policy_\pl \in \policies_\pl
            \mid
                \policy_\pl
                \textnormal{ is pure}
            \right\}
            ,
            \\
            \utility^\game(\policy_1, \dots, \policy_N)
            & :=
            \utility^\EFG(\policy_1, \dots, \policy_N)
            .
        \end{align*}

\section{Proofs \texorpdfstring{for \Cref{sec:new_defs} (Basic Properties)}{of Basic Properties of Simulation Games}}\label{sec:app:basic_properties}

In the remainder of the appendix, we present the proofs of the theoretical results described in the main text.
To make some of the lengthier proofs easier to navigate,
    we state their main steps as separate claims and prove these claims using a separate proof environment.

The remainder of this section gives the proofs related to the basic properties of simulation games,
    and in particular the reduction of $\msimgame$ to a strategically-equivalent finite subgame.

\lemmaNEofGAreNEofMsim*

\begin{proof}
    Let $\mixedStrategy \in \NE(\game)$.
    To prove that $(\mixedStrategy_1, \meta{\mixedStrategy}_2)$ is an \NE{} of $\msimgame$, we need to show that neither player has a profitable deviation.
    Recall that
        $\PureStrategies^{\msimgame}_1 = \PureStrategies^\game_1 \cup \{ \mSim \}$
        and
        $\PureStrategies^{\msimgame}_2 = \MixedStrategies^\game_2$.
    Since $\mixedStrategy$ is an \NE{},
        no $\pureStrategy_1 \in \PureStrategies^\game_1$ or $\mixedStrategy_2 \in \MixedStrategies^\game_2$
        can constitute a profitable deviation from $\mixedStrategy$
        --- and hence from $(\mixedStrategy_1, \meta{\mixedStrategy}_2)$ either.
    Moreover, for $\mSim$, we have
        \begin{align*}
            \utility^{\msimgame}_1(\mSim, \meta{\mixedStrategy}_2)
            & =
            \utility^\game_1(\br^\game(\mixedStrategy_2), \mixedStrategy_2)
                - \simcost
            \\
            & =
             \utility^\game_1(\mixedStrategy_1, \mixedStrategy_2)
                - \simcost
            \leq
            \utility^\game_1(\mixedStrategy_1, \mixedStrategy_2)
        .
        \end{align*}
    This shows that $(\mixedStrategy_1, \meta{\mixedStrategy}_2)$ admits no profitable deviation.
\end{proof}

\subsection{Reduction of Strategy Space in NFGs}
    \label{sec:app:general_NFG_reduction}

The following standard result ensures that when investigating the Nash equilibria of $\game$, it suffices to focus on the subset of pure strategies that are a best response to some mixed strategy.
(This is a weaker notion than rationalisability, which iteratively removes all strictly dominated strategies and all strategies which are not a best response to some strategy of the opponent.)

\begin{lemma}[Keeping only best responses]\label{lem:best_responses_only}
    Let $\game' = \left( \strategySubset_1, \strategySubset_2, \utility \right)$ be a subgame
    of a (possibly infinite) NFG $\game = \left( \PureStrategies_1, \PureStrategies_2, \utility \right)$
    and suppose that
        \begin{align*}
            \strategySubset_\pl
            \supset
            \left\{
                    \pureStrategy_\pl \in \PureStrategies_\pl
                \mid
                    \exists \mixedStrategy_\opp \in \MixedStrategies_\opp :
                    \pureStrategy_\pl \in \br(\mixedStrategy_\opp)
            \right\}
            .
        \end{align*}
    Then $\NE(\game) = \NE(\game')$.
\end{lemma}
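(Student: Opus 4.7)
The plan is to prove both inclusions separately, repeatedly using the standard observation that in any Nash equilibrium, every pure strategy in $\supp(\mixedStrategy_\pl)$ must itself be a best response to $\mixedStrategy_\opp$ (otherwise one could strictly improve $\mixedStrategy_\pl$ by shifting weight off non-best-responses).

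For $\NE(\game) \subseteq \NE(\game')$, I would take $\mixedStrategy \in \NE(\game)$ and first argue that $\mixedStrategy$ is a valid strategy profile in $\game'$: every $\pureStrategy_\pl \in \supp(\mixedStrategy_\pl)$ satisfies $\pureStrategy_\pl \in \br(\mixedStrategy_\opp)$ by the observation above, and the hypothesis then places $\pureStrategy_\pl$ in $\strategySubset_\pl$; hence $\supp(\mixedStrategy_\pl) \subseteq \strategySubset_\pl$ and $\mixedStrategy \in \MixedStrategies^{\game'}$. Because $\MixedStrategies^{\game'}_\pl \subseteq \MixedStrategies^{\game}_\pl$, any deviation available in $\game'$ is a fortiori available in $\game$, so the absence of profitable deviations in $\game$ transfers to $\game'$.

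For $\NE(\game') \subseteq \NE(\game)$, I would take $\mixedStrategy \in \NE(\game')$ (with $\mixedStrategy \in \MixedStrategies^\game$ holding trivially from $\strategySubset_\pl \subseteq \PureStrategies_\pl$) and proceed by contradiction. Suppose some player, say \Plone{}, had a profitable deviation $\mixedStrategy'_1 \in \MixedStrategies^\game_1$ in $\game$. Picking any pure $\pureStrategy_1 \in \br(\mixedStrategy_2)$, we would have $\utility_1(\pureStrategy_1, \mixedStrategy_2) \geq \utility_1(\mixedStrategy'_1, \mixedStrategy_2) > \utility_1(\mixedStrategy_1, \mixedStrategy_2)$, so $\pureStrategy_1$ is itself a profitable deviation. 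The hypothesis then gives $\pureStrategy_1 \in \strategySubset_1$, making the deviation available in $\game'$ and contradicting $\mixedStrategy \in \NE(\game')$.

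The main thing to watch out for is the non-emptiness of $\br(\mixedStrategy_\opp)$, which the paper's convention already builds into the definition of $\br$; for the infinite-game setting (which matters when applying this lemma to $\msimgame$) this tacitly requires appropriate compactness and continuity of the utility function. Once that is granted, no further obstacles arise — the lemma is essentially a careful restatement of the folklore fact that pruning never-best-response pure strategies cannot create or destroy Nash equilibria.
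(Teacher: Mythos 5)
Your proposal is correct and follows essentially the same route as the paper's proof: both directions reduce to the facts that an equilibrium strategy's support consists of pure best responses (so the profile survives in $\game'$) and that any profitable mixed deviation can be replaced by a profitable pure best-response deviation, which the hypothesis guarantees is available in $\strategySubset_\pl$. Your closing caveat about non-emptiness of $\br(\mixedStrategy_\opp)$ in the infinite case matches an implicit assumption the paper itself acknowledges only in passing.
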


\noindent
The proof is standard, but we provide it for completeness.

\begin{proof}
$\NE(\game') \subseteq \NE(\game)$:
    Suppose that $\mixedStrategy \in \MixedStrategies^\game$ is not a Nash equilibrium of $\game$.
    First,
        if $\mixedStrategy$ uses strategies that do not appear in $\game'$,
        then $\mixedStrategy$ cannot be an \NE{} of $\game'$
        and there is nothing to prove.
    Second,
        if $\supp(\mixedStrategy_\pl) \subseteq \strategySubset_\pl$ for both players,
        then we can use the fact that since $\mixedStrategy$ is not an \NE{} in $\game$
        to get that for (at least) one of the players, there exists some pure best-response $\pureStrategy_\pl \in \br^\game(\mixedStrategy_\opp)$ for which $\utility_\pl^\game(\pureStrategy_\pl, \mixedStrategy_\opp) > \utility_\pl^\game(\mixedStrategy)$.
        However, since $\strategySubset_\pl$ contains all pure best responses, $\pureStrategy_\pl$ is also available in $\game'$.
        This shows that player $\pl$ can unilaterally improve their utility by deviating from $\mixedStrategy_\pl$, so $\mixedStrategy$ is not an \NE{} of $\game'$.

$\NE(\game) \subseteq \NE(\game')$:
This part of the proof is analogous, except that we use the fact that when
    $\mixedStrategy_\pl$ is a \textit{possibly mixed} best-response to $\mixedStrategy_\opp$,
    it can be expressed as a convex combination of \textit{pure} best responses to $\mixedStrategy_\opp$.
And since all of these pure best responses are also available in $\game'$
    (and $\game'$ does not introduce any new strategies that could provide an even better best response),
    any NE in $\game$
    --- i.e., a pair of mutual best responses $(\mixedStrategy_\pl, \mixedStrategy_\opp)$ ---
    will also be an \NE{} in $\game'$.
\end{proof}

The next result shows that a further reduction can be achieved by removing strategies that can be expressed as convex combinations of other strategies.
Note that because the condition \eqref{eq:extremisation_pure} requires that the opponent's utility remains unchanged, this reduction removes fewer strategies than the removal of weakly dominated strategies, which in turn results in the removal of fewer Nash equilibria.

\begin{lemma}[Keeping only extremal points]\label{lem:extremal_points_only}
    Let $\game' = \left( \strategySubset_1, \strategySubset_2, \utility \right)$ be a subgame
    of a (possibly infinite) NFG $\game = \left( \PureStrategies_1, \PureStrategies_2, \utility \right)$.
    If $\game'$ satisfies the following condition
        \begin{align}
            &
            \left( \forall \pl \in \{1, 2\} \right)
            \left( \forall \pureStrategy_\pl \in \PureStrategies_\pl^\game  \right)
            ( \exists \mixedStrategy'_\pl \in \MixedStrategies_\pl^{\game'} )
                : 
                \forall \pureStrategy_\opp \in \PureStrategies_\opp^\game
            \label{eq:extremisation_pure}
            \\
            & \ \ \ 
                \utility_\pl(\mixedStrategy'_\pl, \pureStrategy_\opp)
                    \geq
                    \utility_\pl(\pureStrategy_\pl, \pureStrategy_\opp)
                \ \ \& \ \ 
                \utility_\opp(\mixedStrategy'_\pl, \pureStrategy_\opp)
                    =
                    \utility_\opp(\pureStrategy_\pl, \pureStrategy_\opp)
                ,
                \nonumber
        \end{align}
    then we have
        (i) $\NE(\game') \subseteq \NE(\game)$
        and 
            \begin{align*}
                \text{(ii) }
                    \left( \forall \mixedStrategy \in \NE(\game) \right)
                    \left( \exists \mixedStrategy' \in \NE(\game') \right)
                    :
                    \utility(\mixedStrategy') = \utility(\mixedStrategy)
                    .
            \end{align*}
\end{lemma}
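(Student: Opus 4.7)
The plan is to exploit the condition \eqref{eq:extremisation_pure}, which says that every pure strategy of either player in $\game$ admits a replacement in $\game'$ that is weakly better for the replacing player and exactly payoff-equivalent for the opponent. By bilinearity of expected utility, this extends pointwise from pure opponent strategies to mixed ones, giving the fundamental tool: for any $\pureStrategy_\pl \in \PureStrategies_\pl^\game$ and any $\mixedStrategy_\opp \in \MixedStrategies_\opp^\game$, there exists a replacement $\mixedStrategy'_\pl \in \MixedStrategies_\pl^{\game'}$ with $\utility_\pl(\mixedStrategy'_\pl, \mixedStrategy_\opp) \geq \utility_\pl(\pureStrategy_\pl, \mixedStrategy_\opp)$ and $\utility_\opp(\mixedStrategy'_\pl, \mixedStrategy_\opp) = \utility_\opp(\pureStrategy_\pl, \mixedStrategy_\opp)$.

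Part (i) follows by a direct contradiction. Suppose some $\mixedStrategy \in \NE(\game')$ admits a profitable pure deviation $\pureStrategy_\pl \in \PureStrategies_\pl^\game$ in $\game$. Applying the extended condition to $\pureStrategy_\pl$ against $\mixedStrategy_\opp$ produces a replacement $\mixedStrategy'_\pl \in \MixedStrategies_\pl^{\game'}$ that does at least as well as $\pureStrategy_\pl$, and therefore strictly better than $\mixedStrategy_\pl$, from within $\game'$ --- contradicting the assumption that $\mixedStrategy \in \NE(\game')$.

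For part (ii), given $\mixedStrategy \in \NE(\game)$, I would construct $\mixedStrategy'$ explicitly. Write $\mixedStrategy_\pl = \sum_k \lambda_k^\pl \pureStrategy_\pl^k$ in its support decomposition, pick a replacement $\mixedStrategy'_{\pl,k} \in \MixedStrategies_\pl^{\game'}$ for each $\pureStrategy_\pl^k$, and set $\mixedStrategy''_\pl := \sum_k \lambda_k^\pl \mixedStrategy'_{\pl,k}$ and $\mixedStrategy' := (\mixedStrategy''_1, \mixedStrategy''_2)$. Convexity of $\MixedStrategies_\pl^{\game'}$ places $\mixedStrategy''_\pl$ inside $\game'$, and by linearity, swapping $\mixedStrategy_\pl$ for $\mixedStrategy''_\pl$ against any fixed opponent strategy preserves the opponent's payoff exactly while weakly improving player $\pl$'s own payoff.

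The delicate step --- and the main obstacle --- is to simultaneously establish $\utility(\mixedStrategy') = \utility(\mixedStrategy)$ and $\mixedStrategy' \in \NE(\game')$. I would do this via a two-step swap: from $(\mixedStrategy_1, \mixedStrategy_2)$ to $(\mixedStrategy''_1, \mixedStrategy_2)$, then to $(\mixedStrategy''_1, \mixedStrategy''_2)$. The first swap preserves player 2's payoff by construction and weakly raises player 1's; but since $\mixedStrategy_1$ is a best response to $\mixedStrategy_2$ in $\game$, this weak increase must actually be equality. The second swap preserves player 1's payoff and weakly raises player 2's; the key observation is that because $\mixedStrategy''_1$ yields player 2 the same payoff as $\mixedStrategy_1$ against every strategy in $\game$, $\mixedStrategy_2$ remains a best response to $\mixedStrategy''_1$ in $\game$, which forces equality for player 2 as well. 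Once $\utility(\mixedStrategy') = \utility(\mixedStrategy)$ is secured, verifying the NE condition in $\game'$ is immediate: any hypothetical profitable deviation $\tilde{\mixedStrategy}_\pl \in \MixedStrategies_\pl^{\game'}$ against $\mixedStrategy''_\opp$ would, by the same payoff-equivalence, be a profitable deviation against $\mixedStrategy_\opp$ in the full game $\game$, contradicting $\mixedStrategy \in \NE(\game)$.
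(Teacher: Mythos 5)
Your proposal is correct and follows essentially the same route as the paper: lift the pure-strategy condition \eqref{eq:extremisation_pure} to mixed strategies by linearity, replace each player's equilibrium strategy by its payoff-equivalent counterpart in $\game'$, use the Nash property of $\mixedStrategy$ in $\game$ to turn the weak improvements into equalities (your two-step swap is exactly the paper's chain $\utility_\pl(\mixedStrategy_\pl, \mixedStrategy_\opp) \leq \utility_\pl(\extremise{\mixedStrategy_\pl}, \mixedStrategy_\opp) = \utility_\pl(\extremise{\mixedStrategy_\pl}, \extremise{\mixedStrategy_\opp})$), and then pull any hypothetical deviation in $\game'$ back to a deviation in $\game$. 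No gaps.
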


\noindent
In other words,
    as long as we are primarily interested in equilibria for their expected values,
    keeping only the extremal points comes without any loss of generality.

\begin{proof}
Before we proceed with the proof,
note first that the assumption \eqref{eq:extremisation_pure} automatically implies the following stronger condition:
    \begin{align}
        &
        \left( \forall \pl \in \{1, 2\} \right)
        \left( \forall \mixedStrategy_\pl \in \MixedStrategies_\pl^\game  \right)
        ( \exists \mixedStrategy'_\pl \in \MixedStrategies_\pl^{\game'} )
            : 
            \forall \mixedStrategy_\opp \in \MixedStrategies_\opp^\game
        \label{eq:extremisation_mixed}
        \\
        & \ \ \ 
            \utility_\pl(\mixedStrategy'_\pl, \mixedStrategy_\opp)
                \geq
                \utility_\pl(\mixedStrategy_\pl, \mixedStrategy_\opp)
            \ \ \& \ \ 
            \utility_\opp(\mixedStrategy'_\pl, \mixedStrategy_\opp)
                =
                \utility_\opp(\mixedStrategy_\pl, \mixedStrategy_\opp)
        \nonumber
    \end{align}
(this follows from the fact that expected utility is a linear function).
For the purpose of this proof, we will use $\extremise{\mixedStrategy_\pl}$ to denote some strategy $\mixedStrategy'_\pl$ which serves as a witness that \eqref{eq:extremisation_mixed} holds for $\mixedStrategy_\pl$.
    \vk{This jump from \eqref{eq:extremisation_pure} to \eqref{eq:extremisation_mixed} hides some implicit assumptions about the set $\PureStrategies_\pl$ being finite, or the allowed probability distributions being well-behaved.
    But I think it's fair to ignore this.}

Proof of (i):
    Let $\mixedStrategy' \in \NE(\game')$.
    Suppose that $\mixedStrategy'$ wasn't an \NE{} in $\game$,
        that is,
            suppose there was some $\mixedStrategyAlt_\pl \in \MixedStrategies_\pl^\game$ such that
            $
                \utility_\pl(\mixedStrategyAlt_\pl, \mixedStrategy'_\opp)
                >
                \utility_\pl(\mixedStrategy'_\pl, \mixedStrategy'_\opp)
            $.
    Then by \eqref{eq:extremisation_mixed}, we have
        $
            \utility_\pl(\extremise{\mixedStrategyAlt_\pl}, \mixedStrategy'_\opp)
            \geq
            \utility_\pl(\mixedStrategyAlt_\pl, \mixedStrategy'_\opp)
        $,
    implying that
        $
            \utility_\pl(\extremise{\mixedStrategyAlt_\pl}, \mixedStrategy'_\opp)
            >
            \utility_\pl(\mixedStrategy'_\pl, \mixedStrategy'_\opp)
        $.
    This would mean that player $i$ could increase their utility in $\game'$ by unilaterally deviating to $\extremise{\mixedStrategyAlt_\pl}$, contradicting the assumption that $\mixedStrategy
    $ is an \NE{} of $\game'$.

Proof of (ii):
    Let $\mixedStrategy \in \NE(\game)$.
    We will show that the strategy $\mixedStrategy' := (\extremise{\mixedStrategy_1}, \extremise{\mixedStrategy_2})$ satisfies the conclusion of (ii).
    By applying \eqref{eq:extremisation_pure} twice, we get
        \begin{align*}
            \utility_\pl(\mixedStrategy_\pl, \mixedStrategy_\opp)
            \leq
            \utility_\pl(\extremise{\mixedStrategy_\pl}, \mixedStrategy_\opp)
            =
            \utility_\pl(\extremise{\mixedStrategy_\pl}, \extremise{\mixedStrategy_\opp})
            .
        \end{align*}
    However,
        since $\mixedStrategy$ is an \NE{} of $\game$ and doesn't allow for any profitable deviations,
        the inequality
            $
                \utility_\pl(\mixedStrategy_\pl, \mixedStrategy_\opp)
                \leq
                \utility_\pl(\extremise{\mixedStrategy_\pl}, \mixedStrategy_\opp)
            $
        cannot be strict,
        implying that
            $
                \utility_\pl(\mixedStrategy_\pl, \mixedStrategy_\opp)
                =
                \utility_\pl(\extremise{\mixedStrategy_\pl}, \extremise{\mixedStrategy_\opp})
            $.
        As a result, we have:
        \begin{align*}
            \utility(\mixedStrategy')
            =
            \utility(\mixedStrategy)
            .
        \end{align*}
    
    If $(\extremise{\mixedStrategy_1}  \! , \, \extremise{\mixedStrategy_2})$ wasn't an \NE{} of $\game'$,
        there would have to be some strategy $\rho'_\pl \in \mixedStrategy_\pl^{\game'}$ for which
        $
            \utility_\pl(\rho'_\pl, \extremise{\mixedStrategy_\opp})
            >
            \utility_\pl(\extremise{\mixedStrategy_\pl} \! , \, \extremise{\mixedStrategy_\opp})
        $.
        However, by \eqref{eq:extremisation_mixed}, we would then have
        \begin{align*}
            \utility_\pl(\rho'_\pl, \mixedStrategy_\opp)
            =
            \utility_\pl(\rho'_\pl, \extremise{\mixedStrategy_\opp})
            >
            \utility_\pl(\extremise{\mixedStrategy_\pl} \! , \, \extremise{\mixedStrategy_\opp})
            =
            \utility_\pl(\mixedStrategy_\pl, \mixedStrategy_\opp)
            ,
        \end{align*}
        contradicting the assumption that $\mixedStrategy$ is an \NE{} of $\game$.
\end{proof}

\vk{We could say something more here, about \Cref{lem:extremal_points_only}, because the proof actually shows more. In our actual setting, the strategy space $\PureStrategies^\game_\pl$ is convex and the mixed strategy $\mixedStrategy'_\pl$ is in fact such that the ``center'' of $\mixedStrategy'_\pl$ is $\pureStrategy'_\pl$.}

\subsection{Application of the Reduction to Mixed-Strategy Simulation Games}\label{sec:app:simgame_reduction}

By a (bounded convex) \textbf{polytope}, we will mean
    a convex closure of a finite number of points in a Euclidean space.
By a \textbf{face} of a polytope, we will mean
    any intersection of the polytope with a half-space such that none of the interior points of the polytope lie on the boundary of the half-space.
We will make use of the following standard result:

\begin{lemma}\label{lem:argmax_of_linear_on_polytope}
    Suppose that $d \in \N$,  $f : \R^d \to \R$ is a linear function, and $C \subseteq \R^d$ is a polytope.
    Then $\argmax \left\{ f(x) \mid x \in C \right\}$ is a polytope
        (more specifically, a face of $C$).
\end{lemma}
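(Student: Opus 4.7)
The plan is to realise $\argmax\{f(x) \mid x \in C\}$ as the intersection of $C$ with a suitable closed half-space whose boundary avoids the interior of $C$ --- this is exactly the paper's definition of a face --- and then invoke the standard fact that every face of a polytope is itself a polytope.

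First, I would observe that $C$, being the convex hull of finitely many points, is compact, and $f$ is continuous, so the value $M := \max\{f(x) \mid x \in C\}$ is attained. Define the half-space $H := \{x \in \R^d \mid f(x) \geq M\}$. By construction $C \cap H = \{x \in C \mid f(x) = M\} = \argmax\{f(x) \mid x \in C\}$, which establishes the set-theoretic identification.

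Next, I need to verify the face condition: no interior point of $C$ lies on the hyperplane $\partial H = \{x \mid f(x) = M\}$. Suppose toward contradiction that $x_0 \in \mathrm{int}(C)$ satisfies $f(x_0) = M$, and write $f(x) = a \cdot x + b$. If $a \neq 0$, a small ball $B(x_0, \varepsilon) \subseteq C$ then contains the point $x_0 + \tfrac{\varepsilon}{2\|a\|}\, a$, which belongs to $C$ and has $f$-value $M + \tfrac{\varepsilon \|a\|}{2} > M$, contradicting the maximality of $M$. If $a = 0$, then $f$ is constant on $C$ and $\argmax\{f(x) \mid x \in C\} = C$; in this case, instead of the $H$ above I would take any nondegenerate half-space strictly containing $C$ (which exists because $C$ is bounded), whose boundary then misses $C$ entirely, so the face condition holds vacuously.

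Finally, to conclude that the resulting face is a polytope, I would use the standard fact that if $C = \mathrm{conv}\{v_1, \dots, v_n\}$, then the face $C \cap H$ coincides with $\mathrm{conv}\{v_i \mid f(v_i) = M\}$, because a linear function on a convex hull attains its maximum at a vertex. This exhibits $\argmax\{f(x) \mid x \in C\}$ as a convex hull of finitely many points, hence a polytope. The argument is entirely standard; the only minor subtlety is handling the constant-function edge case to match the paper's specific definition of ``face.''
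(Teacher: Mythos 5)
Your proof is correct and follows the same route as the paper, which simply asserts that the claim "follows from the definition of a face and the fact that a face of a polytope is a polytope"; you fill in exactly the details that one-line argument leaves implicit (exhibiting the half-space $H=\{x\mid f(x)\geq M\}$, checking the interior-point condition, and handling the constant-$f$ case). No gaps.
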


\begin{proof}
    This follows from the definition of a face, and the fact that a face of a polytope is a polytope.
\end{proof}

\propFiniteStrategySpace*

\noindent
The proof of this result includes Claims~\ref{cl:finiteMone}-\ref{cl:finiteMtwo} and their proofs.

\begin{proof}
    For the purpose of this proof,
        we will take the phrase ``the game $A$ can be reduced to a subgame $B$'' to mean that
        $B$ is a subgame of $A$ and they satisfy the conditions (i) and (ii) from the statement of the proposition
        --- except that this will not necessarily require $B$ to be finite.
    Note that these reductions are transitive:
        if $A$ can be reduced to $B$ and $B$ can be reduced to $C$, then $A$ can be reduced to $C$.
    We will use $(\symbolPlaceholder)'$ to denote that a particular object belongs to a subgame
    -- e.g., $\mixedMetaStrategy'_1$ a subgame strategy of \Plone{}.

    We will actually prove a stronger result than \Cref{prop:finite_str_space},
        by showing that the conclusion holds even if \Plone{} were allowed to mix over mixed strategies in $\game$
        (i.e., that it would hold even for $\PureStrategies^{\msimgame}_1$ were defined as $\MixedStrategies^\game_1 \cup \{ \mSim \}$).
    The same argument would apply to showing that \Pltwo{} cannot benefit from additional mixing (over mixtures of mixtures),
        justifying the current form of the definition of mixed-strategy simulation games.
    
    Let
        $\game$ be the finite base-game and
        $\msimgame = (\PureMetaStrategies_1, \PureMetaStrategies_2, \utility)$,
            where
                $\PureMetaStrategies_1 = \PureStrategies_1^\game \cup \{ \mSim \}$ and
                $\PureMetaStrategies_2 = \MixedStrategies_2^\game$,
            be the corresponding mixed-strategy simulation game.
    To find the reduction, we will construct sets
        $\FinitePMS_1 \subseteq \PureMetaStrategies_1$
        and
        $\FinitePMS_2 \subseteq \ReducedPMS_2 \subseteq \PureMetaStrategies_2$ such that:
        \begin{itemize}
            \item[(1)]  (a) $(\PureMetaStrategies_1, \PureMetaStrategies_2, \utility)$
                            can be reduced to
                            $(\FinitePMS_1, \PureMetaStrategies_2, \utility)$.\\
                        (b) $\FinitePMS_1$ is finite.
            \item[(2)] $(\FinitePMS_1, \PureMetaStrategies_2, \utility)$
                            can be reduced to
                            $(\FinitePMS_1, \ReducedPMS_2, \utility)$.
            \item[(3)]  (a) $(\FinitePMS_1, \ReducedPMS_2, \utility)$
                            can be reduced to
                            $(\FinitePMS_1, \FinitePMS_2, \utility)$.\\
                        (b) $\FinitePMS_2$ is finite.
        \end{itemize}
    Since the reductions are transitive, this will suffice to prove the main result.
    In remainder of the proof, we will find the above sets and show that they satisfy the above conditions.

    \begin{claim}\label{cl:finiteMone}
        The conditions (1a) and (1b) hold for
        \begin{align}\label{eq:MOnePrime}
                \FinitePMS_1
                :=
                \PureStrategies_1^\game
                \cup
                \left\{ \mSim \right\}
            .
            \end{align}
    \end{claim}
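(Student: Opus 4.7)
The plan is to apply \Cref{lem:extremal_points_only} to the pair $(\PureMetaStrategies_1, \PureMetaStrategies_2, \utility)$ and its subgame $(\FinitePMS_1, \PureMetaStrategies_2, \utility)$. Condition (1b) is immediate, since $\PureStrategies_1^\game$ is finite by assumption and we are only adjoining the single strategy $\mSim$. So the work is entirely in verifying the hypothesis \eqref{eq:extremisation_pure} of \Cref{lem:extremal_points_only}.

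Only \Plone{}'s strategy space is shrinking (\Pltwo{}'s side of \eqref{eq:extremisation_pure} is vacuous since $\strategySubset_2 = \PureMetaStrategies_2$), and the pure meta-strategies being removed are exactly the non-degenerate mixed base-game strategies $\mixedStrategy_1 \in \MixedStrategies_1^\game \setminus \PureStrategies_1^\game$. For each such $\mixedStrategy_1$, I would take as the witness $\mixedMetaStrategy_1' \in \MixedStrategies_1^{\game'}$ the mixed meta-strategy that places probability $\mixedStrategy_1(\pureStrategy_1)$ on the pure meta-strategy $\meta{\pureStrategy_1}$ for each $\pureStrategy_1 \in \PureStrategies_1^\game$. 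Strategies already in $\FinitePMS_1$ (the pure base-game strategies and $\mSim$) serve as their own witnesses.

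The key step is then to check that $\mixedMetaStrategy_1'$ and $\mixedStrategy_1$ yield the \emph{same} expected payoffs to \emph{both} players against any $\mixedStrategy_2 \in \PureMetaStrategies_2 = \MixedStrategies_2^\game$. Against such an opponent strategy, each pure base-game strategy $\pureStrategy_1$ (viewed as a pure meta-strategy) earns the ordinary base-game payoff $\utility_\pl^\game(\pureStrategy_1, \mixedStrategy_2)$. By linearity of expected utility,
\[
    \utility_\pl(\mixedMetaStrategy_1', \mixedStrategy_2)
    \;=\;
    \sum_{\pureStrategy_1 \in \PureStrategies_1^\game}
        \mixedStrategy_1(\pureStrategy_1)\,\utility_\pl^\game(\pureStrategy_1, \mixedStrategy_2)
    \;=\;
    \utility_\pl^\game(\mixedStrategy_1, \mixedStrategy_2)
    \;=\;
    \utility_\pl(\mixedStrategy_1, \mixedStrategy_2),
\]
which gives both the inequality and the equality required by \eqref{eq:extremisation_pure}. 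Since this holds for every opponent meta-strategy in $\PureMetaStrategies_2$, the extremisation condition is satisfied and \Cref{lem:extremal_points_only} yields (1a).

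There is no real obstacle; the entire substance is the bookkeeping distinction between a \emph{pure} meta-strategy $\mixedStrategy_1$ (one robot that internally randomises over $\PureStrategies_1^\game$) and a \emph{mixed} meta-strategy $\mixedMetaStrategy_1'$ (randomising over robots that each play a pure base-game strategy). Because \Pltwo{}'s meta-strategy space at this stage is still all of $\MixedStrategies_2^\game$ --- there is no simulation action for \Pltwo{} that could detect the difference --- neither player's payoff depends on which of these two representations is used, only on the induced marginal over $\PureStrategies_1^\game$. This is precisely the structural feature that makes the reduction valid for \Plone{} even though, as noted in \Cref{sec:sub:meta_strategies}, the analogous reduction will \emph{not} be valid for \Pltwo{}.
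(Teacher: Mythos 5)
Your proposal is correct and follows essentially the same route as the paper: it reduces the claim to verifying hypothesis \eqref{eq:extremisation_pure} of \Cref{lem:extremal_points_only}, keeps $\mSim$ and the pure base-game strategies as their own witnesses, and replaces each internally-randomising $\mixedStrategy_1 \in \MixedStrategies_1^\game$ by the mixed meta-strategy $\sum_{\pureStrategy_1} \mixedStrategy_1(\pureStrategy_1)\,\meta{\pureStrategy_1}$, with linearity giving payoff equality for both players against every $\meta{\mixedStrategy_2}$. Your closing observation --- that the argument works precisely because \Pltwo{} has no simulation action that could distinguish the two representations --- is exactly the structural point the paper relies on.
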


    \noindent
    \Cref{cl:finiteMone} is an essentially trivial application of \Cref{lem:extremal_points_only}
        --- however, it is somewhat complicated\footnotemark{} by the need to deal with the notation around the simulation ``meta'' games:
        \footnotetext{
            \ The reader might take solace in the fact that after this proof, we will essentially always work with the reduced strategy spaces, which will obviate most of the technical and notational difficulties.
        }
    We need to verify that the condition \eqref{eq:extremisation_pure} holds.
    This is trivial for $\pl = 2$, since \Pltwo{}'s strategy space remains unchanged.
    For $\pl = 1$, the condition requires us to
        start with an arbitrary to pure (meta) strategy $\pureMetaStrategy_1$ in $\msimgame$
        and find a corresponding mixed (meta) strategy $\mixedMetaStrategy'_1$
            in the subgame $(\FinitePMS_1, \PureMetaStrategies_2, \utility)$
        (in such a way that $\pureMetaStrategy_1$ and $\mixedMetaStrategy'_1$ yield identical utilities against every pure strategy $\pureMetaStrategy_2$ of \Pltwo{}).
    In practice, this is simple to achieve:
        since $\PureStrategies_1^{\msimgame} = \MixedStrategies_1^\game \cup \{ \mSim \}$
            (by definition),
        a pure (meta) strategy of \Plone{} in $\msimgame$
            is either the action $\mSim$
            or a some mixed strategy $\mixedStrategy_1 \in \MixedStrategies_1^\game$.
        In the first case,
        the simulation action $\mSim$ is available in the subgame $(\FinitePMS_1, \PureMetaStrategies_2, \utility)$,
            so we can simply set $\mixedMetaStrategy'_1 := \pureMetaStrategy_1 := \mSim$.
        In the case when $\pureMetaStrategy_1 = \mixedStrategy_1 \in \MixedStrategies_1^\game$,
            we simply set
                $
                    \mixedMetaStrategy'_1
                    :=
                    \sum_{\pureStrategy_1 \in \PureStrategies_1^\game}
                        \mixedStrategy_1(\pureStrategy_1) \cdot \meta{\pureStrategy_1}
                $
            (where $\meta{\pureStrategy_1}$ denotes the meta-strategy of putting all probability mass on the base-game strategy $\pureStrategy_1$).
        By definition of $\msimgame$,
            all pure meta-strategies of \Pltwo{} are of the form $\pureMetaStrategy_2 = \meta{\mixedStrategy_2}$ for some $\mixedStrategy_2 \in \MixedStrategies_2^\game$,
            and we have
                $\utility(\meta{\pureStrategy_1}, \meta{\mixedStrategy_2}) = \utility(\pureStrategy_1, \mixedStrategy_2)$
                for every $\pureStrategy_1 \in \PureStrategies_1^\game$, $\mixedStrategy_2 \in \MixedStrategies_2^\game$.
        This implies that
            $\mixedStrategy_1$ and $\mixedMetaStrategy'_1$ yield the same expected utilities against every $\pureMetaStrategy_2 \in \PureMetaStrategies_2$.
    This in turn implies that $(\FinitePMS_1, \PureMetaStrategies_2, \utility)$ satisfies the condition \eqref{eq:extremisation_pure},
        which concludes the proof of \Cref{cl:finiteMone}.

    \begin{claim}\label{cl:reducedMtwo}
        The condition (2) holds for
        \begin{align*}
            \ReducedPMS_2
            :=
            \left\{
                    \mixedStrategy_2 \in \MixedStrategies_2^\game
                \mid
                    \exists \pureMetaStrategy_1 \in \PureStrategies_1^\game \cup \{ \mSim \}
                    :
                    \mixedStrategy_2 \in \br(\pureMetaStrategy_1)
            \right\}
            .
        \end{align*}
    \end{claim}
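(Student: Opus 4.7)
My plan is to apply Lemma~\ref{lem:best_responses_only} to reduce Player 2's strategy space in $(\FinitePMS_1, \PureMetaStrategies_2, \utility)$, where Player 2's ``pure'' strategies are the mixed base-game strategies $\mixedStrategy_2 \in \MixedStrategies^\game_2$ and Player 1's are $\FinitePMS_1 = \PureStrategies^\game_1 \cup \{\mSim\}$. If $\ReducedPMS_2$ can be shown to satisfy the hypothesis of that lemma, the Nash equilibria of the full and reduced games will coincide, yielding both reduction conditions (i) and (ii) from Proposition~\ref{prop:finite_str_space}.

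The key structural observation is that for each pure meta-strategy $\pureMetaStrategy_1 \in \FinitePMS_1$, Player 2's payoff $\mixedStrategy_2 \mapsto \utility_2(\pureMetaStrategy_1, \mixedStrategy_2)$ is linear when $\pureMetaStrategy_1 = \pureStrategy_1 \in \PureStrategies^\game_1$, and piecewise linear when $\pureMetaStrategy_1 = \mSim$, with linear pieces indexed by the value of $\fbr^\game(\mixedStrategy_2)$. By Lemma~\ref{lem:argmax_of_linear_on_polytope}, the argmax of each linear piece over $\MixedStrategies^\game_2$ is a face of the simplex, so $\ReducedPMS_2$ is a finite union of such faces. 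This also confirms that restricting attention to $\ReducedPMS_2$ preserves at least one best response to every pure meta-strategy of Player 1.

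The main obstacle is that a best response of Player 2 to a \emph{mixed} meta-strategy $\mixedMetaStrategy_1$ need not lie in $\ReducedPMS_2$, since the maximizer of a convex combination of (piecewise) linear functions can be a face of $\MixedStrategies^\game_2$ that is not maximizing for any individual summand. I expect to overcome this by showing that any such ``extra'' best response $\mixedStrategy_2^*$ to $\mixedMetaStrategy_1$ can be written as a convex combination of strategies in $\ReducedPMS_2$ that is payoff-equivalent for both players --- essentially because $\mixedStrategy_2^*$ lies in the face determined by the common maximizers of the active linear pieces, whose extreme points belong to $\ReducedPMS_2$. This observation lets me combine Lemma~\ref{lem:extremal_points_only} with Lemma~\ref{lem:best_responses_only}: the extremal-points reduction discards the payoff-equivalent ``extra'' best responses, after which the best-responses-only reduction delivers exactly the set $\ReducedPMS_2$, establishing condition (2).
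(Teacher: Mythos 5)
The paper's own proof is a single sentence: it applies Lemma~\ref{lem:best_responses_only} to ``$\game$''$\,=(\FinitePMS_1,\PureMetaStrategies_2,\utility)$ and ``$\game'$''$\,=(\FinitePMS_1,\ReducedPMS_2,\utility)$, reading the definition of $\ReducedPMS_2$ as collecting the best responses to \emph{arbitrary} meta-strategies $\mixedMetaStrategy_1\in\Delta(\PureStrategies_1^\game\cup\{\mSim\})$ --- the reading the paper relies on again in property (g) of the proof of Claim~\ref{cl:finiteMtwo}. Under that reading the hypothesis of Lemma~\ref{lem:best_responses_only} holds by construction and nothing more is needed. You instead read the quantifier literally, over pure meta-strategies only, and you are right that the hypothesis can then fail: a best response to a genuine mixture of $\mSim$ with base-game strategies need not be a best response to any single $\pureStrategy_1$ or to $\mSim$ alone. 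So the ``obstacle'' you identify is a real feature of the definition as written, and your proposed repair via Lemma~\ref{lem:extremal_points_only} is a genuinely different and much heavier route, which pulls forward work the paper defers to step (3) and Claim~\ref{cl:finiteMtwo}.

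The repair does not close the argument as stated, for two reasons. First, the extremal points into which you decompose an ``extra'' best response are guaranteed only to be best responses to the same \emph{mixed} meta-strategy $\mixedMetaStrategy_1$ (this is exactly properties (d)--(g) in Claim~\ref{cl:finiteMtwo}); under your literal reading of the definition they therefore still need not belong to $\ReducedPMS_2$, so the decomposition does not land you in the target set and the argument is circular. Second, even granting that step, composing the extremal-points reduction with the best-responses-only reduction produces a set strictly smaller than $\ReducedPMS_2$ (essentially $\FinitePMS_2$), and a reduction of $(\FinitePMS_1,\PureMetaStrategies_2,\utility)$ to a \emph{smaller} subgame does not imply the claimed reduction to the \emph{intermediate} subgame $(\FinitePMS_1,\ReducedPMS_2,\utility)$: enlarging a subgame both introduces new candidate equilibria (so condition (i) is not inherited) and new deviations (so a payoff-equivalent equilibrium of the smaller subgame need not survive, and condition (ii) is not inherited either). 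The clean resolution is simply to adopt the mixed-meta-strategy reading of $\ReducedPMS_2$ and invoke Lemma~\ref{lem:best_responses_only} once, as the paper does.
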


    \noindent
    \Cref{cl:reducedMtwo} is a trivial application of \Cref{lem:best_responses_only} to
        ``$\game$'' $ = (\FinitePMS_1, \PureMetaStrategies_2, \utility)$
        and
        ``$\, \game'\,$'' $= (\FinitePMS_1, \ReducedPMS_2, \utility)$.
    
    \begin{claim}\label{cl:convexFbr}
        For any $\pureStrategy_1 \in \PureStrategies_1$, denote
        \begin{align*}
            \inverseBR(\pureStrategy_1)
            & :=
            \left\{
                \mixedStrategy_2 \in \MixedStrategies_2^\game
                \mid
                \br(\mixedStrategy_2) \ni \pureStrategy_1
            \right\}
            \\
            \inverseFBR(\pureStrategy_1)
            & :=
            \left\{
                \mixedStrategy_2 \in \MixedStrategies_2^\game
                \mid
                \fbr(\mixedStrategy_2) \ni \pureStrategy_1
            \right\}
            .
        \end{align*}
        \begin{enumerate}[label=(\roman*)]
            \item The set $\inverseBR(\pureStrategy_1)$ is closed and convex.
            \item The set $\inverseFBR(\pureStrategy_1)$ is convex.
        \end{enumerate}
    \end{claim}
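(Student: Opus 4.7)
My plan is to handle the two parts separately, with part (ii) requiring the real work.

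For part (i), I would simply observe that $\pureStrategy_1 \in \br(\mixedStrategy_2)$ is equivalent to the conjunction, over all $\pureStrategyAlt_1 \in \PureStrategies^\game_1$, of the linear inequalities $\utility_1(\pureStrategy_1, \mixedStrategy_2) \geq \utility_1(\pureStrategyAlt_1, \mixedStrategy_2)$. Since each $\utility_1(\symbolPlaceholder, \mixedStrategy_2)$ is linear in $\mixedStrategy_2$, each such condition cuts out a closed half-space of $\R^{|\PureStrategies_2^\game|}$. Intersecting finitely many of these with the (closed, convex) simplex $\MixedStrategies_2^\game$ yields that $\inverseBR(\pureStrategy_1)$ is closed and convex.

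For part (ii), I would take $\mixedStrategy_2, \mixedStrategy_2' \in \inverseFBR(\pureStrategy_1)$ and $\lambda \in [0,1]$, and write $\mixedStrategy_2^\lambda := \lambda \mixedStrategy_2 + (1-\lambda) \mixedStrategy_2'$. The cases $\lambda \in \{0,1\}$ are trivial, so assume $\lambda \in (0,1)$. Since $\fbr \subseteq \br$, both $\mixedStrategy_2, \mixedStrategy_2'$ lie in $\inverseBR(\pureStrategy_1)$, and by (i) so does $\mixedStrategy_2^\lambda$. It remains to verify the tie-breaking condition: for every $\pureStrategyAlt_1 \in \br(\mixedStrategy_2^\lambda)$, one has $\utility_2(\pureStrategy_1, \mixedStrategy_2^\lambda) \geq \utility_2(\pureStrategyAlt_1, \mixedStrategy_2^\lambda)$.

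The key step -- and the main technical obstacle -- is showing that any such $\pureStrategyAlt_1$ must in fact lie in $\br(\mixedStrategy_2) \cap \br(\mixedStrategy_2')$, not merely in $\br(\mixedStrategy_2^\lambda)$. I would argue this by combining three linear relations: (a) $\utility_1(\pureStrategy_1, \mixedStrategy_2) \geq \utility_1(\pureStrategyAlt_1, \mixedStrategy_2)$ since $\pureStrategy_1 \in \br(\mixedStrategy_2)$; (b) $\utility_1(\pureStrategy_1, \mixedStrategy_2') \geq \utility_1(\pureStrategyAlt_1, \mixedStrategy_2')$ analogously; and (c) equality $\utility_1(\pureStrategy_1, \mixedStrategy_2^\lambda) = \utility_1(\pureStrategyAlt_1, \mixedStrategy_2^\lambda)$ since both are best responses to $\mixedStrategy_2^\lambda$. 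Taking the $\lambda$-combination of (a) and (b) yields exactly the opposite of (c) as a weak inequality, and since $\lambda \in (0,1)$ the two weak inequalities (a) and (b) must both hold with equality. Hence $\pureStrategyAlt_1 \in \br(\mixedStrategy_2) \cap \br(\mixedStrategy_2')$.

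Once this is established, the rest is immediate: from $\pureStrategy_1 \in \fbr(\mixedStrategy_2)$ and $\pureStrategyAlt_1 \in \br(\mixedStrategy_2)$ I get $\utility_2(\pureStrategy_1, \mixedStrategy_2) \geq \utility_2(\pureStrategyAlt_1, \mixedStrategy_2)$, and analogously for $\mixedStrategy_2'$; taking the $\lambda$-convex combination and using linearity of $\utility_2$ in \Pltwo's mixed strategy yields $\utility_2(\pureStrategy_1, \mixedStrategy_2^\lambda) \geq \utility_2(\pureStrategyAlt_1, \mixedStrategy_2^\lambda)$, as required. This proves $\pureStrategy_1 \in \fbr(\mixedStrategy_2^\lambda)$, establishing convexity. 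Note that $\inverseFBR(\pureStrategy_1)$ need not be closed in general -- a boundary point of $\inverseBR(\pureStrategy_1)$ where the set of best responses grows can leave $\pureStrategy_1$ out of the favourable best responses -- which is why only convexity (not closedness) is asserted in (ii).
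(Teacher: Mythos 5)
Your proof is correct and follows essentially the same route as the paper's: both parts rest on linearity of expected utilities, so that weak inequalities holding at the convex combination $\mixedStrategy_2^\lambda$ decompose into (and are forced to be equalities at) the two endpoints. The only difference is organisational --- you first establish directly that every element of $\br(\mixedStrategy_2^\lambda)$ lies in $\br(\mixedStrategy_2) \cap \br(\mixedStrategy_2')$ and then transfer the tie-breaking inequality, whereas the paper reaches the same conclusion by contradiction; your closing remark on why $\inverseFBR(\pureStrategy_1)$ need not be closed is likewise consistent with the paper's later passage to closures.
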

    
    \begin{proof}[Proof of \Cref{cl:convexFbr}]
        (i): This is standard.
            (The ``closed'' part holds because utility functions are continuous.
            The ``convex'' part follows from the fact that utility functions are linear.)
        
        (ii): First, recall that
            \begin{align*}
                \fbr(\mixedStrategy_2)
                :=
                \argmax \left\{
                        \utility_2(\pureStrategyAlt_1, \mixedStrategy_2)
                    \mid
                        \pureStrategyAlt_1 \in \br(\mixedStrategy_2)
                    \right\}
                .
            \end{align*}
        
        To prove convexity of $\inverseFBR(\pureStrategy_1)$,
            let $\mixedStrategy_2 := \lambda \cdot \mixedStrategy^1_2 + (1-\lambda) \cdot \mixedStrategy^2_2$
            for some $\mixedStrategy_2^1, \mixedStrategy_2^2 \in \inverseFBR(\pureStrategy_1)$ and $\lambda \in (0, 1)$.
            Since expected utilities in $\game$ are convex, we have
                \begin{align}\label{eq:convexity_fbr}
                    \utility_\pl(\pureStrategyAlt_1, \mixedStrategy_2)
                    =
                        \lambda \cdot \utility_\pl(\pureStrategyAlt_1, \mixedStrategy^1_2)
                        + 
                        (1-\lambda) \cdot \utility_\pl(\pureStrategyAlt_1, \mixedStrategy^2_2)
                \end{align}
            for any $\pureStrategyAlt_1 \in \PureStrategies^\game_1$ and $\pl \in \{1, 2\}$.

        Note that any favourable best response is, definitionally a best response too,
            so we have
            $\pureStrategy_1 \in \br(\mixedStrategy^1_2)$ and $\pureStrategy_1 \in  \br(\mixedStrategy^2_2)$.
        This implies that we must also have $\pureStrategy_1 \in \br(\mixedStrategy_2)$.
            (If this did not hold, then there would be some $\pureStrategy'_1$
                such that $\utility_1(\pureStrategy'_1, \mixedStrategy_2) > \utility_1(\pureStrategy_1, \mixedStrategy_2)$.
            By \eqref{eq:convexity_fbr}, this would imply that
                either $\lambda \cdot \utility_1(\pureStrategy'_1, \mixedStrategy^1_2) > \lambda \cdot \utility_1(\pureStrategy_1, \mixedStrategy^1_2)$
                or $(1-\lambda) \cdot \utility_1(\pureStrategy'_1, \mixedStrategy^2_2) > (1-\lambda) \cdot \utility_1(\pureStrategy_1, \mixedStrategy^2_2)$,
            which would in turn contradict the assumption that
                $\pureStrategy_1$ is a best response to $\mixedStrategy^1_2$ and $\mixedStrategy^2_2$.)
    
        All that remains to show is that $\mixedStrategy_2$ is a \textit{favourable} best response to $\pureStrategy_1$ -- that is, that
            there cannot exist some $\pureStrategyAlt_1 \in \br(\mixedStrategy_2)$ such that
            $
                \utility_2(\pureStrategyAlt_1, \mixedStrategy_2)
                >
                \utility_2(\pureStrategy_1, \mixedStrategy_2)
            $.
        Suppose, for a contradiction, that some such $\pureStrategyAlt_1$ did exist.
        By \eqref{eq:convexity_fbr}, we would then have
            $\utility_2(\pureStrategyAlt_1, \mixedStrategy^1_2) > \utility_2(\pureStrategy_1, \mixedStrategy^1_2)$
            or $\utility_2(\pureStrategyAlt_1, \mixedStrategy^2_2) > \utility_2(\pureStrategy_1, \mixedStrategy^2_2)$.
        
        Without loss of generality, assume that
            $\utility_2(\pureStrategyAlt_1, \mixedStrategy^1_2) > \utility_2(\pureStrategy_1, \mixedStrategy^1_2)$.
        Because $\pureStrategy_1$ is a \textit{favourable} best response to $\mixedStrategy^1_2$,
            this would mean that $\pureStrategyAlt_1$ cannot be a best response to $\mixedStrategy^1_2$.
        In other words, we would have
            $
                \utility_1(\pureStrategyAlt_1, \mixedStrategy^1_2)
                <
                \utility_1(\pureStrategy_1, \mixedStrategy^1_2)
            $.
        However,
            since both $\pureStrategyAlt_1$ and $\pureStrategy_1$ are best responses to $\mixedStrategy_2$,
            we also have
            $
                \utility_1(\pureStrategyAlt_1, \mixedStrategy_2)
                =
                \utility_1(\pureStrategy_1, \mixedStrategy_2)
            $.
        And because
            utilities are linear
            and
            $\mixedStrategy_2$ is a convex combination of $\mixedStrategy^1_2$ and $\mixedStrategy^2_2$,
        this would imply that
            $
                \utility_1(\pureStrategyAlt_1, \mixedStrategy^2_2)
                >
                \utility_1(\pureStrategy_1, \mixedStrategy^2_2)
            $.
        However, this would contradict the assumption that $\pureStrategy_1$ is a best response to $\mixedStrategy^2_2$.
    \end{proof}

    \begin{claim}\label{cl:finiteMtwo}
        The conditions (3a) and (3b) hold for
        \begin{align*}
            \FinitePMS_2
            & :=
            \ReducedPMS_2
            \cap
            \bigcup_{\pureStrategy_1 \in \PureStrategies^\game_1}
                \extremalPoints\left(
                    \closure{
                        \inverseFBR(\pureStrategy_1)
                    }
                \right)
            ,
        \end{align*}
        where
            $\closure{Y}$ denotes the topological closure of $Y$
            and
            $\extremalPoints(Z)$ denotes the set of all extremal points of $Z$.
    \end{claim}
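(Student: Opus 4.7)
The plan has two prongs. For (3b), I would observe that $\closure{\inverseFBR(\pureStrategy_1)}$ is a closed convex subset of the simplex $\MixedStrategies_2^\game$; since $\closure{\inverseFBR(\pureStrategy_1)} \subseteq \inverseBR(\pureStrategy_1)$ (using that $\inverseBR(\pureStrategy_1)$ is closed by \Cref{cl:convexFbr}(i)) and the latter is cut out by finitely many linear inequalities, it is a bounded polytope with finitely many extremal points. Finiteness of $\FinitePMS_2$ then follows by taking a finite union (over the finite set $\PureStrategies_1^\game$) and intersecting with $\ReducedPMS_2$.

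For (3a), I would apply \Cref{lem:extremal_points_only} to $\game = (\FinitePMS_1, \ReducedPMS_2, \utility)$ and its subgame $\game' = (\FinitePMS_1, \FinitePMS_2, \utility)$, verifying the condition \eqref{eq:extremisation_pure}. The \Plone{}-side is trivial since \Plone{}'s strategy set is unchanged. For \Pltwo{}, given any $\mixedStrategy_2 \in \ReducedPMS_2$, I would pick some $\pureStrategy_1^* \in \fbr(\mixedStrategy_2)$ (nonempty), so that $\mixedStrategy_2 \in \inverseFBR(\pureStrategy_1^*) \subseteq \closure{\inverseFBR(\pureStrategy_1^*)}$, and write $\mixedStrategy_2 = \sum_i \lambda_i v_i$ as a convex combination of extremal points $v_i$ of this polytope. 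The candidate substitute meta-strategy is then $\mixedMetaStrategy'_2 := \sum_i \lambda_i \meta{v_i}$.

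The utility checks split by whether \Plone{} plays some $\pureStrategy_1 \in \PureStrategies_1^\game$ or the simulation action $\mSim$. In the pure case, both $\utility_1(\pureStrategy_1, \cdot)$ and $\utility_2(\pureStrategy_1, \cdot)$ are linear in \Pltwo{}'s strategy, so both utilities match exactly. Against $\mSim$, the key observation is that $\closure{\inverseFBR(\pureStrategy_1^*)} \subseteq \inverseBR(\pureStrategy_1^*)$, so $\pureStrategy_1^*$ is always a best response on the polytope, giving $\utility_1(\mSim, \cdot) = \utility_1(\pureStrategy_1^*, \cdot) - \simcost$ there; linearity then yields exact equality for \Plone{}. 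For \Pltwo{}, the same inclusion gives $\utility_2(\mSim, v_i) \geq \utility_2(\pureStrategy_1^*, v_i)$, and linearity of $\utility_2(\pureStrategy_1^*, \cdot)$ combined with $\pureStrategy_1^* \in \fbr(\mixedStrategy_2)$ delivers $\utility_2(\mSim, \mixedMetaStrategy'_2) \geq \utility_2(\pureStrategy_1^*, \mixedStrategy_2) = \utility_2(\mSim, \mixedStrategy_2)$, which is the correct direction since \Pltwo{} is the substituted player.

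The main obstacle is verifying that each $v_i$ actually lies in $\FinitePMS_2$, i.e., in $\ReducedPMS_2$. The idea is to exploit the pure meta-strategy $\pureMetaStrategy_1 \in \FinitePMS_1$ witnessing $\mixedStrategy_2 \in \ReducedPMS_2$, i.e., satisfying $\mixedStrategy_2 \in \argmax_{\mixedStrategy'_2} \utility_2(\pureMetaStrategy_1, \mixedStrategy'_2)$. When $\pureMetaStrategy_1 = \pureStrategy_1$ is pure, linearity of $\utility_2(\pureStrategy_1, \cdot)$ together with the identity $\sum_i \lambda_i \utility_2(\pureStrategy_1, v_i) = \utility_2(\pureStrategy_1, \mixedStrategy_2)$ and the global maximality of $\mixedStrategy_2$ forces $\utility_2(\pureStrategy_1, v_i) = \utility_2(\pureStrategy_1, \mixedStrategy_2)$ for each $i$ with $\lambda_i > 0$, placing $v_i \in \argmax \utility_2(\pureStrategy_1, \cdot) \subseteq \ReducedPMS_2$. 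When $\pureMetaStrategy_1 = \mSim$, I would first use the linear lower bound $\utility_2(\pureStrategy_1^*, \cdot) \leq \utility_2(\mSim, \cdot)$ on $\closure{\inverseFBR(\pureStrategy_1^*)}$ (with equality at $\mixedStrategy_2$) and the same linearity argument to pin down $\utility_2(\pureStrategy_1^*, v_i) = \utility_2(\pureStrategy_1^*, \mixedStrategy_2)$, then squeeze $\utility_2(\mSim, v_i)$ between this value and $\utility_2(\mSim, \mixedStrategy_2)$ using the global maximality of $\mixedStrategy_2$ against $\mSim$. This again gives $v_i \in \argmax \utility_2(\mSim, \cdot) \subseteq \ReducedPMS_2$, completing the verification.
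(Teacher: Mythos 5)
Your handling of (3a) is essentially the paper's argument: decompose $\mixedStrategy_2$ into extremal points of $\closure{\inverseFBR(\pureStrategy_1^*)}$ for some $\pureStrategy_1^* \in \fbr(\mixedStrategy_2)$, use the inclusion $\closure{\inverseFBR(\pureStrategy_1^*)} \subseteq \inverseBR(\pureStrategy_1^*)$ to control the utilities against $\mSim$, and pin the vertices $v_i$ inside $\ReducedPMS_2$ by a linearity-plus-maximality squeeze. (The paper runs this for a general mixed witness $\mixedMetaStrategy_1 = \simProb \cdot \mSim + (1-\simProb)\cdot \mixedStrategy_1$ via an auxiliary linear function $f_{\pureStrategy_1}$; your two endpoint cases $\pureMetaStrategy_1 = \pureStrategy_1$ and $\pureMetaStrategy_1 = \mSim$ cover the definition of $\ReducedPMS_2$ as literally stated, and the inequality directions you check are the right ones.)

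The gap is in (3b). You conclude that $\closure{\inverseFBR(\pureStrategy_1)}$ has finitely many extremal points because it is a closed convex subset of the polytope $\inverseBR(\pureStrategy_1)$. That inference is invalid: a closed convex subset of a polytope need not be a polytope and can have infinitely many extreme points (a disk inscribed in a square, say). The condition $\pureStrategy_1 \in \fbr(\mixedStrategy_2)$ is not a conjunction of finitely many linear inequalities, because the quantifier ``for all $\pureStrategyAlt_1 \in \br(\mixedStrategy_2)$'' ranges over a set that itself depends on $\mixedStrategy_2$; the defining condition is therefore a union of polyhedral pieces rather than a single polyhedron. The paper closes this by stratifying: it writes $\inverseFBR(\pureStrategy_1)$ as the finite union, over subsets $\strategySubset_1 \ni \pureStrategy_1$ of $\PureStrategies_1^\game$, of the sets $\specialPolytope(\pureStrategy_1, \strategySubset_1)$ on which the best-response set is exactly $\strategySubset_1$; each $\closure{\specialPolytope(\pureStrategy_1, \strategySubset_1)}$ is an intersection of closed half-spaces, hence a polytope, and a \emph{convex} finite union of polytopes (convexity being \Cref{cl:convexFbr}(ii)) is itself a polytope. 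Without some such argument the finiteness of $\FinitePMS_2$, i.e.\ condition (3b), is not established. Note that your (3a) survives regardless, since the extremal decomposition of $\mixedStrategy_2$ exists for any compact convex set by Minkowski/Carath\'eodory; only the finiteness claim is at stake.
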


    \begin{proof}[Proof of \Cref{cl:finiteMone}]
    To show that $\FinitePMS_2$ satisfies (3b)
        --- i.e., that $\FinitePMS_2$ is finite ---
        we will show that each set
                $\closure{\inverseFBR(\pureStrategy_1)}$
            is a polytope
                (i.e., a convex closure of a finite number of points
                or, equivalently,
                an intersection of a finite number of closed half-spaces in $\R^{\PureStrategies_2}$).
        We will do this by
            writing $\closure{\inverseFBR(\pureStrategy_1)}$ as a union of a finite number of polytopes.
            In conjunction with the fact that $\closure{\inverseFBR(\pureStrategy_1)}$ is convex (\Cref{cl:convexFbr}),
                this will imply that $\closure{\inverseFBR(\pureStrategy_1)}$ must itself be a polytope.
    To show this, note that $\inverseFBR(\pureStrategy_1)$ can be trivially rewritten as
        \begin{align*}
            & \inverseFBR(\pureStrategy_1) =
            \\
            & =
                \left\{
                    \mixedStrategy_2 \in \MixedStrategies^\game_2
                    \mid
                    \pureStrategy_1 \in \fbr(\mixedStrategy_2)
                \right\}
            \\
            & =
                \bigcup\nolimits_{\strategySubset_1 \subseteq \PureStrategies^\game_1, \, \strategySubset_1 \ni \pureStrategy_1}
                    \Big\{
                        \mixedStrategy_2 \in \MixedStrategies^\game_2
                    \mid
                        \pureStrategy_1 \in \fbr(\mixedStrategy_2)
                        \ \& \ 
                        \\
                        &
                        \phantom{aasdfasfasdfasaaaaaaaaaa}
                        \ \& \ 
                        \forall \pureStrategy'_1 \in \strategySubset_1 : 
                            \pureStrategy'_1 \in \br(\mixedStrategy_2)
                    \Big\}
            \\
            & =:
                \bigcup\nolimits_{\strategySubset_1 \subseteq \PureStrategies^\game_1, \, \strategySubset_1 \ni \pureStrategy_1}
                    \specialPolytope(\pureStrategy_1, \strategySubset_1)
            .
        \end{align*}

    We will show that each of the sets $\closure{\specialPolytope(\pureStrategy_1, \strategySubset_1)}$ is a polytope.
    To see this, observe first that
        a strategy $\mixedStrategy_2$ belongs to $\specialPolytope(\pureStrategy_1, \strategySubset_1)$ if and only if it satisfies:
    \begin{enumerate}[label={(\roman*)}]
        \item Every element of $\strategySubset_1$ is a best response to $\mixedStrategy_2$:
            \begin{align*}
                (\forall \pureStrategy'_1 \in \strategySubset_1)
                (\forall \pureStrategyAlt_1 \in \PureStrategies_1^\game) :
                    \utility_1(\pureStrategy'_1, \mixedStrategy_2)
                    \geq
                    \utility_1(\pureStrategyAlt_1, \mixedStrategy_2)
                .
            \end{align*}
        \item There are no other best responses to $\mixedStrategy_2$:
            \begin{align*}
                (\forall \pureStrategy'_1 \in \strategySubset_1)
                (\forall \pureStrategyAlt_1 \in \PureStrategies_1^\game \setminus \strategySubset_1) :
                    \utility_1(\pureStrategy'_1, \mixedStrategy_2)
                    <
                    \utility_1(\pureStrategyAlt_1, \mixedStrategy_2)
                .
            \end{align*}
        \item There is no best response \textit{in the subset $\strategySubset_1$} that gives \Pltwo{} more utility than $\pureStrategy_1$:
            \begin{align*}
                (\forall \pureStrategy'_1 \in \strategySubset_1) :
                    \utility_2(\pureStrategy_1, \mixedStrategy_2)
                    \geq
                    \utility_2(\pureStrategy'_1, \mixedStrategy_2)
                .
            \end{align*}
    \end{enumerate}
    When we take the closure of $\specialPolytope(\pureStrategy_1, \strategySubset_1)$,
        the only change is that the strict inequalities change to non-strict ones,
        which is equivalent to removing the condition (ii).
    Since $\utility$ is linear,
        these conditions specify an intersection of closed half-planes,
        which in turn implies that $\closure{\specialPolytope(\pureStrategy_1, \strategySubset_1)}$ is a polytope.\footnotemark{}
            \footnotetext{
                In fact, together with the fact that $\closure{\inverseFBR(\pureStrategy_1)}$ is a polytope,
                this proof implies that
                    each of the sets $\closure{\specialPolytope(\pureStrategy_1, \strategySubset_1)}$
                    is a face of $\closure{\inverseFBR(\pureStrategy_1)}$.
            }
    This shows that $\FinitePMS_2$ satisfies (3b).

    To show that $\FinitePMS_2$ satisfies (3a),
        we need to verify the condition \eqref{eq:extremisation_pure} for ``$\game$'' $=(\FinitePMS_1, \ReducedPMS_2, \utility)$ and `` $\game'$ '' $=(\FinitePMS_1, \FinitePMS_2, \utility)$.
    Since the strategy space of \Plone{} is the same in both games, this comes down to verifying that
        for every pure strategy of \Pltwo{} in the game $(\FinitePMS_1, \ReducedPMS_2, \utility)$,
        we can find a mixed strategy of \Pltwo{} in the subgame $(\FinitePMS_1, \FinitePMS_2, \utility)$
        which yields
            the same $\utility_1(\symbolPlaceholder)$,
            and the same or higher $\utility_2(\symbolPlaceholder)$,
            against every pure strategy of \Plone{} in the ``full'' game $(\FinitePMS_1, \ReducedPMS_2, \utility)$.
    Into our specific setting, it suffices to show that
        for every
            $
                \mixedStrategy_2
                \in
                \MixedStrategies_2^\game
            $
        (that is a best response to some $\mixedMetaStrategy_1 \in \Delta(\PureStrategies^\game_1 \cup \{\mSim\})$)
        we can find a convex combination
            \begin{align}
                \mixedMetaStrategy_2
                =
                    \sum_j
                        \lambda_j \meta{\mixedStrategyAlt}_2^j
                \in
                    \Delta(\FinitePMS_2)
            \end{align}
        such that
            \begin{align}
                \utility(\pureStrategy_1, \mixedStrategy_2)
                & =
                    \utility(\pureStrategy_1, \mixedMetaStrategy_2)
                    \ \ \textnormal{ for every } \pureStrategy_1 \in \PureStrategies_1
                    \label{eq:threeB_nonsim_equality}
                \\
                \utility_1(\mSim, \mixedStrategy_2)
                & =
                    \utility_1(\mSim, \mixedMetaStrategy_2)
                    \label{eq:threeB_plone_sim}
                \\
                \utility_2(\mSim, \mixedStrategy_2)
                & \geq
                    \utility_2(\mSim, \mixedMetaStrategy_2)
                    \label{eq:threeB_pltwo_sim}
                .
            \end{align}
    \vk{I strongly suspect that the remainder of the proof can be simplified somehow. (If we didn't need to verify that the vertices are best responses to some $\mixedMetaStrategy_1$, it could be, like, two sentences.) But this is not a priority at this point.}

    As a first step towards verifying that this condition holds,
        let $\mixedStrategy_2 \in \MixedStrategies^\game_2$
            be a best response to some $\mixedMetaStrategy_1 \in \Delta(\PureStrategies_1^\game \cup \{ \mSim \})$.
        We can write this $\mixedMetaStrategy_1$ as
            $
                \mixedMetaStrategy_1
                =:
                \simProb \cdot \mSim
                +
                (1-\simProb) \cdot \mixedStrategy_1
            $
            for some $\simProb \in [0, 1]$, $\mixedStrategy_1 \in \MixedStrategies_1^\game$.
        Denote by $\pureStrategy_1$ some element of $\fbr(\mixedStrategy_2)$.
        As an auxiliary step, we will show that
            $\mixedStrategy_2$ can be written as a convex combination of vertices $\pureStrategyAlt_2^j$ of $\closure{\inverseFBR(\pureStrategy_1}$
            that are also best responses to $\mixedMetaStrategy_1$.
            (Because $\closure{\inverseFBR(\pureStrategy_1}$ is a convex polytope,
                expressing $\mixedStrategy_2$ as a convex combination of its vertices is trivial.
                However, some care will need to be taken to verify that the specific vertices $\pureStrategyAlt_2^j$ belong to $\FinitePMS_2$
                    -- i.e., that $\pureStrategyAlt_2^j$ is a best response to some strategy of \Plone{}, in this case $\mixedMetaStrategy_1$.)
        
    To find prepare for finding the suitable vertices $\pureStrategyAlt_2^j$,
        consider the first the following two functions defined on $\MixedStrategies_2^\game$:
    \begin{align*}
        f_{\pureStrategy_1}(\mixedStrategyAlt_2)
            := \, &
            \utility_2(\simProb \cdot \pureStrategy_1 \ \, + \ \, (1-\simProb) \cdot \mixedStrategy_1, \mixedStrategyAlt_2)
        \\
        f_{\mSim}(\mixedStrategyAlt_2)
            := \, &
            \utility_2(\simProb \cdot \mSim + (1-\simProb) \cdot \mixedStrategy_1, \mixedStrategyAlt_2)
            \\
            = \, &
            \utility_2(\mixedMetaStrategy_1, \mixedStrategyAlt_2)
        .
    \end{align*}
    As we will note below, $f_{\pureStrategy_1}$ has the following properties:
    \begin{enumerate}[label={(\alph*)}]
        \item $f_{\pureStrategy_1}$ is linear (unlike $f_{\mSim}$).
        \item $f_{\pureStrategy_1}(\mixedStrategyAlt_2) = \utility_2(\mixedMetaStrategy_1, \mixedStrategyAlt_2)$
            for any $\mixedStrategyAlt_2$ s.t. $\fbr(\mixedStrategyAlt_2) \ni \pureStrategy_1$.
        \item $f_{\pureStrategy_1}(\mixedStrategyAlt_2) \leq \utility_2(\mixedMetaStrategy_1, \mixedStrategyAlt_2)$
            for any $\mixedStrategyAlt_2$ such that $\br(\mixedStrategyAlt_2) \subseteq \PureStrategies'_1$,
            and in particular for any $\mixedStrategyAlt_2 \in \closure{\inverseFBR(\pureStrategy_1)}$.
        \item $
                A
                :=
                \argmax \left\{
                    f_{\pureStrategy_1}(\mixedStrategyAlt_2)
                    \mid
                    \mixedStrategyAlt_2 \in \closure{\inverseFBR(\pureStrategy_1)}
                \right\}
            $
            is a face of the polytope $\closure{\inverseFBR(\pureStrategy_1)}$.
        \item $\mixedStrategy_2 \in A$.
        \item Any element of $A$ is a best response to $\mixedMetaStrategy_1$.
        \item $
                \extremalPoints(A)
                \subseteq
                \extremalPoints\left(\closure{\inverseFBR(\pureStrategy_1)}\right)
                \cap \ReducedPMS_2
                \subseteq
                \FinitePMS_2
            $.
    \end{enumerate}
        We will now show the conditions (a-g).
            Note that the proofs of the later properties implicitly use the previous properties.
        The condition (a) holds because $\utility_2 = \utility^\game_2$ is linear.
        (b) follows from the definition of $\utility(\mSim, \mixedStrategyAlt_2)$.
        (c) holds because if the set of best responses is strictly larger than $\strategySubset_1$, it might contain some best response which gives higher utility to \Pltwo{} than $\pureStrategy_1$.
            (So if $\mixedMetaStrategy_1$ puts positive probability on $\mSim$,
                \Pltwo{} could get \textit{strictly} higher utility $\utility_2(\mixedMetaStrategy_1, \mixedStrategyAlt_2)$ than $f_{\pureStrategy_1}(\mixedStrategyAlt_2)$.)
        (d) follows from \Cref{lem:argmax_of_linear_on_polytope}.
        To show (e), suppose that
                $
                    f_{\pureStrategy_1}(\mixedStrategy_2)
                    <
                    f_{\pureStrategy_1}(\mixedStrategyAlt_2)
                $
                for some $\mixedStrategy_2 \in A$.
            We would then have
                $
                    \utility_2(\mixedMetaStrategy_1, \mixedStrategy_2)
                    =
                    f_{\pureStrategy_1}(\mixedStrategy_2)
                    <
                    f_{\pureStrategy_1}(\mixedStrategyAlt_2)
                    \leq
                    \utility_2(\mixedMetaStrategy_1, \mixedStrategyAlt_2)
                $,
                contradicting the assumption that $\mixedStrategy_2$ is a best response to $\mixedMetaStrategy_1$.
        (f) holds because for any $\mixedStrategyAlt_2 \in A$, we have
            $
                \utility_2(\mixedMetaStrategy_1, \mixedStrategyAlt_2)
                \geq
                f_{\pureStrategy_1}(\mixedStrategyAlt_2)
                =
                \max_{\mixedStrategyAlt'_2 \in A}
                    f_{\pureStrategy_1}(\mixedStrategyAlt'_2)
                =
                f_{\pureStrategy_1}(\mixedStrategy_2)
                =
                \max_{\mixedStrategy'_2 \in \MixedStrategies_2}
                    \utility_2(\mixedMetaStrategy_1, \mixedStrategy'_2)
            $.
        (g) follows from the combination of (d) and (f), with the second inclusion being just the definition of $\FinitePMS_2$.
    
    We are now ready to construct the advertised mixed (meta-) strategy $\mixedMetaStrategy_2$.
    From (d) and (e), it follows that we can construct $\mixedStrategy_2$ as a convex combination
        $
            \mixedStrategy_2
            =
            \sum_j
                \lambda_j \mixedStrategyAlt_2^j
        $
        of some strategies $\mixedStrategyAlt_2^j \in \extremalPoints(A)$.
    We define $\mixedMetaStrategy_2$ as
        \begin{align*}
            \mixedMetaStrategy_2
            := 
            \sum\limits_j
                \lambda_j
                \cdot
                \meta{\mixedStrategyAlt}_2^j
            .
        \end{align*}
    By (g), $\mixedMetaStrategy_2$ belongs to $\Delta(\FinitePMS_2)$.
    As a result, it remains to verify that $\mixedMetaStrategy_2$ satisfies
        the conditions \eqref{eq:threeB_nonsim_equality}, \eqref{eq:threeB_plone_sim}, and \eqref{eq:threeB_pltwo_sim}.
    To show \eqref{eq:threeB_plone_sim}, note that by linearity of the function
        $\mixedMetaStrategyAlt_2 \in \Delta(\FinitePMS_2) \mapsto \utility_\pl(\pureStrategyAlt_1, \mixedMetaStrategyAlt_2)$
        for any $\pureStrategyAlt_1 \in \PureStrategies_1^\game$,
        we have
            $
                \utility_\pl(\pureStrategyAlt_1, \mixedMetaStrategy_2)
                =
                \utility_\pl(\pureStrategyAlt_1, \mixedStrategy_2)
            $
        for both $\pl$.
    To show \eqref{eq:threeB_plone_sim},
        note that both $\mixedStrategy_2$ and all $\mixedStrategyAlt_2^j$ have $\pureStrategy_1$ as a best response.
            (This holds because any \textit{favourable} best response is in particular a best response,
                and, unlike $\inverseFBR(\pureStrategy_1)$,
                the set $\inverseBR(\pureStrategy_1)$ is closed.)
        This implies that
            \begin{align*}
                \utility_1(\mSim, \mixedMetaStrategy_2)
                &
                =
                \sum\nolimits_j
                    \lambda_j \utility_1(\mSim, \meta{\mixedStrategyAlt}_2^j)
                \\
                &
                =
                \sum\nolimits_j
                    \lambda_j \utility_1(\br^\game, \mixedStrategyAlt_2^j)
                \\
                &
                =
                \sum\nolimits_j
                    \lambda_j \utility_1(\pureStrategy_1, \mixedStrategyAlt_2^j)
                \\
                &
                =
                \utility_1(\pureStrategy_1, \sum\nolimits_j \lambda_j \mixedStrategyAlt_2^j)
                \\
                &
                =
                \utility_1(\pureStrategy_1, \mixedStrategy_2)
                =
                \utility_1(\br^\game, \mixedStrategy_2)
                =
                \utility_1(\mSim, \mixedStrategy_2)
                .
            \end{align*}
    The proof of \eqref{eq:threeB_plone_sim} is analogous,
        except that \Pltwo{}'s utility might sometimes increase,
        because $\pureStrategy_1$ might not necessarily be the \textit{favourable} best response to all $\mixedStrategyAlt_2^j$:
        \begin{align*}
            \utility_2(\mSim, \mixedMetaStrategy_2)
            &
            =
            \sum\nolimits_j
                \lambda_j \utility_2(\mSim, \meta{\mixedStrategyAlt}_2^j)
            \\
            &
            =
            \sum\nolimits_j
                \lambda_j \utility_2(\fbr, \mixedStrategyAlt_2^j)
            \\
            &
            \geq
            \sum\nolimits_j
                \lambda_j \utility_2(\pureStrategy_1, \mixedStrategyAlt_2^j)
            \\
            &
            =
            \utility_2(\pureStrategy_1, \sum\nolimits_j \lambda_j \mixedStrategyAlt_2^j)
            \\
            &
            =
            \utility_2(\pureStrategy_1, \mixedStrategy_2)
            \\
            &
            =
            \utility_2(\fbr, \mixedStrategy_2)
            =
            \utility_2(\mSim, \mixedStrategy_2)
            .
        \end{align*}
    This concludes the proof of \Cref{cl:finiteMtwo}.
    \end{proof}
    
    With \Cref{cl:finiteMtwo}, the proof of \Cref{prop:finite_str_space} is now complete.
\end{proof}

\vk{optional TODO: In theory, we could also formally show the claim that
        ``in $\psimgame$, WLOG, \Pltwo{} uses pure strategies only.
    This should be quite easy with the lemmas we have prepared. But, like, whatever, probably not very important?}

\section{Proofs \texorpdfstring{for \Cref{sec:computational} (Complexity Results)}{of Complexity Results}}\label{sec:app:complexity}

In this section, we present the proofs related to the complexity results given in this paper.

\solvingUpperBound*
\begin{proof}
    We rely on the reduction from $\msimgame$ to $\msimgame'$ from \Cref{prop:finite_str_space}.
    Since
        any $\msimgame'$ that satisfies the conclusion of \Cref{prop:finite_str_space} can be used for solving $\msimgame$,
        all that remains is to estimate the size of $\msimgame'$.
    Recall that by ``solving'' a game, we mean any of:  (a) finding one \NE{};
    (b) finding an \NE{} that maximises social welfare or the utility of one of the players;
    or (c) finding all \NE{} payoff profiles and some \NE{} corresponding to each.
    
    The proof of \Cref{prop:finite_str_space} shows that the reduction holds for $\msimgame' = (\PureMetaStrategies'_1, \PureMetaStrategies'_2, u)$ where:
    \begin{align*}
        \PureMetaStrategies'_1 := 
        & \ \PureStrategies_1 \cup \left\{ \mSim \right\},\\
        \PureMetaStrategies'_2 :=
        &\ \left\{
                    \mixedStrategy_2 \in \MixedStrategies_2
                \mid
                    \exists \pureMetaStrategy_1 \in \PureStrategies_1 \cup \{ \mSim \}
                    :
                    \mixedStrategy_2 \in \br(\pureMetaStrategy_1)
            \right\}\\
        & \ \ \cap
            \bigcup_{\pureStrategy_1 \in \PureStrategies_1}
                \extremalPoints(
                    \closure{\inverseFBR(\pureStrategy_1)}
                )
    \end{align*}
    where $\closure{Y}$ denotes the closure of $Y$ and $\extremalPoints(Z)$ denotes the set of all extremal points of $Z$. 
    As $\inverseFBR(\pureStrategy_1) \subseteq \inverseBR(\pureStrategy_1)$ for any $\pureStrategy_1 \in \PureStrategies_1$, then clearly the reduction also holds for $\msimgame'' = (\PureMetaStrategies'_1, \PureMetaStrategies''_2, u)$ where:
    \begin{align*}
        \PureMetaStrategies''_2 :=
        \bigcup_{\pureStrategy_1 \in \PureStrategies_1}
                \extremalPoints(
                    \closure{\inverseBR(\pureStrategy_1)}
                ).
    \end{align*}
    Given that $\vert \PureMetaStrategies'_1 \vert = \vert \PureStrategies_1 \vert + 1$, then to conclude the proof we need only bound $\vert \PureMetaStrategies''_2 \vert$.
    To do so, note that if $\mixedStrategy_2 \notin \br(\pureStrategy_1)$ then there is some $\pureStrategy_2$ such that $\utility_2(\pureStrategy_1, \pureStrategy_2) > \utility_2(\pureStrategy_1, \mixedStrategy_2)$.
    Thus, for a given strategy $\pureStrategy_1$, the set $\inverseBR(\pureStrategy_1)$ can be expressed in terms of the following set of linear inequalities:
    \begin{align*}
        \utility_2(\pureStrategy_1, \mixedStrategy_2) &\geq \utility_2(\pureStrategy_1, \pureStrategy_2) &&\forall \pureStrategy_2 \in \PureStrategies_2\\
        \mixedStrategy_2(\pureStrategy_2) &\geq 0 &&\forall \pureStrategy_2 \in \PureStrategies_2\\
        \sum_{\pureStrategy_2 \in \PureStrategies_2} \mixedStrategy_2(\pureStrategy_2) &= 1
    \end{align*}
    over the variables $\left\{ \mixedStrategy_2(\pureStrategy_2) \right\}_{\pureStrategy_2 \in \PureStrategies_2}$.
    The final equality reduces the solution of this problem to a feasible region defined by $2 \cdot \vert \PureStrategies_2 \vert$ constraints over a $(\vert \PureStrategies_2 \vert - 1)$-dimensional space.
    As a vertex of the feasible region is defined by $\vert \PureStrategies_2 \vert - 1$ hyperplanes (i.e. constraints), then the number of extremal points in the closure of $\inverseBR(\pureStrategy_1)$ is at most $\binom{2 \cdot \vert \PureStrategies_2 \vert}{\vert \PureStrategies_2 \vert - 1}$.
    Summing over $\pureStrategy_1 \in \PureStrategies_1$ we see that $\vert \PureMetaStrategies''_2 \vert \leq 
    \vert \PureStrategies_1 \vert \cdot \binom{2 \cdot \vert \PureStrategies_2 \vert}{\vert \PureStrategies_2 \vert - 1}$, and hence that $\msimgame''$ has size at most:
    $$(\vert \PureStrategies_1 \vert + 1) \cdot \vert \PureStrategies_1 \vert \cdot  \binom{2 \cdot \vert \PureStrategies_2 \vert}{\vert \PureStrategies_2 \vert - 1} = 
    O(\vert \PureStrategies_1 \vert^2 \cdot 2^{\vert \PureStrategies_2 \vert}).$$
\end{proof}

Recall that an (undirected) \textbf{graph} is a pair $\graph = (V, E)$, where $V$ is a set of \textbf{vertices} and $E \subseteq V \times V$ is a (symmetric) set of \textbf{edges}.
A \textbf{bipartite} graph is a graph of the form $\graph = (A \cup B, E)$
    such that edges only occur between vertices in $A$ and vertices in $B$
    (i.e., $A \cap B = \emptyset$ and for every $(v, v') \in E$, we have either $v \in A$, $v' \in B$ or $v \in B$, $v' \in A$).
A bipartite graph is \textbf{complete} when we have $(a, b) \in E$ for every $a \in A$, $b \in B$.
By $K_{k, l}$, we denote a complete bipartite graph with $|A| = k$, $|B| = l$.

\helpsIsNPHtheorem*

\begin{proof}
    To prove the result, we reduce from the problem \textsc{Complete Bipartite Subgraph (CBS)}, which is \NPH{} \cite{balanced-complete-bipartite-subgraph}. 
    It consists of deciding, for a given a bipartite graph $\graph = (A \cup B, E)$ and a parameter $k$, whether $\graph$ contains a complete bipartite subgraph $K_{k, k}$
        (i.e., with each partite set having $k$ vertices). 
    The reduction (specifically the construction of $\game'$) is very similar to that of Theorem 3.3 of \citet{sauerberg2024computing}.

    For the purpose of the reduction,
        let $\graph = (A \cup B, E)$ be a bipartite graph
        and let $k$ be the  parameter of the \textsc{CBS} instance.
    Our final construction will be a game $\game$, which will contain three subgames $\game'$, $\game^1$, and $\game^2$.
    For each player, the action set in $\game'$ is of the form
        $
            \actions_\pl
            :=
            A \cup B \cup \{ \OO \}
        $
        -- i.e., they have one action for each vertex and one additional \OptOut{} action.
    The corresponding utilities in $\game'$, summarised in \Cref{fig:hardness_graph_subgame} (top),
        are as follows:
    \begin{itemize}
        \item $u(\OO, \OO) = (0,0)$;
        \item $u(\OO{}, \cdot{}) = (1, -1)$ when $\cdot{} \neq \OO$;
        \item $u(\cdot{}, \OO{}) = (-1 ,1)$ when $\cdot{} \neq \OO$;
        \item For $a \in A$, $b\in B$,
            \begin{align*}
                u(a, b) =  \begin{cases}
                (1, 1) \text{ if } (a, b) \in E \\
                (0, 0) \text{ otherwise}
                ;
                \end{cases}
            \end{align*}
        \item For $a, a' \in A$,
            \begin{align*}
                u(a, a')
                =  \begin{cases}
                    (-k, k) \text{ if } a=a' \\
                    (0, 0) \text{ otherwise}
                    ;
                \end{cases}
            \end{align*}
        \item For $b, b' \in B$,
            \begin{align*}
                u(b, b')
                =  \begin{cases}
                    (k, -k) \text{ if } b=b' \\
                    (0, 0) \text{ otherwise}
                    ;
                \end{cases}
            \end{align*}
        \item For $b \in B$, $a \in A$, $u(b, a) = (0,0)$.
    \end{itemize}

    \noindent
    We refer to $A$ as \Plone{}'s partite set and $B$ as \Pltwo{}'s partite set. 
    Intuitively, the players benefit (payoffs of $(1,1)$) if they each play a vertex in their own partite set and their vertices are adjacent. 
    However, playing any vertex $v$ in one's own partite set with probability more than $1/k$
        makes the player vulnerable to \enquote{exploitation} by the other player,
        who can also play $v$ and receive a payoff greater than $1$.

    Let $\game^1$ and $\game^2$ be copies of the 2x2 trust game from \Cref{fig:TG_for_NPH_result} (bottom left), with pure strategy spaces $(\T^i, \WO^i)$ and $(\C^i, \D^i)$.
        
    Finally, let $\game$
        be a \enquote{coordination} game where the players' strategy spaces are the unions of those in $\game'$, $\game^1$, and $\game^2$.
    If they play in the same of the three subgames, their payoffs are given by the payoffs in the subgame; otherwise they receive payoffs $(0,0)$.

    Informally, we claim than $\mSim$ \enquote{helps} in $\game$ if and only if $\graph$ has a $K_{k, k}$ subgraph.
    Formally, we will show the following claims. 
    \begin{enumerate}
        \item If $\graph$ contains a $K_{k, k}$ subgraph, then
        \begin{enumerate}
            \item $\game$ admits an \NE{} with payoffs $(1,1)$;
            \item $\msimgame$ admits no \NE{} where $\utility_1 > 1$ or $\utility_2 > 1$.
        \end{enumerate}
        \item If $\graph$ does not contain a $K_{k, k}$ subgraph, then 
        \begin{enumerate}
            \item $\game$ admits no \NE{} where $\utility_1 > 0$ or $\utility_2 > 0$, but 
            \item $\msimgame$ admits an \NE{} with payoffs $(1-\simcost, 1)$.
        \end{enumerate}
    \end{enumerate}

    \noindent
    Together, these imply that
        $\graph$ contains a $K_{k,k}$ subgraph
        if and only if
        enabling \mSim{} in $\game$ introduces a \NE{} that strictly Pareto-improves over all \NE{} of $\game$.
    Given the specific payoffs involved in claims (1-2), this is equivalent to the new \NE{} constituting a strict improvement over the Nash equilibria of $\game$ in each of the senses (a-e) considered in the theorem.
    In other words, this shows that \textsc{Complete Bipartite Subgraph} can be reduced to each of the problems $P_\textnormal{a}$, \dots, $P_\textnormal{e}$.
    To finish the proof of \Cref{thm:helps_is_NPH_v3}, it thus remains to prove the claims (1) and (2).

    \begin{claim*}[1a]
        If $\graph$ contains a $K_{k, k}$ subgraph, then
            $\game$ admits an \NE{} with payoffs $(1,1)$.
    \end{claim*}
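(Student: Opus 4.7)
The plan is to exhibit a concrete \NE{} with payoffs $(1,1)$, built from the promised $K_{k,k}$ subgraph. Let $A' \subseteq A$ and $B' \subseteq B$ with $|A'| = |B'| = k$ witness the $K_{k,k}$, and consider the profile in which \Plone{} plays the uniform distribution $\mixedStrategy_1$ over $A'$ and \Pltwo{} plays the uniform distribution $\mixedStrategy_2$ over $B'$. Since $A' \times B' \subseteq E$, every realisation yields $u(a,b) = (1,1)$, so $\utility(\mixedStrategy_1, \mixedStrategy_2) = (1,1)$ as required.

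It then suffices to rule out profitable deviations for each player; by symmetry I describe only \Plone{}'s case. Any deviation into $\game^1$ or $\game^2$ yields $0$, since \Pltwo{} plays only in $\game'$. Within $\game'$, I would case on the deviating pure action: for any $a \in A$, the expected payoff is $\tfrac{1}{k} \sum_{b \in B'} u_1(a,b) \leq 1$ (with equality on $A'$, since all edges are present), because $u_1(a,b) \in \{0,1\}$ and the term $u_1(a,a')$ does not arise; for any $b \in B$, the only nonzero term comes from matching \Pltwo{}'s action, giving expected payoff $\tfrac{1}{k} \cdot k \cdot \mixedStrategy_2(b) \leq 1$; and $\OO$ yields $1$ against any non-$\OO$ play by \Pltwo{}. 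Hence no deviation strictly improves on $1$, and the profile is an \NE{}.

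The only mildly delicate step is remembering that in $\game'$ the payoffs $u(b,a)$ for $b \in B$, $a \in A$ are $(0,0)$ (not symmetric to $u(a,b)$), so that \Pltwo{} deviating to some $a \in A$ can gain at most $1$ by exploiting a coincidence with \Plone{}'s support; the uniform weight $1/k$ exactly cancels the exploitation payoff $k$, which is the whole point of the construction. Everything else is a routine enumeration, so I expect no real obstacle beyond bookkeeping.
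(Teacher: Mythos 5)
Your proposal is correct and follows essentially the same route as the paper: the same profile (uniform over $A'$ for \Plone{}, uniform over $B'$ for \Pltwo{}) and the same case analysis of deviations, with the key point in both being that the $1/k$ weight on each supported vertex exactly cancels the exploitation payoff $k$. The only blemish is the expression $\tfrac{1}{k}\cdot k \cdot \mixedStrategy_2(b)$ for the cross-partite deviation, which should read $k \cdot \mixedStrategy_2(b)$ (equal to $1$ for $b \in B'$ and $0$ otherwise), but the bound of $1$ and hence the conclusion are unaffected.
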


    \begin{proof}[Proof of Claim (1a)]
        Let $(A', B')$ be a $K_{k, k}$ subgraph in $\graph$.
        We claim it is an \NE{} for \Plone{} to mix uniformly over $A'$ and for \Pltwo{} to mix uniformly over $B'$. 
        This gives payoffs of $(1,1)$ because the subgraph is complete. 
    
        Deviating to any vertex within a player's own partite set gives utility at most $1$.
        Deviating to any vertex $v$ in the other player's partite set gives utility at most $(1/k) \cdot k + (1-1/k) \cdot 0 = 1$ because the other player plays $v$ with probability at most $1/k$. 
        Deviating to \OO{} gives utility $1$, and deviating to a different subgame gives utility $0$.
    \end{proof}

    Next, we prove (1b).
    Note that this part of (1) does not rely on the assumption that $\graph$ contains a $K_{k, k}$ subgraph.

    \begin{claim*}[1b]
        $\msimgame$ admits no \NE{} where $\utility_1 > 1$ or $\utility_2 > 1$.
    \end{claim*}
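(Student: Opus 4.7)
The plan is proof by contradiction. Suppose $(\mixedStrategy_1, \mixedMetaStrategy_2)$ is an \NE{} of $\msimgame$ with $\utility_1 > 1$; the case $\utility_2 > 1$ is ruled out by applying the same chain of arguments on Bob's side. First, I would establish that $\utility_1 + \utility_2 \leq 2$ everywhere in $\msimgame$. In $\game'$, the only cells attaining sum $2$ are the edges $(a, b) \in E$; the trust-game payoffs in $\game^i$ are chosen so no cell exceeds sum $2$; cross-subgame cells sum to $0$. For $\mSim$, the favourable best response is also a best response, so
\begin{align*}
    \utility_1(\mSim, \mixedStrategy_2) + \utility_2(\mSim, \mixedStrategy_2)
    & = \utility_1^\game(\fbr(\mixedStrategy_2), \mixedStrategy_2) + \utility_2^\game(\fbr(\mixedStrategy_2), \mixedStrategy_2) - \simcost \\
    & \leq 2 - \simcost.
\end{align*}
Bilinearity in the two players' strategies extends the bound to $\msimgame$, so $\utility_1 > 1$ forces $\utility_2 < 1$.

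Second, I would narrow $\supp(\mixedStrategy_1)$. Per-action computation shows $\utility_1(s, \mixedMetaStrategy_2) \leq 1$ for every $s \in A \cup \{\OO\} \cup \{\T^i, \WO^i\}_{i \in \{1, 2\}}$: playing $a \in A$ yields at most $\sum_b \mesa{\mixedMetaStrategy}_2(b) \leq 1$ (the $-k\mesa{\mixedMetaStrategy}_2(a)$ term is non-positive), $\OO$ yields $\mesa{\mixedMetaStrategy}_2(A \cup B) \leq 1$, and each trust-game action is bounded by $\utility_1^{\game^i}(\T, \C) = 1$. Hence $\supp(\mixedStrategy_1) \subseteq B \cup \{\mSim\}$. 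Writing $q := \mixedStrategy_1(\mSim)$, Bob's deviations to $\meta{\OO}$ and $\meta{\C^i}$ give $\utility_2 \geq 1 - q$ and $\utility_2 \geq q$ respectively (the second because the unique favourable best response to $\C^i$ is $\T^i$, contributing $\utility_2^{\game^i}(\T, \C) = 1$). In particular $\mSim \in \supp(\mixedStrategy_1)$, else $q = 0$ and $\utility_2 \geq 1$ is an immediate contradiction with $\utility_2 < 1$.

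Third---the main step---I would upper-bound $\utility_2$ via Alice's indifference conditions. Indifference across Alice's $B$-support forces $\beta := \mesa{\mixedMetaStrategy}_2(b)$ to be constant, satisfying $\utility_1 = k\beta - \mesa{\mixedMetaStrategy}_2(\OO)$. Expanding
\begin{align*}
    \utility_2
    & = (1 - q)\bigl(\mesa{\mixedMetaStrategy}_2(\OO) - k\beta\bigr) + q \cdot \E_{\mixedStrategy_2 \sim \mixedMetaStrategy_2}\!\bigl[\utility_2^\game(\fbr(\mixedStrategy_2), \mixedStrategy_2)\bigr] \\
    & = -(1 - q)\utility_1 + q \cdot \E\bigl[\utility_2^\game(\fbr, \mixedStrategy_2)\bigr],
\end{align*}
bounding each term in the expectation by $2 - \utility_1^\game(\fbr, \mixedStrategy_2)$ via Step 1, and using Alice's simulation indifference $\E[\utility_1^\game(\fbr, \mixedStrategy_2)] = \utility_1 + \simcost$, yields $\utility_2 \leq -\utility_1 + 2q - q\simcost$. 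Combined with $\utility_2 \geq \max(q, 1 - q)$ from Step 2, this gives $\utility_1 \leq 2q - q\simcost - \max(q, 1 - q)$, which evaluates to $3q - 1 - q\simcost$ for $q \leq 1/2$ and to $q(1 - \simcost)$ for $q > 1/2$---both strictly less than $1$ since $\simcost > 0$ and $q \leq 1$. This contradicts $\utility_1 > 1$. The main obstacle is precisely this bookkeeping in Step 3, where the sum bound, Alice's indifference between her $B$-actions and $\mSim$, and Bob's deviations must be combined tightly to close the gap between the Step-2 floor $\max(q, 1-q) \geq 1/2$ and the threshold $1$.
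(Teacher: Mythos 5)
Your argument for the half of the claim concerning $\utility_1 > 1$ is correct, and it takes a genuinely different route from the paper's. The paper argues directly that if \Plone{} earned more than $1$, her support would lie in $\{b \in B \mid \mesa{\mu}_2(b) > 1/k\} \cup \{\mSim\}$, against which any over-concentrated $\mixedStrategy_2$ gives \Pltwo{} negative utility, so \Pltwo{} would deviate to $\OO$. Your route instead combines a global payoff-sum bound $\utility_1 + \utility_2 \leq 2$, the support restriction to $B \cup \{\mSim\}$, the two deviation floors $\utility_2 \geq 1-q$ and $\utility_2 \geq q$, and \Plone{}'s simulation indifference to squeeze $\utility_1 < 1$. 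I checked the bookkeeping in your Step 3 and it closes (including the degenerate case where \Plone{} only simulates); it is more laborious than the paper's two-line deviation argument but self-contained.

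The genuine gap is the other half. You dispose of $\utility_2 > 1$ by asserting that ``the same chain of arguments on Bob's side'' applies, but the game is not symmetric between the players and none of your machinery transfers: \Pltwo{} has no simulate action, so there is no analogue of the simulation-indifference identity $\E[\utility_1^\game(\fbr, \mixedStrategy_2)] = \utility_1 + \simcost$; your deviation floors were obtained by evaluating \Pltwo{}'s deviations \emph{against} $\mSim$, and \Plone{} has no corresponding pair of deviations that pins $\utility_1 \geq \max(q, 1-q)$; and \Pltwo{}'s pure meta-strategies are entire mixed strategies of $\game$, so ``narrowing the support'' means something structurally different. The outcomes that give \Pltwo{} more than $1$ are $(a,a)$ (payoff $k$) and $(\T^i, \D^i)$ (payoff $2$), and ruling these out requires a separate argument -- the paper shows that $a$ is strictly dominated for \Plone{} by $\OO$ against any $\mixedStrategy_2$ with $\mixedStrategy_2(a) > 0$, and $\T^i$ by $\WO^i$ against any $\mixedStrategy_2$ with $\mixedStrategy_2(\D^i) > 0$, which excludes these outcomes both when \Plone{} plays blindly and when they would arise as best responses after $\mSim$. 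Your Step 1 does give you $\utility_2 > 1 \Rightarrow \utility_1 < 1$, but nothing in your proposal converts that into a contradiction, so as written the claim is only half proved.
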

    \begin{proof}[Proof of Claim (1b)]
    Intuitively, this claim holds because
        in any strategy profile where one of the players had $\utility_\pl > 1$,
        the other player would be better off deviating to either $\OO$ (in $\game'$) or $\WO$ (in $\game^1$ or $\game^2$).
    We now show this formally.

    First, suppose that \Plone{} achieves $\utility_1(\mu) > 1$ for some equilibrium strategy $\mu \in \NE(\msimgame)$.
    Given the definition of $\game$, the only actions which give \Plone{} utility over $1$ are vertices $b \in B$
        for which \Pltwo{} selects $b$ with probability greater than $1/k$.
    This means that $\supp(\mu_1)$ must be a subset of $\{ b \in B \mid \mesa{\mu}_2(b) > 1/k \} \cup \{ \mSim \}$.
        (Where $\mesa{\mu}_2(y)$ denotes the overall probability that \Pltwo{} puts on $y$.)
    Suppose that $\mixedStrategy_2 \in \supp(\mu_2)$ is a strategy for which $\mixedStrategy_2(b_0) > 1/k$ holds for some $b_0 \in B$.
    Note that $\utility_2(\mSim, \mixedStrategy_2) < 0$
        (since $\br(\mixedStrategy_2) = \{b' | \mixedStrategy_2(b') = \max_{b \in B}  \mixedStrategy_2(b) \}$ and $\max_{b \in B}  \mixedStrategy_2(b) \geq \mixedStrategy_2(b_0) > 1/k$ ).
    As a result, we have $\utility_2(x, \mixedStrategy_2) < 0$ for every $x \in \supp(\mu_1)$.
    However, this means that \Pltwo{} could strictly increase their utility by replacing $\mixedStrategy_2$ by $\OO$
        (which yields $\utility_2(x, \OO) \geq 0$ for every $x \in B \cup \{ \mSim \}$).
    Since this would contradict the assumption that $\mu$ is a \NE{},
        it follows \Plone{} must not be able to achieve $\utility_1(\mu) > 1$ in any \NE{} of $\msimgame$.

    We now show that \Pltwo{}'s utility cannot exceed $1$ in any \NE{} of $\msimgame$. 
    Since the only outcomes where \Pltwo{}'s payoff might exceed $1$ are the $(a, a)$ and $(T^i, D^i)$ outcomes,
        it suffices to show that these cannot occur in any \NE{} of $\msimgame$. 

    First, we prove that the outcomes $(\T^i, \D^i)$ can never occur with positive probability.
    Suppose that $\mu_2$ puts non-zero probability on $\D^i$.
    In theory, the outcome $(\T^i, \D^i)$ could occur
        either directly (i.e., from \Plone{} using a strategy which puts positive probability on $\T^i$)
        or indirectly (i.e., from \Plone{} playing $\mSim$ and $\T^i$ being a best response to \Pltwo{}'s strategy).
    However, by the definition of the trust game $\game^i$, it follows that $\WO^i$ gives \Plone{} strictly higher utility than $\T^i$ in response to any strategy for Player $2$ (technically, any $\mesa{\mu}_2$ or $\mixedStrategy_2$) that puts positive probability on $\D^i$.
    This rules out both possibilities for the occurrence of the outcome $(\T^i, \D^i)$.

    Second, we prove that an outcome $(a, a)$ can never occur with positive probability.
    Suppose \Plone{} plays $a$ directly while $\mesa{\mu}_2(a) > 0$.
    We argue that $a$ is strictly dominated by $\OO$ against $\mu_2$:
        when \Pltwo{} plays in $\game^1$ or $\game^2$, $a$ and $\OO$ give the same utilities;
        when \Pltwo{} plays some $b \in B$, we have $\utility_1(a, b) \leq 1 = \utility_1(\OO, b)$;
        and finally when \Pltwo{} plays some $a' \in A$
            -- which happens with positive probability --
            we have $\utility_1(a, a') \leq 0 < 1 = \utility_1(\OO, a')$.
    This means that in an \NE{}, $(a, a)$ cannot be played directly
        (i.e., without \Plone{} using $\mSim$).
    The same argument shows that $a$ cannot be a best response to any $\mixedStrategy_2$ that puts positive probability on $a$,
        so the outcome $(a,a)$ cannot occur indirectly (i.e., after \Plone{} playing $\mSim$) either.
    \end{proof}

    \begin{figure}[tb]
    \centering
    \begin{NiceTabular}{rccc}[cell-space-limits=3pt]
        & $b_j \in B$ & $a_j \in A$ & $\OO$ \\
        $a_i \in A$
            & \Block[hvlines]{3-3}{}
                \begin{tabular}{@{}c@{}}
                    $(1, 1)$ if $(a_i, b_j) \in E$ \\
                    $(0, 0)$ otherwise \phantom{bla}
                \end{tabular}
            &
                \begin{tabular}{@{}c@{}}
                    $(-k, k)$ if $i = j$ \phantom{b} \\
                    $(0, 0)$ otherwise
                \end{tabular}
            & $-1, 1$ \\
        $b_i \in B$
            &
                \begin{tabular}{@{}c@{}}
                    $(k, -k)$ if $i = j$ \phantom{b} \\
                    $(0, 0)$ otherwise
                \end{tabular}
            & $0, 0$
            & $-1, 1$ \\
        $\OO$
            & $1, -1$
            & $1, -1$
            & $0, 0$
    \end{NiceTabular}
    \\
    \bigskip
    \begin{NiceTabular}{rcc}
            & $\C$     & $\D$     \\
        $\T$  & \Block[hvlines]{2-2}{} 1, 1  & -1, 2 \\
        $\WO$ & 1, -1 & 0, 0
    \end{NiceTabular}
    \phantom{$\WO b$}
    \begin{NiceTabular}{ccc}[cell-space-limits=3pt]
        \Block[hvlines]{3-3}{}
        $\game'$ & $0, 0$ & $0, 0$ \\
        $0, 0$ & $\game^1$ & $0, 0$ \\
        $0, 0$ & $0, 0$ & $\game^2$
    \end{NiceTabular}
    \caption{
        The games used in the proof of \Cref{thm:helps_is_NPH_v3}.
        \emph{Top:}
            The payoff matrix of the subgame $\game'$.
            The row labelled ``$a^i \in A$'' represents multiple rows -- i.e., those corresponding to \Plone{}'s pure strategies $a \in A$
                (and similarly for the row labelled $b_i \in B$, and the columns labelled $a_j \in A$ and $b^j \in B$).
            In other words, the figure illustrates a matrix game of size $(|A \cup B|+1) \times (|A \cup B|+1)$.
        \emph{Bottom Left:}
            Trust games $\game^1$ and $\game^2$.
        \emph{Bottom Right:} The full game $\game$, which contains $\game'$, $\game^1$, and $\game^2$ as subgames.
    }
    \Description{The games used in the proof of \Cref{thm:helps_is_NPH_v3}.}
    \label{fig:hardness_graph_subgame}
        \label{fig:TG_for_NPH_result}
        \label{fig:hardness_game}
\end{figure}

    \begin{claim*}[2a]
        If $\graph$ does not contain a $K_{k, k}$ subgraph, then 
            $\game$ admits no \NE{} where $\utility_1 > 0$ or $\utility_2 > 0$.
    \end{claim*}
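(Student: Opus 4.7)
The plan is to show that in any NE $\mu$ of $\game$, both players' expected utilities equal zero. Since cross-subgame action pairs all yield $(0,0)$, the payoff decomposes cleanly into subgame contributions: $\utility_k(\mu) = U_k'(\mu) + U_k^1(\mu) + U_k^2(\mu)$, where $U_k'(\mu)$ collects the payoffs from action pairs both lying in $\game'$ for player $k$, and $U_k^j(\mu)$ does the same for the trust subgame $\game^j$ ($j \in \{1,2\}$). The strategy is to show that each trust-subgame term vanishes automatically in any NE, after which the remaining $\game'$-contribution reduces to a standalone NE analysis of $\game'$, where Sauerberg's characterisation can be invoked.

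First, I would analyse each trust subgame $\game^j$. A direct computation gives $u_2(\mu_1, \D^j) - u_2(\mu_1, \C^j) = \mu_1(\T^j) + \mu_1(\WO^j) \geq 0$, so $\D^j$ weakly dominates $\C^j$ for \Pltwo{} in the combined game, strictly so when $\mu_1$ places positive mass on $\game^j$. Hence either $\mu_2(\C^j) = 0$, or $\mu_1$ assigns zero probability to $\game^j$; either way, pair-contributions involving $\C^j$ vanish in $U_k^j(\mu)$. The only remaining candidate for a nonzero summand is the pair $(\T^j, \D^j)$ with payoffs $(-1, 2)$; but whenever $\mu_2(\D^j) > 0$ one checks that $u_1(\WO^j, \mu_2) - u_1(\T^j, \mu_2) = \mu_2(\D^j) > 0$, so $\WO^j$ strictly dominates $\T^j$ for \Plone{}, forcing $\mu_1(\T^j) = 0$. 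Combining these observations yields $U_k^j(\mu) = 0$ for $j \in \{1,2\}$ and $k \in \{1,2\}$, so $\utility_k(\mu) = U_k'(\mu)$.

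Next, I would reduce $U_k'(\mu)$ to an NE of $\game'$ viewed standalone. Let $p_k' := \mu_k(\game')$; if either marginal is zero then $U_k'(\mu) = 0$ immediately and we are done. Otherwise define the conditional strategies $\tilde{\mu}_k := \mu_k / p_k'$ on $\game'$. The key observation is that $(\tilde{\mu}_1, \tilde{\mu}_2) \in \NE(\game')$. Indeed, for any $s \in \supp(\mu_k) \cap \game'$ and $s' \in \game'$, the fact that cross-subgame pairs contribute nothing gives $u_k^{\game}(s, \mu_{-k}) = p_{-k}' \cdot u_k^{\game'}(s, \tilde{\mu}_{-k})$ and similarly for $s'$, so the $\game$-best-response inequality $u_k^{\game}(s, \mu_{-k}) \geq u_k^{\game}(s', \mu_{-k})$ descends (after dividing by $p_{-k}' > 0$) to $u_k^{\game'}(s, \tilde{\mu}_{-k}) \geq u_k^{\game'}(s', \tilde{\mu}_{-k})$. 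Hence every pure strategy in $\supp(\tilde{\mu}_k)$ is a $\game'$-best-response to $\tilde{\mu}_{-k}$, which is exactly the NE condition.

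Finally, I would invoke the property of Sauerberg's construction recalled in the sketch of \Cref{thm:helps_is_NPH_v3}: the only NE payoff profiles of $\game'$ are $(0,0)$ and, when $\graph$ contains a $K_{k,k}$ subgraph, additionally $(1,1)$. Under the hypothesis of Claim (2a) the second possibility is excluded, so $\utility^{\game'}(\tilde{\mu}_1, \tilde{\mu}_2) = (0,0)$, and therefore $U_k'(\mu) = p_1' p_2' \cdot 0 = 0$. Together with the vanishing of the trust-subgame terms this gives $\utility_k(\mu) = 0$ for both players, which is exactly the claim. I expect the main obstacle to be the conditioning step in the third paragraph---specifically, verifying carefully that $(\tilde{\mu}_1, \tilde{\mu}_2)$ really is an NE of $\game'$ rather than satisfying some weaker equilibrium notion---together with the need to ensure that the specific $\game'$ constructed here actually belongs to Sauerberg's class $\mc C$, so that the NE-payoff dichotomy applies verbatim.
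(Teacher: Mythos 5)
Your decomposition is sound and, on the reduction side, actually more carefully argued than the paper's own write-up: the paper simply asserts that ``when the players both play some subgame with positive probability, they must play a Nash equilibrium in that subgame,'' whereas you justify this by conditioning and checking that the best-response inequalities descend to $\game'$ after dividing by $p'_{-k}>0$. Your elimination of the trust subgames (via the weak dominance of $\D^j$ over $\C^j$ and of $\WO^j$ over $\T^j$ when $\mu_2(\D^j)>0$) matches the paper's argument in substance, and the computations check out against the payoffs in the figure.

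The genuine gap is in your final step. What Claim (2a) really requires---and what constitutes essentially the entire body of the paper's proof---is the combinatorial fact that \emph{if $\graph$ has no $K_{k,k}$ subgraph, then the only \NE{} of $\game'$ is $(\OO,\OO)$}. You outsource this to ``Sauerberg's class $\mc C$,'' but the paper only says its construction is \emph{very similar to} that of the cited theorem, not identical (the subgame $\game'$ here has an extra $\OO$ action with its own payoff structure, and the $\pm k$ exploitation payoffs), so the cited dichotomy does not apply verbatim and must be re-proved for this specific game. Concretely, the missing argument runs: first, no \NE{} of $\game'$ can have \Plone{} playing any $b\in B$ or \Pltwo{} playing any $a\in A$ (because $\OO$ weakly dominates these, strictly against any profile that would make them attractive); second, any \NE{} other than $(\OO,\OO)$ must therefore have \Plone{} supported on $A$ and \Pltwo{} on $B$ with neither playing $\OO$, every supported pair $(a,b)$ must satisfy $(a,b)\in E$ (since each player's payoff must match the deviation value $1$ of $\OO$), and no vertex can carry probability above $1/k$ (else the opponent exploits it for payoff $k>1$); hence each support has size at least $k$ and spans a complete bipartite subgraph, contradicting the hypothesis. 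Without this argument---or a verification that your $\game'$ literally coincides with a game for which the cited dichotomy is proved---the proof is incomplete, as you yourself suspected in your closing remark.
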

    
    \begin{proof}[Proof of Claim (2a)]
    First, note that when players miscoordinate and play different subgames, they receive payoffs $(0,0)$.
    Payoffs higher than $0$ must thus come from playing in the same subgame.
    Moreover, when the players both play some subgame with positive probability, they must play a Nash equilibrium in that subgame.
    Payoffs higher than $0$ must thus come from playing a Nash equilibrium of one of the subgames.
    Since the only \NE{} of the $\game^i$ subgames are $(\WO^i, \D^i)$, which give payoffs $(0,0)$,
        the only remaining hope is the subgame $\game'$.
    To prove (2a)
        -- i.e., the impossibility of \NE{} payoffs higher than $0$ --
        it thus suffices to show that
            if $\graph$ does not contain a $K_{k, k}$ subgraph,
            the only \NE{} of $\game'$ is $(\OO, \OO)$
                (which gives payoffs $(0,0)$).

    Before proceeding with the proof of this claim, we observe that
        in $\game'$, no \NE{} can involve \Plone{} playing $b \in B$ or \Pltwo{} playing $a \in A$.
    To prove this claim for \Plone{}, first observe that for \Pltwo{}, $\OO$ weakly dominates $b$
        -- and it strictly dominates $b$ when $\mixedStrategy_1(b) > 0$ --
        which means that \Pltwo{} cannot put positive probability (in an equilibrium)
        on any $b \in B$ for which $\mixedStrategy_1(b) > 0$.
    For contradiction, suppose that we had $\mixedStrategy_1(b) > 0$ for some $\mixedStrategy \in \NE(\game')$.
    Since $\mixedStrategy$ is a \NE{}, we must have $\utility_1(b, \mixedStrategy_2) \geq \utility_1(\OO, \mixedStrategy_2)$.
    Inspecting the payoffs of $\game'$, we see that
        \Pltwo{}'s only action for which $\utility_1(b, \symbolPlaceholder)$ is not strictly lower than $\utility_1(\OO, \symbolPlaceholder)$ is $b$.
    This means that for \Plone{} to be able to have $\mixedStrategy_1(b) > 0$ in equilibrium,
        we must also have $\mixedStrategy_2(b) > 0$
        -- a contradiction with the observation above.
    By applying the same argument with the roles of the players switched,
        we obtain that in \NE{} of $\game'$, \Pltwo{} never plays any $a \in A$.

    We now return to the proof of (2a).
    For contradiction, suppose that there is some $\mixedStrategy \in \NE(\game')$ such that $\mixedStrategy \neq (\OO, \OO)$.
    Without loss of generality, suppose that $\mixedStrategy_1(\OO) < 1$.
    Applying the observation above,
        we have $\mixedStrategy_1(b) = 0$ for every $b \in B$,
        which implies that there must be some $a_0 \in A$ for which $\mixedStrategy_1(a_0) > 0$.
    Since $\mixedStrategy$ is an equilibrium, we must have $\utility_1(a_0, \mixedStrategy_2) \geq \utility_1(\OO, \mixedStrategy_2)$.
        However, looking at the payoffs of $\game'$ (\Cref{fig:hardness_graph_subgame}),
            we see that for \Plone{}, the only case where $a_0$ does not give strictly lower utility for \Plone{} than $\OO$
            is when $\mixedStrategy_2(\OO) = 0$.
    This means there must be some $b_0 \in B$ for which $\mixedStrategy_2(b_0) > 0$.
    We can now repeat the same argument to obtain that $\mixedStrategy_1(\OO) = 0$.
    
    From the paragraph before previous one, it follows that
        for any $\mixedStrategy \in \NE(\game')$ that is different from $(\OO, \OO)$,
        \Plone{} only uses actions from $A$ and \Pltwo{} only uses actions from $B$.
    Moreover, any vertices $a$, $b$ with $\mixedStrategy_1(a) > 0$, $\mixedStrategy_2(b) > 0$,
        must be adjacent.
        (This holds because
            for $\mixedStrategy$ as above, we have $\utility_1(\OO, \mixedStrategy_2) = \utility_2(\mixedStrategy_1, \OO) = 1$.
            This means that $\utility_1(\mixedStrategy)$ and $\utility_2(\mixedStrategy)$ must both be equal to $1$.
            By definition of $\game'$, this is only possible when the selected vertices satisfy $(a, b) \in E$.)
    Additionally, neither player can put more than $1/k$ probability on any of their vertices $v$.
        (Otherwise the other player could strictly increase their utility by also playing $v$.)
    This means that each player must randomise between at least $k$ vertices.
    Taken all together, this implies that
        there must be sets $A' \subseteq A$, $B' \subseteq B$ with $|A'|, |B'| \geq k$
        such that $(a', b') \in E$ for every $a' \in A'$, $b' \in B'$.
    However, such set $A' \cup B'$ would then give a complete subgraph $K_{m,n}$ for $m, n \geq k$,
        contradicting the assumption that $\graph$ contains no $K_{k,k}$ subgraph.

    This concludes the proof of (2a).
    \end{proof}

    We now prove the claim (2b).
    Note that this part of (2) does not rely on the assumption that $\graph$ does not contain a $K_{k, k}$ subgraph.

    \begin{claim*}[2b]
        $\msimgame$ admits an \NE{} with payoffs $(1-\simcost, 1)$.
    \end{claim*}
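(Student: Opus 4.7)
The plan is to exhibit an explicit strategy profile $(\mu_1^*, \nu_2^*)$ in $\msimgame$ and verify that it is a Nash equilibrium with the claimed payoffs. I would take Bob's meta-strategy to be $\nu_2^* = (1-p)\cdot\meta{C^1} + p\cdot\meta{D^1}$, i.e., a randomization over the two pure commitments in $\game^1$ (full cooperation vs.\ full defection, each treated as a meta-strategy), and Alice's strategy to be $\mu_1^* = q\cdot\mSim + (1-q)\cdot T^1$. The probabilities $p$ and $q$ would be pinned down by the usual indifference conditions: $p$ is chosen so that Alice is indifferent between $T^1$ and $\mSim$, and $q$ is chosen so that Bob is indifferent between $\meta{C^1}$ and $\meta{D^1}$.

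The structural heart of the proof is the discontinuity induced by favorable best-response. When Bob commits to pure $C^1$, Alice on simulating sees a pure strategy in which $T^1$ and $WO^1$ tie for her utility, and favorable tie-breaking selects $T^1$, producing the payoff $(1,1)$; when Bob commits to pure $D^1$, Alice best-responds with $WO^1$, producing $(0,0)$. By contrast, for any strictly mixed commitment $\delta\meta{D^1} + (1-\delta)\meta{C^1}$ with $\delta\in(0,1)$ the tie is broken, and Alice (on simulating) strictly prefers $WO^1$ over $T^1$, which collapses Bob's utility. This observation rules out all strictly mixed Bob deviations inside $\game^1$. Deviations by Bob to strategies in $\game'$ or $\game^2$ miscoordinate with Alice's play in $\game^1$ and therefore yield Bob the miscoordination payoff $0$, strictly worse than $1$.

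On Alice's side, deviations to any action in $\game'$ or $\game^2$ likewise miscoordinate with Bob's play in $\game^1$ and yield $0$, which is strictly worse than her equilibrium utility for small enough $\simcost$. The within-$\game^1$ deviation to $WO^1$ has to be checked separately: one directly computes $u_1(WO^1, \nu_2^*)$ for the chosen $p$ and verifies that it does not exceed Alice's equilibrium payoff. Having checked these cases, one reads off the equilibrium payoffs: Bob gets $1$ (the common value at the indifference point between $\meta{C^1}$ and $\meta{D^1}$), and Alice's value is computed from her own indifference between $T^1$ and $\mSim$, which should simplify to $1-\simcost$.

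The hardest part will be handling Alice's $WO^1$ deviation together with the favorable tie-breaking convention. Because $WO^1$ weakly dominates $T^1$ in the trust subgame, one must verify that against the specific equilibrium mixture $\nu_2^*$ the two are effectively interchangeable for Alice (so that $T^1$ is supportable in her mixed strategy), while also ensuring that $WO^1$ does not yield strictly more than her equilibrium utility. This careful accounting, together with the exact numerical values of $u_1$ and $u_2$ in $\game^1$, is what ultimately forces the payoff to be the claimed $(1-\simcost, 1)$.
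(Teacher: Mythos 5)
Your profile is not a Nash equilibrium of this particular $\msimgame$, and the failure is exactly at the point you flagged as ``the hardest part.'' In the trust subgame used in this construction (bottom left of Fig.~\ref{fig:TG_for_NPH_result}) the payoffs are $\utility(\T,\C)=(1,1)$, $\utility(\T,\D)=(-1,2)$, $\utility(\WO,\C)=(1,-1)$, $\utility(\WO,\D)=(0,0)$; crucially $\utility_1(\WO^1,\C^1)=\utility_1(\T^1,\C^1)=1$, so $\WO^1$ weakly dominates $\T^1$ for Alice and \emph{strictly} dominates it against any commitment mixture with $p=\nu_2^*(\meta{\D^1})>0$. Your indifference condition forces $p>0$: computing $\utility_1(\T^1,\nu_2^*)=1-2p$ and $\utility_1(\mSim,\nu_2^*)=1-p-\simcost$ gives $p=\simcost$, whence $\utility_1(\WO^1,\nu_2^*)=1-p=1-\simcost>1-2\simcost=\utility_1(\T^1,\nu_2^*)$. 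So Alice has a strictly profitable deviation to $\WO^1$ and $\T^1$ cannot be in her support --- the two actions are \emph{not} interchangeable against $\nu_2^*$. Even setting this aside, the payoffs of your profile would be $(1-2\simcost,1)$ rather than the claimed $(1-\simcost,1)$.

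The missing idea is the role of the second trust subgame $\game^2$. The paper's equilibrium has Bob \emph{never} defect: he mixes $\tfrac{1}{2}\meta{\C^1}+\tfrac{1}{2}\meta{\C^2}$ while Alice plays $\mSim$ with probability $1$. What makes simulation a strict best response is not the threat of defection (which, as shown above, cannot sustain blind trust in this game) but the \emph{coordination} uncertainty: blindly playing any of $\T^i$ or $\WO^i$ earns Alice only $\tfrac{1}{2}$, and playing in $\game'$ earns $0$, while simulating earns $1-\simcost$. Because Bob always cooperates, Alice pays only the simulation cost and suffers no defection losses, which is what pins the payoffs at exactly $(1-\simcost,1)$. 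Bob's deviations are then ruled out by the observation you did get right: any strategy putting positive probability on $\D^i$ (or on some $a\in A$) makes the simulating Alice respond with $\WO^i$ (resp.\ $\OO$), so no strategy earns Bob more than $1$ against $\mSim$.
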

    
    \begin{proof}[Proof of Claim (2b)]
        We claim that it is a \NE{} of $\msimgame$
            for \Plone{} to always play \mSim{} and \Pltwo{} to mix 50:50 between
                playing $\C^1$ in $\game^1$ and playing $\C^2$ in $\game^2$.
        This strategy profile gives payoffs $(1-\simcost, 1)$ and so suffices to show the claim.

        \Plone{} clearly has no profitable deviations
            -- blindly (i.e., without simulating first) playing $\T^1$, $\T^2$, $\WO^1$, or $\WO^2$ gives expected utility $1/2$,
            blindly playing in $\game'$ gives utility $0$.

        To see that \Pltwo{} has no profitable deviations,
            note that that the only outcomes with $\utility_2 > 1$ are $(\T^i, \D^i)$ and $(a, a)$ for any $a \in A$,
            so any profitable deviation must result in such outcomes with positive probability.
        However, $\T^i$ is not a best response to any $\mixedStrategy_2$ that plays $\D^i$ with positive probability
            (since $\WO^i$ strictly dominates $\T^i$ against such strategies).
        Similarly, $a$ is not a best response to any $\mixedStrategy_2$ that plays $a$ with positive probability.
            (As $\OO$ strictly dominates $a$ against such strategies.)
        Hence, no strategy for \Pltwo{} gives utility strictly greater than $1$ against \mSim,
            so \Pltwo{} has no profitable deviations.
    \end{proof}

    This concludes the proof of \Cref{thm:helps_is_NPH_v3}.
\end{proof}

\section{Proof \texorpdfstring{of \Cref{thm:negative} (Overly Informed \Pltwo{})}{that Simulation Does Not Help Against an Overly-Informed Opponent}}

In this section, we formally prove the main negative result of this text, \Cref{thm:negative}.
Claims~\ref{cl:perfect_info_new_strategy_improves}-\ref{cl:perfect_info_new_strategy_improves_for_sim} and their proofs are a part of the proof of this theorem.

\thmNegativeResult*

\begin{proof}
    Let $\game_0$ and $\game$ be as in the statement of the theorem.
    We will heavily rely on the assumption that \Pltwo{}'s responses must be Pareto optimal
        --- in other words, that for every $\pureStrategy_1 \in \PureStrategies^{\game_0}_1$,
        \Plone{} only considers a restricted set of responses,
            which we denote $\allowedResponses(\pureStrategy_1) \subset \PureStrategies^{\game_0}_2$,
        for which we have
        \begin{align*}
                \utility_2(\pureStrategy_1, \pureStrategy_2)
                \geq
                \utility_2(\pureStrategy_1, \pureStrategyAlt_2)
            \iff
                \utility_1(\pureStrategy_1, \pureStrategy_2)
                \leq
                \utility_1(\pureStrategy_1, \pureStrategyAlt_2)
        \end{align*}
        for every $\pureStrategy_2, \pureStrategyAlt_2 \in \allowedResponses(\pureStrategy_1)$.
        Note that this implies that
        \begin{align*}
            \utility_2(\pureStrategy_1, \pureStrategy_2)
                >
                \utility_2(\pureStrategy_1, \pureStrategyAlt_2)
            \iff
            \utility_1(\pureStrategy_1, \pureStrategy_2)
                <
                \utility_1(\pureStrategy_1, \pureStrategyAlt_2)
            .
        \end{align*}

    We will use the following auxiliary notation.
        By $\maxminV$, we denote the maxmin value of $\game$ for \Plone{}:
        \begin{align*}
            \maxminV
            :=
                \max_{\pureStrategy_1 \in \PureStrategies^{\game_0}_1}
                \left(
                    \min_{\pureStrategy_2 \in \PureStrategies^{\game_0}_2}
                    \utility_1(\pureStrategy_1, \pureStrategy_2)
                \right)
            .
        \end{align*}
        For every pure strategy $\pureStrategy_1$, we pick some best response
            \begin{align*}
                \optimalResponse{\pureStrategy_1}
                \in
                \argmax \{
                    \utility_2(\pureStrategy_1, \pureStrategy_2)
                    \mid
                    \pureStrategy_2 \in \allowedResponses(\pureStrategy_1)
                \}
                .
            \end{align*}
        Since \Pltwo{} can only use Pareto-optimal responses, we have
            \begin{align*}
                \utility_2(\pureStrategy_1, \optimalResponse{\pureStrategy_1} )
                =
                \max \, \left\{
                    \utility_2(\pureStrategy_1, \pureStrategy_2)
                    \mid
                    \pureStrategy_2 \in \allowedResponses(\pureStrategy_1)
                \right\}
            \end{align*}
            and
            \begin{align*}
                \utility_1(\pureStrategy_1, \optimalResponse{\pureStrategy_1} )
                =
                \min \, \left\{
                    \utility_1(\pureStrategy_1, \pureStrategy_2)
                    \mid
                    \pureStrategy_2 \in \allowedResponses(\pureStrategy_1)
                \right\}
                \leq
                \maxminV
                .
            \end{align*}
        When $\utility_1(\pureStrategy_1, \optimalResponse{\pureStrategy_1} ) \geq \maxminV$,
        we denote by
        $
            \optimalCommitment{\pureStrategy_1}
        $
        some
        \begin{align*}
            \optimalCommitment{\pureStrategy_1}
            \in
            \argmax \{
                \,
                \utility_2 (\pureStrategy_1, \mixedStrategy_2)
                \ | \ 
                &
                    \mixedStrategy_2 \in \Delta(\allowedResponses(\pureStrategy_1)),
                    \\
                & \phantom{\mixedStrategy_2 \in }
                    \utility_1(\pureStrategy_1, \mixedStrategy_2) \geq \maxminV
                \,
            \}
        \end{align*}
        -- that is, an optimal mixed response to $\pureStrategy_1$ which still leaves \Plone{} at least as well off as their maxmin value.
        (Note that
            the maximum is actually attained when the utility is \textit{equal} to $\maxminV$.
            Otherwise, we would arrive at a contradiction with the definition of $\maxminV$.)

    Because of the assumption that \Pltwo{} does not use Pareto-dominated strategies,
    the only Nash equilibria of the game $\game$ -- i.e., without simulation -- are those where
        \Plone{} plays some $\maxminStrategy$ for which
            $
                \utility_1(\maxminStrategy, \optimalResponse{\maxminStrategy})
                =
                \max \, \left\{
                    \utility_1(\pureStrategy_1, \optimalResponse{\pureStrategy_1})
                    \mid
                    \pureStrategy_1 \in \PureStrategies^{\game_0}_1
                \right\}
            $
        and \Pltwo{} responds by $\optimalResponse{\maxminStrategy}$,
        resulting in utility $\utility_1 ( \pureStrategy_1, \optimalResponse{\pureStrategy_1}) = \maxminV$ for \Plone{}.
    To prove the theorem,
        we will show that for any $\mixedMetaStrategy_1 \in \MixedStrategies^{\msimgame}_1$ and any best-response $\mixedMetaStrategy_2$ in $\msimgame$,
        \Plone{}'s utility satisfies
        $
            \utility_1(\mixedMetaStrategy_1, \mixedMetaStrategy_2)
            \leq \maxminV - \mixedMetaStrategy_1(\mSim) \cdot \simcost
        $.
    Since \Plone{} can always achieve $\maxminV$ utility by playing $\maxminStrategy$,
        this will show that
        there cannot exist a Nash equilibrium of $\msimgame$ with $\mixedMetaStrategy_1(\mSim) > 0$.

    Let
        $\mixedMetaStrategy_1 = \simProb \cdot \mSim + (1-\simProb) \cdot \mixedStrategy_1$,
            $\mixedStrategy_1 \in \MixedStrategies^{\game_0}_1$,
            be a mixed strategy of \Plone{} in $\msimgame$
    and let
        $
            \responseStrategy :
                \PureStrategies^{\game_0}_1 \to \Delta(\PureStrategies^{\game_0}_2)
        $
        be a response strategy of \Pltwo{}
            (i.e., a pure-strategy of \Pltwo{} in $\msimgame$).
    First, if $\simProb = 0$,
        then \Pltwo{} can improve (not necessarily strictly) their utility
        by switching from $\responseStrategy$ to
        $
            \responseStrategyAlt : \pureStrategy_1 \mapsto \optimalResponse{\pureStrategy_1}
        $,
        bringing \Plone{}'s utilities $\utility_1(\pureStrategy_1, \responseStrategyAlt)$ down to $\maxminV$ or less.
    By the assumption of \Pltwo{} only using Pareto-optimal strategies,
        the only way for this improvement of \Pltwo{}'s utility to \textit{not} be strict
        is if we already had
        $
            \utility_1(\pureStrategy_1, \responseStrategy(\pureStrategy_1))
            =
            \utility_1(\pureStrategy_1, \optimalResponse{\pureStrategy_1})
            \leq
            \maxminV
        $
        for every $\pureStrategy_1$ with $\mixedStrategy_1(\pureStrategy_1) > 0$.
    
    Second, when $\simProb > 0$,
        \Pltwo{} can weakly improve their utility in $\msimgame$
        by switching from $\responseStrategy$ to the response strategy $\responseStrategyAlt$ defined as follows:
    Denote by $\pureStrategy_1^0$ some friendly best response of \Plone{} to $\responseStrategy$.
    Set
        $\responseStrategyAlt(\pureStrategy_1^0) := \optimalCommitment{\pureStrategy_1^0}$.
        For $\pureStrategy_1 \neq \pureStrategy_1^0$, set
            $\responseStrategyAlt(\pureStrategy_1) := \optimalResponse{\pureStrategy_1}$.
    
    \begin{claim}\label{cl:perfect_info_new_strategy_improves}
        For every $\pureStrategy_1 \in \PureStrategies^{\game_0}_1$, we have
        \begin{align*}
            \utility_2(\pureStrategy_1, \responseStrategyAlt(\pureStrategy_1))
            \geq
            \utility_2(\pureStrategy_1, \responseStrategy(\pureStrategy_1))
            .
        \end{align*}
    \end{claim}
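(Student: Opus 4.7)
The plan is to split on whether $\pureStrategy_1 = \pureStrategy_1^0$ or not, mirroring the case split used to define $\responseStrategyAlt$.

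In the easier case $\pureStrategy_1 \neq \pureStrategy_1^0$, I would appeal directly to the definition of $\optimalResponse{\pureStrategy_1}$. Since \Pltwo{}'s strategy $\responseStrategy$ is restricted to Pareto-optimal responses, $\supp(\responseStrategy(\pureStrategy_1)) \subseteq \allowedResponses(\pureStrategy_1)$, so by linearity of expectation $\utility_2(\pureStrategy_1, \responseStrategy(\pureStrategy_1))$ is a convex combination of values that are each at most $\utility_2(\pureStrategy_1, \optimalResponse{\pureStrategy_1}) = \utility_2(\pureStrategy_1, \responseStrategyAlt(\pureStrategy_1))$, which gives the desired inequality.

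The harder case is $\pureStrategy_1 = \pureStrategy_1^0$, where $\responseStrategyAlt(\pureStrategy_1^0) = \optimalCommitment{\pureStrategy_1^0}$. Since $\optimalCommitment{\pureStrategy_1^0}$ is by construction the $\utility_2$-maximizer over the set of $\mixedStrategy_2 \in \Delta(\allowedResponses(\pureStrategy_1^0))$ satisfying $\utility_1(\pureStrategy_1^0, \mixedStrategy_2) \geq \maxminV$, it suffices to show that $\responseStrategy(\pureStrategy_1^0)$ itself belongs to this feasible set. Membership in $\Delta(\allowedResponses(\pureStrategy_1^0))$ is again immediate from \Pltwo{}'s Pareto-optimality restriction. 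The main obstacle will therefore be proving the second constraint, $\utility_1(\pureStrategy_1^0, \responseStrategy(\pureStrategy_1^0)) \geq \maxminV$.

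To clear that obstacle, I would use that $\pureStrategy_1^0$ is a (friendly) best response of \Plone{} to $\responseStrategy$ in $\game$, and compare it against the alternative of playing a maxmin strategy $\maxminStrategy$ of $\game_0$. That alternative yields $\utility_1(\maxminStrategy, \responseStrategy(\maxminStrategy)) \geq \min_{\pureStrategy_2 \in \PureStrategies^{\game_0}_2} \utility_1(\maxminStrategy, \pureStrategy_2) = \maxminV$, since $\responseStrategy(\maxminStrategy)$ is a distribution over $\PureStrategies^{\game_0}_2$. Because $\pureStrategy_1^0$ is at least as good a response as $\maxminStrategy$ against $\responseStrategy$, we obtain $\utility_1(\pureStrategy_1^0, \responseStrategy(\pureStrategy_1^0)) \geq \utility_1(\maxminStrategy, \responseStrategy(\maxminStrategy)) \geq \maxminV$, which completes the verification and hence the claim.
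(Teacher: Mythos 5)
Your proof is correct and reaches the same destination as the paper's, but it closes the hard case $\pureStrategy_1 = \pureStrategy_1^0$ by a slightly different and arguably more robust route. The shared core is the observation that $\utility_1(\pureStrategy_1^0, \responseStrategy(\pureStrategy_1^0)) \geq \maxminV$, because $\pureStrategy_1^0$ best-responds to $\responseStrategy$ and a maxmin strategy already guarantees $\maxminV$ against any response. From there the paper argues by contradiction: it assumes $\utility_2(\pureStrategy_1^0, \responseStrategyAlt(\pureStrategy_1^0)) < \utility_2(\pureStrategy_1^0, \responseStrategy(\pureStrategy_1^0))$, invokes the Pareto-optimality restriction to flip this into $\utility_1(\pureStrategy_1^0, \responseStrategy(\pureStrategy_1^0)) < \utility_1(\pureStrategy_1^0, \optimalCommitment{\pureStrategy_1^0}) = \maxminV$, and contradicts the best-response property. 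That flip applies the anti-monotone equivalence to two \emph{mixed} responses, whereas the theorem's assumption states it only for pure responses in $\allowedResponses(\pureStrategy_1)$ (for mixtures over a Pareto frontier the pointwise equivalence need not transfer to expectations). Your argument sidesteps this entirely: you verify that $\responseStrategy(\pureStrategy_1^0)$ lies in the feasible set of the optimisation problem defining $\optimalCommitment{\pureStrategy_1^0}$ --- it is supported on $\allowedResponses(\pureStrategy_1^0)$ and gives \Plone{} at least $\maxminV$ --- so the maximiser weakly dominates it in $\utility_2$ by definition, with no appeal to the geometry of the Pareto frontier. The easy case $\pureStrategy_1 \neq \pureStrategy_1^0$ is handled identically in both proofs.
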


    \noindent
    (\Cref{cl:perfect_info_new_strategy_improves} holds trivially for every $\pureStrategy_1$ with the exception of $\pureStrategy_1^0$.
    For $\pureStrategy_1^0$, suppose we had
        \begin{align*}
            \utility_2(\pureStrategy_1^0, \responseStrategyAlt(\pureStrategy_1^0))
            <
            \utility_2(\pureStrategy_1^0, \responseStrategy(\pureStrategy_1^0))
            .
        \end{align*}
        By the Pareto-optimality assumption, this is equivalent to
        \begin{align*}
            \utility_1(\pureStrategy_1^0, \responseStrategy(\pureStrategy_1^0))
            <
            \utility_1(\pureStrategy_1^0, \responseStrategyAlt(\pureStrategy_1^0))
            .
        \end{align*}
        However,
            by definition of $\responseStrategyAlt$ and $\optimalCommitment{\symbolPlaceholder}$,
            the right-hand side of this inequality is equal to the maxmin value for \Plone:
            \begin{align*}
                \utility_1(\pureStrategy_1^0, \responseStrategy(\pureStrategy_1^0))
                <
                \utility_1(\pureStrategy_1^0, \responseStrategyAlt(\pureStrategy_1^0))
                =
                \utility_1(\pureStrategy_1^0, \optimalCommitment{\pureStrategy_1^0})
                =
                \maxminV
                .
            \end{align*}
        Since
            $\utility_1(\br, \responseStrategy)$ cannot be lower than the maxmin value of the game, for any strategy of \Pltwo{},
            this would contradict the assumption that $\pureStrategy_1^0$ is a (friendly) best response to $\responseStrategy$.)

    \begin{claim}\label{cl:perfect_info_new_strategy_improves_for_sim}
        We have
        $
            \utility_2(\mSim, \responseStrategyAlt)
            \geq
            \utility_2(\mSim, \responseStrategy)
        $.
    \end{claim}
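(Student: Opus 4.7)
The plan is to combine Claim \ref{cl:perfect_info_new_strategy_improves} with an analysis of how \Plone{}'s favorable best response to $\responseStrategyAlt$ must provide \Pltwo{} with at least as much utility as the outcome $(\pureStrategy_1^0, \responseStrategyAlt(\pureStrategy_1^0))$. Unpacking the definition of $\mSim$ and using the choice of $\pureStrategy_1^0$ as a friendly best response to $\responseStrategy$ gives $\utility_2(\mSim, \responseStrategy) = \utility_2(\pureStrategy_1^0, \responseStrategy(\pureStrategy_1^0))$. Applying Claim \ref{cl:perfect_info_new_strategy_improves} at $\pureStrategy_1 = \pureStrategy_1^0$ yields $\utility_2(\pureStrategy_1^0, \responseStrategyAlt(\pureStrategy_1^0)) \geq \utility_2(\pureStrategy_1^0, \responseStrategy(\pureStrategy_1^0))$. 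It thus suffices to show $\utility_2(\mSim, \responseStrategyAlt) \geq \utility_2(\pureStrategy_1^0, \responseStrategyAlt(\pureStrategy_1^0))$, and then chain the three inequalities.

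The key step is verifying that $\pureStrategy_1^0$ is a pure best response of \Plone{} to $\responseStrategyAlt$ in $\game$. By construction of $\responseStrategyAlt$, against $\pureStrategy_1^0$ the co-player plays $\optimalCommitment{\pureStrategy_1^0}$, which by the bracketed note in the definition gives \Plone{} exactly $\maxminV$. Against any other $\pureStrategy_1$ the co-player plays $\optimalResponse{\pureStrategy_1}$, which by definition gives \Plone{} utility $\utility_1(\pureStrategy_1, \optimalResponse{\pureStrategy_1}) \leq \maxminV$. Hence \Plone{}'s maximal utility against $\responseStrategyAlt$ is $\maxminV$, attained at $\pureStrategy_1^0$. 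Because a favorable best response breaks ties in \Pltwo{}'s favor, the resulting value $\utility_2(\mSim, \responseStrategyAlt)$ is at least $\utility_2(\pureStrategy_1^0, \responseStrategyAlt(\pureStrategy_1^0))$, which combined with the previous paragraph yields the conclusion.

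The main obstacle is ensuring that $\optimalCommitment{\pureStrategy_1^0}$ is actually well-defined, so that $\responseStrategyAlt(\pureStrategy_1^0)$ makes sense and the equality $\utility_1(\pureStrategy_1^0, \optimalCommitment{\pureStrategy_1^0}) = \maxminV$ may be invoked. This follows from the fact that $\pureStrategy_1^0$, being a friendly best response to $\responseStrategy$, satisfies $\utility_1(\pureStrategy_1^0, \responseStrategy(\pureStrategy_1^0)) \geq \maxminV$ (since \Plone{} can always secure at least $\maxminV$ by deviating to $\maxminStrategy$). Consequently the support of $\responseStrategy(\pureStrategy_1^0) \in \Delta(\allowedResponses(\pureStrategy_1^0))$ contains some pure $\pureStrategy_2$ with $\utility_1(\pureStrategy_1^0, \pureStrategy_2) \geq \maxminV$, so the feasibility set in the definition of $\optimalCommitment{\pureStrategy_1^0}$ is nonempty and the construction of $\responseStrategyAlt$ is legitimate.
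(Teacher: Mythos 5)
Your proof is correct and takes essentially the same route as the paper's: both chain $\utility_2(\mSim, \responseStrategy) = \utility_2(\pureStrategy_1^0, \responseStrategy(\pureStrategy_1^0)) \leq \utility_2(\pureStrategy_1^0, \responseStrategyAlt(\pureStrategy_1^0)) \leq \utility_2(\fbr(\responseStrategyAlt), \responseStrategyAlt) = \utility_2(\mSim, \responseStrategyAlt)$, with the last inequality justified by showing that $\pureStrategy_1^0$ attains \Plone{}'s maximal value $\maxminV$ against $\responseStrategyAlt$ and hence remains a (possibly non-favourable) best response. Your added check that $\optimalCommitment{\pureStrategy_1^0}$ is well-defined is a small extra verification the paper leaves implicit, not a different approach.
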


    \noindent
    (To see this, note that since $\pureStrategy_1^0$ is an opponent-friendly best response to $\responseStrategy$, we have
        $
            \utility_2(\mSim, \responseStrategy)
            =
            \utility_2(\pureStrategy_1^0, \responseStrategy(\pureStrategy_1^0))
        $.
    By \Cref{cl:perfect_info_new_strategy_improves}, we have
        $
            \utility_2(\pureStrategy_1^0, \responseStrategy(\pureStrategy_1^0))
            \leq
            \utility_2(\pureStrategy_1^0, \responseStrategyAlt(\pureStrategy_1^0))
            .
        $
    By definition of $\responseStrategyAlt$, we have
        $
            \utility_1(\pureStrategy_1, \responseStrategyAlt(\pureStrategy_1))
            =
            \utility_1(\pureStrategy_1, \optimalCommitment{\pureStrategy_1})
            \leq \maxminV
        $
        for every $\pureStrategy_1 \neq \pureStrategy_1^0$
        and
        $
            \utility_1(\pureStrategy_1^0, \responseStrategyAlt(\pureStrategy_1^0))
            =
            \maxminV
        $.
    This shows that $\pureStrategy_1^0$ is \textit{a} best response to $\responseStrategyAlt$,
        even though it might not be the most opponent-friendly one.
    As a result, we have
        $
            \utility_2(\pureStrategy_1^0, \responseStrategyAlt(\pureStrategy_1^0))
            \leq
            \utility_2(\fbr, \responseStrategyAlt)
            =
            \utility_2(\mSim, \responseStrategyAlt)
        $.
        Putting all of these inequalities together gives \Cref{cl:perfect_info_new_strategy_improves_for_sim}.)

    By definition of $\responseStrategyAlt$, we get that
        \begin{align*}
            \utility_1(\mixedStrategy_1, \responseStrategyAlt) \leq \maxminV,
            \ \ 
            \utility_1(\mSim, \responseStrategyAlt) \leq \maxminV - \simcost
            .
        \end{align*}
    Moreover, the only way to have
        $
            \utility_2^{\msimgame}(\mixedMetaStrategy_1, \responseStrategy)
            =
            \utility_2^{\msimgame}(\mixedMetaStrategy_1, \responseStrategyAlt)
        $
    would be for these inequalities to hold already for $\responseStrategy$.
    In summary, this shows that
    \begin{claim}
        For any $\mixedMetaStrategy_1 \in \MixedStrategies^{\msimgame}_1$ and $\responseStrategy \in \br(\mixedMetaStrategy_1)$, we have
        \begin{align*}
            \utility^{\msimgame}_1(\mixedMetaStrategy_1, \responseStrategy)
            \leq
            \maxminV - \mixedMetaStrategy_1(\mSim) \cdot \simcost
            .
        \end{align*}
    \end{claim}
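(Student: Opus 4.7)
The plan is to combine the preceding two claims with the fact that $\responseStrategy$ is a best response to pin down utilities pointwise, and then read off the bound from the explicit construction of $\responseStrategyAlt$.

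First, I would add Claims~\ref{cl:perfect_info_new_strategy_improves} and \ref{cl:perfect_info_new_strategy_improves_for_sim} to conclude $\utility_2^{\msimgame}(\mixedMetaStrategy_1, \responseStrategyAlt) \geq \utility_2^{\msimgame}(\mixedMetaStrategy_1, \responseStrategy)$. Since $\responseStrategy$ is by assumption a best response to $\mixedMetaStrategy_1$ in $\msimgame$, this inequality must collapse to equality. Expanding the expected utility $\utility_2^{\msimgame}(\mixedMetaStrategy_1, \responseStrategy)$ as $(1-\simProb) \sum_{\pureStrategy_1 \in \supp(\mixedStrategy_1)} \mixedStrategy_1(\pureStrategy_1) \cdot \utility_2(\pureStrategy_1, \responseStrategy(\pureStrategy_1)) + \simProb \cdot \utility_2(\mSim, \responseStrategy)$, and noting that each summand was only weakly improved when moving from $\responseStrategy$ to $\responseStrategyAlt$, the overall equality forces \emph{componentwise} equality on the support of $\mixedMetaStrategy_1$.

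Next, I would invoke the Pareto-optimality assumption in its anti-correlation form: within $\allowedResponses(\pureStrategy_1)$, two responses with equal $\utility_2$ must have equal $\utility_1$ (by applying the biconditional $\utility_2 \geq \iff \utility_1 \leq$ in both directions). Applied componentwise on $\supp(\mixedStrategy_1)$, and analogously to the simulation term via its characterisation as a favourable best response, this transfers the componentwise $\utility_2$-equality to a componentwise $\utility_1$-equality between $\responseStrategy$ and $\responseStrategyAlt$.

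Finally, I would compute the bound directly from the definition of $\responseStrategyAlt$: by construction, $\utility_1(\pureStrategy_1, \responseStrategyAlt(\pureStrategy_1)) \leq \maxminV$ for every $\pureStrategy_1$ (using the defining properties of $\optimalResponse{\symbolPlaceholder}$ and $\optimalCommitment{\symbolPlaceholder}$), and $\utility_1(\mSim, \responseStrategyAlt) \leq \maxminV - \simcost$ since \Plone{}'s best response to $\responseStrategyAlt$ in the base game cannot exceed $\maxminV$. Weighting these by $(1-\simProb)$ and $\simProb$ yields $\utility_1^{\msimgame}(\mixedMetaStrategy_1, \responseStrategyAlt) \leq \maxminV - \simProb \cdot \simcost$, and the componentwise $\utility_1$-equality from the previous paragraph transfers this bound to $\responseStrategy$, which is exactly the statement of the claim (the degenerate case $\simProb = 0$ was already dispatched directly in the preceding discussion). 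The main subtlety is that the Pareto-optimality-based $\utility_2 \to \utility_1$ transfer must extend to the mixed response $\optimalCommitment{\pureStrategy_1^0}$; this follows from linearity of expected utility together with the fact that $\optimalCommitment{\pureStrategy_1^0}$ is supported on $\allowedResponses(\pureStrategy_1^0)$.
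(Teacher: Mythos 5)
Your route is the same as the paper's: combine the two preceding claims to get $\utility_2^{\msimgame}(\mixedMetaStrategy_1, \responseStrategyAlt) \geq \utility_2^{\msimgame}(\mixedMetaStrategy_1, \responseStrategy)$, use the assumption that $\responseStrategy$ is a best response to collapse this to an equality, push the equality down to the individual components of the expectation, and read the bound off the explicit construction of $\responseStrategyAlt$. Indeed you are more explicit than the paper, which compresses the last two steps into ``the only way to have $\utility_2^{\msimgame}(\mixedMetaStrategy_1, \responseStrategy) = \utility_2^{\msimgame}(\mixedMetaStrategy_1, \responseStrategyAlt)$ would be for these inequalities to hold already for $\responseStrategy$''. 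The componentwise-equality step is sound, and the $\utility_2 \to \utility_1$ transfer works at every component $\pureStrategy_1 \neq \pureStrategy_1^0$: there $\responseStrategy(\pureStrategy_1)$ attains $\max_{\pureStrategy_2 \in \allowedResponses(\pureStrategy_1)} \utility_2(\pureStrategy_1, \pureStrategy_2)$, so it must be supported on the argmax set, whose elements all share the \emph{minimal} value of $\utility_1(\pureStrategy_1, \symbolPlaceholder)$, which is at most $\maxminV$.

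The step that fails as justified is exactly the one you flag as the ``main subtlety'': the claim that the anti-correlation transfer extends to the mixed response $\optimalCommitment{\pureStrategy_1^0}$ \emph{by linearity}. Linearity does not give this. If $\allowedResponses(\pureStrategy_1^0)$ contains three responses with payoff pairs $(\utility_1, \utility_2) = (0,2)$, $(1,1)$, $(3,0)$ --- a perfectly anti-correlated (Pareto-optimal) set --- then the mixture $\tfrac{1}{2}(0,2) + \tfrac{1}{2}(3,0)$ and the pure response $(1,1)$ have equal expected $\utility_2$ but expected $\utility_1$ equal to $1.5$ and $1$ respectively. So two responses in $\Delta(\allowedResponses(\pureStrategy_1^0))$ with equal $\utility_2$ need not have equal $\utility_1$, and your argument does not yet rule out $\utility_1(\pureStrategy_1^0, \responseStrategy(\pureStrategy_1^0)) > \maxminV$, which is precisely what the bound $\utility_1(\mSim, \responseStrategy) \leq \maxminV - \simcost$ requires (since $\pureStrategy_1^0$ is the favourable best response to $\responseStrategy$). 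The repair uses the \emph{defining optimality} of $\optimalCommitment{\pureStrategy_1^0}$ rather than linearity: suppose $\utility_2(\pureStrategy_1^0, \responseStrategy(\pureStrategy_1^0)) = \utility_2(\pureStrategy_1^0, \optimalCommitment{\pureStrategy_1^0})$ but $\utility_1(\pureStrategy_1^0, \responseStrategy(\pureStrategy_1^0)) > \maxminV$. If $\utility_2(\pureStrategy_1^0, \responseStrategy(\pureStrategy_1^0))$ is not already maximal over $\allowedResponses(\pureStrategy_1^0)$, then mixing a sufficiently small amount of $\optimalResponse{\pureStrategy_1^0}$ into $\responseStrategy(\pureStrategy_1^0)$ strictly increases $\utility_2$ while keeping $\utility_1 \geq \maxminV$, contradicting the choice of $\optimalCommitment{\pureStrategy_1^0}$ as the $\utility_2$-maximiser subject to that constraint; if it is already maximal, the argmax-support argument above applies and forces $\utility_1(\pureStrategy_1^0, \responseStrategy(\pureStrategy_1^0)) \leq \maxminV$ directly, again a contradiction. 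With that substitution the remainder of your computation goes through and yields the claim.
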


    Since $\utility_1(\maxminStrategy, \responseStrategy(\maxminStrategy) \geq \maxminV$ for every $\responseStrategy$,
    this implies that $\msimgame$ admits no Nash equilibrium where \Plone{} puts non-zero probability on $\mSim$.
    This concludes the proof of \Cref{thm:negative}.
\end{proof}

\begin{figure}
    \centering
    \begin{NiceTabular}{rccc}[cell-space-limits=3pt]
        & $\Cooperate$     & $\Defect$ & $\ \ \ $\\
        $\FullTrust$ & \Block[hvlines]{3-2}{}  $20, 20$    & $-100, 100$ & $\ \ \ $ \\
        $\PartialTrust$ &                         $10, 10$    & $\ \, -25, 25 \ $ & $\ \ \ $ \\
        $\WalkOut$ &                         $0, 0$      & $\ \ \ 0, 0 \ $ & $\ \ \ $ 
    \end{NiceTabular}
    \\
    \bigskip
    \begin{NiceTabular}{rccc}[cell-space-limits=3pt]
            & $\Cooperate$     & $\Defect$ & $\ \ \ $\\
        $\FullTrust$ & \Block[hvlines]{3-2}{}  $\Good\Full_1, \Good\Full_2$    & $\Bad\Full_1, \Awesome\Full_2$ & $\ \ \ $ \\
        $\PartialTrust$ &                         $\Good\Partial_1, \Good\Partial_2$            & $\Bad\Partial_1, \Awesome\Partial_2$ & $\ \ \ $ \\
        $\WalkOut$ &                         $\Neutral_1, \Neutral_2$      & $\Neutral_1, \Neutral_2$ & $\ \ \ $
    \end{NiceTabular}
    \caption{A concrete and a parameterised version of the Partial-Trust Game from \Cref{fig:PTG}.
        The names of the constants are meant as mnemonics for (fully- and partially-) Bad, Neutral, Good, and Awesome.
        Correspondingly, we assume that
            $
                \Bad\Full_1 < \Bad\Partial_1 < \Neutral_1 < \Good\Partial_1 < \Good\Full_1
            $
            and 
            $\Neutral_2 < \Good\Partial_2 < \Awesome\Partial_2$,
            $\Neutral_2 < \Good\Full_2 < \Awesome\Full_2$,
            $\Good\Partial_2 < \Good\Full_2$, and
            $\Awesome\Partial_2 < \Awesome\Full_2$
            (though the last one is not strictly necessary).
        (The relationship between $\Good\Full_2$ and $\Awesome\Partial_2$ is not important.)
    }
    \Description{A concrete and a parameterised version of a generalised partial-trust game.}
    \label{fig:parametrised_PTG}
\end{figure}

\section{Proofs \texorpdfstring{for \Cref{sec:sub:positive_partial_trust} (Partial Trust)}{Related to Partial Trust}}\label{sec:app:generalised_partial_trust}

In this section, we present the proofs related to partial-trust games.
Recall that we defined these as follows:
\defGPTG*

As an immediate observation, we can prove the first two claims of \Cref{lem:gPTG_properties}:

\begin{lemma}[Restating \Cref{lem:gPTG_properties}\,(i)-(ii)]
    Let $\game$ be a generalised PTG.
    \begin{enumerate}[label=(\roman*)]
        \item For any $\mixedStrategy \in \NE(\game)$, $\mixedStrategy_1(\WO) = 1$.
        \item The unique pure-commitment equilibrium of $\game$ is $(\FT, \C)$, where
            $\{ \FT \}  = \argmax \left\{ \utility_1(\T, \C) \mid \T \in \PureStrategies^\game_1 \right\}$.\\
            In particular, $\game$ is a generalised trust game.
    \end{enumerate}
\end{lemma}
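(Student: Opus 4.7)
The plan is to exploit the very simple incentive structure that conditions (3) and (4) impose on \Pltwo{}. For part (i), I would observe that against any \Plone{}-strategy putting positive probability on some trust action $\T \neq \WO$, condition (3) makes $\D$ strictly more attractive to \Pltwo{} than $\C$: for each such $\T$ we have $\utility_2(\T, \D) > \utility_2(\T, \C)$, while on $\WO$ both give $0$. So in any $\mixedStrategy \in \NE(\game)$ we are in one of two cases: either $\mixedStrategy_1(\WO) = 1$, or $\mixedStrategy_2(\D) = 1$. In the second case, condition (3) gives $\utility_1(\T, \D) < 0 = \utility_1(\WO, \D)$ for every $\T \neq \WO$, so \Plone{}'s unique best response is $\WO$, collapsing back to the first case. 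This yields $\mixedStrategy_1(\WO) = 1$ in either case.

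For part (ii), I would enumerate \Pltwo{}'s two possible pure commitments. If \Pltwo{} commits to $\D$, then by the argument above \Plone{}'s unique (and hence favourable) best response is $\WO$, giving \Pltwo{} the payoff $\utility_2(\WO, \D) = 0$. If \Pltwo{} commits to $\C$, then condition (4a) guarantees that $\argmax\{\utility_1(\T, \C) \mid \T \in \PureStrategies^\game_1\}$ is a single strategy, namely $\FT$, which is therefore \Plone{}'s unique best response (so favourable tiebreaking is trivially handled). By condition (3) applied to $\FT$, this yields \Pltwo{} payoff $\utility_2(\FT, \C) > 0$, which strictly exceeds the $0$ obtained from committing to $\D$. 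Hence the unique pure-commitment equilibrium is $(\FT, \C)$.

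Finally, to conclude that $\game$ is a generalised trust game, I would combine (i) and (ii): every NE of $\game$ yields payoff profile $(0,0)$ (since \Plone{} plays $\WO$ and both $\utility_1(\WO, \symbolPlaceholder)$ and $\utility_2(\WO, \symbolPlaceholder)$ are $0$), while the unique pure-commitment equilibrium gives $(\utility_1(\FT, \C), \utility_2(\FT, \C))$, both coordinates of which are strictly positive by condition (3). So the pure-commitment equilibrium is a strict Pareto improvement over every NE, matching the definition of a generalised trust game.

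There is no substantial obstacle here; the only subtlety is ensuring that the favourable-best-response tiebreaking built into the definition of a pure-commitment equilibrium is handled correctly, and the uniqueness clause (4a) takes care of this automatically by making \Plone{}'s best response to $\C$ a singleton.
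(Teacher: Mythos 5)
Your proof is correct and follows essentially the same route as the paper's: condition (3) forces $\D$ to be \Pltwo{}'s unique best response whenever \Plone{} trusts with positive probability (hence $\WO$ in every NE), the two pure commitments are compared directly with (4a) handling uniqueness of the response to $\C$, and (3) gives the strict Pareto comparison against the $(0,0)$ equilibrium payoffs. No gaps; your write-up is if anything slightly more explicit about the case analysis in part (i) than the paper's.
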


\begin{proof}
    (i): This immediately follows from the property (3).
        Indeed,
            if $\mixedStrategy_1(\WO) > 0$, then \Pltwo{}'s only best response is $\D$ by (3),
            to which \Plone{}'s only best response is $\WO$ by (3).

    (ii): Since \Pltwo{} only has two actions, the only candidates for \Pltwo{}'s optimal pure commitment are $\C$ and $\D$.
        By (3), \Plone{} will respond to $\C$ with some action $\T \neq \WO$,
            which means that \Pltwo{}'s optimal pure commitment is $\C$.
        By (4), \Plone{}'s best response to $\C$ is unique.
        
    To show that $\game$ is a generalised trust game, we need to show
        that any pure-commitment equilibrium of $\game$ is a strict Pareto-improvement over any \NE{} of $\game$.
        By (3), $(\FT, \C)$ is a strict Pareto-improvement over both $(\WO, \D)$ and $(\WO, \C)$.
        In combination with (i), this shows that $\game$ is a generalised trust game.
\end{proof}

When dealing with mixed strategies of \Pltwo{} in a generalised PTGs,
    it will be useful to consider the following strategies:
    \begin{notation}
        Let $\game$ be a generalised PTG
            and $\T \in \PureStrategies^\game_1$ a strategy that
                is a best response to some $\mixedStrategy_2 \in \MixedStrategies^\game_2$.
        We denote
        \begin{align*}
            \lowerD{\T}
            & := \min \{ 
                    \delta \in [0, 1]
                \mid
                    \T \in \br(\delta \cdot \D + (1-\delta) \cdot \C)
                 \}
            \\ 
            \upperD{\T}
            & := \max \{ 
                    \delta \in [0, 1]
                \mid
                    \T \in \br(\delta \cdot \D + (1-\delta) \cdot \C)
                 \}
             .
        \end{align*}
        We also use $\incentivise{\T}$ to denote we the \textbf{optimal strategy of \Pltwo{} that still incentivises $\T$ as a best response}:
        \begin{align*}
            \incentivise{\T}
            & :=
                \delta_\T \cdot \D + (1-\delta_\T) \cdot \C
            .
        \end{align*}
    \end{notation}

We will also use the following notation for $\game$'s payoffs.

    \begin{notation}
        Let $\game$ be a generalised partial-trust game and $\T \neq \WO$ a strategy of \Plone{} in game.
        Then we denote
        \begin{align*}
            (\Good^\T_1, \Good^\T_2) & := \utility(\T, \C)
            \\
            (\Bad^\T_1, \Awesome^\T_2) & := \utility(\T, \D)
            \\
            (\Neutral_1, \Neutral_2) & := \utility(\WO, \C) = \utility(\WO, \D)
            .
        \end{align*}
    \end{notation}
    \noindent
    In line with the assumptions made in \Cref{def:gPTG},
        these numbers are meant to evoke the words \textbf{G}ood, \textbf{B}ad, \textbf{A}wesome, and \textbf{N}eutral.
    While \Cref{def:gPTG} posits that $\Neutral_1 = \Neutral_2 = 0$,
        our proofs will use the constants $\Neutral_\pl$ instead,
        to make it clearer where the ``zeroes'' come from.

As a first step in analysing generalised partial-trust games,
    we note that any such game can be reduced as follows.

\begin{lemma}\label{lem:gPTG_reduction_to_Ti}
    Let $\game$ be a generalised partial-trust game.
    \begin{enumerate}[label=(\roman*)]
        \item There exists some $n \geq 0$ and strategies
                \begin{align*}
                    \T_0 := \FT,
                    \T_1,
                    \dots,
                    \T_n,
                    \WO =: \T_{n+1}
                    \in \PureStrategies^\game_1
                    ,
                \end{align*}
            which satisfy
                \begin{align*}
                    \inverseBR(\T_i)
                    =
                    \left\{
                        \delta \cdot \D + (1\shortminus \delta) \cdot \C
                    \mid
                        \delta \in [\lowerD{\T_i}, \upperD{\T_i}]
                    \right\}
                \end{align*}
            and
                \begin{align*}
                    & 0 = \lowerD{\FT} < \upperD{\FT}
                    = \lowerD{\T_1} < \upperD{\T_1}
                    = \lowerD{\T_2} < \dots
                    \\
                    & \phantom{buga}
                    \dots < \upperD{\T_{n-1}}
                    = \lowerD{\T_n} < \upperD{\T_n}
                    = \lowerD{\WO} < \upperD{\WO} = 1
                    ,
                \end{align*}
            such that the game $\msimgame$ can,
                without loss of generality (in the sense of \Cref{prop:finite_str_space}),
                be replaced by the subgame $(\game')_\simSubscriptMixed$ given by
                \begin{align*}
                    \PureStrategies^{\game'}_1
                        & :=
                        \{ \FT, \T_1, \dots, \T_n, \WO \}
                    \\
                    \PureStrategies^{\game'}_2
                        & :=
                        \PureStrategies^{\game}_2
                        =
                        \{ \C, \D \}
                    .
                \end{align*}
        \item The game $(\game')_\simSubscriptMixed$ can,
            without loss of generality (in the sense of \Cref{prop:finite_str_space}),
            be further replaced by the subgame $\msimgame''$ given by 
            \begin{align*}
                \PureStrategies^{\msimgame''}_1
                    & :=
                    \PureStrategies^{(\game')_\simSubscriptMixed}_1
                    =
                    \{ \mSim, \FT, \T_1, \dots, \T_n, \WO \}
                \\
                \PureStrategies^{\msimgame''}_2
                    & :=
                    \{ \incentivise{\FT}, \incentivise{\T_1}, \dots, \incentivise{\T_n}, \D \}
                .
            \end{align*}
        \item In (i), we have $n > 0$ if and only if
            \begin{align*}
                \frac{\Good^\FT_1 - \Neutral_1}{\Neutral_1 - \Bad^\FT_1}
                \neq
                \max_{\T \in \PureStrategies^\game_1}
                    \frac{\Good^\T_1 - \Neutral_1}{\Neutral_1 - \Bad^\T_1}
                .
            \end{align*}
        \item The ratios $\frac{\upperD{\T_i}}{1-\upperD{\T_i}}$ satisfy 
            \begin{align*}
                \upperD{\FT} : (1 - \upperD{\FT})
                    & =
                    (\Good^{\FT}_1 - \Good^{\T_1}_1) : (\Bad^{\T_1}_1 - \Bad^{\FT}_1)
                \\
                \upperD{\T_i} : (1 - \upperD{\T_i})
                    & =
                    (\Good^{\T_i}_1 - \Good^{\T_{i+1}}_1) : (\Bad^{\T_{i+1}}_1 - \Bad^{\T_i}_1)
                \\
                \upperD{\T_n} : (1 - \upperD{\T_n})
                    & =
                    (\Good^{\T_n}_1 - \Neutral_1) : (\Neutral_1 - \Bad^{\T_n}_1)
                .
            \end{align*} 
    \end{enumerate}
\end{lemma}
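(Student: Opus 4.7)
Since \Pltwo{} has only two pure strategies, every $\mixedStrategy_2 \in \MixedStrategies^\game_2$ can be identified with its probability $\delta \in [0,1]$ of playing $\D$. Under this identification, \Plone{}'s payoff from any $\T \neq \WO$ is the affine function $\delta \mapsto (1-\delta)\Good^\T_1 + \delta \Bad^\T_1$, while $\WO$ yields the constant $\Neutral_1$. The best-response correspondence is therefore determined by the upper envelope of this finite family of lines over $[0,1]$, and the whole proof amounts to reading off structural information from this picture.

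For (i), I would order the non-$\WO$ strategies by decreasing $\Good^\T_1$: assumption (4a) makes these values distinct, and (4b) forces $\Bad^\T_1$ to be correspondingly strictly increasing, so the lines have pairwise distinct slopes and any two meet at a unique $\delta$. The upper envelope is thus a concatenation of segments carried by \emph{distinct} strategies; reading these off from left to right yields the ordered list $\T_0 = \FT, \T_1, \ldots, \T_n, \T_{n+1} = \WO$ together with the stated form of $\inverseBR(\T_i)$. Assumption (5) enters at this step to rule out an extra \Plone{} strategy whose $u_1$-line coincides with a convex combination of two envelope-carriers but whose $u_2$ values differ, which would otherwise spoil the clean correspondence between envelope segments and subsequent arguments. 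Strategies that never lie on the envelope are never a best response, so Lemma~\ref{lem:best_responses_only} lets us drop them and reduce $\msimgame$ to $(\game')_\simSubscriptMixed$.

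For (ii), I would apply Proposition~\ref{prop:finite_str_space} to $(\game')_\simSubscriptMixed$. Each $\closure{\inverseFBR(\T_i)}$ is a closed sub-interval of $[0,1]$, so its extremal points are its two endpoints, corresponding exactly to $\incentivise{\T_{i-1}}$ and $\incentivise{\T_i}$; for $\FT$ the left endpoint is pure $\C$ and for $\WO$ the right endpoint is pure $\D$. This yields the strategy set $\{\C, \incentivise{\FT}, \incentivise{\T_1}, \ldots, \incentivise{\T_n}, \D\}$, which differs from the target only by the extra element $\C$. A short direct check removes $\C$: because $\Awesome^\T_2 > \Good^\T_2$ by (3), $\incentivise{\FT}$ strictly beats $\C$ for \Pltwo{} against every pure \Plone{} strategy except $\WO$ (and against $\mSim$ too, since the favourable best response to $\incentivise{\FT}$ is $\FT$), while against $\WO$ the strategies $\C$ and $\D$ are payoff-equivalent for both players; hence any NE placing positive mass on $\C$ can be matched at identical payoffs by moving that mass to $\D$, which is already in the reduced set.

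For (iv), setting the $\T_i$- and $\T_{i+1}$-lines equal and solving for $\delta/(1-\delta)$ produces the claimed ratio directly, with the boundary cases for $\FT$ and $\WO$ obtained by substituting the constant $u_1 \equiv \Neutral_1$ for one of the lines. For (iii), the intermediate strategies $\T_1, \ldots, \T_n$ exist precisely when some $\T \neq \FT$ carries part of the envelope above the horizontal height $\Neutral_1$, equivalently when its line crosses $u_1 = \Neutral_1$ at a larger $\delta$ than $\FT$'s line does; rewriting this crossing in the form $\delta/(1-\delta) = (\Good^\T_1 - \Neutral_1)/(\Neutral_1 - \Bad^\T_1)$ yields the stated ratio characterisation. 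The main obstacle is part (i), where one must verify that (4a), (4b), and (5) together suffice to make the ordering of the $\T_i$ unambiguous and to exhaust all best responses; once this is in place, parts (ii)--(iv) follow by bookkeeping and a pair of elementary intersection computations.
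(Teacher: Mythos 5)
Your overall strategy is the same as the paper's: identify \Pltwo{}'s mixed strategies with $\delta \in [0,1]$, read the best-response structure off the upper envelope of \Plone{}'s payoff lines, and obtain (iii) and (iv) from the pairwise intersection equations. Parts (ii)--(iv) are handled correctly (your extra step removing $\C$ from the extremal-point set is fine, since $\C$ is weakly dominated for \Pltwo{} by $\incentivise{\FT}$ and ties with $\D$ against $\WO$, matching the paper's domination argument).

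There is, however, one genuine misstep in part (i). You conclude the reduction to $(\game')_\simSubscriptMixed$ by invoking Lemma~\ref{lem:best_responses_only}, on the grounds that strategies not carrying a segment of the envelope ``are never a best response.'' That is false for a strategy $\T$ whose line is tangent to the envelope at a single junction point $\delta^* = \upperD{\T_i} = \lowerD{\T_{i+1}}$: such a $\T$ \emph{is} a best response to $\delta^* \D + (1-\delta^*)\C$, so Lemma~\ref{lem:best_responses_only} does not license removing it, and indeed removing it could in principle change \Pltwo{}'s payoff from $\mSim$ at $\delta^*$ (the favourable best response might change). The correct tool here is Lemma~\ref{lem:extremal_points_only}: one writes $\T$'s $\utility_1$-line as the convex combination $\lambda \T_i + (1-\lambda)\T_{i+1}$ of the two adjacent envelope carriers and uses assumption (5) of \Cref{def:gPTG} to conclude that this combination also reproduces $\utility_2$, so that $\T$ is payoff-redundant for \emph{both} players and can be dropped without changing equilibrium payoffs. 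You clearly see that (5) is doing work of exactly this kind (you describe the tangent-line scenario explicitly), so the idea is present; but as written the reduction rests on the wrong lemma, and the argument for those single-point best responses is missing. A related small omission: when applying the reduction one must also check that $\utility(\mSim,\cdot)$ is unaffected, i.e.\ that favourable best responses in $\game'$ agree with those in $\game$; this follows from the same observation that tangent strategies are never the \emph{favourable} best response at a junction, but it should be said.
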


\begin{proof}[Proof of \Cref{lem:gPTG_reduction_to_Ti}]
    First, observe that
        because $\game$ only has two actions for \Pltwo{},
        of which $\D$ is strictly dominant against anything except for $\WO$,
    best-responses of \Plone{} in $\game$ have the following properties.

    \begin{claim}\label{cl:gPTG_br_properties}
        For any generalised PTG, we have the following.
        \begin{enumerate}[label=(\roman*)]
            \item For every $\T \in \PureStrategies^\game_1$,
                \begin{align*}
                    \inverseBR(\T)
                    =
                    \{
                        \mixedStrategy_2 \in \MixedStrategies^\game_2
                    \mid
                        \br(\mixedStrategy_2) \ni \T
                    \}
                \end{align*}
                is either empty
                or equal to
                    \begin{align*}
                        \inverseBR(\T)
                        =
                        \{
                            \delta \cdot \D + (1-\delta) \cdot \C
                        \mid
                            \delta \in [\lowerD{\T}, \upperD{\T}]
                            \,
                        \}
                    \end{align*}
                for some $\lowerD{\T} \leq \upperD{\T}$.\footnotemark{}
                    \footnotetext{
                        The star in $\upperD{\T}$ is meant to indicate that this is the \textit{optimal} defection probability
                            -- from the perspective of \Pltwo{} --
                            corresponding to \Plone{} still being incentivised to best-respond by $\T$.
                    }
            \item Let $\T, \T' \in \bigcup \{ \br(\mixedStrategy_2) \mid \mixedStrategy_2 \in \MixedStrategies^\game_2 \}$ be s.t.
                the intervals $[\lowerD{\T}, \upperD{\T}]$ and $[\lowerD{\T'}, \upperD{\T'}]$ intersect.
                Then
                    either the intersection consists of a single point
                    or $[\lowerD{\T}, \upperD{\T}] = [\lowerD{\T'}, \upperD{\T'}]$.
            \item Let $\T \in \PureStrategies^\game_1$ be a strategy for which $[\lowerD{\T}, \upperD{\T}]$ is not a single point.
                Then, because of the assumption (4) in \Cref{def:gPTG},
                the case $[\lowerD{\T}, \upperD{\T}] = [\lowerD{\T'}, \upperD{\T'}]$ does not occur for any $\T' \neq \T$.
            \item Suppose that $\Tless, \Tmid, \Tmore \in \bigcup_{\mixedStrategy_2 \in \MixedStrategies^\game_2} \br(\mixedStrategy_2)$
                satisfy
                \begin{align*}
                    \lowerD{\Tmore} < \upperD{\Tmore}
                    = \upperD{\Tmid} = \delta^* = \lowerD{\Tmid}
                    = \lowerD{\Tless} < \upperD{\Tless}
                \end{align*}
                and denote $\mixedStrategy^*_2 := \delta^* \cdot \D + (1-\delta^*) \cdot \C$.
                Then we have
                \begin{enumerate}[label=(\alph*)]
                    \item
                        $
                            \fbr(\mixedStrategy^*_2)
                            =
                            \{ \Tmore \}
                        $;
                    \item
                        $
                            \utility_2(\Tmore, \mixedStrategy^*_2 )
                            >
                            \utility_2(\Tmid, \mixedStrategy^*_2 )
                            >
                            \utility_2(\Tless, \mixedStrategy^*_2 )
                        $.
                \end{enumerate}
            \item $\FT$ and $\WO$ belong to $\bigcup_{\mixedStrategy_2 \in \MixedStrategies^\game_2} \br(\mixedStrategy_2)$
                and the corresponding intervals satisfy:
                \begin{enumerate}[label=(\roman*)]
                    \item $0 = \lowerD{\FT} < \upperD{\FT} \leq \lowerD{\WO} < \upperD{\WO} = 1$;
                    \item $\upperD{\FT} < \lowerD{\WO}$ holds unless
                        $
                            \frac{\Good^\FT_1 - \Neutral_1}{\Neutral_1 - \Bad^\FT_1}
                            =
                            \max_{\T \in \PureStrategies^\game_1}
                                \frac{\Good^\T_1 - \Neutral_1}{\Neutral_1 - \Bad^\T_1}
                        $.
                \end{enumerate}
            \item The set $\bigcup_{\mixedStrategy_2 \in \MixedStrategies^\game_2} \fbr(\mixedStrategy_2)$
                consists precisely of the strategies whose sets $\inverseBR(\T)$ correspond to non-degenerate intervals.
        \end{enumerate}
    \end{claim}

    \begin{proof}[Proof of \Cref{cl:gPTG_br_properties}]
        (i) follows from the (general) fact that the set $\inverseBR(\pureStrategy_1)$ is always closed and convex.

        (ii) is a consequence of the general observation that
            when $\mixedStrategy_2$ is a non-trivial convex combination of $\mixedStrategyAlt_2$ and $\mixedStrategyAlt_2'$
            and $\pureStrategy_1, \pureStrategyAlt_1 \in \PureStrategies^\game_1$,
            exactly one of the following cases must be true:
                \begin{align*}
                    \utility_1(\pureStrategy_1, \mixedStrategyAlt_2) = \utility_1(\pureStrategyAlt_1, \mixedStrategyAlt_2)
                    & \ \ \ \& \ \ \ 
                    \utility_1(\pureStrategy_1, \mixedStrategyAlt_2') = \utility_1(\pureStrategyAlt_1, \mixedStrategyAlt_2')
                    \\
                    \utility_1(\pureStrategy_1, \mixedStrategyAlt_2) > \utility_1(\pureStrategyAlt_1, \mixedStrategyAlt_2)
                    & \ \ \ \& \ \ \ 
                    \utility_1(\pureStrategy_1, \mixedStrategyAlt_2') < \utility_1(\pureStrategyAlt_1, \mixedStrategyAlt_2')
                    \\
                    \utility_1(\pureStrategy_1, \mixedStrategyAlt_2) < \utility_1(\pureStrategyAlt_1, \mixedStrategyAlt_2)
                    & \ \ \ \& \ \ \ 
                    \utility_1(\pureStrategy_1, \mixedStrategyAlt_2') > \utility_1(\pureStrategyAlt_1, \mixedStrategyAlt_2')
                    .
                \end{align*}
            (This observation is a straightforward consequence of linearity of utility functions.)
            Indeed, the observation implies that if the two intervals intersect at two distinct points,
                $\T$ and $\T'$ would have to satisfy
                    $\utility_1(\T, \C) = \utility_1(\T', \C)$
                    and
                    $\utility_1(\T, \D) = \utility_1(\T', \D)$,
                which would imply that the two intervals coincide.
            
        (iii) follows as corollary of the proof of (ii),
            since \Cref{def:gPTG} assumes that different actions of \Plone{}
            cannot, in fact, have identical utilities $\utility_1(\symbolPlaceholder, \C)$.

        (iv-a):
            From (iii), it follows that $\Tmore$ must in fact be the only strategy of \Plone{}
                that corresponds to a non-degenerate interval
                and satisfies
                    $
                        \upperD{\T}
                        =
                        \delta^*
                    $.
            This implies that,
                among all strategies $\T \in \br(\mixedStrategy^*_2)$,
                $\Tmore$ must be the one with (strictly) highest $\utility_1(\T, \C)$ and (strictly) lowest $\utility_1(\T, \D)$.
            By the assumption (3) in the definition of a generalised PTG (\Cref{def:gPTG}),
                this implies that among all strategies $\T \in \br(\mixedStrategy^*_2)$,
                $\Tmore$ is the one with (strictly) highest utilities $\utility_2(\T, \C)$ and $\utility_2(\T, \D)$.
            As a result, $\Tmore$ is the unique favourable best response to $\mixedStrategy^*_2$.

        (iv-b) follows from the same argument as (iv-a).

        (v-a) is a straightforward consequence of definition of a general PTG (and $\FT$).

        (v-b): It is not difficult to verify that for every $\T \neq \WO$,
            the point when \Plone{} becomes indifferent between $\T$ and $\WO$ is
            \Pltwo{}'s strategy
                $
                    \mixedStrategy_2
                    =
                    \delta \cdot \D + (1-\delta) \cdot \C
                $
            satisfies
            \begin{align}\label{eq:gPTG_properties_lemma_trivility}
                \frac{\delta}{1-\delta}
                =
                \frac{\Good^\T_1 - \Neutral_1}{\Neutral_1 - \Bad^\T_1}
                .
            \end{align}
            Moreover, any $\T \notin \{ \FT, \WO \}$ will be \textit{strictly} dominated
                by $\FT$ when $\delta = 0$
                and by $\WO$ when $\delta = 1$.
            From these observations, it follows that when
                \begin{align*}
                    \frac{\Good^\T_1 - \Neutral_1}{\Neutral_1 - \Bad^\T_1}
                    \leq
                    \frac{\Good^\FT_1 - \Neutral_1}{\Neutral_1 - \Bad^\FT_1}
                    ,
                \end{align*}
                the point when the expected utility
                    $\utility_1(\T, \delta \cdot \D + (1-\delta) \cdot \C)$
                becomes equal to $\Neutral_1$ will come before
                    -- i.e., for higher $\delta$ --
                    the point when the same happens to
                    $\utility_1(\FT, \delta \cdot \D + (1-\delta) \cdot \C)$.
                This implies that the strategy $\T$ will always be (at least weakly) dominated by
                    either $\FT$, or $\WO$, or both.
            When the inequality \eqref{eq:gPTG_properties_lemma_trivility} holds for \textit{every} $\T \notin \{ \FT, \WO \}$,
                the strategies $\FT$ and $\WO$ will still both be best-responses
                even for the value of $\delta$ that makes \Plone{} indifferent between $\FT$ and $\WO$.

        (vi) follows from (iv-a) and the $\FT$ part of (v-i).            
    \end{proof}

\textit{    The proof of \Cref{lem:gPTG_reduction_to_Ti}\,(i), part 1:
}        Equipped with \Cref{cl:gPTG_br_properties},
            we are now ready to identify the advertised actions $\T_i$
            and to observe some of their basic properties.
        Denote
            \begin{align*}
                \mc T
                := \{
                    \pureStrategy_1 \in \PureStrategies^\game_1
                \mid
                    \exists \mixedStrategy_2 \in \MixedStrategies^\game_2 :
                        \fbr(\mixedStrategy_2) \ni \pureStrategy_1
                \}
            .
            \end{align*}
        By \Cref{cl:gPTG_br_properties}\,(vi),
            the set $\inverseBR(\T)$ for each $\T \in \mc T$
            corresponds to a non-degenerate interval.
        Moreover, the corresponding intervals necessarily
            cover the full range $[0, 1]$
            and
            only intersect at endpoints.
        By \Cref{cl:gPTG_br_properties}\,(v-i),
            we have $\mc T \supseteq \{ \FT, \WO \}$,
            so we can write $\mc T = \mc T' \cup \{ \FT, \WO \}$.
        By enumerating the elements of $\mc T$ from lowest to highest $\upperD{\T}$,
            we can thus write
                $
                    \mc T
                    =
                    \{ \T_0 := \WO, \T_1, \dots, \T_n, \T_{n+1} := \WO \}
                $
            for some $n := | \mc T' | \geq 0$.

    \textit{The proof of \Cref{lem:gPTG_reduction_to_Ti}\,(iii):}
        By \Cref{cl:gPTG_br_properties}\,(v-ii),
            we have $n > 0$
            if and only if
            $
                \frac{\Good^\FT_1 - \Neutral_1}{\Neutral_1 - \Bad^\FT_1}
                \neq
                    \max_{\T \in \PureStrategies^\game_1}
                        \frac{\Good^\T_1 - \Neutral_1}{\Neutral_1 - \Bad^\T_1}
            $.

    \textit{The proof of \Cref{lem:gPTG_reduction_to_Ti}\,(i), continued:}
        Recall that in $\game'$,
            the set of strategies of \Pltwo{} remains unchanged
            and \Plone{}'s set of strategies is given by
                \begin{align*}
                    \PureStrategies^{\game'}_1
                    & :=
                    \{ \FT, \T_1, \dots, \T_n, \WO \}
                    .
                \end{align*}
        Since the set of strategies of \Pltwo{} remains unchanged,
            we only need to verify the assumptions of \Cref{lem:extremal_points_only}
            for strategies of \Plone{}.
        To this end, consider a pure strategy $\pureMetaStrategy_1$ of \Plone{} in $\msimgame$
            that is a best-response to some mixed (meta-) strategy of \Pltwo{} in $\msimgame$.
        To verify the assumptions of \Cref{lem:extremal_points_only},
            we need to find a mixed (meta-) strategy $\mixedMetaStrategy'_1$ of \Plone{} in $(\game')_\simSubscriptMixed$
            that satisfies
                \begin{align*}
                    \utility_1(\mixedMetaStrategy'_1, \mixedStrategy_2)
                        & \geq
                        \utility_1(\pureMetaStrategy_1, \mixedStrategy_2)
                    \\
                    \utility_2(\mixedMetaStrategy'_1, \mixedStrategy_2)
                        & =
                        \utility_2(\pureMetaStrategy_1, \mixedStrategy_2)
                \end{align*}
                for every $\mixedStrategy_2 \in \PureStrategies^{(\game')_\simSubscriptMixed}_2$.
        However, in this case, we will show that the condition on $\utility_1$ actually holds with identity.
        We need to consider two cases:
            $\pureMetaStrategy_1 = \mSim$
            and
            $\pureMetaStrategy_1 \in \PureStrategies^\game_1$.
        
        Suppose that $\pureMetaStrategy_1 = \mSim$.
            We will show that the condition holds with $\mixedMetaStrategy'_1 = \pureMetaStrategy_1 = \mSim$.
            By \Cref{cl:gPTG_br_properties}\,(vi),
                we know that
                    $
                        \fbr^{\game'}(\mixedStrategy_2)
                        =
                        \fbr^{\game}(\mixedStrategy_2)
                    $
                for every $\mixedStrategy_2 \in \MixedStrategies^\game_2$.
            From this, it follows that
                $
                    \utility^{\game'}(\mSim, \mixedStrategy_2)
                    =
                    \utility^{\game}(\mSim, \mixedStrategy_2)
                $
                for every $\mixedStrategy_2 \in \MixedStrategies^\game_2$
                (since the utilities corresponding to $\mSim$ are defined in terms of favourable best responses in the base game).

        Suppose that $\pureMetaStrategy_1 \in \PureStrategies^\game_1$.
        First, when $\pureMetaStrategy_1 \in \{ \FT, \T_1, \dots, \T_n, \WO \}$,
            the condition trivially holds with $\mixedMetaStrategy'_1 := \pureMetaStrategy_1$.
        Second, if $\inverseBR(\pureMetaStrategy_1) = \emptyset$,
            $\pureMetaStrategy_1$ would not be a best response to any $\mixedStrategy_2$,
            so there would not be anything to prove.
        Finally, suppose that $\pureMetaStrategy_1 =: \T$ does not belong to the set  $\{ \FT, \T_1, \dots, \T_n, \WO \}$,
            but $\inverseBR(\T)$ is non-empty.
            By \Cref{cl:gPTG_br_properties}\,(ii), this means that
                $
                    \lowerD{\T}
                    =
                    \upperD{\T}
                $.
            By construction of the strategies $\T_i$
                -- and \Cref{cl:gPTG_br_properties}\,(ii) --
                we have
                    $
                        \upperD{\T_i}
                        =
                        \upperD{\T}
                        =
                        \lowerD{\T_{i+1}}
                    $
                for some $i \in \{ 0, \dots, n \}$.
            This means that we can find some $\lambda \in (0, 1)$
                such that the convex combination
                    $
                        \mixedStrategy_1
                        :=
                        \lambda \cdot \T_i + (1-\lambda) \cdot \T_{i+1}
                    $
                satisfies
                both
                    $\utility_1(\T, \C) = \utility_1(\mixedStrategy_1, \C)$
                and
                    $\utility_1(\T, \D) = \utility_1(\mixedStrategy_1, \D)$.
                (and thus for every $\mixedStrategy_2 \in \MixedStrategies^\game_2$).
            However,
                by the assumption \ref{ass:gPTG_definition_no_tiebreaking} from \Cref{def:gPTG},
                such convex combination must in fact satisfy
                    $\utility(\T, \mixedStrategy_2) = \utility(\mixedStrategy_1, \mixedStrategy_2)$
                    for every $\mixedStrategy_2$
                (i.e., for \textit{both} players).
            This shows that removing such $\T$ does not have a strategic impact on $\game$,
                and thus concludes the proof of \Cref{lem:gPTG_reduction_to_Ti}\,(i).

    \textit{The proof of \Cref{lem:gPTG_reduction_to_Ti}\,(ii):}
        We need to show that the game $(\game')_\simSubscriptMixed$
        can be reduced to the subgame $\msimgame''$
            where \Plone{}'s strategy space remains unchanged
            and \Pltwo{} only has access to pure (meta-) strategies
            \begin{align*}
                \PureStrategies^{\msimgame'}_2
                & :=
                \{ \incentivise{\FT}, \incentivise{\T_1}, \dots, \incentivise{\T_n}, \D \}
                .
            \end{align*}
        
        First, note that
            when $\delta \in (\lowerD{\T_i}, \upperD{\T_i})$,
            then the strategy $\delta \cdot \D + (1-\delta) \cdot C$ is
                (a) \textit{weakly} dominated by $\incentivise{\T_i}$
                    against \textit{any} strategy of \Plone{} in $\msimgame''$
                and
                (b) strongly dominated against any \Plone{} strategy in $\msimgame''$ with the exception of $\WO$
                    (i.e., including against $\mSim$).
            (This trivially holds because $\D$ strictly dominated $\C$ against any strategy in $\game$ except for $\WO$
                (and ties against $\WO$)
                and all strategies corresponding to a defection probability in $(\lowerD{\T_i}, \upperD{\T_i}]$
                have the same favourable best response, $\T_i$.)
        This implies that any \NE{} of $\msimgame''$ is a \NE{} of $(\game')_\simSubscriptMixed$.

        It remains to show that
            for any $\metaNE \in \NE((\game')_\simSubscriptMixed)$,
            there is $\mixedMetaStrategy' \in \NE(\msimgame'')$
            that satisfies $\utility(\mixedMetaStrategy') = \utility(\metaNE)$.
        When $\metaNE_1(\WO) = 1$,
            the suitable witness is $(\mixedMetaStrategy' := (\WO, \D)$
            (which clearly exists as a \NE{} in $\msimgame''$ as well as in $(\game')_\simSubscriptMixed$).
        When $\metaNE_1(\WO) < 1$,
            the observation (b) from the previous paragraph implies that
            \Pltwo{} can only use the strategies from within the set $\PureStrategies^{\msimgame'}_2$.
        And because \Plone{}'s set of strategies is identical in both games,
            any such \NE{} of $(\game')_\simSubscriptMixed$ is a NE of $\msimgame''$ as well.
        This concludes the proof of \Cref{lem:gPTG_reduction_to_Ti}\,(ii).

    \textit{The proof of \Cref{lem:gPTG_reduction_to_Ti}\,(iv):}
        This is immediately follows from the requirement that
        \begin{align*}
            \utility_1 (\T_i, \, \upperD{\T_i} \C \!+\! (1 \shortminus \upperD{\T_i}) \D )
            \!=\!
            \utility_1 (\T_{i+1}, \, \upperD{\T_i} \C \!+\! (1 \shortminus \upperD{\T_i}) \D )
            ,
        \end{align*}
        which is equivalent to
        \begin{align*}
            \upperD{\T_i} \Bad^{\T_i}_1 + (1 \shortminus \upperD{\T_i}) \Good^{\T_i}_1
            =
            \upperD{\T_i} \Bad^{\T_{i+1}}_1 + (1 \shortminus \upperD{\T_i}) \Good^{\T_{i+1}}_1
            .
        \end{align*}
\end{proof}

As a particular corollary of \Cref{lem:gPTG_reduction_to_Ti}, we can now complete the proof of \Cref{lem:gPTG_properties}.

\begin{lemma}[Restating \Cref{lem:gPTG_properties}\,(iii)-(iv)]
    Let $\game$ be
        a generalised partial-trust game.
    If $\utility_2(\FT, \C)$ is sufficiently high relative to other payoffs, then:
    \begin{enumerate}[label=(\roman*)]
        \setcounter{enumi}{2}
        \item $(\FT, \incentivise{\FT})$ is the unique \SE{} of $\game$.
        \item $(\FT, \incentivise{\FT})$ is a strict Pareto-improvement over the original \NE{} $(\WO, \D)$
            if and only if
            there is $\T \in \PureStrategies^\game_1$ s.t.
                $
                    \frac{
                        \Good^{\T}_1 - \Neutral_1
                    }{
                        \Neutral_1 - \Bad^{\T}_1
                    }
                    >
                    \frac{
                        \Good^{\FT}_1 - \Neutral_1
                    }{
                        \Neutral_1 - \Bad^{\FT}_1
                    }
                $.
    \end{enumerate}
\end{lemma}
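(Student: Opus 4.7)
The plan is to exploit the reduction in \Cref{lem:gPTG_reduction_to_Ti} to parameterise \Pltwo{}'s mixed commitments as $\mixedStrategy_2 = \delta \D + (1-\delta) \C$ for $\delta \in [0,1]$, with \Plone{}'s favourable best response becoming a piecewise-constant map taking values $\FT, \T_1, \ldots, \T_n, \WO$ on the successive subintervals $(\upperD{\T_{i-1}}, \upperD{\T_i}]$ (using \Cref{cl:gPTG_br_properties}\,(iv-a) to resolve ties at the transition points).

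For (iii), I would observe that on each such subinterval \Pltwo{}'s Stackelberg utility is linear and strictly increasing in $\delta$ with slope $\Awesome^{\T_i}_2 - \Good^{\T_i}_2 > 0$ (or identically zero on the $\WO$-piece), so the candidate maxima are
\begin{equation*}
  V_i \, := \, \upperD{\T_i}\,\Awesome^{\T_i}_2 + (1-\upperD{\T_i})\,\Good^{\T_i}_2, \quad i=0,\ldots,n,
\end{equation*}
(with $\T_0 := \FT$) together with $V_\WO = \Neutral_2 = 0$. By \Cref{lem:gPTG_reduction_to_Ti}\,(iv), $\upperD{\FT}$ depends only on \Plone{}'s payoffs; hence as $\utility_2(\FT, \C) = \Good^\FT_2$ grows, $V_0$ scales linearly with positive coefficient $1 - \upperD{\FT} > 0$, while $V_1, \ldots, V_n, V_\WO$ remain bounded. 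For $\utility_2(\FT, \C)$ sufficiently large, $V_0$ strictly exceeds every competitor, making $\incentivise{\FT}$ the unique optimal commitment; combined with $\fbr(\incentivise{\FT}) = \{\FT\}$, this identifies $(\FT, \incentivise{\FT})$ as the unique SE.

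Next, I would prove (iv). \Pltwo{} strictly improves since $V_0 > 0 = \Neutral_2$, so the question reduces to characterising when \Plone{} strictly improves, i.e., when $\utility_1(\FT, \incentivise{\FT}) > 0$. At $\delta = \upperD{\FT}$, \Plone{} is indifferent between $\FT$ and whichever strategy takes over next: this is $\WO$ when $n = 0$, forcing $\utility_1(\FT, \incentivise{\FT}) = \Neutral_1 = 0$ (no strict improvement), and $\T_1$ when $n \geq 1$. In the latter case, substituting $\upperD{\FT}/(1-\upperD{\FT}) = (\Good^\FT_1 - \Good^{\T_1}_1)/(\Bad^{\T_1}_1 - \Bad^\FT_1)$ from \Cref{lem:gPTG_reduction_to_Ti}\,(iv) into $\utility_1(\FT, \incentivise{\FT}) > 0$ and clearing denominators yields the ratio inequality $\Good^{\T_1}_1/(-\Bad^{\T_1}_1) > \Good^\FT_1/(-\Bad^\FT_1)$. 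By \Cref{lem:gPTG_reduction_to_Ti}\,(iii), $n \geq 1$ is equivalent to \emph{some} $\T$ having ratio strictly exceeding $\FT$'s; the ``if'' direction then follows because whenever $n \geq 1$ the sequence $\FT, \T_1, \ldots, \T_n$ is strictly increasing in ratio (past each indifference point $\upperD{\T_i}$, $\T_{i+1}$ lies strictly above $\T_i$ and hence has a strictly later zero-crossing), so $\T_1$ itself already witnesses the ratio condition.

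The hard part is the bookkeeping around favourable-best-response tiebreaking at the transition points $\upperD{\T_i}$ (a mis-assignment here would break the linearity-on-each-piece argument for the $V_i$) and the algebraic step translating the identity for $\upperD{\FT}/(1-\upperD{\FT})$ into the ratio condition; once these are set up, the equivalence in (iv) is immediate.
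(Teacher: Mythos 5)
Your proposal is correct and takes essentially the same route as the paper's proof: both rest on the reduction in \Cref{lem:gPTG_reduction_to_Ti}, with (iii) following from the observation that among the candidate optimal commitments only the value of $\incentivise{\FT}$ grows with $\utility_2(\FT, \C)$, and (iv) from the equivalence between the ratio condition and $n \geq 1$ given by \Cref{lem:gPTG_reduction_to_Ti}\,(iii). Your explicit $V_i$ bookkeeping and the favourable-best-response tie-breaking at the transition points (via \Cref{cl:gPTG_br_properties}\,(iv-a)) are handled correctly and just make the paper's argument more computational.
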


\begin{proof}
    (iii): From \Cref{lem:gPTG_reduction_to_Ti}\,(ii),
        it follows that \Pltwo{}'s Stackelberg strategy
        must be one of the strategies $\incentivise{\T}$, $\T \in \PureStrategies^\game_1$.
        Moreover, we know that
            the probabilities $\delta_\T$ are a function of \Plone{}'s payoffs
            and 
            of the utilities $\utility_2(\fbr(\incentivise{\T}), \incentivise{\T})$,
                only $\utility_2(\fbr(\incentivise{\FT}), \incentivise{\FT})$
                depends on $\utility_2(\FT, \C)$.
        As a result,
            if we hold fixed all utilities except for $\utility_2(\FT, \C)$
                (and $\utility_2(\FT, \D)$, which must be larger than $\utility_2(\FT, \C)$
                for $\game$ to qualify as a generalised PTG),
            then for sufficiently large $\utility_2(\FT, \C)$,
            \Pltwo{}'s SE strategy can only be equal to $\incentivise{\FT}$.

    (iv): First, note that in the game where \Plone{} only has actions $\{ \FT, \WO \}$,
        the Stackelberg equilibrium necessarily yields $\utility_1 = \utility_1(\WO, \D)$,
        and thus does not constitute a strict Pareto improvement over $(\WO, \D)$.
        Note also that any \SE{} that does not involve \Plone{} putting 100\% probability on $\WO$ will constitute a strict improvement in $\utility_2$ over $(\WO, \D)$.
    
        If we have
            $
                \frac{
                    \Good^{\T}_1 - \Neutral_1
                }{
                    \Neutral_1 - \Bad^{\T}_1
                }
                \leq
                \frac{
                    \Good^{\FT}_1 - \Neutral_1
                }{
                    \Neutral_1 - \Bad^{\FT}_1
                }
            $
            for every $\T \notin \{ \FT, \D \}$,
        every $\T \notin \{ \FT, \WO \}$ will be weakly dominated by the strategy $\mixedStrategy_1$ of the form
            $\mixedStrategy_1 = \alpha \cdot \FT + (1-\alpha) \cdot \WO$,
            $
                \alpha
                :=
                \frac{\Good^{\T}_1 - \Neutral_1}{\Good^{\FT}_1 - \Neutral_1}
            $.
        Since we have $\utility_2(\FT, \mixedStrategy_2) > \utility_2(\T, \mixedStrategy_2)$
                for any $\T \neq \FT$ and $\mixedStrategy_2$,
            this implies that \Plone{}'s part of the Stackelberg equilibrium will necessarily consist of best-responding by $\FT$.
        By the earlier observation, a SE of this form cannot give \Plone{} more utility than the \NE{} $(\WO, \D)$.

        Conversely, if we have
            $
                \frac{
                    \Good^{\T}_1 - \Neutral_1
                }{
                    \Neutral_1 - \Bad^{\T}_1
                }
                >
                \frac{
                    \Good^{\FT}_1 - \Neutral_1
                }{
                    \Neutral_1 - \Bad^{\FT}_1
                }
            $
            for some $\T$,
        then by \Cref{lem:gPTG_reduction_to_Ti}\,(iii),
            the \SE{} of $\game$ will have \Plone{} being indifferent between $\FT$ and some $\T_i \neq \WO$.
        By \Cref{cl:gPTG_br_properties} (and the construction of $\T_i$), this means that $\WO$ will not be a best-response to
            this strategy of \Pltwo{}.
        In particular, this means that this SE yields a payoff higher than $\utility_1(\WO, \D)$.
\end{proof}

\thmGenPTG*

\noindent
The proof of \Cref{thm:generalised_PTG} includes the proofs of Claims~\ref{cl:gPTG_remove_FT_WO_C}-\ref{cl:gPTG_subgame_NE_is_NE}.
We first give a high-level idea of the proof, and then proceed with the proof itself.

\begin{proof}[Proof sketch for \Cref{thm:generalised_PTG}]
    A key insight
        -- encompassed by \Cref{lem:gPTG_reduction_to_Ti} --
    is that as \Pltwo{}'s defection probability $\delta$ increases from $0$ to $1$,
        different strategies of \Plone{} will be a best response to $\delta \cdot \D + (1-\delta) \cdot C$.
    This creates a natural way for \Plone{} to match their level of trust to the expected rate of defection,
        with $\FT$ corresponding to maximum trust,
        $\WO$ corresponding to minimum trust,
        and with a range of partial-trust actions in between.
        (Note that some actions might not be the most appropriate response to \textit{any} value of $\delta$.)
        
    The proof then consists of showing that there is an \NE{} where
        \Pltwo{} mostly selects the highest possible rate of defection that still results in \Plone{} responding by $\FullTrust$ when simulating, denoted $\dFT$,
        but they sometimes try to exploit \Plone{} by defecting with probability $100\%$.
    In response,
        \Plone{} typically uses the \textit{second} highest level of trust.
        However, they also sometimes simulate,
            which results either in \FullTrust\ or Walking Out,
            depending on whether \Pltwo{}'s defection probability is $\dFT$ or $100\,\%$.

    The assumption on \Pltwo{}'s payoffs ensures that
        \Pltwo{} cannot improve their utility by switching to some intermediate level of defection.
    The implicit assumption that $\simcost$ is low ensures that
        the probability that \Pltwo{} puts on $100\% \,\D$
        -- which is proportional to $\simcost$ --
        is low enough that 
        \Pltwo{}'s overall strategy still incentivises the \textit{second} highest level of trust as a best response.
\end{proof}

\begin{proof}
    Let $\game$ be a generalised PTG.
    By applying \Cref{lem:gPTG_reduction_to_Ti}, we reduce $\msimgame$ to the subgame $\msimgame''$ given by
        \begin{align*}
            \PureStrategies^{\game''}_1
                & :=
                    \{ \FT, \T_1, \dots, \T_n, \WO \}
                \\
            \PureStrategies^{\msimgame''}_2
                & :=
                \{ \incentivise{\FT}, \incentivise{\T_1}, \dots, \incentivise{\T_n}, \D \}
            .
        \end{align*}
    Before analysing the equilibria of this game, we will reduce it even further,
        by assuming that \Plone{} only plays $\FT$ and $\WO$ as a result of simulation.

    \begin{claim}\label{cl:gPTG_remove_FT_WO_C}
        Let $\msimgame'''$ be the subgame of $\msimgame$ given by
            \begin{align*}
                \PureStrategies^{\msimgame'''}_1
                & :=
                    \{ \T_1, \dots, \T_n, \mSim \}
                    = \PureStrategies^{\msimgame''}_1 \setminus \{ \FT, \WO \}
                \\
                \PureStrategies^{\msimgame'''}_2
                & :=
                    \{ \incentivise{\FT}, \incentivise{\T_1}, \dots, \incentivise{\T_n}, \D \}
                    =
                    \PureStrategies^{\msimgame''}_2
            \end{align*}
        (where $\mSim$ still gives the same utilities as in $\msimgame''$).
        Then:
        \begin{enumerate}[label=(\roman*)]
            \item $(\WO, \D) \in \NE(\msimgame'') \setminus \NE(\msimgame''')$.
                Moreover, every $\mixedMetaStrategy \in \NE(\msimgame'')$ with $\mixedMetaStrategy_1(\WO) > 0$
                has $\utility_1(\mixedMetaStrategy) = \Neutral_1$.
            \item Any $\mixedMetaStrategy \in \NE(\msimgame'')$ with $\mixedMetaStrategy_1(\WO) = 0$ is a \NE{} of $\msimgame'''$.
            \item For any $\mixedMetaStrategy \in \NE(\msimgame''')$, $\mixedMetaStrategy$ is a \NE{} of $\msimgame''$
                if and only if $\utility_1(\mixedMetaStrategy) \geq \Neutral_1$.
            \item In particular, mixed-str. sim. \iPiN{} in $\game$
                if and only if
                $
                    \exists \mixedMetaStrategy \in \NE(\msimgame''') :
                        \utility_1(\mixedMetaStrategy) > \Neutral_1
                $.
        \end{enumerate}
    \end{claim}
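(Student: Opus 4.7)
The plan is to prove the four parts in order: (i) is the base case, (ii) and (iii) do the main structural work of reducing $\msimgame''$ to $\msimgame'''$, and (iv) follows immediately.

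For (i), the profile $(\WO, \D)$ is a NE of $\game$ by \Cref{lem:gPTG_properties}(i), hence of $\msimgame$ by \Cref{lem:NE_of_G_are_NE_of_msim}, and so of $\msimgame''$ via \Cref{lem:gPTG_reduction_to_Ti}; it is trivially not in $\NE(\msimgame''')$ since $\WO \notin \PureStrategies^{\msimgame'''}_1$. The second statement follows because $\utility_1(\WO, \symbolPlaceholder) \equiv \Neutral_1$ by property (2) of \Cref{def:gPTG}, so whenever $\WO$ is in \Plone{}'s equilibrium support it must be a best response and \Plone{}'s equilibrium utility equals $\Neutral_1$.

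For (ii), given $\mixedMetaStrategy \in \NE(\msimgame'')$ with $\mixedMetaStrategy_1(\WO) = 0$, I must first establish $\mixedMetaStrategy_1(\FT) = 0$ so that $\mixedMetaStrategy$ is a valid profile in $\msimgame'''$. First, $\mixedMetaStrategy_1(\mSim) > 0$: otherwise \Plone{} plays only direct trust actions, and since $\D$ strictly dominates $\C$ against every such action (property (3) of \Cref{def:gPTG}), \Pltwo{}'s only best response is pure $\D$---but then \Plone{}'s best response is $\WO$, contradicting $\mixedMetaStrategy_1(\WO) = 0$. Ruling out $\mixedMetaStrategy_1(\FT) > 0$ is the main technical challenge: it combines the joint indifference conditions for both players with the ``sufficiently high $\utility_2(\FT, \C)$'' assumption, which makes the potential gains \Pltwo{} would obtain from direct play of $\FT$ against $\incentivise{\FT}$ so large that any simultaneous weight on $\D$---needed to keep \Plone{}'s $\mSim$-component rational---becomes inconsistent with \Plone{}'s indifference between $\FT$ and $\mSim$. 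Once $\mixedMetaStrategy_1(\FT) = 0$ is established, the NE property transfers to $\msimgame'''$ automatically, because removing unused actions from one player's action set cannot introduce new profitable deviations.

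For (iii), re-adding $\WO$ and $\FT$ to \Plone{}'s action set can only break a NE of $\msimgame'''$ via a deviation to one of these actions. The $\WO$ deviation is not profitable iff $\utility_1(\mixedMetaStrategy) \geq \Neutral_1$, which is exactly the stated condition. For $\FT$, I use the interval structure from \Cref{lem:gPTG_reduction_to_Ti}: if $\supp(\mixedMetaStrategy_1)$ contains some $\T_i$ with $i \geq 1$, the effective defection probability of $\mesa{\mixedMetaStrategy}_2$ lies in $[\dFT, \upperD{\T_n}]$, on which $\FT$ is weakly dominated by $\T_i$; if the support is $\{\mSim\}$ alone, \Pltwo{} must mix non-trivially (else $\mSim$ is strictly dominated by some direct play in $\msimgame'''$), and the resulting $\D$-component suffices to make $\mSim$ weakly better than $\FT$ by the same computation as in (ii). Part (iv) follows immediately by combining (i)-(iii): a Pareto-improving NE of $\msimgame''$ must satisfy $\utility_1 > \Neutral_1$ (by (i)), and by (ii)-(iii) this is equivalent to the existence of a NE of $\msimgame'''$ with $\utility_1 > \Neutral_1$; the Pareto improvement on \Pltwo{}'s side is automatic because any such NE forces \Pltwo{} to use at least one $\incentivise{\T_i}$ with positive probability (else pure $\D$ would drive \Plone{} to $\mSim$ with $\utility_1 = \Neutral_1 - \simcost < \Neutral_1$), yielding $\utility_2 > \Neutral_2$. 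The main obstacle throughout is the argument for $\mixedMetaStrategy_1(\FT) = 0$ in (ii), where the quantitative interplay between the two indifference conditions and the ``high $\utility_2(\FT, \C)$'' assumption has to be carefully balanced.
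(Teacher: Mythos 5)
Your overall decomposition matches the paper's, and parts (i), (iii), and (iv) are essentially on track. The problem is part (ii), which you yourself flag as ``the main technical challenge'' and then do not actually prove. Asserting that ruling out $\mixedMetaStrategy_1(\FT) > 0$ follows from ``carefully balancing'' the two indifference conditions with the ``sufficiently high $\utility_2(\FT,\C)$'' assumption is a gesture, not an argument --- and it points in a direction that is both unnecessary and dubious (making $\utility_2(\FT,\C)$ large makes $\FT$ \emph{more} attractive for \Pltwo{} to incentivise, so it is far from clear this helps you exclude $\FT$ from \Plone{}'s support; moreover the claim as the paper proves it does not use that assumption at all). The actual argument is short: no $\mixedMetaStrategy \in \NE(\msimgame'')$ can have $\mixedMetaStrategy_2(\incentivise{\FT}) = 1$, because $\br(\incentivise{\FT}) = \{\FT, \T_1\}$ and against any mix of these \Pltwo{} profits by deviating to $\D$. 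Hence in every NE the aggregate defection probability $\mesa{\mixedMetaStrategy}_2(\D)$ is \emph{strictly} greater than $\upperD{\FT}$, and against any such strategy $\T_1$ strictly dominates $\FT$ (by the interval structure of \Cref{lem:gPTG_reduction_to_Ti}), so $\FT$ cannot be in \Plone{}'s support. Without this (or an equivalent completed argument), your proof of (ii) --- and hence of the whole claim --- has a hole exactly where the work is.

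Two smaller points. In (iii), the paper's route is cleaner than your case split: $\T_1$ weakly dominates $\FT$ against \emph{every} pure strategy in $\PureStrategies^{\msimgame''}_2$ (all of which have defection probability at least $\upperD{\FT}$), so a deviation to $\FT$ is never strictly profitable regardless of the support of $\mixedMetaStrategy_1$; your statement that $\FT$ is weakly dominated ``by $\T_i$'' on the whole relevant range is imprecise, since only $\T_1$ dominates $\FT$ on all of $[\upperD{\FT},1]$. In (iv), your argument for $\utility_2 > \Neutral_2$ works but should be stated as: every outcome of $\msimgame'''$ except $(\mSim,\D)$ already gives $\utility_2 > \Neutral_2$, and $(\mSim,\D)$ is not a NE of $\msimgame'''$.
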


    \begin{proof}[Proof of \Cref{cl:gPTG_remove_FT_WO_C}]
        (i): This holds trivially,
            since $\utility_1(\WO, \C) = \utility_1(\WO, \D) = \Neutral_1$
            and $\utility_1(\T, \D) < \Neutral_1$ for every $\T \neq \WO$.
            (The second part follows from the fact that any \NE{} with $\mixedMetaStrategy_1(\WO) > 0$ has
                $\utility_1(\mixedMetaStrategy) = \utility_1(\WO, \mixedMetaStrategy_2)$.)

        (ii): We will prove this by showing that
            there is no \NE{} of $\msimgame''$ where \Plone{} puts positive probability on $\FT$.
            (Since $\msimgame'''$ is a subgame of $\msimgame''$,
                any \NE{} of $\msimgame''$ that only uses actions that are also available in $\msimgame'''$
                will be a \NE{} of $\msimgame'''$ as well.)
            First, observe that there is no \NE{} of $\msimgame''$ where $\mixedMetaStrategy_2(\incentivise{\FT}) = 1$.
                (To see why, note that $\br(\incentivise{\FT}) = \{\FT, \T_1\}$,
                    so in any such \NE{}, \Plone{} could only use strategies $\FT$ or $\T_1$.
                However, this would in turn mean that \Plone{} could strictly increase their utility by deviating to $\D$.)
            However, this means that
                \Pltwo{}'s overall defection probability $\mesa{\mixedMetaStrategy}_2(\D)$,
                for any $\mixedMetaStrategy \in \NE(\msimgame'')$,
                is \textit{strictly higher} than $\incentivise{\FT}_2(\D) = \upperD{\FT}$.
            As a result, any $\mixedMetaStrategy \in \NE(\msimgame'')$ makes $\FT$ strictly dominated by $\T_1$,
                which means that no \NE{} of $\msimgame''$ puts non-zero probability on $\FT$.

        (iii): To prove (iii), suppose that
            $\mixedMetaStrategy \in \NE(\msimgame''')$ satisfies $\utility_1(\mixedMetaStrategy) \geq \Neutral_1$.
            To show that $\mixedMetaStrategy$ is a \NE{} of $\msimgame''$,
                we only need to verify that \Plone{} cannot improve their utility by deviating to $\FT$ or $\WO$.
            For $\WO$, this immediately follows from the assumption that $\utility_1(\mixedMetaStrategy) \geq \Neutral_1$.
            For $\FT$, this follows from the fact that for all of the strategies of \Pltwo{},
                $\T_1$ weakly dominates $\FT$
                (in fact, it strongly dominates it for all except $\incentivise{\FT}$; though this is not needed here).

        (iv): To prove this, recall that any \NE{} of $\game$ has $\utility(\mixedStrategy) = (\Neutral_1, \Neutral_2)$.
            The ``only if'' part of (iv) trivially follows from the fact that
                if $\utility_1(\mixedMetaStrategy) \leq \Neutral_1$,
                then $\mixedMetaStrategy$ cannot be a (strict) Pareto-improvement over $\Neutral_1, \Neutral_2)$.
            The ``if'' part of (iv) follows from the fact that \textit{any} strategy profile $\mixedMetaStrategy$ in $\msimgame'''$,
                with the exception of $(\mSim, \D)$,
                satisfies $\utility_2(\mixedMetaStrategy) > \Neutral_2$.
                (And $(\mSim, \D)$ is not a \NE{} of $\msimgame'''$, since \Pltwo{} could profitably deviate to $\incentivise{\FT}$.)
    \end{proof}

    To conclude the proof of \Cref{thm:generalised_PTG},
        we will describe an equilibrium $\metaNE$ of the subgame given by $\{\T_1, \mSim\}$ and $\{\incentivise{\FT}, \D \}$
        and demonstrate that none of the actions $\T_2$, $\dots$, $\T_n$ and $\incentivise{\T_1}$, $\dots$, $\incentivise{\T_n}$ constitute a profitable deviation,
        and show that $\utility_1(\metaNE) > \Neutral_1$.
    
    \begin{claim}\label{cl:gPTG_subgame_NE}
        Suppose that $\simcost < \Neutral_1 \setminus \Bad^{\T_1}_1$.
        Then the subgame
            $(
                \{ \T_1, \mSim \},
                \{ \incentivise{\FT}, \D\},
                \utility
            )$
        has a unique \NE{}.
        This \NE{} is of the form
        \begin{align*}
            \metaNE_1 & = \simProb \cdot \mSim + (1-\simProb) \cdot \T_1
            \\
            \metaNE_2 & = \evilProb \cdot \D + (1-\evilProb) \cdot \incentivise{\FT}
            ,
        \end{align*}
        where
        \begin{align*}
            \simProb : (1 - \simProb)
                & =
                \frac{
                    (1-\dFT) (\Awesome^{\T_1}_2 - \Good^{\T_1}_2)
                }{
                    \dFT (\Awesome^{\FT}_2 - \Good^{\FT}_2)
                    +
                    (\Good^{\FT}_2 - \Neutral_2)
                }
            \\
            \evilProb
                & =
                \frac{\simcost}{\Neutral_1 - \Bad^{\T_1}_1}
            .
        \end{align*}
    \end{claim}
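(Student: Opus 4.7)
The plan is to treat the statement as a pure $2\!\times\!2$ bimatrix problem and follow the classical recipe: write down the four payoff cells explicitly, show that no pure profile is an equilibrium, and solve the two indifference equations.

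First I would compute the payoff matrix. The key observation is that $\incentivise{\FT}$ sits exactly at the boundary $\upperD{\FT}$, so both $\FT$ and $\T_1$ are best responses to it; by property (4b) of \Cref{def:gPTG}, $\utility_2(\FT, \C) > \utility_2(\T_1, \C)$ and $\utility_2(\FT, \D) > \utility_2(\T_1, \D)$, hence $\fbr(\incentivise{\FT}) = \FT$, and by indifference $\utility_1(\FT, \incentivise{\FT}) = \utility_1(\T_1, \incentivise{\FT}) =: X$. Similarly $\fbr(\D) = \WO$. This yields $\utility_1(\T_1, \incentivise{\FT}) = X$, $\utility_1(\T_1, \D) = \Bad_1^{\T_1}$, $\utility_1(\mSim, \incentivise{\FT}) = X - \simcost$, $\utility_1(\mSim, \D) = \Neutral_1 - \simcost$, $\utility_2(\T_1, \incentivise{\FT}) = \dFT \Awesome_2^{\T_1} + (1-\dFT)\Good_2^{\T_1}$, $\utility_2(\T_1, \D) = \Awesome_2^{\T_1}$, $\utility_2(\mSim, \incentivise{\FT}) = \dFT \Awesome_2^{\FT} + (1-\dFT)\Good_2^{\FT}$, and $\utility_2(\mSim, \D) = \Neutral_2$.

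Next I would rule out the four pure profiles, which is where the hypothesis $\simcost < \Neutral_1 - \Bad_1^{\T_1}$ and the generalised-PTG axioms enter. Against $\incentivise{\FT}$, \Plone{} strictly prefers $\T_1$ (saves $\simcost$ on the same underlying outcome); against $\D$, the cost assumption gives $\Neutral_1 - \simcost > \Bad_1^{\T_1}$, so \Plone{} strictly prefers $\mSim$. On \Pltwo{}'s side, against $\T_1$ the strict inequality $\Awesome_2^{\T_1} > \dFT \Awesome_2^{\T_1} + (1-\dFT)\Good_2^{\T_1}$ (condition (3) of \Cref{def:gPTG}) makes $\D$ strictly better; against $\mSim$, $\D$ yields $\Neutral_2$ while $\incentivise{\FT}$ yields $\dFT \Awesome_2^{\FT} + (1-\dFT)\Good_2^{\FT} > \Good_2^{\FT} \cdot (\text{positive weight stuff}) > \Neutral_2$ by (3). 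The four cells thus form a matching-pennies cycle, forcing any NE to be fully mixed, and in a $2\!\times\!2$ such an equilibrium is automatically unique.

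Finally, I would solve the indifference conditions. Writing $\metaNE_2 = \evilProb \D + (1-\evilProb)\incentivise{\FT}$, \Plone{}'s indifference between $\T_1$ and $\mSim$ has the $(1-\evilProb)X$ terms cancel on both sides, leaving $\evilProb \Bad_1^{\T_1} = \evilProb \Neutral_1 - \simcost$, i.e. $\evilProb = \simcost/(\Neutral_1 - \Bad_1^{\T_1})$. Writing $\metaNE_1 = \simProb \mSim + (1-\simProb) \T_1$, \Pltwo{}'s indifference between $\incentivise{\FT}$ and $\D$ gives, after collecting terms, $\simProb \bigl[\dFT \Awesome_2^{\FT} + (1-\dFT)\Good_2^{\FT} - \Neutral_2\bigr] = (1-\simProb)(1-\dFT)(\Awesome_2^{\T_1} - \Good_2^{\T_1})$, and rewriting the bracket as $\dFT(\Awesome_2^{\FT} - \Good_2^{\FT}) + (\Good_2^{\FT} - \Neutral_2)$ yields the claimed ratio. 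Validity of the probabilities ($\simProb, \evilProb \in (0,1)$) follows from the cost assumption and from (3) ensuring both numerator and denominator of the $\simProb$ ratio are strictly positive. The main obstacle is purely bookkeeping — in particular correctly identifying $\fbr(\incentivise{\FT})$ via (4b) so that the payoffs in the $\mSim$ column are expressed in terms of $\FT$ rather than $\T_1$.
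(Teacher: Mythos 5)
Your proposal is correct and follows essentially the same route as the paper's proof: tabulate the four payoff cells (using that $\incentivise{\FT}$ makes \Plone{} indifferent between $\FT$ and $\T_1$ with $\fbr(\incentivise{\FT})=\FT$ and $\fbr(\D)=\WO$), rule out all four pure profiles via the cost bound and condition (3) to get a best-response cycle, and then solve the two indifference equations, which yield exactly the stated $\evilProb$ and $\simProb:(1-\simProb)$. The only cosmetic difference is that you justify $\fbr(\incentivise{\FT})=\FT$ directly from (4b) rather than citing the earlier structural lemma, which is equally valid.
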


    \begin{proof}[Proof of \Cref{cl:gPTG_subgame_NE}]     
        As the first step, we calculate the utilities corresponding to the outcomes
            $(\T_1, \incentivise{\FT})$, $(\mSim, \incentivise{\FT})$,
            $(\T_1, \D)$, and $(\mSim, \D)$.
        For $\D$, the unique best response of \Plone{} is $\WO$, so we have
            \begin{align*}
                \utility_1(\T_1, \D)
                    & =
                    \Bad^{\T_1}_1
                    \\
                \utility_2(\T_1, \D)
                    & =
                    \Awesome^{\T_1}_2
                \\
                \utility_1(\mSim, \D)
                    & =
                    \Neutral_1 - \simcost
                \\
                \utility_1(\mSim, \D)
                    & =
                    \Neutral_2
                .
            \end{align*}
        For $\incentivise{\FT}$, \Cref{lem:gPTG_reduction_to_Ti} gives
            \begin{align*}
                \upperD{\FT} : (1 - \upperD{\FT})
                & =
                (\Good^{\FT}_1 - \Good^{\T_1}_1) : (\Bad^{\T_1}_1 - \Bad^{\FT}_1)
            \end{align*}
        and by definition of $\mSim$
            -- and the fact that $\incentivise{\FT}$ makes \Plone{} indifferent between $\FT$ and $\T_1$ --
        we have
            \begin{align*}
                \utility_1(\T_1, \incentivise{\FT})
                & =
                    \upperD{\FT} \cdot \Bad^{\T_1}_1
                    +
                    (1\shortminus\upperD{\FT}) \cdot \Good^{\T_1}_1
                \\
                \utility_2(\T_1, \incentivise{\FT})
                & =
                    \upperD{\FT} \cdot \Awesome^{\T_1}_2
                    +
                    (1\shortminus\upperD{\FT}) \cdot \Good^{\T_1}_2
                \\
                \utility_1(\mSim, \incentivise{\FT})
                & =
                    \utility_1(\T_1, \D) - \simcost
                    \\
                    & =
                    \upperD{\FT} \cdot \Bad^{\T_1}_1
                    +
                    (1\shortminus\upperD{\FT}) \cdot \Good^{\T_1}_1
                    - \simcost
                \\
                \utility_2(\mSim, \incentivise{\FT})
                & =
                    \utility_2(\FT, \incentivise{\FT})
                    \\
                    & =
                    \upperD{\FT} \cdot \Awesome^{\FT}_2
                    +
                    (1\shortminus\upperD{\FT}) \cdot \Good^{\FT}_2
                    .
            \end{align*}

        From this, it follows that this $2 \times 2$ game cannot have a pure \NE{}.
            (For $(\T_1, \D)$,
                the assumption that $\simcost < \Neutral_1 - \Bad^{\T_1}_1$ implies that \Plone{} would have a profitable deviation to $\mSim$.
            For $(\T_1, \incentivise{\FT})$,
                \Pltwo{} would deviate to $\D$.
            For $(\mSim, \incentivise{\FT})$,
                \Plone{} would deviate to $\T_1$.
            For $(\mSim, \D)$,
                \Pltwo{} would deviate to $\incentivise{\FT}$.)
        
        Since the game has no pure \NE{},
            it must have a unique \NE{},
            and this \NE{} will be mixed.
        The values of $\simProb$ and $\evilProb$
            are obtained by a straightforward calculation,
            starting from the requirement that each player is indifferent between their two actions:
            \begin{align*}
                & \utility_1 \left( \mSim, \ \evilProb \cdot \D + (1-\evilProb) \cdot \incentivise{\FT} \right) =
                    \\
                    & \phantom{blaaaahblaah}
                    \utility_1 \left( \T_1, \ \evilProb \cdot \D + (1-\evilProb) \cdot \incentivise{\FT} \right),
                \\
                \\
                & \utility_2 \left( \simProb \cdot \mSim + (1-\simProb) \cdot \T_1 , \ \D \right)
                    \\
                    & \phantom{blaaaahbla}
                    =
                    \utility_2 \left( \simProb \cdot \mSim + (1-\simProb) \cdot \T_1 , \ \incentivise{\FT} \right)
                .
            \end{align*}
        For completeness, the derivations are as follows.
        For $\evilProb$, we calculate:
            \begin{align*}
                &
                    \utility_1 \left( \mSim, \ \evilProb \cdot \D + (1-\evilProb) \cdot \incentivise{\FT} \right)
                    \\
                    &
                    =
                    \utility_1 \left( \T_1, \ \evilProb \cdot \D + (1-\evilProb) \cdot \incentivise{\FT} \right)
                \\
                & \iff
                    \evilProb \cdot \Neutral_1
                    +
                    (1-\evilProb) \cdot \utility_1(\T_1, \incentivise{\FT}) - \simcost
                    \\
                    & \phantom{\iff \ }
                    =
                    \evilProb \cdot \Bad^{\T_1}_1
                    +
                    (1-\evilProb) \cdot \utility_1(\T_1, \incentivise{\FT})
                \\
                & \iff
                    \evilProb \cdot \Neutral_1 - \simcost
                    =
                    \evilProb \cdot \Bad^{\T_1}_1
                \\
                & \iff
                    \evilProb
                    =
                    \frac{\simcost}{\Neutral_1 - \Bad^{\T_1}_1}
                .
            \end{align*}
        For $\simProb$, we calculate:
            \begin{align*}
                &
                    \utility_2 \left( \simProb \cdot \mSim + (1-\simProb) \cdot \T_1 , \ \D \right)
                    \\ &
                    =
                    \utility_2 \left( \simProb \cdot \mSim + (1-\simProb) \cdot \T_1 , \ \incentivise{\FT} \right)
                \\
                & \iff
                    \simProb \cdot \utility_2 \left( \mSim , \D \right)
                    + (1-\simProb) \cdot \utility_2 \left( \T_1 , \D \right)
                    \\ & \phantom{\iff}
                    =
                    \simProb \utility_2 \left( \mSim , \incentivise{\FT} \right)
                    + (1 \shortminus \simProb) \utility_2 \left( \T_1 , \incentivise{\FT} \right)
                \\
                & \iff
                    \simProb \cdot \Neutral_2
                    + (1-\simProb) \cdot \Awesome^{T_1}_2
                    \\ & \phantom{\iff}
                    =
                    \simProb \cdot \left( 
                            \upperD{\FT} \cdot \Awesome^{\FT}_2
                            +
                            (1\shortminus\upperD{\FT}) \cdot \Good^{\FT}_2
                        \right)
                    \\ & \phantom{\iff} \phantom{aa}
                    + (1-\simProb) \cdot \left(
                            \upperD{\FT} \cdot \Awesome^{\T_1}_2
                            +
                            (1\shortminus\upperD{\FT}) \cdot \Good^{\T_1}_2
                        \right)
                \\
                & \iff
                    \simProb \cdot \Neutral_2
                    + (1-\simProb) \cdot \Awesome^{T_1}_2
                    \\ & \phantom{\iff}
                    =
                    \simProb \cdot \left( 
                            \upperD{\FT} \cdot ( \Awesome^{\FT}_2 - \Good^{\FT}_2 )
                            +
                            \Good^{\FT}_2
                        \right)
                    \\ & \phantom{\iff} \phantom{aa}
                    + (1-\simProb) \cdot \left(
                            \upperD{\FT} \cdot ( \Awesome^{\T_1}_2 - \Good^{\T_1}_2 )
                            +
                            \Good^{\T_1}_2
                        \right)
                \\
                & \iff
                    (1-\simProb) \cdot ( \Awesome^{T_1}_2 - \Neutral_2 )
                    \\ & \phantom{\iff}
                    =
                    \simProb \cdot \left[
                            \upperD{\FT} \cdot ( \Awesome^{\FT}_2 - \Good^{\FT}_2 )
                            +
                            ( \Good^{\FT}_2 - \Neutral_2 )
                        \right]
                    \\ & \phantom{\iff} \phantom{aa}
                    + (1-\simProb) \cdot \left[
                            \upperD{\FT} \cdot ( \Awesome^{\T_1}_2 - \Good^{\T_1}_2 )
                            +
                            ( \Good^{\T_1}_2 - \Neutral_2 )
                        \right]
                \\
                & \iff
                    (1 \shortminus \simProb) \cdot
                        \Big[
                            ( \Awesome^{T_1}_2 \shortminus \Neutral_2 )
                            \ 
                            -
                            \\
                            & \phantom{asdfsfasdfasdasdfas}
                            -
                            \upperD{\FT} \cdot ( \Awesome^{\T_1}_2 \shortminus \Good^{\T_1}_2 )
                            \shortminus
                            ( \Good^{\T_1}_2 \shortminus \Neutral_2 )
                        \Big]
                    \\ & \phantom{\iff}
                    =
                    \simProb \cdot \left[
                            \upperD{\FT} \cdot ( \Awesome^{\FT}_2 \shortminus \Good^{\FT}_2 )
                            +
                            ( \Good^{\FT}_2 \shortminus \Neutral_2 )
                        \right]
                \\
                & \iff
                    (1 \shortminus \simProb) \cdot
                        (1 - \upperD{\FT} ) ( \Awesome^{T_1}_2 \shortminus \Good^{\T_1} )
                    \\ & \phantom{\iff}
                    =
                    \simProb \cdot \left[
                            \upperD{\FT} \cdot ( \Awesome^{\FT}_2 \shortminus \Good^{\FT}_2 )
                            +
                            ( \Good^{\FT}_2 \shortminus \Neutral_2 )
                        \right]
                \\
                & \iff
                    \frac{
                        (1 - \upperD{\FT} )
                        ( \Awesome^{T_1}_2 \shortminus \Good^{\T_1}_2 )
                    }{
                        \upperD{\FT} \cdot ( \Awesome^{\FT}_2 \shortminus \Good^{\FT}_2 )
                        +
                        ( \Good^{\FT}_2 \shortminus \Neutral_2 )
                    }
                    =
                    \frac{\simProb}{1-\simProb}
                .
            \end{align*}
        This concludes the proof of \Cref{cl:gPTG_subgame_NE}.
    \end{proof}

    By combining \Cref{cl:gPTG_subgame_NE} with by \Cref{cl:gPTG_remove_FT_WO_C}\,(iv),
        we get that to conclude the proof of \Cref{thm:generalised_PTG},
            it suffices to prove the following claim.

    \begin{claim}\label{cl:gPTG_subgame_NE_is_NE}
        Let $\metaNE$ be the strategy from \Cref{cl:gPTG_subgame_NE}.
        Then:
        \begin{enumerate}[label=(\roman*)]
            \item There exists $c_0 > 0$ such that if $\simcost < c_0$, then
                (a) \Plone{} cannot increase their utility by deviating to $\T_2, \dots, \T_n$,
                and
                (b) $\utility_1(\metaNE) > \Neutral_1$.
            \item There exists $K \in \R$,
                    which does not depend on the payoffs corresponding to $\FT$,
                such that if $\Awesome^{\FT}_2 > \Good^{\FT}_2 \geq K$,
                then \Pltwo{} cannot increase their utility by deviating to $\incentivise{\T_1}, \incentivise{\T_2}, \dots,$ or $\incentivise{\T_n}$.
        \end{enumerate}
    \end{claim}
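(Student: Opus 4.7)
The plan is to handle the three pieces in turn, using continuity in $\simcost$ for (i) and a uniform bound on $\simProb$ for (ii). Throughout, I use that $\metaNE$ has \Plone{} mixing between $\mSim$ and $\T_1$, so by indifference \Plone{}'s equilibrium utility equals $\utility_1(\T_1, \metaNE_2)$.

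For (i)(b), I would take the limit $\simcost \to 0$, which drives $\evilProb \to 0$ and hence $\metaNE_2 \to \incentivise{\FT}$. Since $\incentivise{\FT}$ sits at the boundary where both $\FT$ and $\T_1$ are best responses, we have $\utility_1(\T_1, \incentivise{\FT}) = \utility_1(\FT, \incentivise{\FT})$, which equals the Stackelberg value of \Plone{} and strictly exceeds $\Neutral_1$ by non-triviality (\Cref{lem:gPTG_properties}\,(iv)). Continuity of $\utility_1(\T_1, \cdot)$ then yields $\utility_1(\metaNE) > \Neutral_1$ for all sufficiently small $\simcost$. For (i)(a), the same limit gives $\metaNE_2 = \incentivise{\FT}$, whose only best responses are $\FT$ and $\T_1$ (by \Cref{cl:gPTG_br_properties}, which places the other $\T_i$ on disjoint non-degenerate sub-intervals of defection probabilities); the alternative trust levels $\T_2, \dots, \T_n$ therefore yield strictly lower $\utility_1$ than the equilibrium value at the limit, and continuity extends this strict gap to a small $\simcost$-neighbourhood.

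For (ii), the key step is a uniform bound on $\simProb$. Using $\dFT, 1 \shortminus \dFT \in (0, 1)$ together with $\Awesome^{\FT}_2 > \Good^{\FT}_2$, the formula from \Cref{cl:gPTG_subgame_NE} implies
\begin{align*}
    \frac{\simProb}{1-\simProb}
    \ \leq \
    \frac{\Awesome^{\T_1}_2 - \Good^{\T_1}_2}{\Good^{\FT}_2 - \Neutral_2}
    ,
\end{align*}
where the right-hand side depends only on non-$\FT$ payoffs and shrinks to zero as $\Good^{\FT}_2 \to \infty$. Rearranging $\utility_2(\metaNE_1, \D) \geq \utility_2(\metaNE_1, \incentivise{\T_i})$ for each $i \in \{1, \dots, n\}$ produces a threshold condition
\begin{align*}
    \frac{\simProb}{1-\simProb}
    \ \leq \
    \frac{\Awesome^{\T_1}_2 - \utility_2(\T_1, \incentivise{\T_i})}{\utility_2(\T_i, \incentivise{\T_i}) - \Neutral_2}
    \ =: \ \alpha_i
    ,
\end{align*}
whose numerator is positive because $\upperD{\T_i} < 1$ and $\Good^{\T_1}_2 < \Awesome^{\T_1}_2$, and whose denominator is positive because both $\utility_2(\T_i, \C)$ and $\utility_2(\T_i, \D)$ exceed $\Neutral_2$. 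Each $\alpha_i$ depends only on non-$\FT$ payoffs, so setting $K := \Neutral_2 + (\Awesome^{\T_1}_2 - \Good^{\T_1}_2) / \min_i \alpha_i$ gives a single threshold, independent of $\FT$'s payoffs, that simultaneously rules out every deviation $\incentivise{\T_i}$ whenever $\Good^{\FT}_2 \geq K$.

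The main obstacle is the uniformity requirement in (ii): $\simProb$ itself depends on $\dFT$, which depends on \Plone{}'s $\FT$ utilities, so we cannot simply take $\simcost \to 0$ as in part (i). The workaround is the crude bound displayed above, which discards the $\dFT$-dependent terms in a sign-preserving way and thereby yields a threshold involving only the non-$\FT$ payoffs, as the statement of the claim requires.
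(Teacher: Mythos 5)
Your proposal is correct, and both parts check out against the structure the paper relies on; the route differs most noticeably in part (i). The paper proves (i) by computing the aggregate defection probability $\mesa{\metaNE}_2(\D) = \upperD{\FT} + \evilProb\,(1-\upperD{\FT})$ and showing it stays inside the open interval $(\upperD{\FT}, \upperD{\T_1})$ on which $\T_1$ is the \emph{unique} best response among $\{\FT, \T_1, \dots, \T_n, \WO\}$; this kills (i)(a) and (i)(b) in one stroke and yields an explicit threshold $c_0 = \bigl(\tfrac{\Good^{\T_1}_1 - \Good^{\T_2}_1}{\Bad^{\T_2}_1 - \Bad^{\T_1}_1} - \tfrac{\Good^{\FT}_1 - \Good^{\T_1}_1}{\Bad^{\T_1}_1 - \Bad^{\FT}_1}\bigr)(\Neutral_1 - \Bad^{\T_1}_1)$ depending only on \Plone{}'s payoffs. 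Your limit-plus-continuity argument establishes the same strict gaps at $\simcost = 0$ (finitely many competitors, each strictly worse against $\incentivise{\FT}$) and propagates them; it is non-constructive about $c_0$ but logically equivalent, and your appeal to non-triviality for $\utility_1(\FT,\incentivise{\FT}) > \Neutral_1$ is legitimate since \Cref{thm:generalised_PTG} assumes both non-triviality and the ``sufficiently high $\utility_2(\FT,\C)$'' hypothesis needed for \Cref{lem:gPTG_properties}\,(iii)--(iv). For (ii), the paper compares $\utility_2(\metaNE_1, \incentivise{\FT})$ against $\utility_2(\metaNE_1, \incentivise{\T_i})$ and reduces to the condition $\tfrac{1-\upperD{\T_i}}{1-\upperD{\FT}} > \tfrac{\Good^{\T_i}_2 - \Neutral_2 + \upperD{\T_i}(\Awesome^{\T_i}_2 - \Good^{\T_i}_2)}{\Good^\FT_2 - \Neutral_2 + \upperD{\FT}(\Awesome^\FT_2 - \Good^\FT_2)}$, then observes the right-hand side vanishes as $\Good^\FT_2 \to \infty$; you instead compare against $\utility_2(\metaNE_1, \D)$ (equivalent by \Pltwo{}'s indifference) and use the crude, sign-preserving bound $\tfrac{\simProb}{1-\simProb} \leq \tfrac{\Awesome^{\T_1}_2 - \Good^{\T_1}_2}{\Good^{\FT}_2 - \Neutral_2}$. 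These are the same calculation packaged differently, and both produce a threshold free of $\FT$'s payoffs (the paper's concrete bound is $\Good^\FT_2 - \Neutral_2 \geq \max_i \tfrac{\Awesome^{\T_i}_2 - \Neutral_2}{1 - \upperD{\T_i}}$, yours is $K = \Neutral_2 + (\Awesome^{\T_1}_2 - \Good^{\T_1}_2)/\min_i \alpha_i$). What the paper's version buys is explicit constants and a single unified geometric picture for part (i); what yours buys is shorter algebra in part (ii) by discarding the $\dFT$-dependent terms early rather than carrying them through the full chain of equivalences.
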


    \begin{proof}[Proof of \Cref{cl:gPTG_subgame_NE_is_NE}]
        \textit{(i):}
            By \Cref{lem:gPTG_reduction_to_Ti}\,(i),
            each $\T_i$ is the unique best response
                (among the strategies $\T_1, \dots, \T_n$ and $\T_{n+1} = \WO$)
            to any strategy $\mixedMetaStrategy_2$
                whose total defection probability
                    $
                        \mesa{\mixedMetaStrategy}_2(\D)
                        :=
                        \int \mixedStrategy_2(\D) \, \textnormal{d} \mixedMetaStrategy_2(\mixedStrategy_2)
                    $
                lies in $(\lowerD{\T_i}, \upperD{\T_i})$.
            To prove both (i-a) and (i-b), it thus suffices to show that
                $\mesa{\mixedMetaStrategy}^*_2 \in (\upperD{\FT}, \upperD{\T_1})$.
            By \Cref{cl:gPTG_subgame_NE}, we have
            \begin{align*}
                \mesa{\mixedMetaStrategy}^*_2
                & =
                \evilProb \cdot 1 + (1-\evilProb) \cdot \upperD{\FT}
                \\
                & =
                \upperD{\FT} + \evilProb \cdot (1-\upperD{\FT})
                \in (\upperD{\FT}, \upperD{\FT} + \evilProb)
                .
            \end{align*}
            It thus remains to verify that $\upperD{\FT} + \evilProb \leq \upperD{\T_1}$.
            
            Plugging in the values of
                $\evilProb$ from \Cref{cl:gPTG_subgame_NE},
                and $\upperD{\FT}$, $\upperD{\T_1}$ from \Cref{lem:gPTG_reduction_to_Ti}\,(iv),
                we get
            \begin{align*}
                &
                    \upperD{\FT} + \evilProb \leq \upperD{\T_1}
                    \\
                & \iff
                    \evilProb \leq \upperD{\T_1} - \upperD{\FT}
                    \\
                & \iff
                    \frac{\simcost}{\Neutral_1 - \Bad^{\T_1}_1}
                    \leq
                        \frac{\Good^{\T_1}_1 - \Good^{\T_2}_1}{\Bad^{\T_2}_1 - \Bad^{\T_1}_1}
                        -
                        \frac{\Good^{\FT}_1 - \Good^{\T_1}_1}{\Bad^{\T_1}_1 - \Bad^{\FT}_1}
                    \\
                & \iff
                    \simcost
                    \leq
                    c_0
                    \, ,
                    \\
                & \textnormal{where }
                    \simcost
                    :=
                    \left(
                    \frac{\Good^{\T_1}_1 \shortminus \Good^{\T_2}_1}{\Bad^{\T_2}_1 \shortminus \Bad^{\T_1}_1}
                    \shortminus
                    \frac{\Good^{\FT}_1 \shortminus \Good^{\T_1}_1}{\Bad^{\T_1}_1 \shortminus \Bad^{\FT}_1}
                    \right)
                    \left(
                        \Neutral_1 \shortminus \Bad^{\T_1}_1
                    \right)
                .
            \end{align*}
            By \Cref{lem:gPTG_reduction_to_Ti},
                $c_0$ is a positive constant
                (which only depends on the payoffs of \Plone{}),
            which shows that (i) holds.

        \textit{The proof of \Cref{cl:gPTG_subgame_NE_is_NE}\,(ii):}
            We need to show that, for sufficiently high $\Good^\FT_2$,
                we have
                $
                    \utility_2(\metaNE_1, \incentivise{\FT})
                    >
                    \utility_2(\metaNE_1, \incentivise{\T_i})
                $
                every $i = 1, \dots, n$.
            Plugging in the value of $\metaNE_1$ from \Cref{cl:gPTG_subgame_NE},
                it is straightforward (though somewhat lengthy) to calculate that
                \begin{align*}
                    \utility_2(\metaNE_1, \incentivise{\FT})
                    >
                    \utility_2(\metaNE_1, \incentivise{\T_i})
                \end{align*}
                holds if and only if
                \begin{align}\label{eq:gPTG_sufficiently_high_characterisation}
                        \frac{
                           1 - \upperD{\T_i}
                        }{
                            1 - \upperD{\FT}
                        }
                        >
                        \frac{
                           \Good^{\T_i}_2 \!\shortminus\! \Neutral_2 + \upperD{\T_i} ( \Awesome^{\T_i}_2 \!\shortminus\! \Good^{\T_i}_2) 
                        }{
                            \Good^\FT_2 \!\shortminus\! \Neutral_2 + \upperD{\FT} ( \Awesome^\FT_2 \!\shortminus\! \Good^\FT_2) 
                        }
                    .
                \end{align}

            For completeness, here is the derivation of this equivalence.
            Proceeding as in the proof of \Cref{cl:gPTG_subgame_NE}, we observe that
                for any $i$, $\utility_2(\metaNE_1, \incentivise{\T_i})$ can be expressed
                \begin{align*}
                    \utility_2(\metaNE_1, \incentivise{\T_i})                    
                    & =
                        \simProb \cdot \left[
                            \upperD{\T_i} \cdot ( \Awesome^{\T_i}_2 - \Good^{\T_i}_2 )
                            +
                            \Good^{\T_i}_2
                        \right]
                        \\ & \phantom{=}
                        + (1-\simProb) \cdot \left[
                                \upperD{\T_i} \cdot ( \Awesome^{\T_1}_2 - \Good^{\T_1}_2 )
                                +
                                \Good^{\T_1}_2
                            \right]
                    .
                \end{align*}
            From this, it follows that
                \begin{align}
                    &
                    \nonumber
                        \utility_2(\metaNE_1, \incentivise{\FT})
                        >
                        \utility_2(\metaNE_1, \incentivise{\T_i})
                        \iff
                    \\
                    & \iff
                        \frac{\simProb}{1-\simProb}
                        >
                        \\
                        \nonumber
                        &
                        >
                        \frac{
                            \left[
                                \upperD{\T_i} \cdot ( \Awesome^{\T_1}_2 \shortminus \Good^{\T_1}_2 )
                                \! + \!
                                \Good^{\T_1}_2
                            \right]
                            \shortminus
                            \left[
                                \upperD{\FT} \cdot ( \Awesome^{\T_1}_2 \shortminus \Good^{\T_1}_2 )
                                \! + \!
                                \Good^{\T_1}_2
                            \right]
                        }{
                            \left[
                                \upperD{\FT} \cdot ( \Awesome^{\FT}_2 \shortminus \Good^{\FT}_2 )
                                \! + \!
                                \Good^{\FT}_2
                            \right]
                            \shortminus
                            \left[
                                \upperD{\T_i} \cdot ( \Awesome^{\T_i}_2 \shortminus \Good^{\T_i}_2 )
                                \! + \!
                                \Good^{\T_i}_2
                            \right]
                        }
                    \\
                    & \iff
                        \frac{
                            (1-\dFT) (\Awesome^{\T_1}_2 - \Good^{\T_1}_2)
                        }{
                            \dFT (\Awesome^{\FT}_2 - \Good^{\FT}_2)
                            +
                            (\Good^{\FT}_2 - \Neutral_2)
                        }
                        >
                        \\
                        \nonumber
                        &
                        \frac{
                            \left( \upperD{\T_i} - \upperD{\FT} \right)
                            \left( \Awesome^{\T_1}_2 \shortminus \Good^{\T_1}_2 \right)
                        }{
                            \upperD{\FT} ( \Awesome^{\FT}_2 \! \shortminus \! \Good^{\FT}_2 )
                            \! + \! 
                            ( \Good^{\FT}_2 \! \shortminus \! \Neutral_2 )
                        \ \shortminus \ 
                            \upperD{\T_i} ( \Awesome^{\T_i}_2 \! \shortminus \! \Good^{\T_i}_2 )
                            \! \shortminus \! 
                            ( \Good^{\T_i}_2 \! \shortminus \! \Neutral_2 )
                        }
                    \\
                    & \iff
                        \label{eq:gPTG_K_high_enough_condition}
                        \frac{
                            (1-\dFT)
                        }{
                            \dFT (\Awesome^{\FT}_2 - \Good^{\FT}_2)
                            +
                            (\Good^{\FT}_2 - \Neutral_2)
                        }
                        >
                        \\
                        \nonumber
                        &
                        \frac{
                            ( 1 - \upperD{\FT} ) - ( 1 - \upperD{\T_i} )
                        }{
                            \upperD{\FT} ( \Awesome^{\FT}_2 \! \shortminus \! \Good^{\FT}_2 )
                            \! + \! 
                            ( \Good^{\FT}_2 \! \shortminus \! \Neutral_2 )
                        \ \shortminus \ 
                            \upperD{\T_i} ( \Awesome^{\T_i}_2 \! \shortminus \! \Good^{\T_i}_2 )
                            \! \shortminus \! 
                            ( \Good^{\T_i}_2 \! \shortminus \! \Neutral_2 )
                        }
                    .
                \end{align}
            (Note that for sufficiently high $\Good^\FT_2$,
                the denominator of the right-hand side of \eqref{eq:gPTG_K_high_enough_condition} will always be positive.)
            We can further simplify this inequality by observing that \eqref{eq:gPTG_K_high_enough_condition} is of the form
                $
                    \frac{a}{x}
                    >
                    \frac{a - b}{x - y}
                $,
            which is equivalent to
                $ ax - ay > ax - bx $,
            which is equivalent to
                $
                    \frac{b}{a}
                    >
                    \frac{y}{x}
                $.
            Applying this observation to \eqref{eq:gPTG_K_high_enough_condition}, we get that
                \begin{align}
                    &
                        \nonumber
                        \utility_2(\metaNE_1, \incentivise{\FT})
                        >
                        \utility_2(\metaNE_1, \incentivise{\T_i})
                    \\
                    & \iff
                        \label{eq:gPTG_K_high_enough_condition_two}
                        \frac{
                            1 - \upperD{\T_i}
                        }{
                            1 - \upperD{\FT}
                        }
                        >
                        \frac{
                            \upperD{\T_i} ( \Awesome^{\T_i}_2 \shortminus \Good^{\T_i}_2 )
                            +
                            ( \Good^{\T_i}_2 \shortminus \Neutral_2 )
                        }{
                            \upperD{\FT} ( \Awesome^{\FT}_2 \shortminus \Good^{\FT}_2 )
                            + 
                            ( \Good^{\FT}_2 \shortminus \Neutral_2 )
                        }
                    .
                \end{align}
            
            \noindent
            (Note that $\upperD{\FT}$ is strictly lower than $\upperD{\T_i}$ by \Cref{lem:gPTG_reduction_to_Ti},
                so for the inequality to hold, the right-hand side denominator must be strictly larger than the numerator.
                This implies that if this inequality holds,
                    the transformation from \eqref{eq:gPTG_K_high_enough_condition} to \eqref{eq:gPTG_K_high_enough_condition_two}
                    was justified.)
            
            Finally, we observe that the inequality \eqref{eq:gPTG_K_high_enough_condition_two} holds
                for any sufficiently high $\Good^\FT_2 := \utility_2(\FT, \C)$
                (and
                    $\Awesome^\FT_2 := \utility_2(\FT, \D) > \Good^\FT_2$,
                    which is required for $\game$ to qualify as a generalised PTG).
            To see this, note that
                by \Cref{lem:gPTG_reduction_to_Ti}, the probabilities $\upperD{\T}$ only depend on the payoffs of \Plone{}.
            This means that the left hand-side of \eqref{eq:gPTG_K_high_enough_condition_two} is a positive constant,
                and the right-hand side can be made arbitrarily small by choosing $\Good^\FT_2$
                    (and $\Awesome^\FT_2 \geq \Good^\FT_2$)
                that is sufficiently high (relative to the other payoffs).
            As a concrete bound, it clearly suffices if $\Good^\FT_2$ satisfies
                \begin{align*}
                    \Good^\FT_2 - \Neutral_2
                    \geq
                    \max_{i = 1, \dots, n}
                        \frac{\Awesome^{\T_i}_2 - \Neutral_2}{1 - \upperD{\T_i}}
                    .
                \end{align*}
            This concludes the proof of \Cref{cl:gPTG_subgame_NE_is_NE}.
    \end{proof}

    \vk{A corollary of \Cref{cl:gPTG_subgame_NE_is_NE} is that
        (1) The strategies $\T_1, \dots, \T_n$ are strictly dominated against \emph{any} convex combination of $\incentivise{\FT}$ and $\D$, not just the one corresponding to the $\evilProb$ from \Cref{cl:gPTG_subgame_NE}.
        (2) The strategies $\incentivise{\T_1}, \dots, \incentivise{\T_n}$ are strictly dominated against \emph{any} convex combination of $\incentivise{\T_1}$ and $\mSim$, not just the one corresponding to the $\simProb$ from \Cref{cl:gPTG_subgame_NE}.
        However, by being more general about the proof, we could try to show
            --- or rather, explore the conditions necessary for ---
            the non-existence of other equilibria, where the players use other strategies.
    }

    This concludes the proof of \Cref{thm:generalised_PTG}.
\end{proof}

To simplify potential re-use of this result,
    the following technical corollary (of the proof of \Cref{thm:generalised_PTG})
    gathers the key definitions and requirements into one place.

\begin{corollary}[The form of the ``second-most trust'' simulation equilibrium of a generalised partial-trust game]\label{cor:characterisation_of_PTG_NE}
    Let $\game$ be a generalised PTG
    with payoffs
        \begin{align*}
            (\Neutral_1, \Neutral_2) & := \utility(\WO, \C) = \utility(\WO, \D) \\
            (\Good^\T_1, \Good^\T_2) & := \utility(\T, \C) \\
            (\Bad^\T_1, \Awesome^\T_2) & := \utility(\T, \D)
            .
        \end{align*}
    Denote
        $\FT = \argmax_{\T \in \PureStrategies^\game_1} \, \utility_1(\FT, \C)$
    and suppose that the set of all non-redundant actions of \Plone{}
        (in the sense of \Cref{lem:gPTG_reduction_to_Ti})
    is
        \begin{align*}
            \PureStrategies^{\game'}_1 := \{ \FT, \T_1, \dots, \T_n, \WO \}
            .
        \end{align*}
    Let $\upperD{\T} \in (0, 1)$ be the numbers whose ratios $\frac{\upperD{\T}}{1-\upperD{\T}}$ satisfy
        \begin{align*}
            \upperD{\FT} : (1 - \upperD{\FT})
                & :=
                (\Good^{\FT}_1 - \Good^{\T_1}_1) : (\Bad^{\T_1}_1 - \Bad^{\FT}_1)
            \\
            \upperD{\T_i} : (1 - \upperD{\T_i})
                & :=
                (\Good^{\T_i}_1 - \Good^{\T_{i+1}}_1) : (\Bad^{\T_{i+1}}_1 - \Bad^{\T_i}_1)
            \\
            \upperD{\T_n} : (1 - \upperD{\T_n})
                & :=
                (\Good^{\T_n}_1 - \Neutral_1) : (\Neutral_1 - \Bad^{\T_n}_1)
        \end{align*}
    and denote
        \begin{align*}
            \incentivise{\T} & := \upperD{\T} \cdot \D + (1-\upperD{\T}) \cdot \C
            .
        \end{align*}
    Recall that by \Cref{lem:gPTG_reduction_to_Ti}, we have
        \begin{align*}
            \utility_2(\T, \incentivise{\T})
            =
            \max \, \{
                \utility_2(\T, \mixedStrategy_2)
                \mid
                \mixedStrategy_2 \in \MixedStrategies^\game_2
                , \,
                \br(\mixedStrategy_2) \ni \T
            \}.
        \end{align*}
    Finally, let $\metaNE$ be the strategy in $\msimgame$ given by
        \begin{align*}
            \metaNE_1 & := \simProb \cdot \mSim + (1-\simProb) \cdot \T_1
            \\
            \metaNE_2 & := \evilProb \cdot \D + (1-\evilProb) \cdot \incentivise{\FT}
            \ ,
        \end{align*}
    where
        \begin{align*}
            \simProb : (1 - \simProb)
                & :=
                \frac{
                    (1-\dFT) (\Awesome^{\T_1}_2 - \Good^{\T_1}_2)
                }{
                    \dFT (\Awesome^{\FT}_2 - \Good^{\FT}_2)
                    +
                    (\Good^{\FT}_2 - \Neutral_2)
                }
            \\
            \evilProb
                & :=
                \frac{\simcost}{\Neutral_1 - \Bad^{\T_1}_1}
                .
        \end{align*}

    Then $\metaNE$ is a simulation equilibrium of $\msimgame$
    if and only if
        \begin{align*}
            & \forall i \in \{ 1, \dots, n \} : \\
            & \phantom{aasdfaaa}
                \frac{
                    \upperD{\T_i} ( \Awesome^{\T_i}_2 \shortminus \Good^{\T_i}_2 )
                    +
                    ( \Good^{\T_i}_2 \shortminus \Neutral_2 )
                }{
                    \upperD{\FT} ( \Awesome^{\FT}_2 \shortminus \Good^{\FT}_2 )
                    + 
                    ( \Good^{\FT}_2 \shortminus \Neutral_2 )
                }
                \leq
                \frac{
                    1 - \upperD{\T_i}
                }{
                    1 - \upperD{\FT}
                }
            \ .
        \end{align*}
\end{corollary}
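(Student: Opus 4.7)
\textbf{Proof plan for \Cref{cor:characterisation_of_PTG_NE}.}
The plan is to verify the mutual best-response property of $\metaNE$ by splitting possible deviations into two groups and reducing each to a condition already studied in the proof of \Cref{thm:generalised_PTG}. First, by \Cref{lem:gPTG_reduction_to_Ti}, any NE question about $\msimgame$ is equivalent to the same question about the finite subgame $\msimgame''$, whose pure strategy sets are $\{\mSim, \FT, \T_1, \dots, \T_n, \WO\}$ for \Plone{} and $\{\incentivise{\FT}, \incentivise{\T_1}, \dots, \incentivise{\T_n}, \D\}$ for \Pltwo{}. So it suffices to check that no pure deviation in $\msimgame''$ is profitable against $\metaNE$.

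Next, I would handle the ``internal'' indifferences coming from the $2 \times 2$ core $\{\mSim,\T_1\} \times \{\incentivise{\FT},\D\}$. A direct recomputation, identical to that carried out in the proof of \Cref{cl:gPTG_subgame_NE}, shows that the specific constants $\simProb$ and $\evilProb$ in the corollary's statement are precisely those making \Plone{} indifferent between $\mSim$ and $\T_1$ against $\metaNE_2$, and making \Pltwo{} indifferent between $\incentivise{\FT}$ and $\D$ against $\metaNE_1$. So these four pure actions are best responses to $\metaNE$ without any additional assumption.

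I would then dispatch \Plone{}'s remaining deviations, namely to $\FT$, $\T_2, \dots, \T_n$, and $\WO$, using the interval structure of $\inverseBR(\T_i)$ from \Cref{lem:gPTG_reduction_to_Ti}. The total defection probability under $\metaNE_2$ is $\mesa{\metaNE}_2(\D) = \upperD{\FT} + \evilProb(1-\upperD{\FT})$, which strictly exceeds $\upperD{\FT}$, so $\FT$ is strictly dominated by $\T_1$; and as long as $\mesa{\metaNE}_2(\D) \leq \upperD{\T_1}$, $\T_1$ is a (pure) best response among $\{\FT,\T_1,\dots,\T_n,\WO\}$, ruling out all the other \Plone{} deviations at once. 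This latter inequality is exactly $\simcost \leq c_0$ with $c_0$ as in \Cref{cl:gPTG_subgame_NE_is_NE}(i); since the corollary is meant as a companion to \Cref{thm:generalised_PTG}, I would note this smallness-of-$\simcost$ condition as an implicit hypothesis (inherited from the low-$\simcost$ regime the corollary summarises), rather than adding it to the stated inequalities.

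The main content, and the only genuine ``if and only if'', concerns \Pltwo{}'s deviations to $\incentivise{\T_i}$ for $i \in \{1,\dots,n\}$. For this I would follow the algebra of \Cref{cl:gPTG_subgame_NE_is_NE}(ii) in full: expand $\utility_2(\metaNE_1, \incentivise{\T_i})$ and $\utility_2(\metaNE_1, \incentivise{\FT})$ using $\utility_2(\T_1, \incentivise{\T}) = \upperD{\T}(\Awesome^{\T_1}_2 - \Good^{\T_1}_2) + \Good^{\T_1}_2$ (since $\T_1 \in \br(\incentivise{\T})$ whenever $\upperD{\T} \leq \upperD{\T_1}$) and $\utility_2(\mSim, \incentivise{\T}) = \upperD{\T}(\Awesome^{\T}_2 - \Good^{\T}_2) + \Good^{\T}_2$. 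Substituting the explicit value of $\simProb$ and applying the standard simplification $\tfrac{a}{x} > \tfrac{a-b}{x-y} \iff \tfrac{b}{a} > \tfrac{y}{x}$ reduces the deviation inequality to exactly the stated ratio bound. I expect this algebraic reduction to be the only delicate step; once it is recorded, combining it with the preceding two points yields the biconditional, since the two sides of the inequality correspond precisely to \Pltwo{}'s utility from $\incentivise{\T_i}$ and from $\incentivise{\FT}$ against $\metaNE_1$.
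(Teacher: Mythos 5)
Your proposal is correct and takes essentially the same route as the paper, whose proof of this corollary is literally the one-liner ``immediate from the proof of \Cref{thm:generalised_PTG}'': you correctly identify that the biconditional reduces to \Pltwo{}'s deviations to $\incentivise{\T_1},\dots,\incentivise{\T_n}$ via the ratio computation of \Cref{cl:gPTG_subgame_NE_is_NE}\,(ii), while the $2\times2$ indifferences and \Plone{}'s remaining deviations are handled by \Cref{cl:gPTG_subgame_NE} and \Cref{cl:gPTG_subgame_NE_is_NE}\,(i). Your observation that the ``only if'' direction silently requires the low-$\simcost$ hypothesis $\simcost \leq c_0$ (so that $\T_1$ remains a best response among the base-game actions) is a genuine and correctly handled subtlety that the corollary's statement leaves implicit.
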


\begin{proof}
    This is an immediate corollary of the proof of \Cref{thm:generalised_PTG}.
\end{proof}

\section{Proofs \texorpdfstring{for \Cref{sec:sub:positive_coordination} (Coordination Games)}{Related to Coordination Games}}\label{sec:app:coordination}

In this section, we present the proofs related to claims about coordination games.

\lemmaCoordinationGameProperties*

\begin{proof}
    This is standard, and follows, for example, from the proof of Proposition~16 in \citet{kovarik2023game}.
    The cited proof shows that $
            \supp (\mixedStrategy^{\actionSubset}_\pl)
            = \{(a_k,b_k) : k \in S\}
        $ and also that
    $
        \utility_\pl(\mixedStrategy^{\actionSubset})
        =
        \left(
            \sum_{k \in \actionSubset} \frac{1}{\utility_\pl(\actionPl^k, \actionOpp^k)}
        \right)^{-1}
    $,
    from which the second part of the lemma immediately follows.
\end{proof}

\propCoordinationGames*
\begin{proof}
    \textit{The proof of (i):}
    We will prove (i) by separately considering two cases:
        (1) The case where \Pltwo{} has at least two distinct optimal pure-commitments.
        (2) The case where \Pltwo{} has a unique optimal pure-commitment.

    (1) Suppose that
            $
                \actionSubset^*
                :=
                \argmax
                    \{
                    \utility_2(\actionPl^k, \actionOpp^k)
                    \mid
                        k \in \{ 1, \dots, \nOfActions \}
                    \}
            $
        has at least two different elements.
        We will show that $\msimgame$ has some
            $\metaNE$ with $\utility_2(\metaNE) = \max_k \utility_2 (\actionPl^k, \actionOpp^k)$;
        To find $\metaNE$, pick any two distinct elements $k^*, l^*$ of $\actionSubset^*$
            and let
                $
                    \metaNE_2
                    :=
                        \frac{1}{2} \cdot \hat \actionOpp^{k^*}
                        +
                        \frac{1}{2} \cdot \hat \actionOpp^{l^*}
                $.
        Clearly, the only candidates for \Plone{}'s best-response to $\metaNE_2$ are $\actionPl^{k^*}$, $\actionPl^{k^*}$, $\mSim$
            (since all other actions yield $\utility_1(\actionPl^k, \metaNE_2) = 0$).
        Comparing the utilities corresponding to $\actionPl^{k^*}$ (or $\actionPl^{l*}$) and $\mSim$, we get
            \begin{align*}
                & \utility_1(\mSim, \metaNE_2) > \utility_1(\actionPl^{k^*}, \metaNE_2)
                \\
                & \iff
                    \frac{1}{2} \cdot \utility_1(\actionPl^{k^*}, \actionOpp^{k^*})
                    +
                    \frac{1}{2} \cdot \utility_1(\actionPl^{l^*}, \actionOpp^{l^*})
                    - \simcost
                    >
                    \\
                    & \phantom{\iff}
                    > \ \ 
                    \frac{1}{2} \cdot \utility_1(\actionPl^{k^*}, \actionOpp^{k^*})
                    +
                    \frac{1}{2} \cdot \Bad_1
                \\
                & \Longleftarrow
                    \frac{1}{2} \cdot \min_{k\leq \nOfActions}
                        \utility_1(\actionPl^k, \actionOpp^k)
                    - \simcost
                    >
                    \frac{1}{2} \cdot \Bad_1
                \\
                & \iff
                    \simcost
                    < \frac{1}{2} \left(
                        \min_{k\leq \nOfActions}
                            \utility_1(\actionPl^k, \actionOpp^k)
                        -
                        \Bad_1
                    \right)
                .
            \end{align*}
        Since the right-hand side of the latest inequality is a positive constant,
            this shows that for any sufficiently low $\simcost$,
            $\msimgame$ has a simulation equilibrium that satisfies $\utility_2(\metaNE) = \max_{k\leq \nOfActions} \utility_2(\actionPl^k, \actionOpp^k)$.

    (2) Suppose that $\game$ as a unique optimal pure-commitment for \Pltwo{}, denoted $\actionOpp^{k_2}$.
        Pick some $k_1 \neq k_2$.
        We claim that $\msimgame$ has an \NE{} $\metaNE$ of the form
            \begin{align*}
                \metaNE_1
                & =
                \simProb \cdot \mSim + (1-\simProb) \actionPl^{k_1}
                \\
                \metaNE_2
                & =
                \evilProb \cdot \actionOpp^{k_2} + (1-\evilProb) \actionOpp^{k_1}
                ,
            \end{align*}
            where
            \begin{align*}
                \simProb
                    & =
                    \frac{\utility_2(\actionPl^{k_1}, \actionOpp^{k_1})}{\utility_2(\actionPl^{k_2}, \actionOpp^{k_2})}
                \\
                \evilProb
                    & =
                    \frac{\simcost}{\utility_1(\actionPl^{k_2}, \actionOpp^{k_2})}
                .
            \end{align*}
        To see that $\metaNE$ is an \NE{}, note that:
            \Pltwo{} cannot have any profitable deviation from $\metaNE$
                because for $k \notin \{k_1, k_2\}$, we have
                \begin{align*}
                    \utility_2(\metaNE_1, \actionOpp^k)
                    & =
                    \simProb \cdot \utility_2(\actionPl^k, \actionOpp^k) + (1-\simProb) \cdot 0
                    \\
                    & <
                    \simProb \cdot \utility_2(\actionPl^{k_2}, \actionOpp^{k_2}) + (1-\simProb) \cdot 0
                    \\
                    & =
                    \utility_2(\metaNE_1, \actionOpp^{k_2})
                    =
                    \utility_2(\metaNE)
                    .
                \end{align*}
            For \Plone{},
                the only $\actionPl^k \notin \supp{\metaNE_1}$ which can give $\utility_1(\actionPl^k, \metaNE_2) > 0$
                is $\actionPl^{k_2}$.
                However, the corresponding utility satisfies
                    \begin{align*}
                        \utility_1(\actionPl^{k_2}, \metaNE_2)
                        & =
                        \evilProb \cdot \utility_1(\actionPl^{k_2}, \actionOpp^{k_2})
                        \\
                        & =
                            \frac{\simcost}{\utility_1(\actionPl^{k_2}, \actionOpp^{k_2})}
                            \cdot
                            \utility_1(\actionPl^{k_2}, \actionOpp^{k_2})
                        = \simcost
                        .
                    \end{align*}
                We see that
                    \begin{align*}
                        \utility_1(\metaNE)
                        & =
                            \utility_1(\actionPl^{k_1}, \metaNE_2)
                        \\
                        & =
                            (1-\evilProb) \cdot
                                \utility_1(\actionPl^{k_1}, \actionOpp^{k_1})
                            + \evilProb \cdot 0
                        \\
                        & =
                            (1-\frac{\simcost}{\utility_1(\actionPl^{k_2}, \actionOpp^{k_2})}) \cdot
                                \utility_1(\actionPl^{k_1}, \actionOpp^{k_1})
                        ,
                    \end{align*}
                    which shows that $\utility_1(\metaNE)$ converges to $\utility_1(\actionPl^{k_1}, \actionOpp^{k_1})$
                    as $\simcost \to 0_+$.
                This implies that for sufficiently low $\simcost$,
                    $\actionPl^{k_2}$ is not a profitable deviation for \Plone{},
                and thus $\metaNE$ is a simulation equilibrium of $\msimgame$.
            This concludes the proof of the case (2) of (i),
            and thus concludes the whole proof of (i).

    \textit{The proof of (ii):}
        Let $\metaNE$ be some simulation equilibrium of $\msimgame$.
        First, note that the inequality
            \begin{align*}
                \utility_2(\metaNE)
                \leq
                \max_{k\leq n} \utility_2(\actionPl^k, \actionOpp^k)
            \end{align*}
            holds trivially,
                since $\max_{k\leq n} \utility_2(\actionPl^k, \actionOpp^k)$ is the maximum that \Pltwo{} can gain in the whole game.
            The inequality
            \begin{align*}
                \utility_1(\metaNE)
                <
                \max_{k\leq n} \utility_1(\actionPl^k, \actionOpp^k)
            \end{align*}
            holds for the same reason
                (since in any simulation equilibrium, \Plone{} is putting a positive probability on $\mSim$,
                    which decreases their utility by $\simcost > 0$).

        Second,
            observe that when $\metaNE_1(\mSim) > 0$,
            \Pltwo{}'s mixed-meta strategy $\metaNE_2$ can only mix over \textit{pure} strategies.
            (This is because if we had $\metaNE_2(\mixedStrategy_2)>0$ for some non-pure $\mixedStrategy_2$,
                \Pltwo{} could strictly improve their utility by replacing $\mixedStrategy_2$
                by its constituent strategies
                -- i.e., by
                    $
                        \sum_{k=1}^\nOfActions
                            \mixedStrategy_2(\actionOpp^k) \meta{\actionOpp}^k
                    $.)

        From this, we can derive the inequality
            \begin{align}\label{eq:coordination_plone_gains}
                \utility_1(\metaNE)
                >
                \utility_1(\mixedStrategy^{\{1, \dots, \nOfActions\}})
                .
            \end{align}
            To see this, note that because $\metaNE$ is a \NE{} that puts non-zero probability on $\mSim$, we have
                \begin{align*}
                    \utility_1(\metaNE)
                    & =
                    \utility_1(\mSim, \metaNE)
                    \\
                    & =
                    \sum_{k=1}^\nOfActions
                        \mixedStrategy_2(\actionOpp^k) \utility_1(\mSim, \meta{\actionOpp}^k)
                    \\
                    & =
                    \sum_{k=1}^\nOfActions
                        \mixedStrategy_2(\actionOpp^k) \utility_1(\actionPl^k, \actionOpp^k)
                        - \simcost
                    \\
                    & \geq
                    \min_{k\leq \nOfActions}
                        \utility_1(\actionPl^k, \actionOpp^k)
                        - \simcost
                    .
                \end{align*}
                By (the proof of) \Cref{lem:coordination_game_properties},
                    we have
                    \begin{align*}
                        \utility_\pl(\mixedStrategy^{\actionSubset})
                        & =
                            \frac{1}{
                                \sum_{k = 1}^\nOfActions \frac{1}{\utility_\pl(\actionPl^k, \actionOpp^k)}
                            }
                        \\
                        & >
                            \frac{1}{
                                \frac{1}{
                                    \min_{k\leq \nOfActions} \utility_\pl(\actionPl^k, \actionOpp^k)
                                }
                            }
                        \\
                        & =
                            \min_{k\leq \nOfActions} \utility_\pl(\actionPl^k, \actionOpp^k)
                        .
                    \end{align*}
                This shows that for sufficiently small $\simcost$,
                    the inequality \eqref{eq:coordination_plone_gains} holds.

        To finish the proof of (ii),
            it remains to show that
            \begin{align}\label{eq:coordination_pltwo_gains}
                \utility_2(\metaNE)
                >
                \utility_2(\mixedStrategy^{\{1, \dots, \nOfActions\}})
                .
            \end{align}
            This follows from the fact that
                    $\mixedStrategy^{\{1, \dots, \nOfActions\}}_1$
                is in fact the strategy which minimises $\utility_2( \mixedStrategy_1, \br(\mixedStrategy_1) )$.
                (From the proof of \Cref{lem:coordination_game_properties}, we know that
                    \begin{align*}
                        \utility_2( \mixedStrategy^{\{1, \dots, \nOfActions\}})
                        =
                        \min_{\actionSubset \subseteq \{1, \dots, \nOfActions\}}
                            \utility_2(\mixedStrategy^{\actionSubset})
                        .
                    \end{align*}
                    Moreover, since $\mixedStrategy^{\{1, \dots, \nOfActions\}}$ is a \NE{},
                        $\mixedStrategy^{\{1, \dots, \nOfActions\}}_1$ makes \Pltwo{} is indifferent between all actions $\actionOpp^k$,
                        and thus increasing the probability that \Plone{} puts on any $\actionPl^k$
                        would strictly increase the expected utility corresponding to $\actionOpp^k$.)
            However, we already saw that any simulation must have \Plone{} only mixing between pure strategies,
                so we have
                    \begin{align*}
                        \utility_2(\mSim, \metaNE_2)
                        \geq
                        \min_{k\leq \nOfActions} \utility_2(\actionPl^k, \actionOpp^k)
                        >
                        \utility_2( \mixedStrategy^{\{1, \dots, \nOfActions\}} )
                        .
                    \end{align*}
            Since any \textit{simulation} equilibrium $\metaNE$ plays $\mSim$ with a non-zero probability,
                it follows that any such $\metaNE$ must satisfy
                    $
                        \utility_2(\metaNE)
                        >
                        \utility_2(\mixedStrategy^{\{1, \dots, \nOfActions\}})
                    $.
            This concludes the proof of (ii).

    \textit{The proof of (iii):}
        In the case (1) of the proof of (i),
            we saw that there exists a simulation equilibrium where
                \begin{align*}
                    \utility_2(\metaNE)
                    =
                    \max_{k\leq \nOfActions}
                        \utility_2(\actionPl^k, \actionOpp^k)
                    .
                \end{align*}
        This shows that the ``unless'' clause in the statement is necessary.
        For the ``$<$'' part of (iii), suppose that \Pltwo{} only has a unique optimal pure-commitment, denoted $\actionOpp^{k_2}$.
        Clearly, we can only have
            \begin{align*}
                \utility_2(\metaNE)
                =
                    \max_{k\leq \nOfActions}
                        \utility_2(\actionPl^k, \actionOpp^k)
                =
                    \utility_2(\actionPl^{k_2}, \actionOpp^{k_2})
            \end{align*}
        if $\metaNE_2(\actionOpp^{k_2}) = 1$.
        However, such $\metaNE_2$ would have a unique best-response for \Plone{}
            -- $\actionPl^{k_2}$ --
            which would imply that $\metaNE$ would be the pure equilibrium $(\actionPl^{k_2}, \actionOpp^{k_2})$.
        This shows that no simulation equilibrium of $\msimgame$ satisfy
            $
                \utility_2(\metaNE)
                =
                \max_{k\leq \nOfActions}
                    \utility_2(\actionPl^k, \actionOpp^k)
            $,
        which concludes the proof of (iii).

    To finish the whole proof, it remains to show that (iii) holds in this case as well.
    This holds because
        \begin{align*}
            \utility_2(\metaNE)
            & =
                \utility_2(\metaNE_1, \actionOpp^{k_0})
            \\
            & =
                \simProb \cdot \utility_2(\mSim, \actionOpp^{k_0})
                +
                (1-\simProb) \cdot \utility_2(\actionPl^{k_0}, \actionOpp^{k_0})
            \\
            & =
                \utility_2(\actionPl^{k_0}, \actionOpp^{k_0})
            \\
            & <
                \utility_2(\actionPl^{k_1}, \actionOpp^{k_1})
            =
                \max_{k \leq n} \utility_2(\actionPl^{k}, \actionOpp^{k})
            .
        \end{align*}
\end{proof}

\begin{figure}
    \centering
    \begin{NiceTabular}{rccccc}[cell-space-limits=3pt]
                    & $\actionOpp^1$     & $\actionOpp^2$ & $\actionOpp^3$ & $\OO$ & $\ \ \ $\\
        $\actionPl^1$ & \Block[hvlines]{4-4}{}
                        $\game_1$           & $\Bad_1, \Bad_2$  & $\Bad_1, \Bad_2$  & $\Bad_1, \Bad_2 \!\!+\!\epsilon$ \\
        $\actionPl^2$ &   $\Bad_1, \Bad_2$    & $\game_2$         & $\Bad_1, \Bad_2$  & $\Bad_1, \Bad_2 \!\!+\!\epsilon$ \\
        $\actionPl^3$ &   $\Bad_1, \Bad_2$    & $\Bad_1, \Bad_2$  & $\game_3$         & $\Bad_1, \Bad_2 \!\!+\!\epsilon$ \\
        $O$ &         $\Bad_1 \!\!+\! \epsilon, \Bad_2$
                                            & $\Bad_1 \!\!+\! \epsilon, \Bad_2$
                                                                & $\Bad_1 \!\!+\! \epsilon, \Bad_2$
                                                                    & $\Bad_1 \!\!+\! \epsilon, \Bad_2 \!\!+\! \epsilon$
    \end{NiceTabular}
    \begin{NiceTabular}{rccc}[cell-space-limits=3pt]
        $\game_k$            & $\Cooperate$     & $\Defect$ & $\ \ \ $\\
        $\Trust$    & \Block[hvlines]{2-2}{}
                        $\Good^k_1, \Good^k_2$          & $\Horrible^k_1, \Awesome^k_2$ & $\ \ \ $ \\
        $\WalkOut$  &   $\Neutral^k_1, \Horrible^k_2$   & $\Neutral^k_1, \Horrible^k_2$ & $\ \ \ $ 
    \end{NiceTabular}
    \caption{A generalised form of a trust-and-coordination game.
        The general version of the game has $n+1$ actions for each player,
            with the joint action $(\actionPl^k, \actionOpp^k)$ leading into the subgame $\game_k$.
        We assume that
            $\Horrible^k_1 < \Bad_1 < \Neutral^k_1 < \Good^k_1$
            and
            $\Horrible^k_2 < \Bad_2 < \Good^k_1 < \Awesome^k_1$
        holds for every $k$
        and $\epsilon > 0$ is small enough to not affect this ordering.
        Note that each of the subgames $\game_k$ is a trust game, with equilibrium $(\WalkOut, \Defect)$.
        As a result, the only NE of the large game is for each of the players to take the $\OptOut$.
    }
    \Description{A generalised form of a trust-and-coordination game.}
    \label{fig:gDTG}
\end{figure}

\thmGenDTG*

We first give a high-level sketch of the proof, and then present the formal proof itself.

\begin{proof}[Proof sketch]
    Because the $(\WalkOut, \Defect)$ equilibria of the trust subgames are undesirable for both players,
        the only non-simulation NE of a trust-and-coordination game is for both players to opt out.
    With simulation, this dynamic changes:
    if the players successfully coordinated and always played into the same subgame,
        there would be no simulation equilibrium
        because Alice would not gain any information by simulating.
    
    However, when Bob mixes between SE of two different subgames,
        simulation can be worthwhile for Alice.
    This can happen either when two subgames are equally good for Bob,
        in which case he can randomise between them just to force Alice to simulate.
    
    Alternatively, it can happen when Alice and Bob have different preferences over subgames,
        in which case the equilibrium can consist of
            Bob mostly playing the SE of \textit{Alice's} favourite subgame,
            but sometimes deviating and playing the SE of some game that favours him more.
        If the frequency of deviation is just right, Alice will be indifferent
            between ``blindly'' playing her favourite subgame and trusting Bob there
            and simulating to find out which subgame Bob plays.
    Note that for this equilibrium to exist,
        Bob must not find it profitable to try exploiting Alice by defecting in her favourite subgame
        -- this makes it important that the $(\WalkOut, \Defect)$ outcome is highly undesirable for Bob.
    
    Note also that,
        unlike the simulation equilibrium of the partial-trust game from \Cref{fig:PTG}
        this simulation equilibrium is not evolutionarily stable.
        (This is because if Bob randomised too much,
            Alice would switch to simulating all the time,
            incentivising Bob to in turn switch to playing into his favourite subgame all the time.
        And at this point,
            Alice has no reason to simulate anymore,
            the simulation equilibrium collapses,
            and both players $\OptOut$.)
\end{proof}


\begin{proof}
    Let $\game$ be a generalised trust-and-coordination game as in \Cref{ex:generalised_trust_and_coordination},
        corresponding to subgames $\game_1$, \dots, $\game_n$.
    For any $k \in \{ 1, \dots, n \}$, let $(\T, \stackelberg{k})$ denote the Stackelberg equilibrium of the subgame $\game_k$.
    Pick some 
        $
            \deviateIndex
            \in
            \argmax_k \utility^{\game_k}_2(\T, \stackelberg{k})
        $
        and
        $
            \baselineIndex
            \in
            \utility^{\game_k}_1(\T, \stackelberg{k})
        $.
        (By the assumption of the theorem, we have
            $
                \utility^{\game_\deviateIndex}_1(\T, \stackelberg{\deviateIndex})
                <
                \max_k
                    \utility^{\game_k}_1(\T, \stackelberg{k})
            $
            and
            $
                \utility^{\game_\baselineIndex}_2(\T, \stackelberg{\baselineIndex})
                <
                \max_k
                    \utility^{\game_k}_2(\T, \stackelberg{k})
            $.)
    We will show that $\msimgame$ allows a Nash equilibrium where:
        \begin{itemize}
            \item \Plone{} mixes between playing $(\actionPl^\baselineIndex, \Trust)$ and $\mSim$.
            \item \Pltwo{} mixes between playing
                $(\actionOpp^\baselineIndex, \stackelberg{\baselineIndex})$ and 
                $(\actionOpp^\deviateIndex, \stackelberg{\deviateIndex})$.
        \end{itemize}
    For sufficiently low $\simcost$, this NE will automatically constitute a strict Pareto-improvement over any NE of $\game$.
        (Since the only NE of the original game is for both players to take the opt-out action, resulting in utilities $\Bad_1$ and $\Bad_2$.)

    Denote by $\metaNE$ the strategy profile of the form
    \begin{align*}
        \metaNE_1
        & =
        \simProb \cdot \mSim + (1 - \simProb) \cdot (\actionPl^\baselineIndex, \T)
        \\
        \metaNE_2
        & =
        \evilProb \cdot (\actionOpp^\deviateIndex, \stackelberg{\deviateIndex})
        + (1 - \evilProb) \cdot (\actionOpp^\baselineIndex, \stackelberg{\baselineIndex})
        ,
    \end{align*}
    where $\simProb, \evilProb \in (0, 1)$
    are such that
        \begin{align*}
            \utility_1(\mSim, \metaNE_2)
            & =
            \utility_1((\actionPl^\baselineIndex, \T), \metaNE_2),
            \\
            \utility_2(\metaNE_1, (\actionOpp^\baselineIndex, \stackelberg{\baselineIndex}))
            & =
            \utility_2(\metaNE_1, (\actionOpp^\deviateIndex, \stackelberg{\deviateIndex}))
            .
        \end{align*}
    (Such $\simProb$ and $\evilProb$ exist since
        the Stackelberg values corresponding to the subgames $\game_k$ satisfy
        $\SEvalue{\baselineIndex}{1} > \SEvalue{\deviateIndex}{1}$
        and
        $\SEvalue{\deviateIndex}{2} > \SEvalue{\baselineIndex}{2}$.)
    We will prove that $\metaNE$ is an \NE{} of $\msimgame$.

    We will start by showing that \Plone{} has no profitable deviations from $\metaNE$.
    Note that $\metaNE$ gives \Plone{} utility
    \begin{align}
        \utility_1(\metaNE)
            & =
            \utility_1(\mSim, \metaNE_2)
            \label{eq:C_and_T_u1}
        \\
            & =
            \evilProb \cdot \SEvalue{\deviateIndex}{1} + (1-\evilProb) \cdot \SEvalue{\baselineIndex}{1} - \simcost
            .
            \nonumber
    \end{align}
    In contrast,
        taking any of the actions $\actionPl^k$,
            $k \notin \{ \deviateIndex, \baselineIndex \}$,
        will yield utility $\Bad_1$,
        and similarly for the opt-out action.
    As a result, the deviations worth considering for \Plone{} must involve starting with either $\actionPl^\deviateIndex$ and $\actionPl^\baselineIndex$.
    Moreover, within each of the subgames $\game_\deviateIndex$, $\game_\baselineIndex$,
        \Pltwo{} plays the \SE{}
        and $\T$ is the corresponding best response by \Plone{}.
    Since $(\actionPl^\baselineIndex, \T)$ is already one of the strategies that \Plone{} uses in $\metaNE$,
        the only possible remaining deviation is $(\actionPl^\deviateIndex, \T)$.

    To see that $(\actionPl^\deviateIndex, \T)$ is not a profitable deviation for \Plone{}, against $\metaNE_2$,
    note that
    \begin{align}
        \evilProb
        =
        \frac{\simcost}{\SEvalue{\deviateIndex}{1} - \Bad_1}
        \label{eq:C_and_T_deviate_prob}
    \end{align}
    (we derive this from requiring that $\utility_1(\mSim, \metaNE_2) = \utility_1( (\actionPl^\baselineIndex, \T), \metaNE_2)$).
    Comparing $\utility_1(\metaNE)$ (from eq. \ref{eq:C_and_T_u1}) and $\utility_1( (\actionPl^\deviateIndex, \T), \metaNE_2)$, we get
    \begin{align*}
        \utility_1(\metaNE)
        & =
            \evilProb \cdot \SEvalue{\deviateIndex}{1}
            +
            (1-\evilProb) \cdot \SEvalue{\baselineIndex}{1}
            \\
            & \phantom{ = \evilProb \cdot \SEvalue{\deviateIndex}{1} \ }
            -
            \simcost
        \\
        & =
            \evilProb \cdot \SEvalue{\deviateIndex}{1}
            +
            (1-\evilProb) \cdot \SEvalue{\baselineIndex}{1}
            \\
            & \phantom{ = \evilProb \cdot \SEvalue{\deviateIndex}{1} \ }
            -
            \evilProb \cdot ( \SEvalue{\deviateIndex}{1} - \Bad_1 ) 
        \\
        \utility_1( (\actionPl^\deviateIndex, \T), \metaNE_2)
        & =
            \evilProb \cdot \SEvalue{\deviateIndex}{1}
            +
            (1-\evilProb) \cdot \Bad_1
        .
    \end{align*}
    It follows that the former is strictly higher
        if and only if
        $
            (1-\evilProb) \cdot ( \SEvalue{\deviateIndex}{1} - \Bad_1 )
            >
            \evilProb \cdot ( \SEvalue{\deviateIndex}{1} - \Bad_1 )
        $,
        which holds if and only if $\evilProb < \frac{1}{2}$.
    By \eqref{eq:C_and_T_deviate_prob}, this holds if and only if
        $
            \simcost
            <
            \frac{1}{2}
            \cdot
            ( \SEvalue{\deviateIndex}{1} - \Bad_1 )
        $.
    In other words, we see that for sufficiently low $\simcost$, \Plone{} has no profitable deviation from $\metaNE$.

    Second, we show that \Pltwo{} has no profitable deviation from $\metaNE$.
    To start with, note that $\metaNE$ gives \Pltwo{} utility
        \begin{align*}
            \utility_2(\metaNE)
            =
            \utility_2(\metaNE_1, (\actionOpp^\baselineIndex, \stackelberg{\baselineIndex}) )
            =
            \SEvalue{\baselineIndex}{2}
            .
        \end{align*}
    This means that \Pltwo{} cannot benefit from using the opt-out action.
    Moreover, if \Pltwo{} deviates to any $\actionOpp^k$ with $k\neq \baselineIndex$,
        they can only benefit when \Plone{} takes the $\mSim$ action
        (otherwise they reach the miscoordination outcome, yielding utility $\Bad_2$ to \Pltwo{}).
    It follows that in any game other than $\game_\baselineIndex$,
        \Pltwo{} only needs to consider the Stackelberg strategy $\stackelberg{k}$
        (using any other strategy yields
            $
                \utility^{\game_k}_2(\fbr, \mixedStrategy'_2)
                <
                \SEvalue{k}{2}
            $,
            and
            is strictly dominated by using $\stackelberg{k}$).
    Moreover, in the game $\game_\baselineIndex$, the only strategy worth considering for \Pltwo{}
        (other than $\stackelberg{\baselineIndex}$, which they are already using)
        is defecting with probability 100\%.
        (This is because
            $\stackelberg{\baselineIndex}$ is dominant among the strategies which induce the best response $\T$
            and $100\% \, \D$ is dominant among all other strategies, which induce the best response $\WO$.)

    Note that among the potential deviations of the form
        $(\actionOpp^k, \stackelberg{k})$
        for $k \neq \baselineIndex$,
    no deviation can be strictly more profitable than the strategy 
        $(\actionOpp^\deviateIndex, \stackelberg{\deviateIndex})$.
    This is because
        \begin{align*}
            \utility_2(\metaNE_1, (\actionOpp^k, \stackelberg{k}) )
            =
                \simProb \cdot \SEvalue{k}{2}
                +
                (1-\simProb) \cdot \Bad_1
        \end{align*}
        and $\game_\deviateIndex$ was chosen as one of the subgames with highest value of $\SEvalue{k}{2}$.\footnotemark{}
            \footnotetext{
                This would no longer be true if the miscoordination payoff
                $\utility_2(\actionPl^k, \actionOpp^l)$,
                $k\neq l$,
                had non-trivial dependence on $k$ and $l$.
                In such a more general version of the trust-and-coordination game, the simulation equilibria might not necessarily involve the subgame with the highest Stackelberg value for \Pltwo{}.
            }
    However, since the strategy $(\actionOpp^\deviateIndex, \stackelberg{\deviateIndex})$ is already a part of $\metaNE_2$, it cannot constitute a profitable deviation for \Pltwo{}.
    
    To prove that $\metaNE$ is an \NE{} of $\msimgame$, it thus only remains to check that
        $(\actionOpp^\baselineIndex, \D)$ is not a profitable deviation for \Pltwo{}.
    To verify this, note first that
    \begin{align}\label{eq:C_and_T_sim_probability}
        \simProb : (1-\simProb)
        =
            (\SEvalue{\baselineIndex}{2} - \Bad_2)
            :
            (\SEvalue{\deviateIndex}{2} - \Bad_2)
    \end{align}
        (this follows from the requirement that
            \Pltwo{} should be indifferent between
            $(\actionOpp^\baselineIndex, \stackelberg{\baselineIndex})$
            and
            $(\actionOpp^\deviateIndex, \stackelberg{\deviateIndex})$).
    The NE utility for \Pltwo{} is
    \begin{align*}
        \utility_2(\metaNE)
        & =
            \utility_2(\metaNE_1, \metaNE_2)
        \\
        & =
            \simProb \cdot \utility_2(\br, (\actionOpp^\baselineIndex, \stackelberg{\baselineIndex}))
            \\
            & \phantom{ = \ \ }
            +
            (1-\simProb) \cdot
                \utility^{\game_\baselineIndex}_2(\T, \stackelberg{\baselineIndex})
        \\
        & =
            \simProb \cdot \SEvalue{\baselineIndex}{2}
            +
            (1-\simProb) \cdot \SEvalue{\baselineIndex}{2}
        \\
        & =
            \SEvalue{\baselineIndex}{2}
        .
    \end{align*}
    In comparison, the deviation utility for \Pltwo{} is
    \begin{align*}
        \utility_2(\metaNE_1, (\actionOpp^\baselineIndex, \D))
        & =
            \simProb \cdot \utility^{\game_\baselineIndex}_2(\WO, \D)
            \\
            & \phantom{ = \ \ }
            +
            (1-\simProb) \cdot \utility^{\game_\baselineIndex}_2(\T, \D)
        \\
        & =
            \simProb \cdot \Horrible^\baselineIndex_2
            + (1-\simProb) \cdot \Awesome^\baselineIndex_2
        .
    \end{align*}
    It follows that $(\actionOpp^\baselineIndex, \D)$ yields strictly lower utility than $\metaNE_2$ against $\metaNE_1$ if and only if
    \begin{align*}
        \simProb \cdot \Horrible^\baselineIndex_2
            + (1-\simProb) \cdot \Awesome^\baselineIndex_2
        <
        \SEvalue{\baselineIndex}{2}
        ,
    \end{align*}
    which is equivalent to
        \begin{align}\label{eq:C_and_T_horrible_condition}
            \Horrible^\baselineIndex_2
            <
            \frac{1}{\simProb} \left(
                \SEvalue{\baselineIndex}{2}
                -
                (1-\simProb) \Awesome^\baselineIndex_2
            \right)
            .
        \end{align}
    From \eqref{eq:C_and_T_sim_probability}, we know that $\simProb$
        is a function of
            $\SEvalue{\baselineIndex}{2}$,
            $\SEvalue{\deviateIndex}{2}$, and
            $\Bad_2$,
        but not of
            $\Horrible^\baselineIndex_2$.
    It follows that for sufficiently low $\Horrible^\baselineIndex_2$,
    \Pltwo{} has no profitable deviation from $\metaNE_2$.

    This shows that $\metaNE$ is an \NE{} of $\msimgame$,
    and concludes the proof of \Cref{thm:generalised_DTG}.
\end{proof}

\begin{figure}
    \centering
    \begin{NiceTabular}{rccc}[cell-space-limits=3pt]
        & $\actionOpp^1$    & $\actionOpp^2$    & $\OO$ \\
        $\actionPl^1$ & \Block[hvlines]{3-3}{}
            \begin{tabular}{@{}c@{}} $20, 20\phantom{\shortminus}$ \ \ $\shortminus99, 40$ \\ $\phantom{\shortminus9}9, \shortminus99$ \ \ $\phantom{\shortminus9}9, \shortminus99$ \end{tabular}
            & $0, 0$
            & $0, 1$ \\
        $\actionPl^2$
            & $0, 0$
            & \begin{tabular}{@{}c@{}} $20, 20\phantom{\shortminus}$ \ \ $\shortminus99, 40$ \\ $\phantom{\shortminus}10, \shortminus99$ \ \ $\phantom{\shortminus}10, \shortminus99$ \end{tabular}
            & $0, 1$ \\
        $\OO$
            & $1, 0$
            & $1, 0$
            & $1, 1$
    \end{NiceTabular}
    \caption{\Cref{fig:DTG}, repeated for convenience.
        An example of a trust-and-coordination game,
        where coordinating on a joint action $(\actionPl^k, \actionOpp^k)$
        leads the players to a trust subgame.
    }
    \Description{An example of a trust-and-coordination game.}
    \label{fig:DTG_repeated}
\end{figure}

\corrollaryDTG*

\begin{proof}
    Since the game $\game$ from \Cref{fig:DTG} is clearly an instance of a trust-and-coordination game,
        it remains to verify that $\game$ satisfies the assumptions of \Cref{thm:generalised_DTG}.
    The first assumption holds because we have
        $\argmax \{ \SEvalue{k}{1} \,|\, k = 1, 2 \} = \{ 1 \} $,
        $\argmax \{ \SEvalue{k}{2} \,|\, k = 1, 2 \} = \{ 2 \} $.
    To verify that $\Horrible^k_2$ are ``sufficiently low relative to the other payoffs in $\game$'',
    we need to make sure that the proof of \Cref{thm:generalised_DTG} goes through with the specific payoffs in \Cref{fig:DTG}.
    Inspecting the proof, we see that the only condition which actually needs verifying
        is \eqref{eq:C_and_T_horrible_condition}, with $\baselineIndex = 1$:
        \begin{align}
            \Horrible^1_2
            <
            \frac{1}{\simProb} \left(
                \SEvalue{1}{2}
                -
                (1-\simProb) \Awesome^1_2
            \right)
            .
            \label{eq:horrible_condition_specific}
        \end{align}

    First, the SE of $\game_1$ and $\game_2$ correspond to
        the strategies that make \Plone{} indifferent between $\T$ and $\WO$.
        As a result, they are equal to
            \begin{align*}
                \stackelberg{1}
                & =
                    \left(
                        \frac{20\shortminus10}{20\shortminus10+10\shortminus(\shortminus99)},
                        \frac{10\shortminus(\shortminus99)}{20\shortminus10+10\shortminus(\shortminus99)}
                    \right)
                        \\
                    & =
                    \left( \frac{10}{119}, \frac{109}{119} \right)
                    =
                    ( \stackelberg{1}(\D), \stackelberg{1}(\C) )
                \\
                \stackelberg{2}
                & =
                    \left(
                        \frac{20\shortminus9}{20\shortminus9)+(9\shortminus(\shortminus9))},
                        \frac{9\shortminus(\shortminus99)}{20\shortminus9+9\shortminus(\shortminus99)}
                    \right)
                        \\
                    & =
                    \left( \frac{11}{119}, \frac{108}{119} \right)
                    =
                    ( \stackelberg{2}(\D), \stackelberg{2}(\C) )
                .
            \end{align*}

    This implies that
        \begin{align*}
            \SEvalue{1}{2}
            & =
                \frac{10}{119} \cdot 40
                +
                \frac{109}{119} \cdot 20
            = \frac{129}{119} \cdot 20
            \\
            \SEvalue{2}{2}
            & =
                \frac{11}{119} \cdot 40
                +
                \frac{108}{119} \cdot 20
            = \frac{130}{119} \cdot 20
        \end{align*}
    (and, unsurprisingly,
        $\SEvalue{1}{1} = 10$
        and
        $\SEvalue{2}{1} = 9$).
    By \eqref{eq:C_and_T_sim_probability}, we get
        \begin{align*}
            \simProb : (1-\simProb)
            & =
                \left( \frac{129}{119} \cdot 20 - 0 \right)
                :
                \left( \frac{130}{119} \cdot 20 - 0 \right)
            \\
            & =
                129 : 130
            ,
        \end{align*}
        implying that
        $
            \simProb = \frac{129}{259}
        $.
    Finally, we can put these values into \eqref{eq:horrible_condition_specific},
        obtaining
        \begin{align*}
            \shortminus 99
            & <
                \frac{259}{129}
                \left(
                    \frac{129}{119} \cdot 20
                    -
                    \frac{130}{259} \cdot 40
                \right)
            \\
            & =
                \frac{259}{119} \cdot 20
                -
                \frac{130}{129} \cdot 40
            \doteq
                1.7
            .
        \end{align*}
    Since this condition holds, the proof is complete.
\end{proof}

\section{Proofs \texorpdfstring{for \Cref{sec:sub:privacy} (Privacy)}{Related to Privacy}}\label{sec:app:password_guessing}

The purpose of this section
    is to present the ``password-guessing'' construction discussed in \Cref{sec:sub:privacy}
    and use it to prove \Cref{theorem:password_guessing}.

\thmMsimCanBeBetterThanPsim*

Before proceeding, we introduce the following notation:

\begin{definition}[Game with opt-out]\label{def:opt_out}
    By saying that $\game$ is a \textbf{game with an option to opt-out}, we mean that
    each player $\pl$ has access to some pure strategy $\OptOut$ ($\OO$) which satisfies
    $
        \forall \pureStrategy_\opp \in \PureStrategies^\game_\opp
        :
        \utility(\OO, \pureStrategy_\opp)
            =
            ( \Baseline_1, \Baseline_2 )
    $
    ,
    where
        \begin{align*}
            \Baseline_{i}
            <
            \min \left\{
                \utility_\pl(\pureStrategy_1, \pureStrategy_2)
            \mid
                \pureStrategy_j \in \PureStrategies^\game_j \setminus \{ \OO \}
                \textnormal{ for }
                j = 1, 2
            \right\}
        \end{align*}
    are some \textbf{B}aseline payoffs.
\end{definition}

\noindent
Note that this definition could be extended in various ways,
    for example by allowing each player to have multiple opt-out strategies,
    allowing the opt-out payoffs to depend on the opponent's actions, etc.
To simplify the exposition, this section considers this more straightforward version of \Cref{def:opt_out}.


The basic building block of the password-guessing construction is the following game:

\begin{example}[Password guessing]\label{ex:password_guessing}
    The \textbf{password guessing} game $\PG{N}{x}$
            with $N \in \N$ passwords and stakes $x > 0$        
        is meant to capture a situation where
        Bob puts $\$x$ into a password-protected account
        and Alice can attempt to steal the money, but risks punishment when detected.

    Formally, $\PG{N}{x}$ works as follows:
    Bob starts by selecting a password $\password \in \{1, \dots, N \}$.
    Afterwards, Alice can
        either do nothing ($\dontGuess$), terminating the game with payoffs $(0,0)$,
        or try to guess the password, selecting some $\guess \in \{1, \dots, N \}$.
    If $\guess = \password$,
        the game terminates with payoffs
            $\utility_A = x + 1$
                (Alice stealing the money and feeling smug)
            and
            $\utility_B = -x -1$
                (Bob losing the money and feeling sad).
    If $\guess \neq \password$,
        the game terminates with payoffs
            $\utility_A = -2(x+1)$
                (Alice getting punished)
            and
            $\utility_B = 0$.

    For $N \geq 3$, all (stable) NE of $\PG{N}{x}$ involve
        Bob selecting $\password$ sufficiently randomly
        that Alice does not attempt to guess the password.
        (There might be other NE where Bob randomises in such a way that
                Alice is indifferent between guessing the password and not doing so,
                but does not actually do so.
            However, these equilibria are not very interesting,
                because if Alice started guessing the password,
                Bob would switch to randomising his password more robustly.
            In particular, every NE with non-uniform randomisation for Bob
                gives the same expected utilities as some NE where Bob randomises uniformly.)
\end{example}

This allows us to define the password-guessing modification of an arbitrary base-game $\game$ with opt-out.

\begin{example}[Password guessing]\label{ex:password_guessing_modification}
    Let $\game$ be a game
        between Alice and Bob
        with opt-out and opt-out utilities $(\Baseline_A, \Baseline_B)$.
    Informally,
        \textbf{the password-guessing modification $\passwordGuessing$ of $\game$}
        corresponds to the situation where
            Alice and Bob play $\game$ as usual,
            except that at the end of the game,
                Bob now has to put all his profits into a password-protected account,
                giving Alice an opportunity for theft.
                
    Formally, $\passwordGuessing$ looks like $\game$, except that
        after $\game$ is played
            (with strategy profile $\pureStrategy \in \PureStrategies^\game$), 
        if $\utility_B(\pureStrategy) > \Baseline_B$,
            the players are forced to play a password-guessing game $\PG{3}{\utility_B(\pureStrategy) - \Baseline_B}$.

    Note that if we consider the strategy profiles in which Bob always selects the password uniformly and Alice does not attempt to guess it, then every NE of $\game$ corresponds to an \NE{} of $\passwordGuessing$.
\end{example}

As advertised in the main text,
    password guessing is harmful when Alice has access to pure-strategy simulation,
    but does not change the dynamics when she only has access to mixed-strategy simulation.
    (This holds because
        in the pure-strategy simulation game,
            Alice can always pay the simulation cost to predict Bob's password.
            Indeed, this will, at the minimum, give her a positive payoff from ``feeling smug'' about having guessed the password.
        In contrast, in the mixed-strategy simulation game,
            Alice will only learn that ``Bob selected his password randomly'',
            implying that any password-guessing attempts are not worth the risk for her.)

\begin{restatable}[Mixed-strategy simulation can be better than pure-strategy simulation]{proposition}{propPasswordGuessing}\label{prop:password_guessing}
    Let $\game$ be an arbitrary two-player game with opt-out.
    Then, for sufficiently low $\simcost$:
    \begin{enumerate}[label=(\roman*)]
        \item In the pure-strategy simulation game $\passwordGuessing_\simSubscriptPure$,
        any \NE{} will involve Bob using $\OptOut$ with probability at least $1 - \simcost$;
        \item In the mixed-strategy simulation game $\passwordGuessing_\simSubscriptMixed$,
        every NE of $\game$ is an \NE{} of $\passwordGuessing_\simSubscriptMixed$.
    \end{enumerate}
\end{restatable}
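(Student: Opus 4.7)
The plan is to prove the two parts separately, leveraging the central asymmetry that $\pSim$ reveals Bob's password exactly whereas $\mSim$ applied to a uniformly-randomised password reveals no actionable information.

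For part (ii), I would show that any $\mixedStrategy \in \NE(\game)$ lifts to an NE of $\passwordGuessing_\simSubscriptMixed$ by extending Bob to select his password uniformly over $\{1,2,3\}$ in every triggered password subgame, and extending Alice to never guess. Verifying this lifted profile is an NE reduces to three checks. First, Alice has no incentive to guess: against a uniform password in $\PG{3}{x}$ with $x>0$, guessing yields expected payoff $\tfrac{1}{3}(x+1)+\tfrac{2}{3}(-2(x+1)) = -(x+1)<0$, so abstaining strictly dominates. Second, Alice has no incentive to invoke $\mSim$: in equilibrium she already knows Bob's $\game$-strategy, so $\mSim$ returns no new information and strictly wastes $\simcost$. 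Third, Bob has no profitable deviation: base-game deviations are ruled out by $\mixedStrategy \in \NE(\game)$, and any non-uniform password distribution is weakly dominated by uniform when Alice does not guess.

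For part (i), I would bound the probability $q$ that Bob plays any non-$\OptOut$ pure strategy in any NE of $\passwordGuessing_\simSubscriptPure$. The key observation is that under $\pSim$, whenever Bob plays a pure strategy with base-game payoff $\utility_2^\game > \Baseline_2$, Alice learns both the action and the password, and can steal the profits with certainty, gaining $(\utility_2^\game - \Baseline_2)+1$ on top of whatever base-game payoff she would have obtained. Starting from a candidate NE, I would compare Alice's equilibrium payoff to that of a deviation that reallocates a small mass to $\pSim$: the marginal gain is at least $q \cdot \bigl(\min_{\pureStrategy} \utility_2^\game(\pureStrategy) - \Baseline_2 + 1\bigr) - \simcost$, of the form $q \cdot (\text{positive constant}) - \simcost$. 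Unprofitability of this deviation then forces $q \leq \simcost / (\text{constant})$, and using that the constant is at least $1$ (since $\utility_2^\game > \Baseline_2$ strictly on non-$\OptOut$ strategies, by \Cref{def:opt_out}), I obtain $q \leq \simcost$, equivalently an $\OptOut$ probability of at least $1-\simcost$.

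The main obstacle I anticipate is cleanly separating, in part (i), the password-stealing contribution to Alice's simulation gain from any pre-existing base-game gain, since $\game$ is arbitrary and Alice's non-simulation best response may already exploit Bob's non-$\OptOut$ actions. I expect to handle this by arguing that once Alice has simulated she can replicate whatever base-game action her non-simulating strategy would have chosen while additionally guessing the password correctly, so the $+(x+1)$ password-stealing contribution is strictly additive; this isolates the threshold argument above. Finally, \Cref{theorem:password_guessing} will follow by instantiating $\game$ as the partial-trust game of \Cref{fig:PTG} equipped with an $\OptOut$ for Bob: (ii) ensures that the good simulation equilibrium provided by \Cref{thm:generalised_PTG} survives under $\mSim$, while (i) forces Bob to $\OptOut$ with probability at least $1-\simcost$ under $\pSim$, yielding a strict Pareto-improvement for $\mSim$ over $\pSim$.
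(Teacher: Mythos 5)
Your proposal is correct and follows essentially the same route as the paper: part (ii) is the paper's argument almost verbatim (a uniform password makes guessing have negative expected value and renders $\mSim$ uninformative, so the lifted profile remains an equilibrium), and part (i) rests on the same additive password-stealing gain, which makes $\pSim$ worth at least $q - \simcost$ more than any non-simulating strategy and hence forces $q \leq \simcost$. The one case your part (i) does not cover is a candidate equilibrium in which Alice already plays $\pSim$ with probability one, where there is no mass left to reallocate and her indifference yields nothing; there the contradiction must instead come from Bob's side (against a simulating Alice, every non-$\OptOut$ strategy yields him at most $\Baseline_2 - 1 < \Baseline_2$, so he deviates to $\OptOut$ and $q=0$), which is precisely the closing step the paper's proof uses.
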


\begin{proof}
    (i):
        Suppose that Bob uses a strategy $\mixedStrategy_2$ with $\mixedStrategy_2(\OO) = 1 - \alpha < 1$.
        Denote by $\mixedStrategyAlt_2$ the strategy that Bob uses in $\game$ when not using opt-out
            (i.e., such that we can write
                $
                    \mixedStrategy_2
                    =
                    (1-\alpha) \cdot \OO
                    +
                    \alpha \cdot \mixedStrategy_2
                $).
        When Alice uses pure-strategy simulation $\pSim$,
            she will pay the simulation cost $\simcost$,
            after which she will
            receive the payoff $\Baseline_1$ with probability $1-\alpha$ (when Bob uses $\OptOut$),
            but with remaining probability $\alpha$,
                she will get to play (without Bob opting out) and will then be able to guess his password.
            This means that her overall payoff will be at least
                $\Baseline_1 - \simcost + \alpha \cdot 1$.
            This means that for $\alpha > \simcost$, Alice's unique best response to $\mixedStrategy_2$ will be $\pSim$.
        However, this will bring Bob's overall payoff to $\Baseline_2 + \alpha \cdot (\shortminus 1)$,
            which means that Bob will have a profitable deviation to $\OO$.

    (ii):
        Let $\mixedStrategy$ be a \NE{} of $\game$ that has $\mixedStrategy_1(\OO) = \mixedStrategy_2(\OO) = 0$.
        Denote by $\metaNE$ the strategy in $\msimgame$ where
            (a) Alice plays $\mixedStrategy_2$ (and does not simulate);
            (b) Bob plays $\mixedStrategy_2$ and then uses a uniformly random distribution over the passwords $\{1, \dots, N \}$.
        Then $\metaNE$ is clearly a \NE{} of $\msimgame$
            (since Alice does not learn any new information from using mixed-strategy simulation,
                and attempting to guess Bob's password has negative expected value when he chooses it uniformly).
\end{proof}

Finally, we use this construction to prove \Cref{theorem:password_guessing}.

\thmMsimCanBeBetterThanPsim*

\begin{proof}[Proof of \Cref{theorem:password_guessing}]
    Let \PTG{} be the partial-trust game \PTG{} from \Cref{fig:PTG}
        and suppose that $\simcost < 0.1$.
    This proof will rely on the following (already-established\arxivOnly{\footnotemark{}}) properties of \PTG{}:
        \begin{enumerate}[label=(\roman*)]
            \item Any \NE{} of \PTG{}
                has \Plone{} using the action $\WO$ with probability $1$,
                which leads to utilities $\utility(\WO, \C) = \utility(\WO, \D) = (0, 0)$.
            \item All payoffs in $\utility_\pl(\pureStrategy_1, \pureStrategy_2)$ in \PTG{}
                lie in the interval $[-100, 100]$.
            \item Mixed-strategy simulation \iPiN{} in \PTG{}.
        \end{enumerate}
        \arxivOnly{\footnotetext{
            Strictly speaking, \Cref{thm:generalised_PTG} only shows that \PTG{} would have a strictly Pareto-improving \NE{}
            if the payoff $\utility_2(\FT, \C)$ were sufficiently high relative to other payoffs in \PTG{}.
            However, the proof of \Cref{thm:generalised_PTG} does give an exact condition,
                Equation \ref{eq:gPTG_sufficiently_high_characterisation},
                which specifies the meaning of ``sufficiently high'',
            and the condition does hold in this specific case.
            (And even if it did not, we could prove \Cref{theorem:password_guessing} by
                replacing \PTG{} by some game with an appropriately higher $\utility_2(\FT, \C)$.)
        }}
    
    Denote by $\game$ the extension of $\PTG$ where
        each player additionally has access to a strategy $\OO$,
        and we have
            $
                \utility (\OO, \pureStrategy_2)
                =
                \utility (\pureStrategy_1, \OO)
                =
                (-200, -200)
            $
        for every $\pureStrategy_1 \in \{ \FT, \PT, \WO \}$, $\pureStrategy_2 \in \{ \C, \D \}$.
    We will show that $\passwordGuessing$ serves as a witness to the statement of the theorem.

    First, note that
        (a) in any $\mixedStrategy \in \NE(\passwordGuessing)$, we have
        $\utility_1(\mixedStrategy), \utility_2(\mixedStrategy) \leq 0$;
        and
        (b) $\passwordGuessing$ admits $\mixedStrategy \in \NE(\passwordGuessing)$ with $\utility(\mixedStrategy) = (0, 0)$.
    (This holds because
        -- since $\OO$ is strictly dominated, against all strategies except for $\OO$ --
        $\NE(\game)$ consist of
            the \NE{} of the original game \PTG{}
            and the opt-out equilibrium $(\OptOut, \OptOut)$.
        By \Cref{ex:password_guessing_modification},
        the \NE{} of $\passwordGuessing$ coincide with the \NE{} of $\game$.)

    Next, we observe that any \NE{} of the pure-strategy simulation game $(\passwordGuessing)_\simSubscriptPure$
        have utilities strictly lower than $0$.
    (By \Cref{prop:password_guessing}\,(i),
        any \NE{} of $(\passwordGuessing)_\simSubscriptPure$ has Bob using $\OptOut$ with probability at least $0.9$.
    This implies that
        for any $\mixedMetaStrategy \in \NE((\passwordGuessing)_\simSubscriptPure)$ and $\pl \in \{1, 2\}$,
        $
            \utility_\pl(\mixedMetaStrategy)
            <
            (-200) \cdot 0.9
            + (100) \cdot 0.1
            < 0
        $.)

    Finally, we note that $(\passwordGuessing)_\simSubscriptMixed$ has a \NE{} where
        both players have utilities strictly higher than $0$.
        (To see this, use the observation (c) above to
            get some $\metaNE$ be the simulation equilibrium of \PTG{} that satisfies
            $\utility_1(\metaNE), \utility_2(\metaNE) > 0$.
        Since the utilities in particular satisfy $\utility_1(\metaNE), \utility_2(\metaNE) > -200$,
            $\metaNE$ is also an \NE{} of $(\passwordGuessing)_\simSubscriptMixed$.)
    This concludes the proof of the theorem.            
\end{proof}

\end{document}